\newcommand{\rhsr}{\ensuremath{\mathcal R}}
\newcommand{\srhsr}{\ensuremath{\mathcal R\setminus\rho(\mathbb{R})}}
\newcommand{\hp}[3]{\ensuremath{\textrm{mid}{(#1_{#2} #1_{#3})}}} 
\newcommand{\Pik}{\ensuremath{P_{i+1,i+2,\ldots,k}}\xspace}
\newcommand{\Pjk}{\ensuremath{P_{j+1,j+2,\ldots,k}+\vect{P_j P_i}}\xspace}
\newcommand\footnoteref[1]{\protected@xdef\@thefnmark{\ref{#1}}\@footnotemark}
\newcommand{\vect}{\protect\overrightarrow}
\newcommand{\resp}{respectively\xspace}
\newcommand{\termasm}[1]{\mathcal{A}_{\Box}[{#1}]}
\newcommand{\prodasm}[1]{\mathcal{A}[{#1}]}
\newcommand{\dom}[1]{{\rm dom}(#1)}
\newcommand{\Z}{\mathbb{Z}}
\newcommand{\N}{\mathbb{N}}
\newcommand{\R}{\ensuremath{\mathbb{R}}}
\newcommand{\calT}{\mathcal{T}}
\newcommand{\prodT}{\prodasm{\mathcal{T}}}
\newcommand{\prodpaths}[1]{{\bf{P}}[{#1}]}
\newcommand{\prodpathsT}{\prodpaths{\mathcal{T}}}
\newcommand{\rev}[1]{\ensuremath{{#1}^\leftarrow}}
\newcommand{\pathassembly}[1]{\mathrm{asm}{\left(#1\right)}}
\newcommand{\asm}[1]{\pathassembly{#1}}
\newcommand{\glue}[3]{\mathrm{glue}(#1_{#2} #1_{#3})}
\newcommand{\glu}[2]{\glue{#1}{#2}{#2+1}}
\newcommand{\glueP}[2]{\mathrm{glue}(P_{#1} P_{#2})}
\newcommand{\pos}[1]{\mathrm{pos}(#1)}
\newcommand\type[1]{\mathrm{type}(#1)}
\newcommand\defeq{\mathrel{\overset{\makebox[0pt]{\mbox{\normalfont\tiny\sffamily def}}}{=}}}
\newcommand{\gs}[2]{\ensuremath{\Big[#1,#2\Big]}}
\newcommand{\olq}{{\ensuremath{\overline{q}}}}
\newcommand\torture[4]{{#1}_{{#3}+1 + (({#2}-{#3}-1)\mod({#4}-{#3}))} + \left\lfloor \frac{{#2}-{#3}-1}{{#4}-{#3}}\right\rfloor\vect{{#1}_{#3}{#1}_{#4}}}
\newcommand\embed[1]{{\ensuremath{\frak{E}[#1] }}}
\newcommand{\reverse}[1]{\ensuremath{{#1}^\leftarrow}}
\newcommand{\xcoord}[1]{\mathrm{x}_{#1} }
\newcommand{\ycoord}[1]{\mathrm{y}_{#1} }
\newcommand{\vectwo}[2]{\tiny\begin{pmatrix}#1\\#2\end{pmatrix}}
\newcommand\vpij{\vect{P_iP_j}}
\newcommand\vpji{\vect{P_jP_i}}
\newcommand{\concat}[1]{\mathrm{concat}\!\left(#1\right)}
\newcommand\bound{\ensuremath{(8|T|)^{4|T|+1}(5|\sigma|+6)}\xspace}
\newcommand\boundHam{\ensuremath{(88|T|)^{4|T|+1}}\xspace}
\newtheorem{theorem}{Theorem}[section]
\newtheorem{lemma}[theorem]{Lemma}
\newcommand\sublname{claim\xspace}
\newcommand\sublName{Claim\xspace}
\newcommand\sublemmanames{\sublName{s}\xspace}
\newcommand\subl[1]{\sublName~\ref{#1}}
\newtheorem{sublemma}[theorem]{\sublName}
\newtheorem{corollary}[theorem]{Corollary}
\newtheorem{observation}[theorem]{Observation}
\newtheorem{fact}{Fact}  \numberwithin{fact}{theorem}
\theoremstyle{definition}
\newtheorem{definition}{Definition}  
\newcommand\argument[1]{{\vspace*{\baselineskip}\noindent\bf {#1}.\hspace*{1em}}}
\newcommand\range[3]{{#1},{#2},\ldots,{#3}}
\newcommand\rng[2]{\range {#1} {#1+1} {#2}}
\newtheorem*{rep@theorem}{\rep@title}
\newcommand{\newreptheorem}[2]{\newenvironment{rep#1}[1]{ \def\rep@title{#2 \ref{##1}} \begin{rep@theorem}} {\end{rep@theorem}}}
\newcommand\scale{0.4}
\setlist[itemize]{leftmargin=1.6em,topsep=0.3em,itemsep=0.1em} \setlist[enumerate]{leftmargin=1.6em,noitemsep,topsep=0.3em,itemsep=0.1em} 
\title{The program-size complexity of self-assembled paths\thanks{Supported by European Research Council (ERC) award number 772766 and Science foundation Ireland (SFI) grant 18/ERCS/5746 (this manuscript reflects only the authors' view and the ERC is not responsible for any use that may be made of the information it contains). Some of this work was supported by, and carried out at, Inria, Paris, France.}}
\author{Pierre-\'Etienne Meunier\thanks{Hamilton Institute, Maynooth University, Co.\ Kildare, Ireland. \href{mailto:pierre-etienne.meunier@tutanota.com}{pierre-etienne.meunier@tutanota.com}}\\
  \and
Damien Regnault\thanks{IBISC, Université Évry, Université Paris-Saclay, 91025, Evry, France. \href{mailto:damien.regnault@univ-evry.fr}{damien.regnault@univ-evry.fr}}\\
\and
Damien Woods\thanks{Hamilton Institute and Department of Computer Science, Maynooth University, Co.\ Kildare, Ireland. \href{mailto:damien.woods@mu.ie}{damien.woods@mu.ie}}\\
}
\date{}
\begin{document}
\numberwithin{figure}{section} \maketitle
\begin{abstract}
We prove a Pumping Lemma for the \emph{noncooperative} abstract Tile Assembly Model, a model central to the theory of algorithmic self-assembly since the beginning of the field.  This theory suggests, and our result proves, that small differences in the nature of adhesive bindings between abstract square molecules gives rise to vastly different 
expressive capabilities.  

In the \emph{cooperative} abstract Tile Assembly Model, square tiles attach to each other using multi-sided cooperation of one, two or more sides.  This precise control of tile binding is directly exploited for algorithmic tasks including growth of specified shapes using very few tile types, as well as simulation of Turing machines and even self-simulation of self-assembly systems.  But are cooperative bindings required for these computational tasks? The definitionally simpler \emph{noncooperative (or Temperature 1)} model has poor control over local binding events: tiles stick if they bind on at least one side.  This has led to the conjecture that it is impossible for it to exhibit precisely controlled growth of computationally-defined shapes.

Here, we prove such an impossibility result.  We show that any planar noncooperative system that attempts to grow large algorithmically-controlled tile-efficient assemblies must also grow infinite non-algorithmic (pumped) structures with a simple closed-form description, or else suffer blocking of intended algorithmic structures.  Our result holds for both directed and nondirected systems, and gives an explicit upper bound of \bound, where $|T|$ is the size of the tileset and $|\sigma|$ is the size of the seed assembly, beyond which any path of tiles is pumpable or blockable.

\end{abstract}

\clearpage \section{Introduction}

The main challenge of molecular programming is to understand, build and control matter at the molecular level.
The dynamics of molecules can embed algorithms~\cite{Winf98} and the theory of algorithmic self-assembly~\cite{DotCACM,  PatitzSurvey,woods2015ntrinsic} is one formal way to think about the computational capabilities of autonomic self-assembling molecular systems. 
That theory, and more broadly the theory of models of computation, guides advances in experimental work to this day: 
the self-assembling binary counter of Winfree and Rothemund~\cite{RotWin00} was later implemented using tiles made of DNA~\cite{evans2014crystals},
as were  bit-copying systems~\cite{barish2005two,barish2009information,schulman2012robust} and
discrete self-similar fractals~\cite{RoPaWi04,FujHarParWinMur07}.
More recently, twenty-one self-assembly algorithms were implemented using DNA single-stranded tiles~\cite{WoodsDotyWinfree}, including 
a tile-based implementation~\cite{chalk2015flipping} of von Neuman's fair-bit-from-an-unfair-coin and a 3-bit  instance of the computationally universal cellular automata Rule 110~\cite{cook2004universality,nearyWoods2006rule110}. 
Besides guiding experiment, the theory itself has undergone significant developments, with the long-term vision of understanding the kinds of structure-building capabilities and computational mechanisms that are implementable and permitted by molecular processes.

Perhaps the most studied model of algorithmic self-assembly is the abstract Tile Assembly Model (aTAM), introduced by Winfree~\cite{Winf98} 
as a computational model of DNA tile-based self-assembly.
The model is an algorithmic version of Wang tilings~\cite{Wang61}, can be thought of as an asynchronous cellular automaton~\cite{bersini1994asynchrony} and has features seen in other distributed computing models.
In each instance of the model, we have a finite set of unit square tile \emph{types}, with colours on their four sides, and an infinite supply of each type. Starting from a small connected arrangement of tiles on the $\mathbb{Z}^2$ plane, called the \emph{seed assembly}, we attach tiles to that assembly asynchronously and nondeterministically based on a local rule depending only on the colour of the sides of the newly placed tile and of the sides of the assembly that are adjacent to the attachment position.

This model can simulate Turing machines~\cite{Winf98}, 
self-assemble squares with few tile types~\cite{RotWin00,Roth01}, 
assemble any finite spatially-scaled shape using a small, Kolmogorov-efficient, tile set~\cite{SolWin07}, and there is an intrinsically universal tile set capable of simulating the behaviour (produced shapes and growth dynamics) of any other tile set, up to spatial rescaling~\cite{IUSA}.

However, these results have all been proven using the so-called {\em cooperative} tile assembly model. 
In the cooperative, or temperature 2, model there are two kinds of bonds:  strong and weak. A tile can attach to an assembly by one side if that side forms a strong  (``strength 2'') bond with the assembly, or it can attach if two of its sides each match with a weak (``strength 1'') bond.
Intuitively, the cooperative model exploits weak bonds to create a form of synchronisation. The attachment of a tile by two weak glues sticking to two neighbour tiles can only occur {\em after} both neighbour tiles are present, allowing the system to {\em wait} rather than (say) proceeding with potentially inaccurate or incomplete information.

But what happens if we allow only one kind of bond? We get a simple model called the  \emph{non-cooperative}, or \emph{temperature 1}, model.
In the non-cooperative model, tiles may attach to the assembly whenever at least one of their sides' colour matches the colour of a side of the assembly adjacent to the position where they attach. Intuitively, they need not wait for more bonds to appear adjacent to their attachment position.
In this model it seems non-obvious how to implement synchronisation;  we have no {\em obvious} programmable feature that enables one  growth process to wait until another is complete. Our attempts to build such things typically lead to rampant uncontrolled growth.

The question of whether the non-cooperative (or ``non-waiting'') model has {\em any} non-trivial computational abilities has been open since the beginning of the field~\cite{RotWin00}.
Perhaps a reason for this is that actually proving that one can not synchronise growth is tricky; maybe noncooperative self-assembly can somehow simulate synchronisation using some complicated form of in-plane geometric blocking?
Restrictions of the model have been shown to be extremely weak~\cite{RotWin00, ManuchTemp1},
generalisations shown to be extremely powerful~\cite{Versus,2HAMIU,OneTile,geotiles,Cook-2011,Roth01,Patitz-2011,BMS-DNA2012a,Signals,Jonoska2014,Fekete2014,Hendricks-2014,gilbert2015continuous,SFTSAFT},
and, to further deepen the mystery,
the model has been shown capable of some (albeit limited) efficient tile reuse~\cite{meunier2015,Meunier2019}.

\subsection{Main result}
\newcommand{\introtheorem}{Let $\mathcal T=(T,\sigma,1)$ be any 
tile assembly system in the abstract Tile Assembly Model (aTAM), and let $P$ be a path producible by  $\mathcal T$. 
If $P$ has vertical height or horizontal width at least \bound, then $P$ is pumpable or fragile.}

\newcommand{\introtheoremTwoHAM}{Let $\mathcal H=(T,1)$ be any 
tile assembly system in the Two-Handed  Assembly Model (2HAM), and let $P$ be a path producible by  $\mathcal H$. 
If $P$ has vertical height or horizontal width at least \boundHam, then $P$ is pumpable or fragile.}

Our main result is stated in Theorem~\ref{thm:intro main thm}, although a number of notions have yet to be formally defined (see Section~\ref{sec:defs} for definitions).
Intuitively if a noncooperative tile assembly system produces a large assembly, it is capable of also producing any path of tiles in that assembly.
Our statement says that if the tile assembly system can produce a long enough path $P$, then it must also produce  assemblies where either
an infinite ultimately periodic path appears ($P$ is ``pumpable''), or else
the path cannot appear in all terminal assemblies (because some other tiles can be placed to block the growth of $P$), in which case we say that $P$ is \emph{fragile}.
Let $T$ be a set of tile types, and let $|\sigma|$ denote the number of tiles in the (seed) assembly $\sigma$.
\begin{theorem}\label{thm:intro main thm}\introtheorem
\end{theorem}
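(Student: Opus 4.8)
The plan is to argue by contradiction: assume $P$ is a producible path of height (say) at least \bound\ that is neither pumpable nor fragile, and derive a \fakenews. The combinatorial heart of the matter is that a very tall path, when read as a sequence of tile types, must repeat a tile type many times on the same column or in a controlled geometric pattern; the goal is to extract from such a repetition a candidate ``pumping segment'' — a subpath that, when translated by the vector between two occurrences of the same tile type and concatenated repeatedly, produces an infinite ultimately periodic path. If this naive translation always worked we would immediately get pumpability, so the real content is in handling the obstructions: (i) the translated copies might geometrically collide with $P$ itself or with the seed $\sigma$, and (ii) even if the translated path is a legal path of tiles, the tile cooperativity constraints at temperature $1$ still need to be met, i.e.\ each tile in the pumped copy must be placeable by a single matching glue.

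First I would set up the bookkeeping: fix a coordinate direction in which $P$ is long, and use a pigeonhole argument over tile types (there are only $|T|$ of them) together with the width bound to locate many indices $i_1 < i_2 < \cdots$ along $P$ where the same tile type appears and the local geometry (incoming/outgoing edge directions, position modulo some lattice) is the same. The exponent $4|T|+1$ and the base $8|T|$ in \bound\ strongly suggest an iterated/nested pigeonhole: repeatedly refining the set of repeated-type indices, once for each of a bounded number of ``features'' one wants to synchronize (tile type, glue used, direction, and the recursive structure of how the candidate pump interacts with earlier parts of $P$). The factor $(5|\sigma|+6)$ is presumably the price of ensuring the repetition happens far enough from, and in a consistent relationship with, the seed so that seed-collisions can be ruled out or absorbed.

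Next, given such a repeated configuration between indices $i$ and $j$ with translation vector $v = \vect{P_i P_j}$, I would define the candidate pumped path by concatenating $P_{1..i}$ with infinitely many translated copies of the segment $P_{i..j}$, each shifted by successive multiples of $v$ (this is exactly the $\mathrm{concat}$/$\mathrm{torture}$ machinery hinted at by the macros in the preamble). I then analyze two cases. If this candidate path is simple (self-avoiding) and avoids $\sigma$, I argue it is genuinely producible — here one must check that every tile attaches by the same glue it used in $P$, which holds because we synchronized glues in the pigeonhole step — and hence $P$ is pumpable, contradiction. If instead the candidate collides with itself or with $\sigma$, then the colliding tile furnishes a blocker: some producible assembly places a tile that prevents $P$ from completing, so $P$ is fragile, again a contradiction. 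The delicate part is making the ``collision $\Rightarrow$ fragile'' implication airtight: one must show that the colliding tile is actually placeable in some producible assembly (not just that its position is occupied in the hypothetical infinite object), and that its presence really does block every way of completing $P$ rather than merely one route — this is where the nondirected-vs-directed distinction and the precise definition of \emph{fragile} do the work.

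The main obstacle I expect is precisely this last step: controlling the interaction between the pumped segment and the rest of the path $P$ (and the seed) so that a geometric obstruction can be converted into a bona fide blocker in a \emph{producible} assembly. Naively, a collision in the idealized infinite pumped path could be an artifact that disappears once one accounts for the order in which tiles can actually be placed, or the collision could be ``with $P$ itself'' in a way that does not obstruct any legal completion. Overcoming this likely requires a careful case analysis of \emph{where} the first collision occurs along the pumped copies — collisions at small multiples of $v$ versus large multiples behave differently — and probably a separate sub-argument (the nested pigeonhole again) guaranteeing that if no ``clean'' pump exists, one can still find a shorter witnessing segment whose pump collides in a structurally simple, controllable way. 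I would expect the bulk of the paper's length and technical difficulty to live in this collision analysis and in the precise accounting that yields the explicit constant \bound.
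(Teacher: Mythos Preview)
Your high-level dichotomy (repeated structure $\Rightarrow$ pump, obstruction $\Rightarrow$ fragile) is the right shape, and you correctly flag ``collision $\Rightarrow$ fragile'' as the crux. But the machinery you propose is too coarse to carry that step, and the paper's actual argument rests on a concept you do not mention: \emph{visibility}. A glue $\glu P i$ is visible from the south if the vertical ray down from it avoids both $P$ and $\sigma$. The paper does not pigeonhole on raw tile-type repetitions; it pigeonholes on visible glues (and on \emph{spans}: pairs of a south-visible and a north-visible glue on the same column). The payoff is that two same-type south-visible glues at indices $i<j$, together with a north-visible glue at $k$, assemble into a \emph{shield}: the visibility rays at $i$ and $k$ plus $\embed{P_{i+1,\ldots,k}}$ form a bi-infinite curve $c$ that cuts $\R^2$ so that $\sigma\cup\asm{P_{0,\ldots,i}}$ lies entirely on the left side. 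The right side is a ``workspace'' $\mathcal C$ in which paths can be edited with a guarantee of no seed interference. Your raw tile-type pigeonhole gives no such geometric separation, so you have no way to rule out that the seed or an early part of $P$ obstructs every candidate pump or blocker you try to build.

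The ``collision $\Rightarrow$ fragile'' step is then not handled by inspecting the naive pump at all. Inside $\mathcal C$ the paper builds a specific right-priority binding path $r$ from positions of $P_{i+1,\ldots,k}$ and of $P_{j+1,\ldots,k}+\vpji$, and tiles it to a path $R$ that is provably producible starting from \emph{both} $P_i$ and $P_j$ (this uses that $\embed r + \vpij$ also stays inside a smaller workspace $\mathcal D$). If the two source paths conflict at some position of $r$, one of the two producible assemblies $P_{0,\ldots,i}R$ or $P_{0,\ldots,j}(R+\vpij)$ places a tile conflicting with $P$, giving fragility directly. If they agree everywhere on $r$, the paper runs an inductive argument (tracking ``dominant'' indices $u_n\le m_n\le v_n$ and curves $f_n$) that must terminate in a genuine pumping. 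Your sentence ``the colliding tile furnishes a blocker'' is exactly where the naive approach breaks: a self-intersection of the hypothetical infinite pumped sequence is not a priori realized by any producible assembly, and without the workspace separation and the two-translation path $R$ there is no mechanism to produce one.
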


Our result rules out the kind of Turing machine simulations, and other kinds of computations, that have appeared in the literature to date and execute precisely controlled growth patterns~\cite{Versus,2HAMIU,OneTile,geotiles,Cook-2011,Roth01,Patitz-2011,BMS-DNA2012a,Signals,Jonoska2014,Fekete2014,Hendricks-2014,gilbert2015continuous,ManuchTemp1,SFTSAFT}.
We do so by showing that any attempt to carry out long computations such as these, which provably require large assemblies and thus long paths, will result in unbounded pumped growth or the blocking of such paths.

The essence of algorithmic self-assembly is tile reuse: growing structures that are larger than the number of available tile types~\cite{RotWin00,DotCACM,PatitzSurvey,woods2015ntrinsic}. 
 Meunier and Regnault~\cite{Meunier2019} show that some noncooperative systems are capable of tile reuse in the following sense:
there is a tile assembly system with multiple terminal assemblies, all of finite size, such that each of them contains the same long path $P$, where $P$ is of width $O(|T| \log |T|)$ (i.e.\ larger than $|T|$).
In that construction, $P$ is neither pumpable (all assemblies are finite) nor fragile ($P$ appears in all terminal assemblies, hence no assembly or path can block it).
Their  result should be contrasted with ours since here we show that any attempt to generalise such a construction beyond our exponential-in-$|T|$ bound will fail -- thus we give a limitation of the amount of tile reuse possible in such constructions.
Analogous tile reuse limitations do not appear in cooperative systems~\cite{RotWin00},
due to their ability to run arbitrary algorithms.

Our theorem statement is quite similar to the pumpability conjecture of Doty, Patitz and Summers~\cite{Doty-2011} (Conjecture 6.1). In that work~\cite{Doty-2011} under the assumption that the conjecture is true, they achieve a complete characterisation of the producible assemblies.
Our result is slightly different from that conjecture, being stronger in two ways, and weaker in one:
\begin{itemize}
\item First, their conjecture was stated for {\em directed} systems (that produce a single terminal assembly), but here we prove the result for both directed and undirected systems (systems that produce many terminal assemblies).\footnote{Directed systems do not have fragile paths, hence for directed systems the conclusion of the theorem statement  systems simply reads ``... then $P$ is pumpable''.} In the directed case, our result shows that any attempt to simulate computations by growing longer and longer paths is doomed to also produce a terminal assembly littered with more and more infinite pumped paths.   
\item Second, we give an explicit bound (exponential in $|T|$) on the length a path can reach before it is pumpable or fragile.
\item Our result is weaker in one way: indeed, while our result only applies to paths grown all the way from the seed, the conjecture is that arbitrary paths are pumpable, regardless of their position relative to the seed.

  However, we conjecture that our result is sufficient to achieve the same characterisation of producible assemblies, using the same techniques and arguments as~\cite{Doty-2011}.
\end{itemize}

Our result can be applied to other models. After the aTAM, another well-studied model in the theory of algorithmic self-assembly is the hierarchical, or two-handed, model (2HAM)~\cite{AGKS05g,Versus,SFTSAFT,2HAMIU}.
There is no seed assembly in the 2HAM:  in the noncooperative (temperature 1) 2HAM  tiles stick together if glues match on one tile side, forming a collection of assemblies, and those assemblies can in turn stick to each other if they can be translated to touch without overlapping and with adjacent matching glues between them. As an almost direct corollary of Theorem~\ref{thm:intro main thm} we get:
\begin{corollary}
  \label{cor:2ham}
  \introtheoremTwoHAM
\end{corollary}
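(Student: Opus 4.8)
The plan is to reduce the 2HAM statement to the aTAM statement of Theorem~\ref{thm:intro main thm} by a simulation argument: take a producible path $P$ in the 2HAM system $\mathcal H=(T,1)$ and exhibit an aTAM system $\mathcal T=(T',\sigma,1)$ over a tileset $T'$ whose size is bounded by a small polynomial in $|T|$ (this is where the constant blows up from $8|T|$ to $88|T|$ — the factor roughly accounts for directing the tiles of $T$ by choosing a growth order, together with a constant number of auxiliary ``scaffold'' tiles needed to seed and to re-expose the glues that, in the 2HAM, were matched during a binary-assembly step). First I would recall that any 2HAM-producible assembly containing $P$ is built by a finite binary-assembly tree, and that along the path $P$ itself the tiles are attached to one another in a way that induces a linear order on the tiles of $P$ realisable as a single-tile-at-a-time aTAM growth from one endpoint; the only subtlety is that some glues along $P$ were formed between two multi-tile assemblies rather than by single-tile attachment, but since we are at temperature $1$ a path is glued edge-to-edge, so each consecutive pair of tiles of $P$ shares at least one bound edge and $P$ can be read off as an aTAM path once we fix an orientation.

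Next I would set up $\mathcal T$ so that $\sigma$ is a single seed tile (or a constant-size seed), and $T'$ contains, for each $t\in T$ and each admissible ``incoming side'' of $t$, a distinct tile type that behaves exactly like $t$ but is only placeable via that side; this ``directed copy'' trick is standard and guarantees that every path producible by $\mathcal H$ and reachable along the chosen orientation appears as a path producible by $\mathcal T$, while no new large-scale behaviour is introduced beyond a constant-size start-up gadget. Crucially, the geometry of $P$ is unchanged: its vertical height and horizontal width in $\mathcal T$ equal those in $\mathcal H$. Hence if $P$ has height or width at least $\boundHam=(88|T|)^{4|T|+1}$, then — choosing $|T'|\le 88|T|$ and $|\sigma|$ a small constant so that $\bound=(8|T'|)^{4|T'|+1}(5|\sigma|+6)\le (88|T|)^{4|T|+1}$ — the path $P$ also has height or width at least $\bound$ as a path of $\mathcal T$, so Theorem~\ref{thm:intro main thm} applies and $P$ is pumpable or fragile \emph{in $\mathcal T$}.

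The final step, and the one I expect to be the main obstacle, is transferring the conclusion back from $\mathcal T$ to $\mathcal H$: I must argue that ``$P$ is pumpable in $\mathcal T$'' implies ``$P$ is pumpable in $\mathcal H$'' and likewise for fragility. Pumpability is a purely geometric/combinatorial property of $P$ together with the fact that some glue along $P$ recurs at a translated position — this data lives in $T$, not $T'$ (the directed copies all share the same glues), so the pumped path and its ultimately periodic extension are still valid attachments at temperature $1$ in the 2HAM, giving an infinite producible structure there. For fragility one argues that the blocking assembly witnessing fragility in $\mathcal T$, after erasing the directedness and the start-up gadget, yields a 2HAM-producible assembly that blocks $P$; here one must be a little careful that the blocking tiles do not overlap the scaffold, which is handled by keeping the scaffold constant-size and disjoint from the region $P$ occupies. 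Modulo bookkeeping of these translations and the verification that the constant $88$ absorbs all the auxiliary tiles and the seed size, the corollary follows.
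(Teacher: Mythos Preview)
Your approach is substantially more complicated than the paper's, and in fact contains a genuine arithmetic gap. The paper's proof uses no directed copies and no scaffold tiles at all: it simply takes the \emph{same} tileset $T$, picks the westernmost tile $t$ of $P$ as a single-tile seed $\sigma=\{t\}$, and observes that $P$ (starting from the tile adjacent to $t$) is then directly an aTAM-producible path in $\mathcal T=(T,\sigma,1)$. Since a path by definition has each consecutive pair of tiles interacting, it can be grown tile-by-tile from either endpoint at temperature~1; your worry about glues ``formed between two multi-tile assemblies'' is a non-issue. The constant $88$ arises solely from the seed factor: with $|\sigma|=1$ one has $\bound=(8|T|)^{4|T|+1}\cdot 11\le(88|T|)^{4|T|+1}$. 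The transfer back to the 2HAM is then the trivial observation that aTAM-producible assemblies are 2HAM-producible.

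Your version has a real problem in the bound comparison: you want $|T'|\le 88|T|$ and then claim $(8|T'|)^{4|T'|+1}(5|\sigma|+6)\le(88|T|)^{4|T|+1}$. But enlarging the tileset from $|T|$ to $|T'|$ also enlarges the \emph{exponent} from $4|T|+1$ to $4|T'|+1$, so if $|T'|>|T|$ the left side is far larger than the right, not smaller. This breaks the reduction. Even if you patched this by re-stating the corollary with a worse bound, the directed-copy machinery is unnecessary: the paper gets by with $T'=T$ and $|\sigma|=1$.
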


Intuitively, the corollary comes from the fact that aTAM-like growth is permitted in the 2HAM. In fact, if we fix a tile set $T$ and a temperature of 1, the set of producible assemblies in a 2HAM system over $T$  is a superset (sometimes a strict superset~\cite{Versus}) of those in the aTAM over $T$. A brief proof sketch is given in Appendix~\ref{sec:2HAM}.

\subsection{Relationship with other prior work}\label{sec:related work}

Perhaps due to the difficulty of analysing the standard  noncooperative model  (2D, square tiles, tiles attach one at a time), researchers have looked at different variants of that model.

The first restriction studied was where we permit only terminal assemblies that are ``fully connected'' meaning all tiles are fully bound to all of their neighbour tiles):  
Rothemund and Winfree~\cite{RotWin00}  showed that for each large enough $n\in\mathbb{N}$ there does not exist a noncooperative system that builds a fully-connected $n\times n $ square in a tile-efficient way (using $< n^2$ tile types).
Since embedding algorithms in tiles are essentially our best (and perhaps only) way to exhibit efficient tile reuse, our result shuts the door on a wide class of algorithmic approaches.  Other restrictions proven weak are where we disallow adjacent mismatching colours~\cite{ManuchTemp1}, or even force any pair of adjacent tiles to bind to each other~\cite{RotWin00}.

Another, quite productive, approach has been to study generalisations of the noncooperative model (e.g.\ 3D, non-square tiles, multi-tile assembly steps, more complicated `active' tiles, etc.): it turns out that these  generalisations  and others  are powerful enough to simulate Turing machines~\cite{Versus,2HAMIU,OneTile,geotiles,Cook-2011,Roth01,Patitz-2011,BMS-DNA2012a,Signals,Jonoska2014,Fekete2014,Hendricks-2014,gilbert2015continuous,ManuchTemp1,SFTSAFT,furcy2018optimal,furcy2019newBounds}.
Each such result points to a specific feature or set of features in a generalised model that provably needs to be exploited in order to avoid our negative result. 

Since Cook, Fu, and Schweller~\cite{Cook-2011} have shown that for any Turing machine there is an undirected tile assembly system, whose seed encodes an input, and where the largest producible terminal assembly (which is possibly infinite) simulates the Turing machine computation  on that input. 
However, in that construction, ``blocking errors'' can occur where growth is prematurely blocked and is stopped before the simulation, or computation, is completed (hence their result is stated in a probabilistic setting).
Indeed
their tile assembly systems that
simulate Turing machines will produce many such erroneous assemblies. Our result shows that this kind of blocking is unavoidable.

Assemblies that cannot be ``blocked'' have the opposite issue, where it seems there is always a part of the assembly that can be repeated forever, which led Doty, Patitz and Summers~\cite{Doty-2011} to their pumpability conjecture. They go on to show  that, assuming the pumpability conjecture holds, projections to the vertical/horizontal axes of assemblies produced by directed noncooperative systems have a straightforward closed-form description (as the union of semi-linear sets). 

In the direction of negative results on Temperature 1, reference~\cite{STOC2017}
showed that the noncooperative planar aTAM is not capable of simulating -- in the sense used in intrinsic universality~\cite{woods2015ntrinsic} -- other noncooperative aTAM systems. In other words intrinsic universality is not possible for the planar noncooperative model. 
The Temperature 2 (or, cooperative) model is capable of such simulations~\cite{IUSA}, hence the main result of~\cite{STOC2017} shows a difference in the self-simulation capabilities of the two models. 
Prior to that work, another result showed that Temperature 1 cannot simulate Temperature 2 intrinsically~\cite{SODA2014}, and hence the former is strictly weaker than the later in this setting (where we ignore spatial scaling).
However, to obtain both of those results, the use of simulation and intrinsic universality permitted a technique where we choose a particular class of shapes we want to simulate, and restrict the analysis of produced assemblies to (scaled versions of) these shapes.
The proofs~\cite{STOC2017,SODA2014} then involved forcing certain paths to grow outside of these pre-determined shapes.
This paper makes use of a number of tools from~\cite{STOC2017}. Here, however, the setting is significantly more challenging as we have no geometric hypotheses whatsoever on producible assemblies and therefore can not directly leverage~\cite{STOC2017},
although we do make use of the tools of visibility and the notion of right/left priority exploited in that prior work. As already noted, \emph{positive} constructions have been found that some limited form of efficient reuse of tile types is possible in the standard 2D noncooperative model~\cite{meunier2015,Meunier2019}.
This paper also builds on extensive previous work by two of the authors~\cite{pumpabilityLargeBound}.

\subsection{New tools and future work}

We develop a collection of new tools to reason about paths in  $\mathbb{Z}^2$.
In order to carry out computation in tile-assembly, a key idea is to reuse tile types (analogous to how a Boolean circuit reuses gates of a given type, or how a computer program reuses instructions via loops).
Our main technical lemma, which we term the ``shield'' lemma, shows that if a path $P$ has a certain form that reaches so far to the east that it reuses some tile types, then we can use $P$ to construct a curve in $\R^2$ that is an almost-vertical cut~$c$ of the plane.
We use this cut of the plane to show one of two things must happen: either that iterations of a pumping of $P$ are separated from one another, and hence that the pumping is simple, which means that $P$ is pumpable, or else that a path can be grown that blocks the growth of $P$. 

This cut, and our subsequent argument can be thought of as a kind of ``Window Movie Lemma''~\cite{SODA2014}, or pumping tool, but targeted specifically at noncooperative self-assembly.

The shield lemma can only be applied when $P$ is of a certain form. 
Our second tool is a combinatorial argument on the height and width of a path. We begin by applying some straight-froward transformations to put any wide enough or tall enough path (i.e., that meets the hypothesis of Theorem~\ref{thm:intro main thm}) into a form that its last tile is also its eastern-most tile. 
Then, for any such path $P$, our combinatorial argument shows that we can always find a cut that satisfies the hypothesis of the shield lemma. 
We hope that the techniques developed in this paper can be applied to other self-assembly models -- as an example we apply them to the 2HAM (Corollary~\ref{cor:2ham}). 

Although we answer one of the main unresolved questions on noncooperative self-assembly, our result does not close all questions on the model.
First, there is still a big gap between the best known lower bounds on the size of assemblies (which is $O(|T|\log |T|)$), and the exponential bound \bound we prove here. It remains as future work to reduce that gap.
We also conjecture that our result can be used to characterise the assemblies producible by directed systems, using the same argument as~\cite{Doty-2011}.
A number of decidability questions are also still open such as: can we decide whether a planar nondirected tile assembly system is directed (i.e. produces exactly one terminal assembly)?
An important part of self-assembly is related to building shapes (as opposed to decidability questions). 
In this direction, can we build $n\times n$ squares any more efficiently than the best known result of $2n-1$ tile types for noncooperative systems? For this common benchmark of shape-building, cooperative systems achieve a tileset size as low as $\Theta{(\log n / \log \log n)}$~\cite{RotWin00,AdChGoHu01,furcy2019newBounds}. 

\subsection{Structure of the paper}
Section~\ref{sec:defs} contains key  definitions and notions required for the paper. 
Section~\ref{sec:shield main section} proves our main technical lemma (Lemma~\ref{lem:shield}), which we call the Shield lemma.
The section begins with the definition of a shield of a path (a triple of indices of tiles along the path) and then
gives some high-level intuition for the proof of Lemma~\ref{lem:shield}.
The section proceeds to prove quite a number of claims (Claims~\ref{lem:c-cuts} to \ref{lem:u0v0}) that provide an arsenal of technical tools to reason about paths that have a shield.
The section ends with an inductive argument that proves the Shield lemma, making use of the previous claims.
Finally, Section~\ref{sec:main thm proof} proves our main result, Theorem~\ref{thm:intro main thm}, using a combinatorial argument to show that every wide enough path has a shield. It begins with Subsection~\ref{sec:main thm intuition} which contains the Intuition behind the proof of Theorem~\ref{thm:intro main thm} ---  some may find it helpful to begin reading there. 

Appendix~\ref{sec:2HAM} proves the 2HAM result (Corollary~\ref{cor:2ham}) and 
Appendix~\ref{sec:app:wee lemmas}  states a particular version of the Jordan curve theorem for infinite polygonal curves and states related definitions (one curve turning from another, left-/right-hand side of the plane).

\section{Definitions and preliminaries}\label{sec:defs}
\label{definitions}

As usual, let $\mathbb{R}$ be the set of real numbers, let $\mathbb{Z}$ be the integers, and let $\mathbb{N}$ be the natural numbers including 0.  
The domain of a function $f$ is denoted $\dom{f}$, and its range (or image) is denoted $f(\dom{f})$.

\subsection{Abstract tile assembly model}\label{sec:atam}

The abstract tile assembly model was introduced by Winfree~\cite{Winf98}. In this paper we study a restriction of the abstract tile assembly model called the temperature~1 abstract tile assembly model, or noncooperative abstract tile assembly model. For definitions of the full model, as well as intuitive explanations, see for example~\cite{RotWin00,Roth01}.

A \emph{tile type} is a unit square with four sides,
each consisting of a glue \emph{type} and a nonnegative integer \emph{strength}. Let  $T$  be a a finite set of tile types.
In any set of tile types used in this paper, we assume the existence of a well-defined total ordering which we call the {\em canonical ordering} of the tile set. The sides of a tile type are respectively called  north, east, south, and west, as shown  in the following picture:
\begin{center}
\vspace{-1ex}
\begin{tikzpicture}[scale=0.8]
\draw(0,0)rectangle(1,1);
\draw(0,0.5)node[anchor=east]{West};
\draw(1,0.5)node[anchor=west]{East};
\draw(0.5,0)node[anchor=north]{South};
\draw(0.5,1)node[anchor=south]{North};
\end{tikzpicture}
\vspace{-1ex}\end{center}

An \emph{assembly} is a partial function $\alpha:\mathbb{Z}^2\dashrightarrow T$ where $T$ is a set of tile types and the domain of $\alpha$ (denoted $\dom{\alpha}$) is connected.\footnote{Intuitively, an assembly is a positioning of unit-sized tiles, each from some set of tile types $T$, so that their centers are placed on (some of) the elements of the discrete plane $\mathbb{Z}^2$ and such that those elements of $\mathbb{Z}^2$ form a connected set of points.} 
We let $\mathcal{A}^T$ denote the set of all assemblies over the set of tile types $T$. 
In this paper, two tile types in an assembly are said to  {\em bind} (or \emph{interact}, or are
\emph{stably attached}), if the glue types on their abutting sides are
equal, and have strength $\geq 1$.  An assembly $\alpha$ induces an undirected
weighted \emph{binding graph} $G_\alpha=(V,E)$, where $V=\dom{\alpha}$, and
there is an edge $\{ a,b \} \in E$ if and only if the tiles at positions $a$ and $b$ interact, and
this edge is weighted by the glue strength of that interaction.  The
assembly is said to be $\tau$-stable if every cut of $G_{\alpha}$ has weight at
least $\tau$.

A \emph{tile assembly system} is a triple $\mathcal{T}=(T,\sigma,\tau)$,
where $T$ is a finite set of tile types, $\sigma$ is a $\tau$-stable assembly called the \emph{seed}, and
$\tau \in \mathbb{N}$ is the \emph{temperature}.
Throughout this paper,  $\tau=1$.

Given two $\tau$-stable assemblies $\alpha$ and $\beta$, we say that $\alpha$ is a
\emph{subassembly} of $\beta$, and write $\alpha\sqsubseteq\beta$, if
$\dom{\alpha}\subseteq \dom{\beta}$ and for all $p\in \dom{\alpha}$,
$\alpha(p)=\beta(p)$.
We also write
$\alpha\rightarrow_1^{\mathcal{T}}\beta$ if we can obtain $\beta$ from
$\alpha$ by the binding of a single tile type, that is:  $\alpha\sqsubseteq \beta$, $|\dom{\beta}\setminus\dom{\alpha}|=1$ and the tile type at the position $\dom{\beta}\setminus\dom{\alpha}$ stably binds to $\alpha$ at that position.  We say that~$\gamma$ is
\emph{producible} from $\alpha$, and write
$\alpha\rightarrow^{\mathcal{T}}\gamma$ if there is a (possibly empty)
sequence $\alpha_1,\alpha_2,\ldots,\alpha_n$ where $n \in \N \cup \{ \infty \} $, $\alpha= \alpha_1$ and $\alpha_n =\gamma$, such that
$\alpha_1\rightarrow_1^{\mathcal{T}}\alpha_2\rightarrow_1^{\mathcal{T}}\ldots\rightarrow_1^{\mathcal{T}}\alpha_n$. 
A sequence of $n\in\mathbb{Z}^+ \cup \{\infty\}$ assemblies
$\alpha_0,\alpha_1,\ldots$ over $\mathcal{A}^T$ is a
\emph{$\mathcal{T}$-assembly sequence} if, for all $1 \leq i < n$,
$\alpha_{i-1} \to_1^\mathcal{T} \alpha_{i}$.

The set of \emph{productions}, or \emph{producible assemblies}, of a tile assembly system $\mathcal{T}=(T,\sigma,\tau)$ is the set of all assemblies producible
from the seed assembly $\sigma$ and is written~$\prodasm{\mathcal{T}}$. An assembly $\alpha$ is called \emph{terminal} if there is no $\beta$ such that $\alpha\rightarrow_1^{\mathcal{T}}\beta$. The set of all terminal assemblies of $\mathcal{T}$ is denoted~$\termasm{\mathcal{T}}$.  

\subsection{Paths and non-cooperative self-assembly}\label{sec:defs-paths}

This section introduces quite a number of key definitions and concepts that will be used extensively throughout the paper. 

Let $T$ be a set of tile types. 
A {\em tile} is a pair $((x,y),t)$ where $(x,y) \in \mathbb{Z}^2$ is a position and $t\in T$ is a tile type. 
Intuitively, a path is a finite or one-way-infinite simple (non-self-intersecting) sequence of tiles placed on points of $\mathbb{Z}^2$ so that each tile in the sequence interacts with the previous one, or more precisely: 

\begin{definition}[Path]\label{def:path}
  A {\em path} is a (finite or infinite) sequence  $P = P_0 P_1 P_2 \ldots$  of tiles   $P_i = ((x_i,y_i),t_i) \in \mathbb{Z}^2 \times T$, such that:
\begin{itemize}
\item for all $P_j$ and $P_{j+1}$ defined on $P$, 
their positions $(x_j,y_j)$ and $(x_{j+1}, y_{j+1})$ are adjacent nodes in the grid graph of $\mathbb{Z}^2$, moreover~$t_{j}$ and~$t_{j+1}$ interact (have matching glues on their abutting sides), and
\item for all $P_j,P_k$ such that $j\neq k$ it is the case that $ (x_j,y_j) \neq (x_k,y_k)$.
\end{itemize}
\end{definition}

By definition, paths are simple (or self-avoiding), and this fact will be repeatedly used throughout the paper.
For a tile $P_i$ on some path $P$, its x-coordinate is denoted~$\xcoord{P_i}$ and its y-coordinate is denoted~$\ycoord{P_i}$. 
The \emph{concatenation} of two paths $P$ and $Q$ is the concatenation $PQ$ of these two paths as sequences, and $PQ$ is a path if and only if (1) the last tile of $P$ interacts with the first tile of $Q$ and (2)  $P$ and $Q$ do not intersect each other.

For a path $P = P_0 \ldots  P_i P_{i+1} \ldots P_j  \ldots $, we define the notation $P_{i,i+1,\ldots,j} = P_i P_{i+1} \ldots P_j$, i.e.\  ``the subpath of $P$ between indices $i$ and $j$, inclusive''.
Whenever $P$ is finite, i.e. $P = P_0P_1P_2\ldots P_{n-1}$ for some $n\in\mathbb{N}$, $n$ is termed the {\em length} of $P$ and denoted by $|P|$. In the special case of a subpath where $i=0$, we say that $P_{0,1,\ldots,j}$ is a prefix of $P$ and that $P$ is \emph{an extension} of $P_{0,1,\ldots,j}$. The prefix or extension are said to be \emph{strict} if $j < |P|-1$. Else, when $j=|P|-1$, we say that $P_{i,\ldots, |P|-1}$ is a suffix of $P$, and is a \emph{strict suffix} of $P$ if $i > 0$.

For any path $P = P_0 P_1 P_2, \ldots$ and integer $i\geq 0$, we write $\pos{P_i} \in \mathbb{Z}^2$, or  $(\xcoord{P_i},\ycoord{P_i}) \in \mathbb{Z}^2$, for the position of $P_i$ and $\type{P_i}$ for the tile type of $P_i$. Hence if  $P_i = ((x_i,y_i),t_i) $ then $\pos{P_i} =  (\xcoord{P_i},\ycoord{P_i}) = (x_i,y_i) $ and $\type{P_i} = t_i$.
A ``\emph{position of}  $P$'' is an element of $\mathbb{Z}^2$ that appears in $P$ (and therefore appears exactly once), and an \emph{index} $i$ of a path $P$ of length $n\in \mathbb{N}$ is a natural number  $i \in \{0,1,\ldots,n-1\}$.
For a path $P=P_0P_1P_2\ldots$ we write $\pos{P}$ to mean ``the sequence of positions of tiles along $P$'', which is $\pos{P}=\pos{P_0}\pos{P_1}\pos{P_2}\ldots\;$.

Although a path is not an assembly, we know that each adjacent pair of tiles in the path sequence interact implying that the set of path positions forms a connected set in $\Z^2$ and hence every path uniquely represents an assembly containing exactly the tiles of the path, more formally:
for a path $P = P_0 P_1 P_2 \ldots$ we define the set of tiles  $\pathassembly{P} = \{ P_0, P_1, P_2, \ldots\}$ which we observe is  an assembly\footnote{I.e.  $\pathassembly{P}$ is  a partial function from $\Z^2$ to tile types, and is defined on a connected set.} and  we call $\pathassembly{P}$ a {\em path assembly}.
Given a tile assembly system $\calT = (T,\sigma,1)$ the path $P$ is a {\em  producible path of $\calT$} if 
$\asm{P}$ does not intersect\footnote{Formally, non-intersection of a path $P = P_0 P_1, \ldots $ and a seed assembly $\sigma$ is defined as: $\forall t$ such that  $t \in \sigma$, $\nexists i $ such that $\pos{P_i} = \pos{t}$.} the seed $\sigma$
and the assembly $(\asm{P} \cup \sigma ) $ is producible by $\calT$, i.e.\ $(\asm{P} \cup \sigma ) \in \prodT$, and $P_0$ interacts with a tile of $\sigma$.
As a convenient abuse of notation we sometimes write $\sigma \cup P$ as a shorthand for $\sigma \cup \asm{P}$.
Given  a tile assembly system $\calT = (T,\sigma,1)$
we define  the set of producible paths of $\calT$ to be:\footnote{Intuitively, although producible paths are not assemblies, any  producible path $P$ has the nice property that it encodes an unambiguous description of how to grow $\asm{P}$ from the seed $\sigma$, in path  ($P$) order, to produce  the assembly $  \asm{P}\cup \sigma$.}
$$\prodpathsT = \{ P  \mid P \textrm{ is a path that does not intersect } \sigma \textrm{ and } (\asm{P}\cup\sigma) \in \prodT \} $$

So far, we have defined paths of tiles (Definition~\ref{def:path}). In our proofs, we will also reason about (untiled) {\em binding paths} in the binding graph of an assembly.

\begin{definition}[Binding path]\label{def:binding graph}
Let $G=(V,E)$ be a binding graph. 
A {\em binding path} $q$ in $G$ is a sequence $q_{\range{0}{1}{|q|-1}}$ of vertices from $V$ such that 
for all $i \in \{\range{0}{1}{|q|-2} \}$,
$\{ q_i, q_{i+1} \} \in E$ ($q$~is connected) and no vertex appears twice in $q$ ($q$ is simple).
\end{definition}

\begin{observation}
Let $\calT = (T,\sigma,1)$ be a tile assembly system and let $\alpha \in \prodT$.
For any tile $((x,y),t) \in \alpha$ either $((x,y),t)$ is a tile of $\sigma$ or else there is a producible path $P \in \prodpathsT$  that for some $j \in \N$ contains $P_j = ((x,y),t)$.
\end{observation}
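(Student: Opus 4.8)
The plan is to prove this by induction on the length of an assembly sequence that produces $\alpha$ from $\sigma$. Fix a tile $v = ((x,y),t) \in \alpha$ that is not a tile of $\sigma$. Since $\alpha \in \prodT$, there is a finite $\calT$-assembly sequence $\sigma = \alpha_0 \to_1^\calT \alpha_1 \to_1^\calT \cdots \to_1^\calT \alpha_m = \alpha'$ with $v \in \alpha'$ and $\alpha' \sqsubseteq \alpha$ (we may truncate the sequence at the first point $v$ appears, so that $v$ is the last tile added; in fact any finite producible subassembly containing $v$ suffices). The goal is to extract from this sequence a \emph{producible path} $P \in \prodpathsT$ with $P_j = v$ for some $j$.

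First I would recover, inside $\alpha'$, a binding path (Definition~\ref{def:binding graph}) from a tile of $\sigma$ to $v$. Indeed, $\dom{\alpha'}$ is connected, so the binding graph $G_{\alpha'}$ is connected; pick any simple path $q$ in $G_{\alpha'}$ from some vertex of $\dom{\sigma}$ to $(x,y)$. Let $q_k$ be the \emph{last} vertex of $q$ lying in $\dom{\sigma}$ and consider the suffix $q' = q_k q_{k+1} \cdots$; then $q'_0 \in \dom{\sigma}$, no other vertex of $q'$ lies in $\dom{\sigma}$, and $q'$ ends at $(x,y)$. Reading off the tile types that $\alpha'$ assigns to the vertices of $q'_1 q'_2 \cdots$ gives a candidate sequence of tiles $P = P_0 P_1 \cdots$, where $P_i$ sits at position $q'_{i+1}$; consecutive tiles are adjacent in $\mathbb{Z}^2$ and interact because consecutive vertices of $q'$ are joined by an edge of $G_{\alpha'}$ (which at temperature $1$ means a matching glue of strength $\ge 1$), so $P$ is a path in the sense of Definition~\ref{def:path}, it does not intersect $\sigma$ (by choice of $q'$), its first tile $P_0$ interacts with the tile of $\sigma$ at $q'_0$, and $P_j = v$ for $j$ the last index.

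It remains to check that $P$ is \emph{producible}, i.e.\ $(\asm{P} \cup \sigma) \in \prodT$. Here I would argue that $\asm{P} \cup \sigma \sqsubseteq \alpha'$ and that any subassembly of a producible assembly which contains $\sigma$ and is connected is itself producible: order the tiles of $\asm{P}$ as $P_0, P_1, \ldots$ in path order and add them one at a time to $\sigma$; when $P_i$ is added, its predecessor $P_{i-1}$ (or, for $i=0$, the relevant seed tile) is already present and interacts with it, so each step is a valid $\to_1^\calT$ step and the intermediate assemblies are $\tau$-stable (temperature $1$: a single matching glue suffices). Hence $\sigma \to^\calT (\asm{P}\cup\sigma)$, so $P \in \prodpathsT$, which is exactly what the Observation asserts. (Strictly, I should also note the trivial case: if $v$ \emph{is} a tile of $\sigma$, the first disjunct of the statement holds and there is nothing to prove.)

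The main obstacle — though it is more a bookkeeping point than a genuine difficulty — is the passage from "binding path in $G_{\alpha'}$" to "producible path of $\calT$": one must be careful that the path extracted really starts at the seed and touches the seed only at its first position, which is why taking the \emph{last} seed vertex along $q$ (rather than an arbitrary one) matters, and one must verify that growing $\asm{P}$ from $\sigma$ in path order never requires placing a tile on an already-occupied cell. Both follow from simplicity of $q'$ and from $\asm{P}\cup\sigma \sqsubseteq \alpha'$, but they are the only places where the argument could go wrong if stated carelessly.
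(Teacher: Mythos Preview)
Your proposal is correct and follows essentially the same route as the paper: take a binding path in $G_\alpha$ from a seed vertex to $(x,y)$, cut at the \emph{last} seed vertex, and read off the tile types to obtain the desired producible path. The paper works directly in $\alpha$ rather than truncating to a finite $\alpha'$, and leaves the verification that $(\asm{P}\cup\sigma)\in\prodT$ implicit (it is immediate at temperature~1 by growing along $P$ in order), whereas you spell this out; your opening mention of ``induction on the length of an assembly sequence'' is never actually used and could be dropped.
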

\begin{proof}
If $((x,y),t)$ is a tile of $\sigma$ we are done. Assume otherwise for the rest of the proof. 

  Since $\dom\alpha$ is a connected subset of $\Z^2$  there is an integer $n\geq 0$ and a binding path $p_{0,1,\ldots ,n}$ in the binding graph of $\alpha$ where $p_0\in\dom\sigma$ and $p_n = (x,y)$.
  Let $i_0$ be the largest integer such that $p_{i_0}\in\dom\sigma$.
  We can then define $P $ as the path:
  $$P = P_{0,1,\ldots,n - (i_0+1)} = (p_{i_0+1},\alpha(p_{i_0+1}))(p_{i_0+2},\alpha(p_{i_0+2}))\ldots(p_{n},\alpha(p_{n}))$$
  By definition of binding graph, for all $i \in \{i_0+1,i_0+2,\ldots,n-1 \}$, the tiles 
  $(p_{i},\alpha(p_{i})) = P_{i-(i_0+1)}$ and $(p_{i+1},\alpha(p_{i+1})) = P_{i+1 - (i_0+1) }$ on $P$ are adjacent in $\mathbb{Z}^2$ and interact on their abutting sides. 
  Moreover, by the definition of $i_0$, $P$ does not intersect $\sigma$ and the tile $(p_{i_0+1},\alpha(p_{i_0+1})) =  P_0$ (the first tile of $P$) interacts with a tile of $\sigma$, meaning that $P\in\prodpathsT$, thus proving the statement. 
  \end{proof}

For $A,B\in\mathbb{Z}^2$, we define $\vect{AB} = B - A$ to be the vector from $A$ to $B$, and  for two \emph{tiles} $P_i = ((x_i,y_i),t_i)$ and $P_j = ((x_j,y_j),t_j)$ we define $\vect{P_i P_j} = \pos{P_j} - \pos{P_i}$ to mean the vector from $\pos{P_i}=(x_i,y_i)$ to $\pos{P_j}=(x_j,y_j)$.
The translation of a path $P$ by a vector $\vect{v} \in \mathbb{Z}^2$, written $P+\vect{v}$, is  the path $Q$ such that $|P|=|Q|$
and for all indices $i \in \{0,1,\ldots,|P|-1 \}$,
$\pos{Q_i}=\pos{P_i}+\vect{v}$ 
and 
$\type{Q_i}=\type{P_i}$. 
We always use parentheses for scoping of translations when necessary, i.e.\ $P(Q+\vect{v})$ is the sequence containing the path $P$ followed by the translation of the entire path $Q$ by vector $\vect{v}$. 
The translation of an assembly $\alpha$ by a vector $\vect{v}$, written $\alpha+\vect{v}$, is the assembly $\beta$ defined for all $(x,y)\in(\dom\alpha+\vect{v})$ as $\beta(x,y)=\alpha((x,y)-\vect{v})$.

Let $P$ be a path, let $i\in\{1,2,\ldots,|P|-2\}$, and let $A\neq P_{i+1}$ be a tile such that $P_{0,1,\ldots,i}A$ is a path.
Let also $\rho$ be the clockwise rotation matrix defined as $\rho= \bigl(\begin{smallmatrix}0&1 \\ -1&0\end{smallmatrix} \bigr)$, and let
$\Delta = (\rho\vect{P_iP_{i-1}}, \rho^2\vect{P_iP_{i-1}}, \rho^3\vect{P_iP_{i-1}})$ (intuitively, $\Delta$ is the vector of possible steps after $P_i$, ordered clockwise).
We say that $P_{0,1,\ldots,i}A$ \emph{turns right} (\resp \emph{turns left}) from $P_{0,1,\ldots,i+1}$ if $\vect{P_{i}A}$ appears after (\resp before) $\vect{P_{i} P_{i+1}}$ in $\Delta$.

\begin{definition}[The right priority path of a set of paths or binding paths]\label{def:rp}
Let $P$ and $Q$ be two paths, where $P \neq Q$ and moreover neither is a prefix of the other, and
with $\pos{P_0} = \pos{Q_0} $ and $\pos{P_1} = \pos{Q_1}$.
Let $i$ be the smallest index such that $i \geq 0$ and  $P_i\neq Q_i$.
We say that $P$ is the {\em right priority path} of $P$ and $Q$  if either (a) $P_{0,1,\ldots, i}$ is a right turn from $Q$ or (b) $\pos{P_i}=\pos{Q_i}$ and the type of $P_i$ is smaller than the type of $Q_i$ in the canonical ordering of tile types.

Similarly,  let $p$ and $q$ be two binding paths, where $p \neq q$ and moreover neither is a prefix of the other, and
with $p_0 = q_0$ and $p_1 = q_1$.
Let $i$ be the smallest index such that $i \geq 0$ and  $p_i\neq q_i$.
We say that $p$ is the {\em right priority path} of $p$ and $q$  if  $p_{0,1,\ldots, i}$ is a right turn from $q$.

For any finite
set $S$ of paths, or of binding paths, we extend this definition as follows: let $p_{0} \in \mathbb{Z}^{2},  p_{1} \in \mathbb{Z}^{2}$ be two adjacent positions. If for all $s\in S$, we have $s_0=p_0$ and $s_1=p_1$, we call the \emph{right-priority path of $S$} the path that is the right-priority path of all other paths~in~$S$.
\end{definition}

For all $i\in\{0,1,\ldots,|P|-2\}$, we define $\glu P i = (g, i)$, where $g$ is the shared glue type between consecutive tiles $P_i$ and $P_{i+1}$ on the path $P$. Similarly, when we say ``glue'' in the context of a path $P$, we mean a pair of the form (glue type, path index).
We define $\type{\glue{P}{i}{i+1}} = g$ to denote the glue type of $\glue{P}{i}{i+1}$.
The {\em position} of the glue $\glue{P}{i}{i+1}$ is the midpoint of the unit-length line segment $\gs{\pos{P_i}}{\pos{P_{i+1}}}$ and is written as $\pos{\glue{P}{i}{i+1}}$.
We say that $\glue{P}{i}{i+1}$ is
{\em pointing to the north} (or {\em points to the north}, for short) if $\vect{P_i P_{i+1}} =  \vectwo{0}{1}$,
{\em pointing to the west} if $\vect{P_i P_{i+1}} =  \vectwo{-1}{0}$,
{\em pointing to the south} if $\vect{P_i P_{i+1}} =  \vectwo{0}{-1}$, and
{\em pointing to the east} if $\vect{P_i P_{i+1}} =  \vectwo{1}{0}$.

\subsection{Fragile paths}

If two paths, or two assemblies, or a path and an assembly, share a common position we say that they {\em intersect} at that position. Furthermore, we say that two paths, or two assemblies, or a path and an assembly,  {\em agree} on a position if they both place the same tile type at that position and {\em conflict} if they place a different tile type at that position.
We say that a path $P$ is {\em fragile} to mean that there is a producible assembly $\alpha$ that conflicts with $P$ (intuitively, if we grow $\alpha$ first, then there is at least one tile that $P$ cannot place), or more formally: 

\begin{definition}[Fragile]\label{def:fragile}
  Let $\calT = (T,\sigma,1)$ be a tile assembly system and $P\in\prodpathsT$. We say that $P$ is fragile if there exists a producible assembly $\alpha \in \prodasm{\calT}$ and a position $(x,y)\in(\dom\alpha\cap\dom{\asm P})$
  such that $\alpha((x,y))\neq\asm{P}((x,y))$.\footnote{
    Here, it might be the case that $\alpha$ and $P$ conflict at only one position by placing two different tile types $t$ and $t'$, but that $t$ and $t'$ may place the same glues along $P$. In this case $P$ is not producible when starting from the assembly  $\alpha$ because one of the tiles along the positions of $P$ is of the wrong type.
  }
\end{definition}

\subsection{Pumping a path}\label{sec:Pumping a path}

Next, for a path $P$ and two indices $i,j$ on $P$, we define a sequence of points and tile types (not necessarily a path) called the \emph{pumping of $P$ between $i$ and $j$}:
\begin{definition}[Pumping of $P$ between $i$ and $j$]
  \label{def:pumpingPbetweeniandj}
Let $\calT = (T,\sigma,1)$ be a tile assembly system and $P\in\prodpathsT$.
We say that the \emph{``pumping of $P$ between $i$ and $j$''} is the sequence $\olq$ of elements from $\Z^2\times T$ defined by:

\begin{equation*}
\olq_k =
\begin{cases}
 P_k &\qquad \textrm{for } 0\leq k \leq i \\
 \torture P k i j &\qquad \textrm{for }  i < k  ,
\end{cases}
\end{equation*}
\end{definition}

Intuitively, $\olq$ is the concatenation of a finite path $P_{0,1,\ldots,i}$ and an infinite periodic sequence of tile types and positions (possibly intersecting $\sigma\cup P_{0,1,\ldots,i}$, and possibly intersecting itself). We formalise this intuition in Lemma~\ref{lem:torture}. 

The following definition gives the notion of pumpable path used in our proofs. It is followed by a less formal but more intuitive description.

\begin{definition}[Pumpable path]\label{def:pumpable path}
  Let $\calT = (T,\sigma,1)$ be a tile assembly system.  We say that a producible path $P \in\prodpathsT$, is {\em infinitely pumpable}, or simply {\em pumpable}, if there are two integers $i<j$ such that the pumping of $P$ between $i$ and $j$ is  an infinite  producible path, i.e.\ formally: $ \olq \in\prodpaths{\mathcal{T}}$.

  In this case, we say that the \emph{pumping vector} of $\olq$ is $\vect{P_iP_j}$, and that $P$ is \emph{pumpable with pumping vector $\vect{P_iP_j}$}.
\end{definition}

For a path $P$ to be pumpable between $i$ and $j$ implies that $P_{i+1}+\vect{P_iP_j}$ interacts with $P_j$. It also implies that $\olq$ is self-avoiding and that in particular, for any positive integers $s\neq t $, the path $P_{i+1,\ldots, j}+s\vect{P_iP_j}$ does not intersect with the path $P_{i+1,\ldots, j}+t\vect{P_iP_j}$.
Lemma~\ref{lem:precious} shows that a sufficient condition for this is that $P_{i+1,\ldots, j}$ does not intersect $P_{i+1,\ldots, j}+\vect{P_iP_j}$.

\subsection{2D plane}

\subsubsection{Column, glue column, row, glue row}
When referring to sets of positions, we use the term ``\emph{the column $x$}'' for some fixed $x\in\Z$ to mean the set $\{(x, y)\mid y\in\Z\}$, and the term ``\emph{the row $y$}'' for some fixed $y\in\Z$ to mean the set $\{(x, y)\mid  x\in\Z\}$. 
The \emph{glue column $x$}, for some fixed $x\in\Z$, is the set of 2D half-integer positions $\{ (x+0.5,y) \mid y \in \Z \}$.
The \emph{glue row $y$},       for some fixed $y\in\Z$, is the set of 2D half-integer positions $\{ (x,y+0.5) \mid x \in \Z \}$.

Using the canonical embedding of $\Z^2$, the definition of a \emph{glue column $x$} can also be defined as the set of edges of the grid graph of $\Z^2$ between column $x$ and column $x+1$, and the \emph{glue row $y$} is the set of edges of the grid graph of $\Z^2$ between row $x$ and row $x+1$. Which definition we use will always be clear from the context.

\subsubsection{Curves}
\label{subsubsec:defs:curves}
A curve $c : I \rightarrow \mathbb{R}^2$ is a function from an interval $I \subset \mathbb{R}$ to $\mathbb{R}^2$, where $I$ is one of a closed, open, or half-open. 
All the curves in this paper are polygonal, i.e. unions of line segments and rays.

For a finite path $P$, we call the \emph{embedding $\embed{P}$ of $P$} the curve defined for all $t\in[0,|P|-1]\subset\R$ by:
$$\embed{P}(t) = \pos{P_{\lfloor t\rfloor}} + (t-\lfloor t\rfloor)\vect{P_{\lfloor t\rfloor}P_{\lfloor t\rfloor+1}}$$
Similarly, for a finite binding path $p$, the \emph{embedding $\embed p$ of $p$} is the curve defined for all
$t\in[0,|p|-1]\subset\R$ by:
$$\embed{p}(t) = p_{\lfloor t\rfloor} + (t-\lfloor t\rfloor)\vect{p_{\lfloor t\rfloor}p_{\lfloor t\rfloor+1}}$$

The ray of vector $\vect v$ {\em from} (or, {\em that starts at}) point $A\in\mathbb{R}$ is defined as the curve $r:[0, +\infty[\rightarrow \R^2$ such that $r(t) = A + t\vect v$.
The \emph{vertical ray from a point $A$ to the south (\resp to the north)} is the ray of vector $(0,-1)$ (\resp $(0,1)$) from $A$, and the \emph{horizontal ray from a point $A$ to the west (\resp to the east)} is the ray of vector $(-1, 0)$ (\resp $(1, 0)$) from~$A$.

If $C$ is a curve defined on some real interval of the form $[a, b]$ or $]a, b]$, and $D$ is a curve defined on some real interval of the form $[c, d]$ or $[c, d[$, and moreover $C(b) = D(c)$, then the \emph{concatenation $\concat{ C,  D}$ of $ C$ and $D$} is the curve defined on $\dom{ C}\cup(\dom{ D} - (c-b))$~by:\footnote{$\dom{ D} - (c-b)$ means $[b, d-(c-b)]$ if $\dom{ D}=[c, d]$, and $[b, d-(c-b)[$ if $\dom{ D}=[c, d[$}
$$\concat{ C,  D}\!(t) = \begin{cases} C(t)\text{ if }t\leq b\\ D(t+(c-b))\text{ otherwise}\end{cases}$$

A curve $c$ is said to be \emph{simple} or \emph{self-avoiding} if all its points are distinct, i.e. if for all $x, y\in\dom c$, $c(x) = c(y)\Rightarrow x=y$.

For $a,b \in\mathbb R$ with $a\leq b$, the notation 
$[a, b]$ denotes a closed real interval, $]a, b[$ an open real interval, and $[a, b[$ and $]a, b]$ are open on one end and closed on the other. 
The reverse $\reverse c$ of a curve $c$ defined on some interval $[a, b]$ (\resp $[a, b[$, $]a, b]$, $]a, b[$\;) is the curve defined on $[-b, -a]$ (\resp $]\!-b,-a]$, $[-b, -a[$, $]\!-b,-a[$\;) as $\reverse c(t) = c(-t)$.

If $A=(x_a,y_a)\in\R^2$ and $B=(x_b,y_b)\in\R^2$, the \emph{segment $\gs A B$} is defined to be the curve $s:[0,1]\rightarrow\R^2$ such that for all $t\in[0,1]$,  $s(t) = ((1-t)x_a+t x_b, (1-t)y_a+t y_b)$. We sometimes abuse notation and write $\gs{A}{B}$ even if $A$ or $B$ (or both) is a \emph{tile}, in which case we mean the position of that tile instead of the tile itself.

For a curve  $c:\mathbb{R} \rightarrow \mathbb{R}^2$ we write $c(\mathbb{R})$ to denote the range of $c$ (whenever we use this notation the curve $c$ has all of $\mathbb{R}$ as its domain). When it is clear from the context, we sometimes write $c$ to mean $c(\mathbb{R})$, for example 

For a path or binding path, $p_{0,1,\ldots,k}$ of length $\geq 1$, for $0 \leq i < k$ the notation $\hp{p}{i}{i+1}$ denotes the midpoint of the unit-length line segment $\embed{p_{i,i+1}}$. For a path $P$, we have $\hp p i {i+1} = \pos{\glu P i}$, hence this notation is especially useful for binding paths, since they do not have glues.

\subsubsection{Cutting the plane with curves; left and right turns}
In this paper we use finite and infinite polygonal curves to cut the $\mathbb{R}^2$ plane into two pieces. 
The finite polygonal curves we use consist of a finite number of concatenations of vertical and horizontal segments of length 1 or 0.5. 
If the curve is simple and closed we may apply the Jordan Curve Theorem
to cut the plane into connected components. 
\begin{theorem}[Jordan Curve Theorem]
  \label{thm:jordan}
  Let $c$ be a simple closed curve, then $c$ cuts $\R^2$ into two connected components.\end{theorem}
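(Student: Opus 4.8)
The plan is to prove the statement in the only generality the paper actually needs, namely for finite \emph{polygonal} simple closed curves (every curve used as a plane-cut in this paper is a finite concatenation of axis-parallel segments of length $1$ or $0.5$), via the classical ray-crossing parity argument; the fully general topological statement then follows either by a standard approximation or can simply be quoted from algebraic topology (e.g.\ Alexander duality gives that $\R^2\setminus c(\R)$ has exactly two connected components whenever $c(\R)$ is homeomorphic to a circle), and I would relegate that to a citation.

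First I would fix a finite polygonal simple closed curve $c$ and a point $p\in\R^2\setminus c(\R)$, and for a ray $r$ emanating from $p$ count the points of $r\cap c(\R)$, being careful about \emph{degenerate} intersections (a ray through a vertex of $c$, or running along an edge of $c$): I handle these by noting that all but finitely many ray directions from $p$ are \emph{generic}, i.e.\ miss every vertex of $c$ and are parallel to no edge of $c$, and by proving that the crossing count has the same parity for every generic direction. This parity-invariance is the key lemma: as the ray is rotated, its crossing count changes only when the ray sweeps past a vertex of $c$, and at each such event the count changes by $0$ or $\pm2$ — both edges incident to that vertex enter the half-plane, or both leave it — never by $\pm1$. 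Define $\mathrm{par}(p)\in\{0,1\}$ to be this common parity. Then $\mathrm{par}$ is locally constant on $\R^2\setminus c(\R)$: moving $p$ slightly changes individual crossing counts only in pairs, unless $p$ itself would cross $c(\R)$, which it does not. Hence $\R^2\setminus c(\R)$ is partitioned into the open sets $U_0=\mathrm{par}^{-1}(0)$ and $U_1=\mathrm{par}^{-1}(1)$, so it has \emph{at most} two connected components; and both are non-empty, since taking a point close to the relative interior of some edge $e$ of $c$ and pushing it off $e$ to the two sides yields points whose crossing counts along the short transversal through $e$ differ by exactly $1$, hence lie in different classes.

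Finally I would show each of $U_0$, $U_1$ is connected, which is the real work. The idea is a ``tube-following'' argument. Given $x\in\R^2\setminus c(\R)$, first connect $x$ by a straight segment to a point within distance $\varepsilon$ of $c(\R)$ (walk from $x$ in any direction until first reaching within $\varepsilon$ of the curve; this segment stays in one class by local constancy). Along a sufficiently thin tubular neighbourhood $N$ of the polygonal curve $c$, the set $N\setminus c(\R)$ has exactly two path-connected components — a ``left band'' and a ``right band'' accompanying $c$ the whole way around — because $c$ is simple and closed; every point within $\varepsilon$ of $c(\R)$ lies in one of these bands, and each band lies entirely in a single class $U_j$. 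So every point of $U_j$ can be routed into one of the two bands and then slid, within that band, to a fixed basepoint of $U_j$; hence $U_j$ is path-connected. Since the two bands realise the two distinct parities, there are exactly two classes, both open, both connected, with common boundary $c(\R)$ — which is the conclusion.

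The main obstacle is this last step: making rigorous the ``thin tube around a simple closed polygonal curve has exactly two bands'' claim, particularly at the vertices, and checking that every point near $c(\R)$ can be pushed into a band without crossing $c(\R)$; the vertex-sweep bookkeeping behind the parity-invariance lemma is routine by comparison. Since the paper only ever invokes this for very concrete axis-parallel curves living on a refinement of $\Z^2$, one could alternatively prove precisely what is needed by an even more hands-on flood-fill/two-coloring of that grid, but the ray-parity proof above is the cleanest uniform statement.
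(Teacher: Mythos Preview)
The paper does not prove this theorem at all: it simply states the Jordan Curve Theorem as a known result and remarks that ``for our results the (easier to prove) polygonal version suffices.'' Your proposal, a standard ray-crossing parity argument for the polygonal case, is correct and is exactly the kind of argument the paper has in mind when it alludes to the ``easier to prove'' polygonal version; indeed the paper carries out essentially the same parity-counting technique (with diagonal rays of direction $(1,1)$) in its proof of the related Theorem~\ref{thm:infinite-jordan} for infinite almost-vertical polygonal curves. So your approach is both correct and in the spirit of what the paper uses elsewhere, but there is no paper proof to compare against for this particular statement.
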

Here, we have stated the theorem in its general form, although for our results the (easier to prove) polygonal version suffices. 

The second kind of curve we use is composed of one or two infinite rays, along with a finite number of length 0.5 or length 1 segments. For such infinite polygonal curves we also state and prove a slightly different version of the polygonal Jordan Curve Theorem, as Theorem~\ref{thm:infinite-jordan}.

In Section~\ref{app:sec:inf JCT and turns} we define what it means for one curve to turn left or right from another, as well as left hand side and right hand side of a cut of the real plane.

\subsubsection{Visibility}\label{ref:subsubsec: visibility}
Let $P$ be a path producible by some tile assembly system $\mathcal T = (T, \sigma, 1)$, and let $i\in\{0,1,\ldots,|P|-2\}$ be such that $\glu P i$ points east or west. We say that $\glu P i$ is \emph{visible from the south} if and only if the ray $\ell^i$ of vector $(0,-1)$ starting at $\ell^i(0) = \left(\frac{\xcoord{P_i}+\xcoord{P_{i+1}}}{2},\ycoord{P_{i}}\right)$ does not intersect $\embed{P}$ nor $\sigma$.\footnote{For a glue ray (whose range has half-integer x-coordinate), to not intersect  $\sigma$ (whose tiles are on integer points), we mean that the glue ray does not intersect any segment between two adjacent tiles of $\sigma$ (even if these adjacent tiles do not interact).}

We define the terms \emph{visible from the east}, \emph{visible from the west} and \emph{visible from the north} similarly.

In many of our proofs, we will use curves, in particular curves that include visibility ray(s), to define connected components. For example, consider the curve $e$, where we have a path $P_{\range{i+1}{i+2}{j,j+1}}$ 
with $i<j$ and two glues $\glu P i$ and $\glu P j$ which are visible from the south with respective visibility rays $l^i$ and $l^j$:
$$e = \concat{\reverse{l^i}, \gs{l^i(0)}{ \pos{P_{i+1}}},\embed{P_{i+1,\ldots,j}},\gs{\pos{P_j}}{l^j(0)},l^j}$$

The curve $e$ is defined on $]\!-\!\infty,+\infty[ \,= \R$.

We will use the following lemma about visibility, which was stated in~\cite{STOC2017} (it is the fusion of Lemmas 5.2 and 6.3 in that paper). For the sake of completeness, we prove this result again, with slightly different notation.

\begin{lemma}
  \label{lem:glue:east}
  Let $P$ be a path producible by some tile assembly system $\mathcal T=(T,\sigma,1)$ such that the last glue of $P$ is visible from the north.
  Let $i, j\in\{ \range 0 1 {|P|-2} \}$ be two integers.
  If both $\glu P i$ and $\glu P j$ are visible from the south and $\glu P i$ points to the east (\resp to the west), and $\xcoord{P_i} < \xcoord{P_j}$ (\resp $\xcoord{P_i}>\xcoord{P_j}$), then $i<j$ and $\glu P j$ points to the east (\resp to the west).
\end{lemma}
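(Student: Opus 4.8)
The plan is to argue purely by planar topology, using the visibility rays to build simple closed or bi-infinite curves and then invoke the Jordan Curve Theorem (Theorem~\ref{thm:jordan}) and its infinite analogue (Theorem~\ref{thm:infinite-jordan}). Fix the ``east'' case; the ``west'' case is symmetric (reflect the plane across a vertical line). So assume $\glu P i$ points east, $\glu P j$ is visible from the south, and $\xcoord{P_i}<\xcoord{P_j}$. Let $\ell^i$ and $\ell^j$ be the downward visibility rays at the midpoints of $\glu P i$ and $\glu P j$ respectively; by hypothesis neither meets $\embed P$ nor $\sigma$. The key structural object is the bi-infinite curve obtained by gluing $\ell^i$ (reversed) to the portion of $\embed P$ from $P_i$ (or $P_{i+1}$) onward, up to the last tile of $P$, then continuing with the upward ray witnessing that the last glue is visible from the north. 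Because the relevant rays avoid $\embed P$, this concatenation is a simple bi-infinite curve, so by Theorem~\ref{thm:infinite-jordan} it cuts $\R^2$ into a left side and a right side.

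First I would establish that, since $\glu P i$ points east, the subpath of $P$ strictly before index $i$ lies (locally, just left of $\ell^i(0)$, and then globally by the cut argument) on one designated side of this curve — intuitively the ``west'' side — while the ray $\ell^i$ together with the forward part of $P$ bounds that region. Then I would locate $\glu P j$: its downward ray $\ell^j$ is disjoint from $\embed P$, and its midpoint has $x$-coordinate strictly greater than that of $\ell^i(0)$. The crucial point is that $\ell^j$, being vertical and going to $-\infty$, must lie entirely on one side of the bi-infinite cut; I would show it lies on the side containing points far to the south-east, and from this deduce that the tile $P_j$ is reached by the forward part of $P$ (i.e.\ $i<j$), since any occurrence of $\glu P j$ before index $i$ would force $\ell^j$ to cross $\embed{P_{0,\dots,i}}$ or to be trapped in the bounded complementary region, contradicting that it escapes to $-\infty$. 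This gives $i<j$.

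Next, to pin down the direction of $\glu P j$, I would run essentially the same argument ``from the other end'': consider the curve built from $\ell^j$, the part of $\embed P$ from $P_j$ forward to the last tile, and the north-visible ray at the end. If $\glu P j$ pointed west (or north, or south) instead of east, then the local picture at $\ell^j(0)$ — where the path leaves the midpoint of $\glu P j$ — would place the forward part of $P$ and the prefix $P_{0,\dots,j}$ on sides inconsistent with the earlier conclusion that $P$ reaches $P_i$ (with $\xcoord{P_i}<\xcoord{P_j}$) before reaching $P_j$ while staying out of the ray $\ell^i$. Concretely, a west-pointing $\glu P j$ with the path coming up from the south along $\ell^j$'s side forces $\embed{P_{0,\dots,j-1}}$ to be confined to the region east of $\ell^j$, which is incompatible with $P_i$ having smaller $x$-coordinate and being joined to $P_j$ by a simple path avoiding both rays. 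Thus $\glu P j$ must point east.

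The main obstacle I expect is the bookkeeping of \emph{which} side of each cut contains \emph{which} subpath — i.e.\ making the informal ``west side''/``east side'' talk rigorous. This requires careful use of the left/right-turn conventions from Section~\ref{app:sec:inf JCT and turns} and a clean statement that a downward vertical ray disjoint from a bi-infinite cut lies on exactly one side, determined by a single comparison of $x$-coordinates low in the plane. Once that ``a vertical ray that escapes to $-\infty$ without meeting the cut stays on the side of the far south-east (or south-west) quadrant'' lemma is isolated and proved, both the ordering claim ($i<j$) and the direction claim ($\glu P j$ points east) follow by applying it to $\ell^j$ against the cut built around $\ell^i$, and then symmetrically. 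I would therefore structure the write-up as: (1) reduce to the east case; (2) build the bi-infinite cut from $\ell^i$; (3) prove the vertical-ray-stays-on-one-side fact; (4) deduce $i<j$; (5) deduce the direction of $\glu P j$ by the symmetric cut from $\ell^j$.
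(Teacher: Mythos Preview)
Your approach is sound and genuinely different from the paper's. The paper does \emph{not} build a bi-infinite cut using the north-visibility ray at the last glue. Instead, it truncates both $\ell^i$ and $\ell^j$ at a common height far below $\sigma\cup\asm P$, joins their endpoints with a horizontal segment, and closes the loop with $\embed{P}\!\restriction_{[j+0.5,\,i+0.5]}$ (after assuming $i>j$ for contradiction). This gives a \emph{finite} simple closed curve, and the argument runs via the ordinary Jordan Curve Theorem: the point $\embed{P}(i+0.6)$ is shown to lie inside (its translated ray $\ell^i+(0.1,0)$ crosses the closing segment exactly once), while the last glue, being visible from the north, must lie outside; hence $\embed{P}\!\restriction_{[i+0.5,\,|P|-1.5]}$ would have to cross the closed curve, which it cannot. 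The direction claim is then obtained by rerunning the same closed-curve argument with the (now known) orientation $i<j$ and assuming $\glu P j$ points west. So the paper uses both south rays to form a closed region and uses north-visibility only to place the tail of $P$ outside it; you instead bake the north ray into the cut itself and use only one south ray at a time, invoking Theorem~\ref{thm:infinite-jordan}. Both strategies work; yours is arguably more uniform, theirs stays within the finite Jordan theorem.

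One imprecision to fix in your direction step: saying that a west-pointing $\glu P j$ ``confines $\embed{P_{0,\dots,j-1}}$ to the region east of $\ell^j$'' is not correct as stated, since the right-hand side of your $\ell^j$-based cut is not a half-plane (the forward part $\embed{P_{j+1,\dots}}$ can wander west). The clean way to reach the contradiction is exactly the mechanism you used for $i<j$, applied in reverse: the ray $\ell^i$ is strictly west of $\ell^j$, avoids the new cut, and therefore lies entirely on its \emph{left} side; hence $\ell^i(0)=\pos{\glu P i}$ is strictly left of the cut, yet it also lies on $\embed{P_{0,\dots,j}}$, which you have placed on the right side. That is the contradiction --- not any statement about $x$-coordinates of the prefix.
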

\begin{proof}
  We first assume that $\glu P i$ points to the east and $\xcoord{P_i}<\xcoord{P_j}$: indeed by taking the vertical symmetric of each tile in $T$, of $\sigma$, and of $P$, we get the ``respective'' version of this statement, where $\glu P i$ points to the west and $\xcoord{P_i}>\xcoord{P_j}$. In particular, this flip about a vertical line does not change the visibility hypotheses.

  Let $l^i$ and $l^j$ be the respective visibility rays of $\glu P i$ and $\glu P j$. Additionally, let $m_i$ and $m_j$ be two real numbers such that $l^i(m_i)\in\R^2$ and $l^j(m_j)\in\R^2$ have the same y-coordinate, and are below all the points of $\sigma\cup\asm{P}$.
  Moreover, let $t_i = i + 0.5\in\R$,  $t_j = j+0.5\in\R$ and $t_k = |P|-1.5\in\R$, and note that $\embed{P}(t_i) = \pos{\glu P i}$, $\embed{P}(t_j) = \pos{\glu P j}$ and $\embed{P}(t_k) = \pos{\glue P {|P|-2}{|P|-1}}$.

  Now, assume for the sake of contradiction that $i > j$, which means that $t_i > t_j$. We define a curve $c$ as the concatenation of $l^i([0,m_i])$, $\gs{l^i(m_i)}{l^j(m_j)}$, $\reverse{l^j([0,m_j])}$ and the restriction to the interval $[t_j, t_i]$ of $\embed{P}$.

  By the Jordan Curve Theorem (Theorem~\ref{thm:jordan}), $c$ cuts the plane $\R^2$ into two connected components. Let $\mathcal C$ be the connected component inside $c$. We claim that $\embed{P}([t_i,t_k])$ has at least one point inside $\mathcal C$: indeed, the translation by a distance $0.1$ to the east of $l^i$, i.e. $l^i+(0.1, 0)$, which starts at $\embed{P}(t_i+0.1)$, does not intersect $\embed {P_{\range j {j+1} i}}$ (by visibility of $l^i$), does not intersect $l^i$ nor $l^j$, and intersects $\gs{l^i(m_i)}{l^j(m_j)}$ exactly once (since $\xcoord{P_i}<\xcoord{P_j}$). Therefore, $\embed{P}(t_i+0.1)$ is inside $\mathcal C$.

  However, $\glue P {|P|-2}{|P|-1}$, the last glue of $P$, is visible from the north, hence cannot be in $\mathcal C$, since there is at least one glue of $P_{\rng j i}$ to the north of any glue inside $\mathcal C$.
  Therefore, $\embed{P}([t_i,t_k])$ starts in $\mathcal C$ and has at least one point outside $\mathcal C$. But $\embed{P_{\range{i+1}{i+2}{|P|-1}}}$ cannot cross $l^i$ nor $l^j$ (because of visibility), nor $\embed{P}([t_j,t_i])$ (because $P$ is simple), nor $\gs{l^i(m_i)}{l^j(m_j)}$ (because $\gs{l^i(m_i)}{l^j(m_j)}$ is strictly below all points of $P$).
  This is a contradiction, and hence $t_i<t_j$ and $i<j$.

  Finally, we claim that $\glu P j$ points to the east.
  We first redefine $c$ to be the concatenation of $l^j([0,m_j])$, $\gs{l^j(m_j)}{l^i(m_i)}$, $\reverse{l^i([0,m_i])}$, and the restriction to interval $[t_i, t_j]$ of $\embed{P}$\footnote{this new definition of $c$ follows the same points as the previous one, but some parts are reversed because now we know that $i<j$.}.
  Assume, for the sake of contradiction that $\glu P j$ points to the west. This means that $\embed{P}(t_j+0.1)$ is inside $\mathcal C$, since by the same argument as above, $l^j-(0.1, 0)$ starts at $\embed{P}(t_j+0.1)$ and intersects $c$ exactly once.
  But $\embed{P}([t_j,t_k])$ has at least one point inside $\mathcal C$ and ends outside $\mathcal C$, yet cannot cross $c$ (for the same reasons as before). This is a contradiction, and hence $\glu P j$ points to the east.
\end{proof}

We will sometimes use this lemma when the last tile of $P$ is the unique easternmost tile of $\sigma\cup\asm{P}$, in this case the last glue of $P$ is visible from the south and from the north.

\newcommand\cp{\mathcal C^+}
\newcommand\cm{\mathcal C^-}

\section{Shield lemma}\label{sec:shield main section}

The goal of this section is to prove Lemma~\ref{lem:shield},
which is the main technical tool we prove in this paper. The following definition is crucial to the lemma statement and defines  notation used throughout this section. It is illustrated in Figure~\ref{fig:shield-setup}.

\begin{definition}[A shield $(i,j,k)$ for $P$]\label{def:shield}
  Let $P$ be a path producible by some tile assembly system $\mathcal T = (T, \sigma, 1)$.
  We say that the triple $(i, j, k)$ of integers is a \emph{shield for $P$} if $0 \leq i<j\leq k < |P|-1$,
  and the following three conditions hold:
  \begin{enumerate}
  \item\label{lem:hp:shield ij} $\glueP{i}{i+1}$ and $\glueP{j}{j+1}$ are both of the same type,
    visible from the south relative to $P$
    and pointing east; and
  \item\label{lem:hp:shield k} $\glu P k$ is visible from the north relative to $P$, for notation let $l^k$ be the visibility ray to the north of $\glu P k$; and
  \item\label{lem:hp:shield backup}  $\embed{P_{i,i+1,\ldots,k}} \cap ( l^k+\vect{P_jP_i}) \subseteq\{l^k(0) + \vpji\}$.
    In other words, if $\embed{P_{i,i+1,\ldots,k}}$ intersects $l^k+\vpji$ (which may not be the case), there is exactly one intersection, which is at the start-point of the ray $l^k+\vpji$.
  \end{enumerate}
\end{definition}
It should be noted that $x_{\vect{P_iP_j}}>0$, i.e.\ the x-component of the vector $\vect{P_iP_j}$ is strictly positive, which follows from  $i < j$ together with the contrapositive of  Lemma~\ref{lem:glue:east}: indeed, the contrapositive of Lemma~\ref{lem:glue:east}, with $i$ and $j$ swapped, is $i \leq j \Rightarrow x_{P_i}\leq x_{P_j}$, and since here $P$ is simple, we also have $i<j\Rightarrow x_{P_i}<x_{P_j}$).
Throughout the paper, $l^i$ and $l^j$  denote the vertical (visibility) rays to the south that start at $\pos{\glu P i}$ and $\pos{\glu P j}$, \resp.

\begin{definition}[The cut $c$ and workspace $\mathcal C$ of shield $(i,j,k)$]\label{def:workspace}\label{def:c}\label{def:C}
  Let $P$ be a path producible by some tile assembly system $\mathcal T = (T, \sigma, 1)$, and let $(i, j, k)$ be a shield for $P$.
  We say that the \emph{cut of the shield $(i, j, k)$} is the curve $c$ defined by:
  $$c = \concat{\reverse{l^i}, \gs{l^i(0)}{P_{i+1}}, \embed{P_{\range{i+1}{i+2}{k}}}, \gs{P_k}{l^k(0)}, l^k}$$
  By \subl{lem:c-cuts} below, $c$ cuts the plane into the two connected components defined in the conclusion of Theorem~\ref{thm:infinite-jordan}: the left-hand side and right-hand side of $c$.
  The right-hand side of $c$ is the connected component $\mathcal C$ that is intuitively to the east of $l^i$ and $l^k$, and includes $c$ itself. We say that $\mathcal C$ is the  \emph{workspace of shield $(i, j, k)$} (see Figure~\ref{fig:shield-setup-c} for an example).
\end{definition}

\vspace*{\baselineskip}
\newcounter{shield}
\newcommand\shieldStatement{
  Let $P$ be a path producible by some tile assembly system $\mathcal T = (T, \sigma, 1)$,
  such that $(i, j, k)$ is a shield for $P$ (see Definition~\ref{def:shield}).
  Then $P$ is pumpable with pumping vector $\vect{P_iP_j}$, or $P$ is fragile.

  Moreover, if $P$ is fragile, there is a path $Q$, entirely contained in the workspace of shield $(i, j, k)$  (see Definition~\ref{def:workspace}), such that $P_{\range 0 1 i}Q$ is a producible path and conflicts with~$P_{\range{i+1}{i+2}{k}}$.
}
\begin{lemma}[Shield lemma]\label{lem:shield}
  \shieldStatement
\end{lemma}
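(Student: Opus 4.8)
The plan is to set up the cut $c$ of the shield $(i,j,k)$ and its workspace $\mathcal C$, and then run a careful inductive argument along the path $P$ starting after index $i$. Intuitively, because $\glueP{i}{i+1}$ and $\glueP{j}{j+1}$ are the same glue type, both east-pointing and both visible from the south, the path segment $P_{i+1,\ldots,j}$ and its translate $P_{i+1,\ldots,j}+\vect{P_iP_j}$ both attach legally after $P_i$ (the translate begins by interacting with $P_j$); condition~\ref{lem:hp:shield backup} of Definition~\ref{def:shield} together with the visibility of $\glu P k$ to the north is exactly what is needed to control how the translate can re-enter the region already occupied by $P$. The first step is to prove $\subl{lem:c-cuts}$ (assumed available), giving that $c$ genuinely cuts the plane via Theorem~\ref{thm:infinite-jordan}, so that $\mathcal C$ (the right-hand side, containing $c$) is well-defined and we may speak of points being inside or outside it.

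Next I would consider the translated path $\olq = $ the pumping of $P$ between $i$ and $j$, i.e.\ $P_{0,\ldots,i}$ followed by periodic translates of $P_{i+1,\ldots,j}$ by multiples of $\vect{P_iP_j}$ (Definition~\ref{def:pumpingPbetweeniandj}). By Lemma~\ref{lem:precious} it suffices, for $\olq$ to be an infinite producible path, to show that $P_{i+1,\ldots,j}$ does not intersect $P_{i+1,\ldots,j}+\vect{P_iP_j}$. So the dichotomy is: either these two copies are disjoint — in which case the pumping is ``simple'', every translate is disjoint from the next, and $\olq\in\prodpathsT$, so $P$ is pumpable with pumping vector $\vect{P_iP_j}$ — or else $P_{i+1,\ldots,j}$ and its translate do intersect. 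In the latter case I want to extract from the translate a prefix $Q$ (living inside the workspace $\mathcal C$) such that $P_{0,\ldots,i}Q$ is producible and conflicts with $P_{i+1,\ldots,k}$, which makes $P$ fragile. The key geometric point: follow the translated copy of $P$ from the tile $P_j$ (which equals $P_{i+1}+\vect{P_iP_j}$'s predecessor in binding) forward; since $\glueP{j}{j+1}$ is visible from the south and east-pointing, and by Lemma~\ref{lem:glue:east} the relevant glues occur in $x$-increasing order, the translated copy is forced — it cannot escape to the south past $l^i$ or $l^j$, nor cross $\embed{P_{i+1,\ldots,k}}$ without conflicting with it, nor cross $l^k$ except possibly at the single boundary point allowed by condition~\ref{lem:hp:shield backup}. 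Tracking where this translated copy first hits a position already occupied by $P_{i+1,\ldots,k}$, and arguing that at that position it must either agree (and we continue) or conflict (and we stop, having found our $Q$), drives an induction on the index along the translate. If the translate ever threads all the way through without conflict, one shows it must then wrap around consistently, giving disjointness after all — contradicting the assumed intersection — or directly giving the pumpability conclusion.

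The induction itself I expect to carry on a statement of the form: ``for each $m$, the prefix of the translated path up to step $m$ either stays inside $\mathcal C$ and agrees with $P$ wherever it meets $P_{i+1,\ldots,k}$, or there is a conflict at some step $\le m$.'' The base case uses that $P_{i+1}+\vect{P_iP_j}$ interacts with $P_j$ and lies immediately to the east. The inductive step is where visibility (Lemma~\ref{lem:glue:east}), the Jordan-curve structure of $c$, and condition~\ref{lem:hp:shield backup} all get used to rule out the translate leaving $\mathcal C$ without first either conflicting with $P_{i+1,\ldots,k}$ or closing up into a simple pumping. One also has to handle the glue at the splice point (the footnote in Definition~\ref{def:fragile}): a position where two different tile types sit but place the same glues along $P$ still counts as a conflict, so ``agrees'' must be read at the level of tile types, not glues.

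The main obstacle I anticipate is the inductive step — specifically, showing that the translated copy of $P_{i+1,\ldots,j}$ (and, in the fragile case, its continuation) cannot ``sneak out'' of the workspace $\mathcal C$ by some route that neither crosses the occupied path $\embed{P_{i+1,\ldots,k}}$ nor a visibility ray. This is exactly the kind of in-plane geometric blocking that makes temperature-1 arguments delicate, and it is presumably why the section that follows develops an ``arsenal'' of claims (Claims~\ref{lem:c-cuts} through \ref{lem:u0v0}) before assembling them into the inductive proof. Condition~\ref{lem:hp:shield backup} of the shield definition looks tailor-made to close the one loophole where the translate could otherwise escape near $P_k$, so the proof should reduce to: (i) all other escape routes are blocked by visibility plus simplicity of $P$; (ii) the one near-$P_k$ route is blocked by \ref{lem:hp:shield backup}; (iii) therefore the translate either closes up simply (pumpable) or is trapped and must collide with $P_{i+1,\ldots,k}$, yielding either agreement-and-continue or a conflict (fragile), with the whole of the witnessing path $Q$ necessarily confined to $\mathcal C$.
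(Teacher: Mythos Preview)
Your proposal has a genuine gap at its core dichotomy. You set up: either $P_{i+1,\ldots,j}$ and $P_{i+1,\ldots,j}+\vpij$ are disjoint (pumpable), or they intersect (extract a conflict). But the middle case --- they intersect at one or more positions where the tile \emph{types agree} --- is exactly where the work lies, and your handling of it (``threads all the way through \ldots\ wrap around consistently, giving disjointness after all'') does not hold up. Agreement at a position is not disjointness, and it is not a conflict; you cannot conclude fragility from it, yet you also cannot conclude that the pumping of $P$ between $i$ and $j$ is simple. Concretely, the segment that ultimately pumps is in general \emph{not} $P_{i+1,\ldots,j}$: the paper finds new indices $u_n\le m_n\le v_n$ with $P_{u_n}=P_{v_n}+\vpji$ (same tile type, so the pumping vector is still $\vpij$), and it is $P_{u_n+1,\ldots,v_n}$ that pumps. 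Those indices are located precisely by exploiting the agreement points your argument treats as a nuisance. A second, smaller issue: even in your ``disjoint'' branch, Lemma~\ref{lem:precious} only gives self-avoidance among translates; producibility additionally needs every translate to avoid $\sigma\cup\asm{P_{0,\ldots,i}}$, which you have not argued (the paper gets this by confining everything to a component $\mathcal H_n\subset\mathcal C$).

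What the paper actually does is substantially more elaborate than ``follow the translate and watch for a collision''. It first builds a \emph{right-priority binding path} $r$ in the graph whose vertices are the positions of $P_{i+1,\ldots,k}$ \emph{and} of $P_{j+1,\ldots,k}+\vpji$ simultaneously; $r$ starts at $\pos{P_{i+1}}$, ends adjacent to $l^k$, and stays in $\mathcal C$. Tiling $r$ produces a path $R$ such that both $P_{0,\ldots,i}R$ and $P_{0,\ldots,j}(R+\vpij)$ are producible (\subl{lem:r}); if the tiling fails --- meaning $P_{i+1,\ldots,k}$ and $P_{j+1,\ldots,k}+\vpji$ genuinely conflict at some position along $r$ --- one of those two producible assemblies witnesses fragility inside $\mathcal C$. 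If the tiling succeeds, the argument locates a first ``dominant'' tile $P_{m_0}$ via an extremal ray $\rho$, then runs an induction on triples $(u_n,m_n,v_n)$ together with an auxiliary curve $f_n$, maintaining four invariants \ref{umv:f}--\ref{umv:unvn} about how $f_n$ and $P_{u_n,\ldots,m_n}+\vpij$ sit relative to $c$. Each step strictly increases $m_n$ (or already exhibits a simple pumping between $u_n$ and $v_n$), so the induction terminates because $P$ is finite. Your sketch has the right instincts about where condition~\ref{lem:hp:shield backup} is used and why $\mathcal C$ shields the seed, but the mechanism --- right-priority construction of $r$, the two-translation producibility of $R$, and the $m_n$-increasing induction --- is missing.
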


\begin{figure}[ht]
  \begin{center}
    \begin{tikzpicture}[scale=\scale]\draw[draw={rgb,255:red,200; green,200; blue,200}](0.5,-0.2) rectangle (26.5, -15.2);
\draw[draw={rgb,255:red,200; green,200; blue,200}](0.5,-15.2)--(0.5,-0.2);
\draw[draw={rgb,255:red,200; green,200; blue,200}](1.5,-15.2)--(1.5,-0.2);
\draw[draw={rgb,255:red,200; green,200; blue,200}](2.5,-15.2)--(2.5,-0.2);
\draw[draw={rgb,255:red,200; green,200; blue,200}](3.5,-15.2)--(3.5,-0.2);
\draw[draw={rgb,255:red,200; green,200; blue,200}](4.5,-15.2)--(4.5,-0.2);
\draw[draw={rgb,255:red,200; green,200; blue,200}](5.5,-15.2)--(5.5,-0.2);
\draw[draw={rgb,255:red,200; green,200; blue,200}](6.5,-15.2)--(6.5,-0.2);
\draw[draw={rgb,255:red,200; green,200; blue,200}](7.5,-15.2)--(7.5,-0.2);
\draw[draw={rgb,255:red,200; green,200; blue,200}](8.5,-15.2)--(8.5,-0.2);
\draw[draw={rgb,255:red,200; green,200; blue,200}](9.5,-15.2)--(9.5,-0.2);
\draw[draw={rgb,255:red,200; green,200; blue,200}](10.5,-15.2)--(10.5,-0.2);
\draw[draw={rgb,255:red,200; green,200; blue,200}](11.5,-15.2)--(11.5,-0.2);
\draw[draw={rgb,255:red,200; green,200; blue,200}](12.5,-15.2)--(12.5,-0.2);
\draw[draw={rgb,255:red,200; green,200; blue,200}](13.5,-15.2)--(13.5,-0.2);
\draw[draw={rgb,255:red,200; green,200; blue,200}](14.5,-15.2)--(14.5,-0.2);
\draw[draw={rgb,255:red,200; green,200; blue,200}](15.5,-15.2)--(15.5,-0.2);
\draw[draw={rgb,255:red,200; green,200; blue,200}](16.5,-15.2)--(16.5,-0.2);
\draw[draw={rgb,255:red,200; green,200; blue,200}](17.5,-15.2)--(17.5,-0.2);
\draw[draw={rgb,255:red,200; green,200; blue,200}](18.5,-15.2)--(18.5,-0.2);
\draw[draw={rgb,255:red,200; green,200; blue,200}](19.5,-15.2)--(19.5,-0.2);
\draw[draw={rgb,255:red,200; green,200; blue,200}](20.5,-15.2)--(20.5,-0.2);
\draw[draw={rgb,255:red,200; green,200; blue,200}](21.5,-15.2)--(21.5,-0.2);
\draw[draw={rgb,255:red,200; green,200; blue,200}](22.5,-15.2)--(22.5,-0.2);
\draw[draw={rgb,255:red,200; green,200; blue,200}](23.5,-15.2)--(23.5,-0.2);
\draw[draw={rgb,255:red,200; green,200; blue,200}](24.5,-15.2)--(24.5,-0.2);
\draw[draw={rgb,255:red,200; green,200; blue,200}](25.5,-15.2)--(25.5,-0.2);
\draw[draw={rgb,255:red,200; green,200; blue,200}](0.5,-0.2)--(26.5,-0.2);
\draw[draw={rgb,255:red,200; green,200; blue,200}](0.5,-1.2)--(26.5,-1.2);
\draw[draw={rgb,255:red,200; green,200; blue,200}](0.5,-2.2)--(26.5,-2.2);
\draw[draw={rgb,255:red,200; green,200; blue,200}](0.5,-3.2)--(26.5,-3.2);
\draw[draw={rgb,255:red,200; green,200; blue,200}](0.5,-4.2)--(26.5,-4.2);
\draw[draw={rgb,255:red,200; green,200; blue,200}](0.5,-5.2)--(26.5,-5.2);
\draw[draw={rgb,255:red,200; green,200; blue,200}](0.5,-6.2)--(26.5,-6.2);
\draw[draw={rgb,255:red,200; green,200; blue,200}](0.5,-7.2)--(26.5,-7.2);
\draw[draw={rgb,255:red,200; green,200; blue,200}](0.5,-8.2)--(26.5,-8.2);
\draw[draw={rgb,255:red,200; green,200; blue,200}](0.5,-9.2)--(26.5,-9.2);
\draw[draw={rgb,255:red,200; green,200; blue,200}](0.5,-10.2)--(26.5,-10.2);
\draw[draw={rgb,255:red,200; green,200; blue,200}](0.5,-11.2)--(26.5,-11.2);
\draw[draw={rgb,255:red,200; green,200; blue,200}](0.5,-12.2)--(26.5,-12.2);
\draw[draw={rgb,255:red,200; green,200; blue,200}](0.5,-13.2)--(26.5,-13.2);
\draw[draw={rgb,255:red,200; green,200; blue,200}](0.5,-14.2)--(26.5,-14.2);
\draw[draw={rgb,255:red,0; green,0; blue,0},fill={rgb,255:red,200; green,113; blue,55},opacity=0.29899997,fill opacity=0.29899997](1.65,-8.35) rectangle (2.35, -9.05);
\draw[draw={rgb,255:red,0; green,0; blue,0},fill={rgb,255:red,200; green,113; blue,55},opacity=0.29899997,fill opacity=0.29899997](2.65,-8.35) rectangle (3.35, -9.05);
\draw[draw={rgb,255:red,0; green,0; blue,0},fill={rgb,255:red,200; green,113; blue,55},opacity=0.29899997,fill opacity=0.29899997](3.65,-8.35) rectangle (4.35, -9.05);
\draw[draw={rgb,255:red,0; green,0; blue,0},fill={rgb,255:red,200; green,113; blue,55},opacity=0.29899997,fill opacity=0.29899997](4.65,-8.35) rectangle (5.35, -9.05);
\draw[draw={rgb,255:red,0; green,0; blue,0},fill={rgb,255:red,200; green,113; blue,55},opacity=0.29899997,fill opacity=0.29899997](5.65,-8.35) rectangle (6.35, -9.05);
\draw[draw={rgb,255:red,0; green,0; blue,0},fill={rgb,255:red,200; green,113; blue,55},opacity=0.29899997,fill opacity=0.29899997](6.65,-8.35) rectangle (7.35, -9.05);
\draw[draw={rgb,255:red,0; green,0; blue,0},fill={rgb,255:red,200; green,113; blue,55},opacity=0.29899997,fill opacity=0.29899997](7.65,-8.35) rectangle (8.35, -9.05);
\draw[draw={rgb,255:red,0; green,0; blue,0},fill={rgb,255:red,200; green,113; blue,55},opacity=0.29899997,fill opacity=0.29899997](7.65,-9.35) rectangle (8.35, -10.05);
\draw[draw={rgb,255:red,0; green,0; blue,0},fill={rgb,255:red,200; green,113; blue,55},opacity=0.29899997,fill opacity=0.29899997](7.65,-10.35) rectangle (8.35, -11.05);
\draw[draw={rgb,255:red,0; green,0; blue,0},fill={rgb,255:red,200; green,113; blue,55},opacity=0.29899997,fill opacity=0.29899997](7.65,-11.35) rectangle (8.35, -12.05);
\draw[draw={rgb,255:red,0; green,0; blue,0},fill={rgb,255:red,200; green,113; blue,55},opacity=0.29899997,fill opacity=0.29899997](6.65,-11.35) rectangle (7.35, -12.05);
\draw[draw={rgb,255:red,0; green,0; blue,0},fill={rgb,255:red,200; green,113; blue,55},opacity=0.29899997,fill opacity=0.29899997](5.65,-11.35) rectangle (6.35, -12.05);
\draw[draw={rgb,255:red,0; green,0; blue,0},fill={rgb,255:red,200; green,113; blue,55},opacity=0.29899997,fill opacity=0.29899997](4.65,-11.35) rectangle (5.35, -12.05);
\draw[draw={rgb,255:red,0; green,0; blue,0},fill={rgb,255:red,200; green,113; blue,55},opacity=0.29899997,fill opacity=0.29899997](3.65,-11.35) rectangle (4.35, -12.05);
\draw[draw={rgb,255:red,0; green,0; blue,0},fill={rgb,255:red,200; green,113; blue,55},opacity=0.29899997,fill opacity=0.29899997](3.65,-12.35) rectangle (4.35, -13.05);
\draw[draw={rgb,255:red,0; green,0; blue,0},fill={rgb,255:red,200; green,113; blue,55},opacity=0.29899997,fill opacity=0.29899997](3.65,-13.35) rectangle (4.35, -14.05);
\draw[draw={rgb,255:red,0; green,0; blue,0},fill={rgb,255:red,200; green,113; blue,55},opacity=0.29899997,fill opacity=0.29899997](4.65,-13.35) rectangle (5.35, -14.05);
\draw[draw={rgb,255:red,200; green,113; blue,55},opacity=0.29899997,thick](2,-8.7)--(8,-8.7)--(8,-11.7)--(4,-11.7)--(4,-13.7)--(5.5,-13.7);
\draw[draw={rgb,255:red,0; green,0; blue,0},fill={rgb,255:red,200; green,113; blue,55},opacity=0.5,fill opacity=0.5](5.65,-13.35) rectangle (6.35, -14.05);
\draw[draw={rgb,255:red,0; green,0; blue,0},fill={rgb,255:red,200; green,113; blue,55},opacity=0.5,fill opacity=0.5](6.65,-13.35) rectangle (7.35, -14.05);
\draw[draw={rgb,255:red,0; green,0; blue,0},fill={rgb,255:red,200; green,113; blue,55},opacity=0.5,fill opacity=0.5](7.65,-13.35) rectangle (8.35, -14.05);
\draw[draw={rgb,255:red,0; green,0; blue,0},fill={rgb,255:red,200; green,113; blue,55},opacity=0.5,fill opacity=0.5](7.65,-12.35) rectangle (8.35, -13.05);
\draw[draw={rgb,255:red,0; green,0; blue,0},fill={rgb,255:red,200; green,113; blue,55},opacity=0.5,fill opacity=0.5](8.65,-12.35) rectangle (9.35, -13.05);
\draw[draw={rgb,255:red,0; green,0; blue,0},fill={rgb,255:red,200; green,113; blue,55},opacity=0.5,fill opacity=0.5](9.65,-12.35) rectangle (10.35, -13.05);
\draw[draw={rgb,255:red,0; green,0; blue,0},fill={rgb,255:red,200; green,113; blue,55},opacity=0.5,fill opacity=0.5](9.65,-11.35) rectangle (10.35, -12.05);
\draw[draw={rgb,255:red,0; green,0; blue,0},fill={rgb,255:red,200; green,113; blue,55},opacity=0.5,fill opacity=0.5](9.65,-10.35) rectangle (10.35, -11.05);
\draw[draw={rgb,255:red,0; green,0; blue,0},fill={rgb,255:red,200; green,113; blue,55},opacity=0.5,fill opacity=0.5](9.65,-9.35) rectangle (10.35, -10.05);
\draw[draw={rgb,255:red,0; green,0; blue,0},fill={rgb,255:red,200; green,113; blue,55},opacity=0.5,fill opacity=0.5](10.65,-9.35) rectangle (11.35, -10.05);
\draw[draw={rgb,255:red,0; green,0; blue,0},fill={rgb,255:red,200; green,113; blue,55},opacity=0.5,fill opacity=0.5](11.65,-9.35) rectangle (12.35, -10.05);
\draw[draw={rgb,255:red,0; green,0; blue,0},fill={rgb,255:red,200; green,113; blue,55},opacity=0.5,fill opacity=0.5](12.65,-9.35) rectangle (13.35, -10.05);
\draw[draw={rgb,255:red,0; green,0; blue,0},fill={rgb,255:red,200; green,113; blue,55},opacity=0.5,fill opacity=0.5](13.65,-9.35) rectangle (14.35, -10.05);
\draw[draw={rgb,255:red,0; green,0; blue,0},fill={rgb,255:red,200; green,113; blue,55},opacity=0.5,fill opacity=0.5](14.65,-9.35) rectangle (15.35, -10.05);
\draw[draw={rgb,255:red,0; green,0; blue,0},fill={rgb,255:red,200; green,113; blue,55},opacity=0.5,fill opacity=0.5](15.65,-9.35) rectangle (16.35, -10.05);
\draw[draw={rgb,255:red,0; green,0; blue,0},fill={rgb,255:red,200; green,113; blue,55},opacity=0.5,fill opacity=0.5](16.65,-9.35) rectangle (17.35, -10.05);
\draw[draw={rgb,255:red,0; green,0; blue,0},fill={rgb,255:red,200; green,113; blue,55},opacity=0.5,fill opacity=0.5](16.65,-10.35) rectangle (17.35, -11.05);
\draw[draw={rgb,255:red,0; green,0; blue,0},fill={rgb,255:red,200; green,113; blue,55},opacity=0.5,fill opacity=0.5](16.65,-11.35) rectangle (17.35, -12.05);
\draw[draw={rgb,255:red,0; green,0; blue,0},fill={rgb,255:red,200; green,113; blue,55},opacity=0.5,fill opacity=0.5](16.65,-12.35) rectangle (17.35, -13.05);
\draw[draw={rgb,255:red,0; green,0; blue,0},fill={rgb,255:red,200; green,113; blue,55},opacity=0.5,fill opacity=0.5](17.65,-12.35) rectangle (18.35, -13.05);
\draw[draw={rgb,255:red,0; green,0; blue,0},fill={rgb,255:red,200; green,113; blue,55},opacity=0.5,fill opacity=0.5](18.65,-12.35) rectangle (19.35, -13.05);
\draw[draw={rgb,255:red,0; green,0; blue,0},fill={rgb,255:red,200; green,113; blue,55},opacity=0.5,fill opacity=0.5](18.65,-11.35) rectangle (19.35, -12.05);
\draw[draw={rgb,255:red,0; green,0; blue,0},fill={rgb,255:red,200; green,113; blue,55},opacity=0.5,fill opacity=0.5](18.65,-10.35) rectangle (19.35, -11.05);
\draw[draw={rgb,255:red,0; green,0; blue,0},fill={rgb,255:red,200; green,113; blue,55},opacity=0.5,fill opacity=0.5](18.65,-9.35) rectangle (19.35, -10.05);
\draw[draw={rgb,255:red,0; green,0; blue,0},fill={rgb,255:red,200; green,113; blue,55},opacity=0.5,fill opacity=0.5](19.65,-9.35) rectangle (20.35, -10.05);
\draw[draw={rgb,255:red,0; green,0; blue,0},fill={rgb,255:red,200; green,113; blue,55},opacity=0.5,fill opacity=0.5](19.65,-8.35) rectangle (20.35, -9.05);
\draw[draw={rgb,255:red,0; green,0; blue,0},fill={rgb,255:red,200; green,113; blue,55},opacity=0.5,fill opacity=0.5](19.65,-7.35) rectangle (20.35, -8.05);
\draw[draw={rgb,255:red,0; green,0; blue,0},fill={rgb,255:red,200; green,113; blue,55},opacity=0.5,fill opacity=0.5](20.65,-7.35) rectangle (21.35, -8.05);
\draw[draw={rgb,255:red,0; green,0; blue,0},fill={rgb,255:red,200; green,113; blue,55},opacity=0.5,fill opacity=0.5](21.65,-7.35) rectangle (22.35, -8.05);
\draw[draw={rgb,255:red,0; green,0; blue,0},fill={rgb,255:red,200; green,113; blue,55},opacity=0.5,fill opacity=0.5](22.65,-7.35) rectangle (23.35, -8.05);
\draw[draw={rgb,255:red,0; green,0; blue,0},fill={rgb,255:red,200; green,113; blue,55},opacity=0.5,fill opacity=0.5](22.65,-8.35) rectangle (23.35, -9.05);
\draw[draw={rgb,255:red,0; green,0; blue,0},fill={rgb,255:red,200; green,113; blue,55},opacity=0.5,fill opacity=0.5](22.65,-9.35) rectangle (23.35, -10.05);
\draw[draw={rgb,255:red,0; green,0; blue,0},fill={rgb,255:red,200; green,113; blue,55},opacity=0.5,fill opacity=0.5](22.65,-10.35) rectangle (23.35, -11.05);
\draw[draw={rgb,255:red,0; green,0; blue,0},fill={rgb,255:red,200; green,113; blue,55},opacity=0.5,fill opacity=0.5](23.65,-10.35) rectangle (24.35, -11.05);
\draw[draw={rgb,255:red,0; green,0; blue,0},fill={rgb,255:red,200; green,113; blue,55},opacity=0.5,fill opacity=0.5](24.65,-10.35) rectangle (25.35, -11.05);
\draw[draw={rgb,255:red,0; green,0; blue,0},fill={rgb,255:red,200; green,113; blue,55},opacity=0.5,fill opacity=0.5](24.65,-9.35) rectangle (25.35, -10.05);
\draw[draw={rgb,255:red,0; green,0; blue,0},fill={rgb,255:red,200; green,113; blue,55},opacity=0.5,fill opacity=0.5](24.65,-8.35) rectangle (25.35, -9.05);
\draw[draw={rgb,255:red,0; green,0; blue,0},fill={rgb,255:red,200; green,113; blue,55},opacity=0.5,fill opacity=0.5](24.65,-7.35) rectangle (25.35, -8.05);
\draw[draw={rgb,255:red,0; green,0; blue,0},fill={rgb,255:red,200; green,113; blue,55},opacity=0.5,fill opacity=0.5](24.65,-6.35) rectangle (25.35, -7.05);
\draw[draw={rgb,255:red,0; green,0; blue,0},fill={rgb,255:red,200; green,113; blue,55},opacity=0.5,fill opacity=0.5](24.65,-5.35) rectangle (25.35, -6.05);
\draw[draw={rgb,255:red,0; green,0; blue,0},fill={rgb,255:red,200; green,113; blue,55},opacity=0.5,fill opacity=0.5](23.65,-5.35) rectangle (24.35, -6.05);
\draw[draw={rgb,255:red,0; green,0; blue,0},fill={rgb,255:red,200; green,113; blue,55},opacity=0.5,fill opacity=0.5](22.65,-5.35) rectangle (23.35, -6.05);
\draw[draw={rgb,255:red,0; green,0; blue,0},fill={rgb,255:red,200; green,113; blue,55},opacity=0.5,fill opacity=0.5](21.65,-5.35) rectangle (22.35, -6.05);
\draw[draw={rgb,255:red,0; green,0; blue,0},fill={rgb,255:red,200; green,113; blue,55},opacity=0.5,fill opacity=0.5](20.65,-5.35) rectangle (21.35, -6.05);
\draw[draw={rgb,255:red,0; green,0; blue,0},fill={rgb,255:red,200; green,113; blue,55},opacity=0.5,fill opacity=0.5](19.65,-5.35) rectangle (20.35, -6.05);
\draw[draw={rgb,255:red,0; green,0; blue,0},fill={rgb,255:red,200; green,113; blue,55},opacity=0.5,fill opacity=0.5](19.65,-4.35) rectangle (20.35, -5.05);
\draw[draw={rgb,255:red,0; green,0; blue,0},fill={rgb,255:red,200; green,113; blue,55},opacity=0.5,fill opacity=0.5](18.65,-4.35) rectangle (19.35, -5.05);
\draw[draw={rgb,255:red,0; green,0; blue,0},fill={rgb,255:red,200; green,113; blue,55},opacity=0.5,fill opacity=0.5](17.65,-4.35) rectangle (18.35, -5.05);
\draw[draw={rgb,255:red,0; green,0; blue,0},fill={rgb,255:red,200; green,113; blue,55},opacity=0.5,fill opacity=0.5](16.65,-4.35) rectangle (17.35, -5.05);
\draw[draw={rgb,255:red,0; green,0; blue,0},fill={rgb,255:red,200; green,113; blue,55},opacity=0.5,fill opacity=0.5](16.65,-3.35) rectangle (17.35, -4.05);
\draw[draw={rgb,255:red,0; green,0; blue,0},fill={rgb,255:red,200; green,113; blue,55},opacity=0.5,fill opacity=0.5](17.65,-3.35) rectangle (18.35, -4.05);
\draw[draw={rgb,255:red,0; green,0; blue,0},fill={rgb,255:red,200; green,113; blue,55},opacity=0.5,fill opacity=0.5](18.65,-3.35) rectangle (19.35, -4.05);
\draw[draw={rgb,255:red,0; green,0; blue,0},fill={rgb,255:red,200; green,113; blue,55},opacity=0.5,fill opacity=0.5](18.65,-2.35) rectangle (19.35, -3.05);
\draw[draw={rgb,255:red,0; green,0; blue,0},fill={rgb,255:red,200; green,113; blue,55},opacity=0.5,fill opacity=0.5](18.65,-1.35) rectangle (19.35, -2.05);
\draw[draw={rgb,255:red,0; green,0; blue,0},fill={rgb,255:red,200; green,113; blue,55},opacity=0.5,fill opacity=0.5](17.65,-1.35) rectangle (18.35, -2.05);
\draw[draw={rgb,255:red,0; green,0; blue,0},fill={rgb,255:red,200; green,113; blue,55},opacity=0.5,fill opacity=0.5](16.65,-1.35) rectangle (17.35, -2.05);
\draw[draw={rgb,255:red,0; green,0; blue,0},fill={rgb,255:red,200; green,113; blue,55},opacity=0.5,fill opacity=0.5](15.65,-1.35) rectangle (16.35, -2.05);
\draw[draw={rgb,255:red,0; green,0; blue,0},fill={rgb,255:red,200; green,113; blue,55},opacity=0.5,fill opacity=0.5](15.65,-2.35) rectangle (16.35, -3.05);
\draw[draw={rgb,255:red,0; green,0; blue,0},fill={rgb,255:red,200; green,113; blue,55},opacity=0.5,fill opacity=0.5](15.65,-3.35) rectangle (16.35, -4.05);
\draw[draw={rgb,255:red,0; green,0; blue,0},fill={rgb,255:red,200; green,113; blue,55},opacity=0.5,fill opacity=0.5](15.65,-4.35) rectangle (16.35, -5.05);
\draw[draw={rgb,255:red,0; green,0; blue,0},fill={rgb,255:red,200; green,113; blue,55},opacity=0.5,fill opacity=0.5](15.65,-5.35) rectangle (16.35, -6.05);
\draw[draw={rgb,255:red,0; green,0; blue,0},fill={rgb,255:red,200; green,113; blue,55},opacity=0.5,fill opacity=0.5](14.65,-5.35) rectangle (15.35, -6.05);
\draw[draw={rgb,255:red,0; green,0; blue,0},fill={rgb,255:red,200; green,113; blue,55},opacity=0.5,fill opacity=0.5](13.65,-5.35) rectangle (14.35, -6.05);
\draw[draw={rgb,255:red,0; green,0; blue,0},fill={rgb,255:red,200; green,113; blue,55},opacity=0.5,fill opacity=0.5](12.65,-5.35) rectangle (13.35, -6.05);
\draw[draw={rgb,255:red,200; green,113; blue,55},opacity=0.5,thick](6,-13.7)--(8,-13.7)--(8,-12.7)--(10,-12.7)--(10,-9.7)--(17,-9.7)--(17,-12.7)--(19,-12.7)--(19,-9.7)--(20,-9.7)--(20,-7.7)--(23,-7.7)--(23,-10.7)--(25,-10.7)--(25,-5.7)--(20,-5.7)--(20,-4.7)--(17,-4.7)--(17,-3.7)--(19,-3.7)--(19,-1.7)--(16,-1.7)--(16,-5.7)--(13,-5.7);
\draw[draw=none,fill={rgb,255:red,28; green,36; blue,31},thin](9, -12.7) ellipse (0.1cm and 0.1cm);\draw[draw=none,fill={rgb,255:red,28; green,36; blue,31},thin](5, -13.71) ellipse (0.1cm and 0.1cm);\draw(8.13, -12.59) node[anchor=south west] {$P_j$};
\draw(4.11, -13.6) node[anchor=south west] {$P_i$};
\draw[draw=none,fill={rgb,255:red,28; green,36; blue,31},thin](14, -5.7) ellipse (0.1cm and 0.1cm);\draw(13.4, -7.39) node[anchor=south west] {$P_k$};
\draw[draw={rgb,255:red,0; green,0; blue,0}](9.5,-12.69)--(9.5,-16.7);
\draw(8.23, -16.93) node[anchor=south west] {$l_j$};
\draw[draw=none,fill={rgb,255:red,28; green,36; blue,31},thin](9.5, -12.69) ellipse (0.1cm and 0.1cm);\draw[draw={rgb,255:red,0; green,0; blue,0}](5.5,-13.7)--(5.5,-16.7);
\draw(4.32, -16.93) node[anchor=south west] {$l_i$};
\draw[draw={rgb,255:red,0; green,0; blue,0}](13.5,-5.7)--(13.5,0.3);
\draw(12.34, -0.81) node[anchor=south west] {$l^k$};
\draw[draw=none,fill={rgb,255:red,28; green,36; blue,31},thin](13.5, -5.7) ellipse (0.1cm and 0.1cm);\draw[draw=none,fill={rgb,255:red,28; green,36; blue,31},thin](5.5, -13.7) ellipse (0.1cm and 0.1cm);\draw[draw={rgb,255:red,0; green,0; blue,0}](9.5,-6.71)--(9.5,0.3);
\draw(5, -0.96) node[anchor=south west] {$l^k+\vect{P_jP_i}$};
\draw[draw=none,fill={rgb,255:red,28; green,36; blue,31},thin](9.5, -6.71) ellipse (0.1cm and 0.1cm);
\end{tikzpicture}
  \end{center}
  \caption{A suffix $P_{i,i+1,\ldots,k+1}$  of a path $P$. 
    Tiles $P_i$, $P_j$ and $P_k$ are shown along with the four rays 
    and the three glues (at ray starting points) of Hypotheses~\ref{lem:hp:shield ij}--\ref{lem:hp:shield backup} of Definition~\ref{def:shield},
   thus $(i, j, k)$ is a shield for $P$.
   The goal of this section is to prove Lemma~\ref{lem:shield} showing that such a path is pumpable or fragile.}
  \label{fig:shield-setup}
\end{figure}
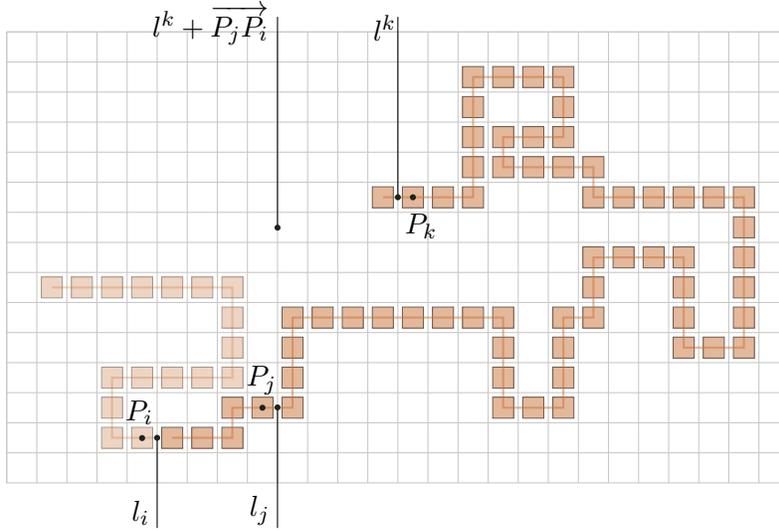

We prove Lemma~\ref{lem:shield} in Section~\ref{subsec:dominant}.
First, we state a simplifying assumption on $P$ (that we show is without loss of generality), then give an intuition for the overall proof strategy. The section then proceeds with a number of technical claims followed by the actual proof of Lemma~\ref{lem:shield}.
In this section, for the sake of brevity, claims are stated without repeating the hypotheses on $P$ and related notation. 

\argument{Assumption on $P$ used throughout this section} 
Observe that if we prove that any prefix of $P$ is pumpable or fragile, this means that $P$ is also pumpable or fragile.
Therefore, without loss of generality, for the remainder of this section (including the proof of Lemma~\ref{lem:shield}) we suppose that the last tile of $P$ is $P_{k+1}$, i.e. that $P=P_{0,1,\ldots,k+1}$.

\argument{Intuition for the proof of Lemma~\ref{lem:shield}}
Starting from a producible path $P$ with the  properties described in  Definition~\ref{def:shield},
we define three indices on $P$ (called a ``shield'') which in turn are used to define an infinite curve $c$ that partitions $\mathbb{R}^2$.
Then to build a path $R$ that stays on the right hand side of $c$, and that will ultimately allow us to reason about $P$ and show that $P$ is pumpable or fragile.
Since the proof is rather involved, we split it up into parts, each containing one or more \sublemmanames:

\begin{itemize}
\item Using Definition~\ref{def:shield}, in Subsection~\ref{subsec:c} we define a bi-infinite curve $c$ using the ray $l^i$, the path $P_{i+1,i+2,\ldots,k}$ and the ray $l^k$.
By Definition~\ref{def:simple infinite almost-vertical polygonal curve}, and Theorem~\ref{thm:infinite-jordan},
$c$  cuts $\mathbb{R}^2$ into two pieces: the left-hand and right-hand side of $c$.
 In the rest of the proof, we will use the right-hand side $\mathcal C \subset \mathbb{R}^2$ of $c$ as a ``workspace'' where we can edit paths freely (hence the name of that connected component in Definition~\ref{def:workspace}). The intuition is that $c$ ``shields our edits'' from $\sigma\cup\asm {P_{0,1,\ldots,i}}$, which is entirely in the left-hand side of $c$, and thus prevents $\sigma\cup \asm{P_{0,1,\ldots,i}}$ from blocking these paths in the workspace $\mathcal C$.

\item We then reason by induction on the length of $P$.
  The initial setup for the inductive argument  is given in Subsection~\ref{subsec:initial}, and goes as follows (in a number of places we may reach the early conclusion that $P$ is fragile, in which case we are done with the entire proof of Lemma~\ref{lem:shield}):
  \begin{itemize}
  \item In Subsection~\ref{subsec:first}, we define a tile $P_{m_0}$ of $P$ called a \emph{dominant tile}, which means that $P_{m_0}$ is such that for all integers $n\geq 0$, $P_{m_0}+n\vect{P_iP_j}$ is in $\mathcal C$.
      \item Then, in Subsection~\ref{subsec:r}, we define a binding path $r$ in $\mathbb{Z}^2$ and prove a number of key properties about about it and it's translation $r+\vpji$. We then use $r$ as a sequence of locations along which we we can either tile a producible path $R$, or else show that $P$ is fragile. 
  The path $R$ is built in such a way that both $R$ and $R+\vpji$ are producible and~in~$\mathcal C$.
  \item To complete the setup for the  inductive argument, in Subsection~\ref{subsec:u0v0}  we use $R$ and $m_0$ to define the initial inductive indices $u_0$ and $v_0$.
    To define these indices, we use a ray $L^{m_0}$ that starts from the tile $P_{m_0}$ and splits the component $\mathcal C$ into two parts (called $\cp$ and $\cm$), which guarantees that $u_0\leq m_0\leq v_0$, which in turn means that the pumping of $P$ between $u_0$ and $v_0$ is well-defined (i.e. $u_0<v_0$) and has pumping vector $\vpij$.
    (note that the pumping is not a simple path until the last step of the induction, where we eventually find a simple pumping of $P$).
  \end{itemize}

  The inductive step is then defined in Subsection~\ref{subsec:dominant}, where we show that either $P$ is pumpable or fragile, or else we can use $R$ again, along with inductive indices $u_n$, $m_n$, $v_n$, to find new indices $u_{n+1}$, $m_{n+1}$ and $v_{n+1}$, but with $m_{n+1}>m_n$.
  Since $P$ is of finite length, we will eventually run out of new indices (values for $m_{n+1}$, in particular), leading to the conclusion that $P$ is either pumpable or fragile.
\end{itemize}

\subsection{Reasoning about the curve $c$ and the workspace $\mathcal C$}\label{subsec:c}
\begin{sublemma}\label{lem:c-cuts}
  Let $(i, j, k)$ be a shield for $P$. The cut of shield $(i, j, k)$, called curve $c$ Definition~\ref{def:c}, is simple and cuts the plane $\R^2$ into two connected components. Moreover, these two components are defined in the conclusion of Theorem~\ref{thm:infinite-jordan}.
\end{sublemma}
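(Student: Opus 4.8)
The plan is to check that the curve $c$ meets the hypotheses of the infinite polygonal Jordan curve theorem, Theorem~\ref{thm:infinite-jordan} --- i.e.\ that $c$ is a simple infinite almost-vertical polygonal curve in the sense of Definition~\ref{def:simple infinite almost-vertical polygonal curve} --- and then invoke that theorem; its two connected components (the left-hand side and right-hand side of $c$) are then exactly the ones named in the claim. That $c$ is an ``infinite almost-vertical polygonal'' curve is immediate from the definition of $c$ in Definition~\ref{def:c}: its first piece $\reverse{l^i}$ is a vertical ray going down to $y=-\infty$ in a fixed column, its last piece $l^k$ is a vertical ray going up to $y=+\infty$ in a fixed column, and its middle part $\concat{\gs{l^i(0)}{P_{i+1}}, \embed{P_{\range{i+1}{i+2}{k}}}, \gs{P_k}{l^k(0)}}$ is a bounded polygonal curve whose constituent segments meet end-to-end (so $c$ is a genuine polygonal curve). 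Hence all the work lies in proving that $c$ is simple.

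First I would observe that the middle part of $c$ is exactly $\embed P$ restricted to the parameter interval $[\,i+\tfrac12,\,k+\tfrac12\,]$: since $l^i(0)=\pos{\glu P i}$ is the midpoint of $\gs{P_i}{P_{i+1}}$ and $l^k(0)=\pos{\glu P k}$ is the midpoint of $\gs{P_k}{P_{k+1}}$, the half-segments $\gs{l^i(0)}{P_{i+1}}$ and $\gs{P_k}{l^k(0)}$ are $\embed P$ on $[\,i+\tfrac12,\,i+1\,]$ and on $[\,k,\,k+\tfrac12\,]$ respectively, while $\embed{P_{\range{i+1}{i+2}{k}}}$ is $\embed P$ on $[\,i+1,\,k\,]$ up to reparametrisation. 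As $P$ is a simple path, $\embed P$ is a simple curve, so this restriction is simple; this rules out every self-intersection inside the middle part. The two rays $\reverse{l^i}$ and $l^k$ are each individually simple. It therefore remains only to rule out intersections between non-adjacent pieces of $c$. By the visibility of $\glueP i{i+1}$ from the south (Hypothesis~\ref{lem:hp:shield ij} of Definition~\ref{def:shield}), no point of the ray $l^i$ other than its start $l^i(0)$ lies on $\embed P$, so the only point $\reverse{l^i}$ shares with the middle part is their common endpoint $l^i(0)$; symmetrically, by the visibility of $\glu P k$ from the north (Hypothesis~\ref{lem:hp:shield k}), the only point $l^k$ shares with the middle part (a subset of $\embed P$) is their common endpoint $l^k(0)$.

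The crux --- and the step I expect to be the main obstacle --- is showing $\reverse{l^i}\cap l^k=\emptyset$, since this is the only place where the two ``far apart'' ends of $c$ must be compared to each other rather than to the bounded middle part. I would argue by contradiction. Every point of $\reverse{l^i}$ lies in the column of $l^i(0)$ with $y$-coordinate at most $\ycoord{P_i}$; and since visibility from the north is defined only for horizontal glues, $\glu P k$ points east or west, so $l^k(0)$ has $y$-coordinate $\ycoord{P_k}$ and every point of $l^k$ lies in the column of $l^k(0)$ with $y$-coordinate at least $\ycoord{P_k}$. A common point would thus force both columns to coincide and $\ycoord{P_k}\leq\ycoord{P_i}$; but then the point $l^k(0)=\pos{\glu P k}$ itself would lie on $\reverse{l^i}$. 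This point lies on $\embed P$ (it is $\embed P(k+\tfrac12)$, which is defined because $k<|P|-1$), and it differs from $l^i(0)=\pos{\glu P i}$ (distinct glues of a simple path occupy distinct positions), so it would lie on $l^i$ at a point other than $l^i(0)$ --- contradicting the visibility of $\glueP i{i+1}$ from the south. Hence $\reverse{l^i}\cap l^k=\emptyset$, so $c$ is simple, and Theorem~\ref{thm:infinite-jordan} then delivers the two connected components as claimed. Everything except this last disjointness argument is routine bookkeeping about grid-aligned segments and rays.
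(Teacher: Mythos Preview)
Your proof is correct and follows essentially the same approach as the paper: verify that $c$ is a simple infinite almost-vertical polygonal curve and then invoke Theorem~\ref{thm:infinite-jordan}. The paper's argument is terser---it dispatches the disjointness of $\reverse{l^i}$ and $l^k$ in one line by citing visibility of both glues together with $i\neq k$---whereas you spell out the contradiction explicitly, but the underlying reasoning is the same.
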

\begin{proof}
  We first claim that $c$ is a simple curve. Indeed, $\embed{P_{i+1,i+2,\ldots,k}}$ intersects neither $\reverse{l^i}$ nor $l^k$ (by visibility of $\glu P i$ and $\glu P k$, respectively).
 The half-line $\reverse{l^i}$ (from the south) and the ray $l^k$ (to the north) do not intersect by Definition~\ref{def:shield} and in particular by the visibility of $\glu P i$ and $\glu P k$ and since $i \neq k$.
  Moreover, the length-$1/2$ segments
  $\gs{l^i(0)}{\pos{P_{i+1}}}$
  (that joins the ray $\reverse{l^i}$ to $\embed{P_{i+1,i+2,\ldots,k}}$), and 
   $\gs{\pos{P_k}}{l^k(0)}$ (that joins 
  $\embed{P_{i+1,i+2,\ldots,k}}$ to $l^k$) are horizontal, and only intersect $\embed{P_{i+1,i+2,\ldots,k}}$ and those rays at their respective endpoints.
  Hence, $c$ is a simple curve.
  Moreover, $c$ is connected since the endpoints of the five connected curves that define it are equal in the order given. 
  
 Hence $c$ satisfies Definition~\ref{def:simple infinite almost-vertical polygonal curve}. 
  Then, by Theorem~\ref{thm:infinite-jordan}, $c$ cuts the plane $\R^2$ into the two connected components defined in its conclusion.
\end{proof}

\begin{figure}[ht]
  \begin{center}
    \begin{tikzpicture}[scale=\scale]\draw[draw={rgb,255:red,200; green,200; blue,200}](0.5,-0.2) rectangle (26.5, -15.2);
\draw[draw={rgb,255:red,200; green,200; blue,200}](0.5,-15.2)--(0.5,-0.2);
\draw[draw={rgb,255:red,200; green,200; blue,200}](1.5,-15.2)--(1.5,-0.2);
\draw[draw={rgb,255:red,200; green,200; blue,200}](2.5,-15.2)--(2.5,-0.2);
\draw[draw={rgb,255:red,200; green,200; blue,200}](3.5,-15.2)--(3.5,-0.2);
\draw[draw={rgb,255:red,200; green,200; blue,200}](4.5,-15.2)--(4.5,-0.2);
\draw[draw={rgb,255:red,200; green,200; blue,200}](5.5,-15.2)--(5.5,-0.2);
\draw[draw={rgb,255:red,200; green,200; blue,200}](6.5,-15.2)--(6.5,-0.2);
\draw[draw={rgb,255:red,200; green,200; blue,200}](7.5,-15.2)--(7.5,-0.2);
\draw[draw={rgb,255:red,200; green,200; blue,200}](8.5,-15.2)--(8.5,-0.2);
\draw[draw={rgb,255:red,200; green,200; blue,200}](9.5,-15.2)--(9.5,-0.2);
\draw[draw={rgb,255:red,200; green,200; blue,200}](10.5,-15.2)--(10.5,-0.2);
\draw[draw={rgb,255:red,200; green,200; blue,200}](11.5,-15.2)--(11.5,-0.2);
\draw[draw={rgb,255:red,200; green,200; blue,200}](12.5,-15.2)--(12.5,-0.2);
\draw[draw={rgb,255:red,200; green,200; blue,200}](13.5,-15.2)--(13.5,-0.2);
\draw[draw={rgb,255:red,200; green,200; blue,200}](14.5,-15.2)--(14.5,-0.2);
\draw[draw={rgb,255:red,200; green,200; blue,200}](15.5,-15.2)--(15.5,-0.2);
\draw[draw={rgb,255:red,200; green,200; blue,200}](16.5,-15.2)--(16.5,-0.2);
\draw[draw={rgb,255:red,200; green,200; blue,200}](17.5,-15.2)--(17.5,-0.2);
\draw[draw={rgb,255:red,200; green,200; blue,200}](18.5,-15.2)--(18.5,-0.2);
\draw[draw={rgb,255:red,200; green,200; blue,200}](19.5,-15.2)--(19.5,-0.2);
\draw[draw={rgb,255:red,200; green,200; blue,200}](20.5,-15.2)--(20.5,-0.2);
\draw[draw={rgb,255:red,200; green,200; blue,200}](21.5,-15.2)--(21.5,-0.2);
\draw[draw={rgb,255:red,200; green,200; blue,200}](22.5,-15.2)--(22.5,-0.2);
\draw[draw={rgb,255:red,200; green,200; blue,200}](23.5,-15.2)--(23.5,-0.2);
\draw[draw={rgb,255:red,200; green,200; blue,200}](24.5,-15.2)--(24.5,-0.2);
\draw[draw={rgb,255:red,200; green,200; blue,200}](25.5,-15.2)--(25.5,-0.2);
\draw[draw={rgb,255:red,200; green,200; blue,200}](0.5,-0.2)--(26.5,-0.2);
\draw[draw={rgb,255:red,200; green,200; blue,200}](0.5,-1.2)--(26.5,-1.2);
\draw[draw={rgb,255:red,200; green,200; blue,200}](0.5,-2.2)--(26.5,-2.2);
\draw[draw={rgb,255:red,200; green,200; blue,200}](0.5,-3.2)--(26.5,-3.2);
\draw[draw={rgb,255:red,200; green,200; blue,200}](0.5,-4.2)--(26.5,-4.2);
\draw[draw={rgb,255:red,200; green,200; blue,200}](0.5,-5.2)--(26.5,-5.2);
\draw[draw={rgb,255:red,200; green,200; blue,200}](0.5,-6.2)--(26.5,-6.2);
\draw[draw={rgb,255:red,200; green,200; blue,200}](0.5,-7.2)--(26.5,-7.2);
\draw[draw={rgb,255:red,200; green,200; blue,200}](0.5,-8.2)--(26.5,-8.2);
\draw[draw={rgb,255:red,200; green,200; blue,200}](0.5,-9.2)--(26.5,-9.2);
\draw[draw={rgb,255:red,200; green,200; blue,200}](0.5,-10.2)--(26.5,-10.2);
\draw[draw={rgb,255:red,200; green,200; blue,200}](0.5,-11.2)--(26.5,-11.2);
\draw[draw={rgb,255:red,200; green,200; blue,200}](0.5,-12.2)--(26.5,-12.2);
\draw[draw={rgb,255:red,200; green,200; blue,200}](0.5,-13.2)--(26.5,-13.2);
\draw[draw={rgb,255:red,200; green,200; blue,200}](0.5,-14.2)--(26.5,-14.2);
\draw[draw=none,fill={rgb,255:red,179; green,179; blue,179},opacity=0.25](13.5,-5.7)--(13.5,-0.2)--(26.5,-0.2)--(26.5,-15.2)--(5.5,-15.2)--(5.5,-13.7)(5.5,-13.7)--(8,-13.7)--(8,-12.7)--(10,-12.7)--(10,-9.7)--(17,-9.7)--(17,-12.7)--(19,-12.7)--(19,-9.7)--(20,-9.7)--(20,-7.7)--(23,-7.7)--(23,-10.7)--(25,-10.7)--(25,-5.7)--(20,-5.7)--(20,-4.7)--(17,-4.7)--(17,-3.7)--(19,-3.7)--(19,-1.7)--(16,-1.7)--(16,-5.7)--(13.5,-5.7);
\draw(22.5, -3.7) node[anchor=south west] {$\mathcal C$};
\draw[draw={rgb,255:red,0; green,0; blue,0},fill={rgb,255:red,200; green,113; blue,55},opacity=0.29899997,fill opacity=0.29899997](1.65,-8.35) rectangle (2.35, -9.05);
\draw[draw={rgb,255:red,0; green,0; blue,0},fill={rgb,255:red,200; green,113; blue,55},opacity=0.29899997,fill opacity=0.29899997](2.65,-8.35) rectangle (3.35, -9.05);
\draw[draw={rgb,255:red,0; green,0; blue,0},fill={rgb,255:red,200; green,113; blue,55},opacity=0.29899997,fill opacity=0.29899997](3.65,-8.35) rectangle (4.35, -9.05);
\draw[draw={rgb,255:red,0; green,0; blue,0},fill={rgb,255:red,200; green,113; blue,55},opacity=0.29899997,fill opacity=0.29899997](4.65,-8.35) rectangle (5.35, -9.05);
\draw[draw={rgb,255:red,0; green,0; blue,0},fill={rgb,255:red,200; green,113; blue,55},opacity=0.29899997,fill opacity=0.29899997](5.65,-8.35) rectangle (6.35, -9.05);
\draw[draw={rgb,255:red,0; green,0; blue,0},fill={rgb,255:red,200; green,113; blue,55},opacity=0.29899997,fill opacity=0.29899997](6.65,-8.35) rectangle (7.35, -9.05);
\draw[draw={rgb,255:red,0; green,0; blue,0},fill={rgb,255:red,200; green,113; blue,55},opacity=0.29899997,fill opacity=0.29899997](7.65,-8.35) rectangle (8.35, -9.05);
\draw[draw={rgb,255:red,0; green,0; blue,0},fill={rgb,255:red,200; green,113; blue,55},opacity=0.29899997,fill opacity=0.29899997](7.65,-9.35) rectangle (8.35, -10.05);
\draw[draw={rgb,255:red,0; green,0; blue,0},fill={rgb,255:red,200; green,113; blue,55},opacity=0.29899997,fill opacity=0.29899997](7.65,-10.35) rectangle (8.35, -11.05);
\draw[draw={rgb,255:red,0; green,0; blue,0},fill={rgb,255:red,200; green,113; blue,55},opacity=0.29899997,fill opacity=0.29899997](7.65,-11.35) rectangle (8.35, -12.05);
\draw[draw={rgb,255:red,0; green,0; blue,0},fill={rgb,255:red,200; green,113; blue,55},opacity=0.29899997,fill opacity=0.29899997](6.65,-11.35) rectangle (7.35, -12.05);
\draw[draw={rgb,255:red,0; green,0; blue,0},fill={rgb,255:red,200; green,113; blue,55},opacity=0.29899997,fill opacity=0.29899997](5.65,-11.35) rectangle (6.35, -12.05);
\draw[draw={rgb,255:red,0; green,0; blue,0},fill={rgb,255:red,200; green,113; blue,55},opacity=0.29899997,fill opacity=0.29899997](4.65,-11.35) rectangle (5.35, -12.05);
\draw[draw={rgb,255:red,0; green,0; blue,0},fill={rgb,255:red,200; green,113; blue,55},opacity=0.29899997,fill opacity=0.29899997](3.65,-11.35) rectangle (4.35, -12.05);
\draw[draw={rgb,255:red,0; green,0; blue,0},fill={rgb,255:red,200; green,113; blue,55},opacity=0.29899997,fill opacity=0.29899997](3.65,-12.35) rectangle (4.35, -13.05);
\draw[draw={rgb,255:red,0; green,0; blue,0},fill={rgb,255:red,200; green,113; blue,55},opacity=0.29899997,fill opacity=0.29899997](3.65,-13.35) rectangle (4.35, -14.05);
\draw[draw={rgb,255:red,0; green,0; blue,0},fill={rgb,255:red,200; green,113; blue,55},opacity=0.29899997,fill opacity=0.29899997](4.65,-13.35) rectangle (5.35, -14.05);
\draw[draw={rgb,255:red,200; green,113; blue,55},opacity=0.29899997,thick](2,-8.7)--(8,-8.7)--(8,-11.7)--(4,-11.7)--(4,-13.7)--(5.5,-13.7);
\draw[draw={rgb,255:red,0; green,0; blue,0},fill={rgb,255:red,200; green,113; blue,55},opacity=0.5,fill opacity=0.5](5.65,-13.35) rectangle (6.35, -14.05);
\draw[draw={rgb,255:red,0; green,0; blue,0},fill={rgb,255:red,200; green,113; blue,55},opacity=0.5,fill opacity=0.5](6.65,-13.35) rectangle (7.35, -14.05);
\draw[draw={rgb,255:red,0; green,0; blue,0},fill={rgb,255:red,200; green,113; blue,55},opacity=0.5,fill opacity=0.5](7.65,-13.35) rectangle (8.35, -14.05);
\draw[draw={rgb,255:red,0; green,0; blue,0},fill={rgb,255:red,200; green,113; blue,55},opacity=0.5,fill opacity=0.5](7.65,-12.35) rectangle (8.35, -13.05);
\draw[draw={rgb,255:red,0; green,0; blue,0},fill={rgb,255:red,200; green,113; blue,55},opacity=0.5,fill opacity=0.5](8.65,-12.35) rectangle (9.35, -13.05);
\draw[draw={rgb,255:red,0; green,0; blue,0},fill={rgb,255:red,200; green,113; blue,55},opacity=0.5,fill opacity=0.5](9.65,-12.35) rectangle (10.35, -13.05);
\draw[draw={rgb,255:red,0; green,0; blue,0},fill={rgb,255:red,200; green,113; blue,55},opacity=0.5,fill opacity=0.5](9.65,-11.35) rectangle (10.35, -12.05);
\draw[draw={rgb,255:red,0; green,0; blue,0},fill={rgb,255:red,200; green,113; blue,55},opacity=0.5,fill opacity=0.5](9.65,-10.35) rectangle (10.35, -11.05);
\draw[draw={rgb,255:red,0; green,0; blue,0},fill={rgb,255:red,200; green,113; blue,55},opacity=0.5,fill opacity=0.5](9.65,-9.35) rectangle (10.35, -10.05);
\draw[draw={rgb,255:red,0; green,0; blue,0},fill={rgb,255:red,200; green,113; blue,55},opacity=0.5,fill opacity=0.5](10.65,-9.35) rectangle (11.35, -10.05);
\draw[draw={rgb,255:red,0; green,0; blue,0},fill={rgb,255:red,200; green,113; blue,55},opacity=0.5,fill opacity=0.5](11.65,-9.35) rectangle (12.35, -10.05);
\draw[draw={rgb,255:red,0; green,0; blue,0},fill={rgb,255:red,200; green,113; blue,55},opacity=0.5,fill opacity=0.5](12.65,-9.35) rectangle (13.35, -10.05);
\draw[draw={rgb,255:red,0; green,0; blue,0},fill={rgb,255:red,200; green,113; blue,55},opacity=0.5,fill opacity=0.5](13.65,-9.35) rectangle (14.35, -10.05);
\draw[draw={rgb,255:red,0; green,0; blue,0},fill={rgb,255:red,200; green,113; blue,55},opacity=0.5,fill opacity=0.5](14.65,-9.35) rectangle (15.35, -10.05);
\draw[draw={rgb,255:red,0; green,0; blue,0},fill={rgb,255:red,200; green,113; blue,55},opacity=0.5,fill opacity=0.5](15.65,-9.35) rectangle (16.35, -10.05);
\draw[draw={rgb,255:red,0; green,0; blue,0},fill={rgb,255:red,200; green,113; blue,55},opacity=0.5,fill opacity=0.5](16.65,-9.35) rectangle (17.35, -10.05);
\draw[draw={rgb,255:red,0; green,0; blue,0},fill={rgb,255:red,200; green,113; blue,55},opacity=0.5,fill opacity=0.5](16.65,-10.35) rectangle (17.35, -11.05);
\draw[draw={rgb,255:red,0; green,0; blue,0},fill={rgb,255:red,200; green,113; blue,55},opacity=0.5,fill opacity=0.5](16.65,-11.35) rectangle (17.35, -12.05);
\draw[draw={rgb,255:red,0; green,0; blue,0},fill={rgb,255:red,200; green,113; blue,55},opacity=0.5,fill opacity=0.5](16.65,-12.35) rectangle (17.35, -13.05);
\draw[draw={rgb,255:red,0; green,0; blue,0},fill={rgb,255:red,200; green,113; blue,55},opacity=0.5,fill opacity=0.5](17.65,-12.35) rectangle (18.35, -13.05);
\draw[draw={rgb,255:red,0; green,0; blue,0},fill={rgb,255:red,200; green,113; blue,55},opacity=0.5,fill opacity=0.5](18.65,-12.35) rectangle (19.35, -13.05);
\draw[draw={rgb,255:red,0; green,0; blue,0},fill={rgb,255:red,200; green,113; blue,55},opacity=0.5,fill opacity=0.5](18.65,-11.35) rectangle (19.35, -12.05);
\draw[draw={rgb,255:red,0; green,0; blue,0},fill={rgb,255:red,200; green,113; blue,55},opacity=0.5,fill opacity=0.5](18.65,-10.35) rectangle (19.35, -11.05);
\draw[draw={rgb,255:red,0; green,0; blue,0},fill={rgb,255:red,200; green,113; blue,55},opacity=0.5,fill opacity=0.5](18.65,-9.35) rectangle (19.35, -10.05);
\draw[draw={rgb,255:red,0; green,0; blue,0},fill={rgb,255:red,200; green,113; blue,55},opacity=0.5,fill opacity=0.5](19.65,-9.35) rectangle (20.35, -10.05);
\draw[draw={rgb,255:red,0; green,0; blue,0},fill={rgb,255:red,200; green,113; blue,55},opacity=0.5,fill opacity=0.5](19.65,-8.35) rectangle (20.35, -9.05);
\draw[draw={rgb,255:red,0; green,0; blue,0},fill={rgb,255:red,200; green,113; blue,55},opacity=0.5,fill opacity=0.5](19.65,-7.35) rectangle (20.35, -8.05);
\draw[draw={rgb,255:red,0; green,0; blue,0},fill={rgb,255:red,200; green,113; blue,55},opacity=0.5,fill opacity=0.5](20.65,-7.35) rectangle (21.35, -8.05);
\draw[draw={rgb,255:red,0; green,0; blue,0},fill={rgb,255:red,200; green,113; blue,55},opacity=0.5,fill opacity=0.5](21.65,-7.35) rectangle (22.35, -8.05);
\draw[draw={rgb,255:red,0; green,0; blue,0},fill={rgb,255:red,200; green,113; blue,55},opacity=0.5,fill opacity=0.5](22.65,-7.35) rectangle (23.35, -8.05);
\draw[draw={rgb,255:red,0; green,0; blue,0},fill={rgb,255:red,200; green,113; blue,55},opacity=0.5,fill opacity=0.5](22.65,-8.35) rectangle (23.35, -9.05);
\draw[draw={rgb,255:red,0; green,0; blue,0},fill={rgb,255:red,200; green,113; blue,55},opacity=0.5,fill opacity=0.5](22.65,-9.35) rectangle (23.35, -10.05);
\draw[draw={rgb,255:red,0; green,0; blue,0},fill={rgb,255:red,200; green,113; blue,55},opacity=0.5,fill opacity=0.5](22.65,-10.35) rectangle (23.35, -11.05);
\draw[draw={rgb,255:red,0; green,0; blue,0},fill={rgb,255:red,200; green,113; blue,55},opacity=0.5,fill opacity=0.5](23.65,-10.35) rectangle (24.35, -11.05);
\draw[draw={rgb,255:red,0; green,0; blue,0},fill={rgb,255:red,200; green,113; blue,55},opacity=0.5,fill opacity=0.5](24.65,-10.35) rectangle (25.35, -11.05);
\draw[draw={rgb,255:red,0; green,0; blue,0},fill={rgb,255:red,200; green,113; blue,55},opacity=0.5,fill opacity=0.5](24.65,-9.35) rectangle (25.35, -10.05);
\draw[draw={rgb,255:red,0; green,0; blue,0},fill={rgb,255:red,200; green,113; blue,55},opacity=0.5,fill opacity=0.5](24.65,-8.35) rectangle (25.35, -9.05);
\draw[draw={rgb,255:red,0; green,0; blue,0},fill={rgb,255:red,200; green,113; blue,55},opacity=0.5,fill opacity=0.5](24.65,-7.35) rectangle (25.35, -8.05);
\draw[draw={rgb,255:red,0; green,0; blue,0},fill={rgb,255:red,200; green,113; blue,55},opacity=0.5,fill opacity=0.5](24.65,-6.35) rectangle (25.35, -7.05);
\draw[draw={rgb,255:red,0; green,0; blue,0},fill={rgb,255:red,200; green,113; blue,55},opacity=0.5,fill opacity=0.5](24.65,-5.35) rectangle (25.35, -6.05);
\draw[draw={rgb,255:red,0; green,0; blue,0},fill={rgb,255:red,200; green,113; blue,55},opacity=0.5,fill opacity=0.5](23.65,-5.35) rectangle (24.35, -6.05);
\draw[draw={rgb,255:red,0; green,0; blue,0},fill={rgb,255:red,200; green,113; blue,55},opacity=0.5,fill opacity=0.5](22.65,-5.35) rectangle (23.35, -6.05);
\draw[draw={rgb,255:red,0; green,0; blue,0},fill={rgb,255:red,200; green,113; blue,55},opacity=0.5,fill opacity=0.5](21.65,-5.35) rectangle (22.35, -6.05);
\draw[draw={rgb,255:red,0; green,0; blue,0},fill={rgb,255:red,200; green,113; blue,55},opacity=0.5,fill opacity=0.5](20.65,-5.35) rectangle (21.35, -6.05);
\draw[draw={rgb,255:red,0; green,0; blue,0},fill={rgb,255:red,200; green,113; blue,55},opacity=0.5,fill opacity=0.5](19.65,-5.35) rectangle (20.35, -6.05);
\draw[draw={rgb,255:red,0; green,0; blue,0},fill={rgb,255:red,200; green,113; blue,55},opacity=0.5,fill opacity=0.5](19.65,-4.35) rectangle (20.35, -5.05);
\draw[draw={rgb,255:red,0; green,0; blue,0},fill={rgb,255:red,200; green,113; blue,55},opacity=0.5,fill opacity=0.5](18.65,-4.35) rectangle (19.35, -5.05);
\draw[draw={rgb,255:red,0; green,0; blue,0},fill={rgb,255:red,200; green,113; blue,55},opacity=0.5,fill opacity=0.5](17.65,-4.35) rectangle (18.35, -5.05);
\draw[draw={rgb,255:red,0; green,0; blue,0},fill={rgb,255:red,200; green,113; blue,55},opacity=0.5,fill opacity=0.5](16.65,-4.35) rectangle (17.35, -5.05);
\draw[draw={rgb,255:red,0; green,0; blue,0},fill={rgb,255:red,200; green,113; blue,55},opacity=0.5,fill opacity=0.5](16.65,-3.35) rectangle (17.35, -4.05);
\draw[draw={rgb,255:red,0; green,0; blue,0},fill={rgb,255:red,200; green,113; blue,55},opacity=0.5,fill opacity=0.5](17.65,-3.35) rectangle (18.35, -4.05);
\draw[draw={rgb,255:red,0; green,0; blue,0},fill={rgb,255:red,200; green,113; blue,55},opacity=0.5,fill opacity=0.5](18.65,-3.35) rectangle (19.35, -4.05);
\draw[draw={rgb,255:red,0; green,0; blue,0},fill={rgb,255:red,200; green,113; blue,55},opacity=0.5,fill opacity=0.5](18.65,-2.35) rectangle (19.35, -3.05);
\draw[draw={rgb,255:red,0; green,0; blue,0},fill={rgb,255:red,200; green,113; blue,55},opacity=0.5,fill opacity=0.5](18.65,-1.35) rectangle (19.35, -2.05);
\draw[draw={rgb,255:red,0; green,0; blue,0},fill={rgb,255:red,200; green,113; blue,55},opacity=0.5,fill opacity=0.5](17.65,-1.35) rectangle (18.35, -2.05);
\draw[draw={rgb,255:red,0; green,0; blue,0},fill={rgb,255:red,200; green,113; blue,55},opacity=0.5,fill opacity=0.5](16.65,-1.35) rectangle (17.35, -2.05);
\draw[draw={rgb,255:red,0; green,0; blue,0},fill={rgb,255:red,200; green,113; blue,55},opacity=0.5,fill opacity=0.5](15.65,-1.35) rectangle (16.35, -2.05);
\draw[draw={rgb,255:red,0; green,0; blue,0},fill={rgb,255:red,200; green,113; blue,55},opacity=0.5,fill opacity=0.5](15.65,-2.35) rectangle (16.35, -3.05);
\draw[draw={rgb,255:red,0; green,0; blue,0},fill={rgb,255:red,200; green,113; blue,55},opacity=0.5,fill opacity=0.5](15.65,-3.35) rectangle (16.35, -4.05);
\draw[draw={rgb,255:red,0; green,0; blue,0},fill={rgb,255:red,200; green,113; blue,55},opacity=0.5,fill opacity=0.5](15.65,-4.35) rectangle (16.35, -5.05);
\draw[draw={rgb,255:red,0; green,0; blue,0},fill={rgb,255:red,200; green,113; blue,55},opacity=0.5,fill opacity=0.5](15.65,-5.35) rectangle (16.35, -6.05);
\draw[draw={rgb,255:red,0; green,0; blue,0},fill={rgb,255:red,200; green,113; blue,55},opacity=0.5,fill opacity=0.5](14.65,-5.35) rectangle (15.35, -6.05);
\draw[draw={rgb,255:red,0; green,0; blue,0},fill={rgb,255:red,200; green,113; blue,55},opacity=0.5,fill opacity=0.5](13.65,-5.35) rectangle (14.35, -6.05);
\draw[draw={rgb,255:red,0; green,0; blue,0},fill={rgb,255:red,200; green,113; blue,55},opacity=0.5,fill opacity=0.5](12.65,-5.35) rectangle (13.35, -6.05);
\draw[draw={rgb,255:red,200; green,113; blue,55},opacity=0.5,thick](6,-13.7)--(8,-13.7)--(8,-12.7)--(10,-12.7)--(10,-9.7)--(17,-9.7)--(17,-12.7)--(19,-12.7)--(19,-9.7)--(20,-9.7)--(20,-7.7)--(23,-7.7)--(23,-10.7)--(25,-10.7)--(25,-5.7)--(20,-5.7)--(20,-4.7)--(17,-4.7)--(17,-3.7)--(19,-3.7)--(19,-1.7)--(16,-1.7)--(16,-5.7)--(13,-5.7);
\draw[draw={rgb,255:red,0; green,0; blue,0}](9.5,-12.69)--(9.5,-16.7);
\draw(8.23, -16.93) node[anchor=south west] {$l_j$};
\draw[draw=none,fill={rgb,255:red,28; green,36; blue,31},thin](9.5, -12.69) ellipse (0.1cm and 0.1cm);\draw[draw={rgb,255:red,0; green,0; blue,0}](5.5,-13.7)--(5.5,-16.7);
\draw(4.32, -16.93) node[anchor=south west] {$l_i$};
\draw[draw={rgb,255:red,0; green,0; blue,0}](13.5,-5.7)--(13.5,0.3);
\draw(12.34, -0.81) node[anchor=south west] {$l^k$};
\draw[draw=none,fill={rgb,255:red,28; green,36; blue,31},thin](13.5, -5.7) ellipse (0.1cm and 0.1cm);\draw[draw=none,fill={rgb,255:red,28; green,36; blue,31},thin](5.5, -13.7) ellipse (0.1cm and 0.1cm);\draw[draw={rgb,255:red,0; green,0; blue,0}](9.5,-6.71)--(9.5,0.3);
\draw(5, -0.96) node[anchor=south west] {$l^k+\vect{P_jP_i}$};
\draw[draw=none,fill={rgb,255:red,28; green,36; blue,31},thin](9.5, -6.71) ellipse (0.1cm and 0.1cm);
\end{tikzpicture}
  \end{center}
  \caption{The path and shield triple $(i,j,k)$ from Figure~\ref{fig:shield-setup}, annotated with curve $c$ and component~$\mathcal C$.
    The border of the shaded region is the curve $c$  (from Definition~\ref{def:c}), 
    and the shaded region itself is the component~$\mathcal C$ (Definition~\ref{def:C}).}
  \label{fig:shield-setup-c}
\end{figure}
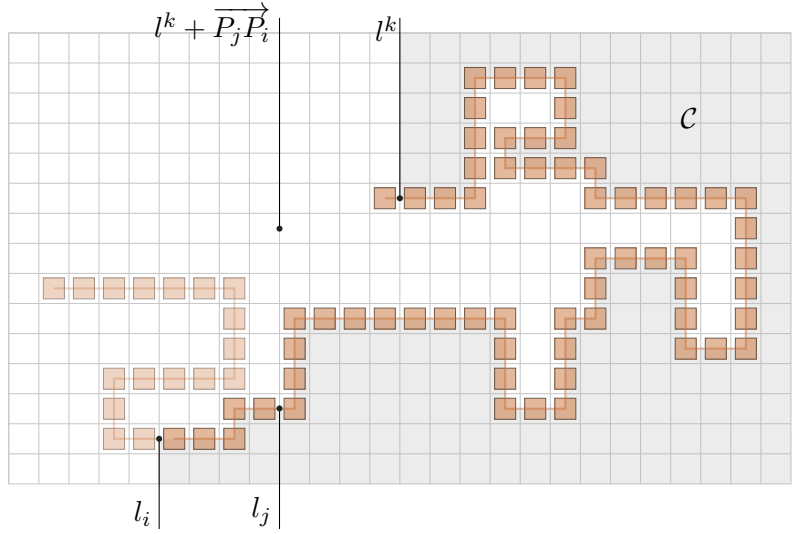

By definition, $\dom{\sigma \cup \asm{P_{0,1,\ldots, i}}}$ is a subset of $\Z^2$. The following claim captures the intuition that allows us to think of $\mathcal C$ as a ``workspace'' that is ``shielded'' from $\sigma \cup \asm{P_{0,1,\ldots, i}}$:

\begin{sublemma}\label{lem:c}
  Let $(i, j, k)$ be a shield for $P$ and let $\mathcal C$ be the workspace of shield $(i, j, k)$.
  Then $\dom{\sigma \cup \asm{P_{0,1,\ldots, i}}}$ is a subset of $\R^2 \setminus \mathcal{C}$.
\end{sublemma}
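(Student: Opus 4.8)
By \subl{lem:c-cuts} the curve $c$ cuts $\mathbb R^2$ into a left- and a right-hand side as in Theorem~\ref{thm:infinite-jordan}, $\mathcal C$ being the right-hand side (the one ``east of $l^i$ and $l^k$'') together with $c$ itself. Since $\glueP{i}{i+1}$, hence $l^i$, points east, the ray $\reverse{l^i}$ lies on a glue column, i.e.\ at half-integer $x$-coordinate; likewise $l^k$ (note $\glu P k$ must point east or west, since otherwise its northward ray would start on $\embed{P}$ and it could not be visible from the north). Write $\beta=\sigma\cup\asm{P_{0,1,\ldots,i}}$; this is a producible assembly (build $P$ in path order and stop at $P_i$), so $\dom\beta\subset\mathbb Z^2$ is connected in the grid graph. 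First I would check that $\dom\beta\cap c=\emptyset$: positions are integer points, while $\reverse{l^i}$ and $l^k$ have half-integer $x$; the integer points of $\embed{P_{i+1,i+2,\ldots,k}}$ are exactly $\pos{P_{i+1}},\ldots,\pos{P_k}$, none of which is a position of $\beta$ (by simplicity of $P$, as all these indices exceed $i$, and by $\asm P\cap\sigma=\emptyset$); and the two length-$\tfrac12$ segments of $c$ contain no integer point besides their endpoints $\pos{P_{i+1}}$ and $\pos{P_k}$, already handled.

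\textbf{The prefix $\embed{P_{0,1,\ldots,i}}$ is on the left-hand side.}
Next I would show that the curve $\embed{P_{0,1,\ldots,i}}$ does not meet $c$ at all: by visibility of $\glueP{i}{i+1}$ and of $\glu P k$ it avoids $\reverse{l^i}$ and $l^k$; by simplicity of $P$ it avoids $\embed{P_{i+1,i+2,\ldots,k}}$ (disjoint parameter ranges on the simple curve $\embed{P}$); and an explicit check shows it avoids the two half-segments, whose heights are $y_{P_i}$ and $y_{P_k}$ and whose $x$-ranges $\embed{P_{0,1,\ldots,i}}$ cannot reach unless $\pos{P_{i+1}}$ or $\pos{P_k}$ were a prefix position. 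Hence $\embed{P_{0,1,\ldots,i}}$ lies in one connected component of $\mathbb R^2\setminus c$. To identify that component with the left-hand side, I would use $\pos{P_i}$: the half-unit segment $\gs{P_i}{l^i(0)}$, i.e.\ the western half of the edge $\embed{P_{i,i+1}}$, meets $c$ only at its endpoint $l^i(0)$, so $\pos{P_i}$, which has $x$-coordinate $x_{P_i}<x_{P_i}+\tfrac12$, sits immediately to the west of the ray $\reverse{l^i}$; by the convention defining the left-/right-hand side of $c$ (Theorem~\ref{thm:infinite-jordan}, Definition~\ref{def:C}, with $\mathcal C$ the side east of $l^i$ and $l^k$), this puts $\pos{P_i}$ on the left-hand side. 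Since $\pos{P_i}\in\embed{P_{0,1,\ldots,i}}$, the whole curve $\embed{P_{0,1,\ldots,i}}$ — in particular every position of $\asm{P_{0,1,\ldots,i}}$, including $\pos{P_0}$ — lies in $\mathbb R^2\setminus\mathcal C$.

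\textbf{The seed, and the main obstacle.}
It remains to place $\dom\sigma$ in $\mathbb R^2\setminus\mathcal C$. As above, $\dom\sigma\cap c=\emptyset$, and every unit segment between two adjacent tiles of $\sigma$ avoids $c$: it avoids $\reverse{l^i}$ and $l^k$ because those glue rays meet no segment between adjacent seed tiles (visibility, with the footnoted convention), it avoids $\embed{P_{i+1,i+2,\ldots,k}}$ because $\dom\sigma$ and $\{\pos{P_{i+1}},\ldots,\pos{P_k}\}$ are disjoint, and it avoids the half-segments by the same $x$-range argument. Hence $\dom\sigma$ lies in one component of $\mathbb R^2\setminus c$; and since $\pos{P_0}$ lies on the left-hand side and is adjacent to a tile of $\sigma$, one concludes $\dom\sigma$ lies on the left-hand side too, so $\dom\beta\subseteq\mathbb R^2\setminus\mathcal C$. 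I expect the delicate point — and the crux of the whole claim — to be exactly this last crossing analysis: the single edge joining $\pos{P_0}$ to its seed neighbour, and more generally any grid edge of $\beta$ that is neither a segment between consecutive tiles of $P$ nor between adjacent tiles of $\sigma$, is not directly controlled by the visibility hypotheses, so ruling out that it crosses $\reverse{l^i}$ or $l^k$ requires an extra argument — for instance forming a simple closed curve from the would-be crossing edge together with a connecting walk inside the connected assembly $\beta$ and applying the Jordan Curve Theorem (Theorem~\ref{thm:jordan}) to force $l^i$ (or $l^k$) across $\embed{P}$ or a seed segment, contradicting visibility. This is precisely where the complete absence of a geometric hypothesis on $\sigma$, unlike in prior work, makes the argument subtle.
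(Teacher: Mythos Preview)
Your overall argument is the paper's: anchor $\pos{P_i}$ on the strict left of $c$, use connectedness of $\beta=\sigma\cup\asm{P_{0,\ldots,i}}$, and rule out each way a unit edge of $\beta$ could cross $c$. The paper does this in three terse bullets---simplicity of $P$ for the $\embed{P_{i+1,\ldots,k}}$ and half-segment parts, producibility of $P$ for $\dom\sigma\cap\dom{\asm{P_{i+1,\ldots,k}}}=\emptyset$, and visibility for the two rays---without separating the seed from the path prefix as carefully as you do.

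You are right that the edge joining $P_0$ to its seed neighbour is not directly controlled by the visibility footnote, and the paper does not single it out: its third bullet simply asserts that any glue of $\beta$ lying on $l^i$ or $l^k$ contradicts visibility. But your proposed Jordan-curve repair does not close the gap. If $\sigma$ is a single tile there is no second walk in $\beta$ from $P_0$ back to $\sigma$, so no closed curve can be formed; concretely, take a one-tile seed at $(1,-5)$, let $P_0,\ldots,P_5$ run north from $(0,-5)$ to $(0,0)$ with $i=5$, and let $P_6,P_7,\ldots$ run east along $y=0$. Then $\glu P i$ is visible from the south under the footnoted definition (its ray meets neither $\embed P$ nor any seed-to-seed segment---there are none), yet the seed tile lies east of $l^i$ and hence in $\mathcal C$. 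So the ``crux'' you identify is a genuine gap in the claim as literally stated, and it is not repaired by your argument. Both your proof and the paper's go through only under the evidently intended stronger reading of visibility that also forbids the glue between $P_0$ and $\sigma$ from lying on $l^i$ or $l^k$.
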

\begin{proof}
  By the definition of $c$, since $\glu P i$ points east, $\pos{P_i}$ is on the left-hand side of~$c$, hence $\pos{P_i}$ is not in $\mathcal{C}$.
  Moreover, since $P$ is a producible path, $\sigma \cup \asm{P_{0,1,\ldots,i}}$ is a connected assembly.
  We claim that $\sigma\cup\asm{P_{0,1,\ldots,i}}$ has no tile in $\mathcal C$, since otherwise one or both of (a)
  $\embed{P_{0,1,\ldots,i}}$ or 
(b) some of the tile positions of $\sigma$, or glue positions of abutting tiles of $\sigma$,
      would have to intersect $c$ to reach that tile. If that were the case, the intersection would be on one of the five curves used to define $c$:
  \begin{itemize}
  \item If $\embed{P_{0,1,\ldots,i}}$ intersects any of $\embed{P_{i+1,i+2,\ldots,k}}$, $\gs{l^i(0)}{\pos{P_{i+1}}}$, or $\gs{\pos{P_k}}{l^k(0)}$, this contradicts the fact that $P$ is simple.
  \item If $\dom{\sigma}$ intersects $\dom{\asm{P_{i+1,i+2,\ldots,k}}}$, this contradicts the fact that $P$ is a producible path.
  \item If $\dom{\sigma\cup\asm{P_{0,1,\ldots,i}}}$  has a glue positioned on $l^i$ or $l^k$, this contradicts the fact that $\glu P i$ is visible from the south, and $\glu P k$ is visible from the north, relative to $P$.
  \end{itemize}
  In all cases, we get a contradiction, and hence $\dom{\sigma\cup\asm{P_{0,1,\ldots,i}}}$ is disjoint from $\mathcal C$, and thus contained in $\R^2 \setminus \mathcal{C}$.
\end{proof}

The following claim restates Hypothesis~\ref{lem:hp:shield backup} of Definition~\ref{def:shield} to be in a form more suited to our proofs.  Specifically, \subl{lem:c-lk} states that $l^k(0)+\vpji$ is the only position of the ray $l^k+\vpij$ that may be in $\mathcal{C}$ and if this is the case then  $l^k(0)+\vpji$ is also a point of $c$.

\begin{sublemma}\label{lem:c-lk}
 $
 (( l^k+\vpji)  \cap \mathcal{C} ) 
  = 
 ( (l^k+\vpji) \cap c )
 \subseteq 
 \{l^k(0)+\vpji \}
 $. 
\end{sublemma}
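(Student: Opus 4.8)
Here is the plan.

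\medskip

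The plan is to intersect the ray $l^k+\vpji$ — a vertical ray pointing north, whose $x$-coordinate is $x_{l^k(0)}+x_{\vpji}$ and which therefore lies strictly west of $l^k$ since $x_{\vect{P_iP_j}}>0$ (the remark after Definition~\ref{def:shield}) — with each of the five pieces making up the curve $c$ of Definition~\ref{def:c}. Before doing so I would record a few elementary facts used throughout: since $\glu P i$ and $\glu P j$ point east we have $\pos{\glu P i}=(x_{P_i}+\tfrac12,y_{P_i})$ and $\pos{\glu P j}=(x_{P_j}+\tfrac12,y_{P_j})$; the quantity $x_{\vpji}=x_{P_i}-x_{P_j}$ is a \emph{negative integer}; and $\glu P k$ points east or west (this is implicit in Hypothesis~\ref{lem:hp:shield k}, as visibility is only defined for east/west glues), so $l^k(0)=\pos{\glu P k}$ has $y$-coordinate $y_{P_k}$.

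\emph{First I would show that the open ray $(l^k+\vpji)\bigl((0,\infty)\bigr)$ is disjoint from $c$.} Recalling $c=\concat{\reverse{l^i},\gs{l^i(0)}{P_{i+1}},\embed{P_{i+1,i+2,\ldots,k}},\gs{P_k}{l^k(0)},l^k}$: (a) the open ray misses $l^k$ (parallel vertical rays at distinct $x$-coordinates); (b) it misses the horizontal half-segment $\gs{P_k}{l^k(0)}$, whose points have $x$-coordinate within $\tfrac12$ of $x_{l^k(0)}$, whereas the ray sits at $x_{l^k(0)}+x_{\vpji}\le x_{l^k(0)}-1$; (c) the union $\gs{l^i(0)}{P_{i+1}}\cup\embed{P_{i+1,i+2,\ldots,k}}$ is contained in $\embed{P_{i,i+1,\ldots,k}}$ (the segment $\gs{l^i(0)}{P_{i+1}}$ is the east half of $\embed{P_{i,i+1}}$, using that $\glu P i$ points east), so by Hypothesis~\ref{lem:hp:shield backup} the closed ray $l^k+\vpji$ meets this union only possibly at $l^k(0)+\vpji$, which is not on the open ray; (d) finally, a common point of the open ray and $\reverse{l^i}$ would force $x_{l^k(0)}+x_{\vpji}=x_{l^i(0)}=x_{P_i}+\tfrac12$, i.e.\ $x_{l^k(0)}=x_{\glu P j}$ (so $\glu P k$ and $\glu P j$ share a glue column), and would force the open ray's $y$-range $(y_{P_k}+y_{\vpji},\infty)$ to meet $\reverse{l^i}$'s $y$-range $(-\infty,y_{P_i}]$, i.e.\ (since $y_{\vpji}=y_{P_i}-y_{P_j}$) that $y_{P_k}<y_{P_j}$; but then $\pos{\glu P k}=(x_{\glu P j},y_{P_k})$ would lie on $l^j$ strictly south of $l^j(0)=\pos{\glu P j}$, while $\pos{\glu P k}$ is the midpoint of $\embed{P_{k,k+1}}$ and hence a point of $\embed P$, contradicting visibility of $\glu P j$ from the south (Hypothesis~\ref{lem:hp:shield ij}). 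Hence $(l^k+\vpji)\bigl((0,\infty)\bigr)\cap c=\emptyset$.

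\emph{Then I would finish by a connectedness argument.} The set $(l^k+\vpji)\bigl((0,\infty)\bigr)$ is connected and disjoint from $c$, so it lies entirely in one of the two open connected components of $\R^2\setminus c$; its points at large $y$ have $x$-coordinate $x_{l^k(0)}+x_{\vpji}<x_{l^k(0)}$, hence lie west of $l^k$, and since $c$ coincides with $l^k$ far to the north while $\mathcal C$ is the component east of $l^k$ (Definition~\ref{def:workspace}, Theorem~\ref{thm:infinite-jordan}), these points lie on the left-hand side of $c$. Therefore the whole open ray lies on the left-hand side, so it meets neither $c$ nor $\mathcal C$. Consequently $(l^k+\vpji)\cap c$ and $(l^k+\vpji)\cap\mathcal C$ are both contained in $\{l^k(0)+\vpji\}$. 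To get the claimed equality, note that $l^k(0)+\vpji$ is a limit of open-ray points, all on the left-hand side, so it belongs to the closure of the left-hand side, which is disjoint from the open right-hand side $\mathcal C\setminus c$; hence $l^k(0)+\vpji\in\mathcal C$ if and only if $l^k(0)+\vpji\in c$, which yields $(l^k+\vpji)\cap\mathcal C=(l^k+\vpji)\cap c$ and completes the proof.

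The main obstacle is part (d) above: ruling out that the translated northward ray $l^k+\vpji$ slides down alongside the southward ray $\reverse{l^i}$. The key point that makes it go through is that such an overlap would place the glue $\glu P k$ on the open southward visibility ray of $\glu P j$, which is exactly what Hypothesis~\ref{lem:hp:shield ij} forbids; the remaining cases are routine half-integer coordinate comparisons (crucially using that $x_{\vpji}$ is an integer $\le -1$) together with the Jordan-curve/connectedness bookkeeping.
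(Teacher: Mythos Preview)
Your proof is correct and follows essentially the same approach as the paper's: a piece-by-piece analysis of $c$ followed by a connectedness argument to pass from $c$ to $\mathcal C$. The only notable deviation is in your case (d): the paper dispatches the $\reverse{l^i}$ case more quickly by observing that if $l^k+\vpji$ meets $l^i$ then it must pass through $l^i(0)\in\embed{P_{i,\ldots,k}}$, which reduces directly to Hypothesis~\ref{lem:hp:shield backup}; your route via visibility of $\glu P j$ is a bit longer but equally valid.
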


\begin{proof}
First, we prove that if there is an intersection between $l^k+\vpji$ and the curve $c$ then 
this happens only at the point $l^k(0)+\vpji$,
 by analysing the five curves used to define $c$ (Definition~\ref{def:workspace}): 
  \begin{itemize}
  \item since $x_{\vpji}$ is a non-zero integer then $l^k+\vpji$ does not intersect $l^k$ nor $\gs{\pos{P_{k}}}{l^k(0)}$;
  \item by Hypothesis~\ref{lem:hp:shield backup} of Definition~\ref{def:shield}, the point $l^k(0)+\vpji$ is the only intersection permitted  between the ray $l^k + \vpji$ and either of the curves $\gs{l^i(0)}{\pos{P_{i+1}}^{}}$ or $\embed{ P_{i+1,i+2,\ldots,k}}$;
  \item since $l^i$ is a ray to the south, if $l^k+\vpji$ intersects $l^i$ then $l^k+\vpji$ also intersects $\gs{l^i(0)}{\pos{P_{i+1}}^{}}$ which means, by the previous case, that $l^k(0)+\vpji=l^i(0)$ is the only permitted intersection between these two rays.
  \end{itemize}
Thus $l^k(0)+\vpji$ is the only possible intersection of $l^k+\vpji$ with $c$, which shows the second part of our claim, i.e.  $( (l^k+\vpji) \cap c ) \subseteq \{l^k(0)+\vpji \}$.

We now show the first part of our claim, i.e.
$(( l^k+\vpji)  \cap \mathcal{C} ) = ( (l^k+\vpji) \cap c )$. First, since $l^k$ is a ray to the north, there exists $z\in\mathbb{R}, z\geq0$ such that $l^k(z)$ is strictly to the north of all positions of $\concat{\reverse{l^i}, \,\gs{l^i(0)}{\pos{P_{i+1}}^{}}, \,\embed{ P_{i+1,i+2,\ldots,k}}, \,\gs{\pos{P_{k}}}{l^k(0)}}$. Also, since $x_{\vpji}<0$, the point $l^k(z)+\vpji$ is to the west of all positions of $l^k$. Thus, $l^k(z)+\vpji$ is on the strict left-hand side of $c$, that is, in $\mathbb{R}^2\setminus \mathcal C$.

Therefore, if there were a $z'>0$ such that $l^k(z')+\vpji\in\mathcal C$ (the right hand side of $c$), and since $\{l^k(z)+\vpji \}$ is not in the right hand side of $c$, 
there would be at least one real number $z''\in[z', z[$ such that $l^k(z'')+\vpji$ is on $c$, contradicting the first part of this proof. Therefore, 
$(( l^k+\vpji)  \cap \mathcal{C} ) = ( (l^k+\vpji) \cap c )$.
\end{proof}

\subsection{The first dominant tile $P_{m_0}$ and the path $R$}
\label{subsec:initial}
In this subsection, we define a tile $P_{m_0}$ of $P$, and a path $R$, that will be used in the inductive argument in Section~\ref{subsec:dominant}.

\subsubsection{The ray $\rho$ used to define the first dominant tile $P_{m_0}$}

\label{subsec:first}

\newcommand\lmz{L^{m_0}}
Let $\rho$ be the lowest (southernmost) ray of vector $\vect{P_i P_j}$ that starts on $l^i$ and intersects at least one tile of $P_{i+1,i+2,\ldots,k}$.
Such a ray exists because in particular, the ray of vector $\vpij$ starting on $l^i$ and going through $\pos{P_{i+1}}$ intersects $\pos{P_{j+1}}$ (these two positions are positions of tiles of $P_{i+1,i+2,\ldots,k}$ because $i < j \leq k$), and since $P_{i+1,i+2,\ldots,k}$ is of finite length there is one ray that is the southernmost ray (meaning that out of all such rays, $\rho$ is the ray whose start position $\rho(0)$ is the southernmost on $l^i$).

Then, let $m_0 \in \{i+1,i+2,\ldots,k \} $ be the index such that $P_{m_0}$ is positioned on $\rho$, and is the easternmost tile of $P_{i+1,i+2,\ldots,k}$ that is positioned on $\rho$.
See Figure~\ref{fig:Pm0} for an example.
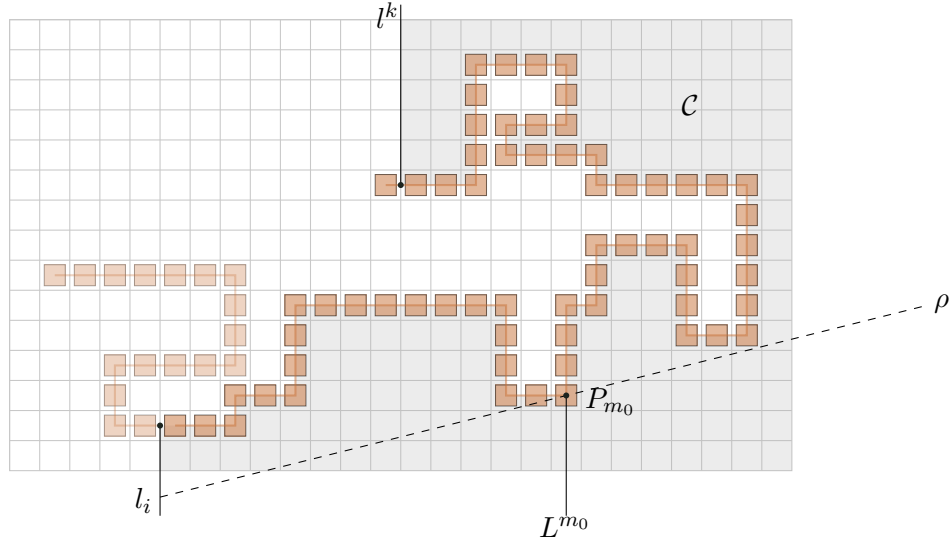
\begin{figure}[ht]
  \centering
  \begin{tikzpicture}[scale=\scale]\draw[draw={rgb,255:red,200; green,200; blue,200}](0.5,-0.2) rectangle (26.5, -15.2);
\draw[draw={rgb,255:red,200; green,200; blue,200}](0.5,-15.2)--(0.5,-0.2);
\draw[draw={rgb,255:red,200; green,200; blue,200}](1.5,-15.2)--(1.5,-0.2);
\draw[draw={rgb,255:red,200; green,200; blue,200}](2.5,-15.2)--(2.5,-0.2);
\draw[draw={rgb,255:red,200; green,200; blue,200}](3.5,-15.2)--(3.5,-0.2);
\draw[draw={rgb,255:red,200; green,200; blue,200}](4.5,-15.2)--(4.5,-0.2);
\draw[draw={rgb,255:red,200; green,200; blue,200}](5.5,-15.2)--(5.5,-0.2);
\draw[draw={rgb,255:red,200; green,200; blue,200}](6.5,-15.2)--(6.5,-0.2);
\draw[draw={rgb,255:red,200; green,200; blue,200}](7.5,-15.2)--(7.5,-0.2);
\draw[draw={rgb,255:red,200; green,200; blue,200}](8.5,-15.2)--(8.5,-0.2);
\draw[draw={rgb,255:red,200; green,200; blue,200}](9.5,-15.2)--(9.5,-0.2);
\draw[draw={rgb,255:red,200; green,200; blue,200}](10.5,-15.2)--(10.5,-0.2);
\draw[draw={rgb,255:red,200; green,200; blue,200}](11.5,-15.2)--(11.5,-0.2);
\draw[draw={rgb,255:red,200; green,200; blue,200}](12.5,-15.2)--(12.5,-0.2);
\draw[draw={rgb,255:red,200; green,200; blue,200}](13.5,-15.2)--(13.5,-0.2);
\draw[draw={rgb,255:red,200; green,200; blue,200}](14.5,-15.2)--(14.5,-0.2);
\draw[draw={rgb,255:red,200; green,200; blue,200}](15.5,-15.2)--(15.5,-0.2);
\draw[draw={rgb,255:red,200; green,200; blue,200}](16.5,-15.2)--(16.5,-0.2);
\draw[draw={rgb,255:red,200; green,200; blue,200}](17.5,-15.2)--(17.5,-0.2);
\draw[draw={rgb,255:red,200; green,200; blue,200}](18.5,-15.2)--(18.5,-0.2);
\draw[draw={rgb,255:red,200; green,200; blue,200}](19.5,-15.2)--(19.5,-0.2);
\draw[draw={rgb,255:red,200; green,200; blue,200}](20.5,-15.2)--(20.5,-0.2);
\draw[draw={rgb,255:red,200; green,200; blue,200}](21.5,-15.2)--(21.5,-0.2);
\draw[draw={rgb,255:red,200; green,200; blue,200}](22.5,-15.2)--(22.5,-0.2);
\draw[draw={rgb,255:red,200; green,200; blue,200}](23.5,-15.2)--(23.5,-0.2);
\draw[draw={rgb,255:red,200; green,200; blue,200}](24.5,-15.2)--(24.5,-0.2);
\draw[draw={rgb,255:red,200; green,200; blue,200}](25.5,-15.2)--(25.5,-0.2);
\draw[draw={rgb,255:red,200; green,200; blue,200}](0.5,-0.2)--(26.5,-0.2);
\draw[draw={rgb,255:red,200; green,200; blue,200}](0.5,-1.2)--(26.5,-1.2);
\draw[draw={rgb,255:red,200; green,200; blue,200}](0.5,-2.2)--(26.5,-2.2);
\draw[draw={rgb,255:red,200; green,200; blue,200}](0.5,-3.2)--(26.5,-3.2);
\draw[draw={rgb,255:red,200; green,200; blue,200}](0.5,-4.2)--(26.5,-4.2);
\draw[draw={rgb,255:red,200; green,200; blue,200}](0.5,-5.2)--(26.5,-5.2);
\draw[draw={rgb,255:red,200; green,200; blue,200}](0.5,-6.2)--(26.5,-6.2);
\draw[draw={rgb,255:red,200; green,200; blue,200}](0.5,-7.2)--(26.5,-7.2);
\draw[draw={rgb,255:red,200; green,200; blue,200}](0.5,-8.2)--(26.5,-8.2);
\draw[draw={rgb,255:red,200; green,200; blue,200}](0.5,-9.2)--(26.5,-9.2);
\draw[draw={rgb,255:red,200; green,200; blue,200}](0.5,-10.2)--(26.5,-10.2);
\draw[draw={rgb,255:red,200; green,200; blue,200}](0.5,-11.2)--(26.5,-11.2);
\draw[draw={rgb,255:red,200; green,200; blue,200}](0.5,-12.2)--(26.5,-12.2);
\draw[draw={rgb,255:red,200; green,200; blue,200}](0.5,-13.2)--(26.5,-13.2);
\draw[draw={rgb,255:red,200; green,200; blue,200}](0.5,-14.2)--(26.5,-14.2);
\draw[draw=none,fill={rgb,255:red,179; green,179; blue,179},opacity=0.25](13.5,-5.7)--(13.5,-0.2)--(26.5,-0.2)--(26.5,-15.2)--(5.5,-15.2)--(5.5,-13.7)(5.5,-13.7)--(8,-13.7)--(8,-12.7)--(10,-12.7)--(10,-9.7)--(17,-9.7)--(17,-12.7)--(19,-12.7)--(19,-9.7)--(20,-9.7)--(20,-7.7)--(23,-7.7)--(23,-10.7)--(25,-10.7)--(25,-5.7)--(20,-5.7)--(20,-4.7)--(17,-4.7)--(17,-3.7)--(19,-3.7)--(19,-1.7)--(16,-1.7)--(16,-5.7)--(13.5,-5.7);
\draw(22.5, -3.7) node[anchor=south west] {$\mathcal C$};
\draw[draw={rgb,255:red,0; green,0; blue,0},fill={rgb,255:red,200; green,113; blue,55},opacity=0.29899997,fill opacity=0.29899997](1.65,-8.35) rectangle (2.35, -9.05);
\draw[draw={rgb,255:red,0; green,0; blue,0},fill={rgb,255:red,200; green,113; blue,55},opacity=0.29899997,fill opacity=0.29899997](2.65,-8.35) rectangle (3.35, -9.05);
\draw[draw={rgb,255:red,0; green,0; blue,0},fill={rgb,255:red,200; green,113; blue,55},opacity=0.29899997,fill opacity=0.29899997](3.65,-8.35) rectangle (4.35, -9.05);
\draw[draw={rgb,255:red,0; green,0; blue,0},fill={rgb,255:red,200; green,113; blue,55},opacity=0.29899997,fill opacity=0.29899997](4.65,-8.35) rectangle (5.35, -9.05);
\draw[draw={rgb,255:red,0; green,0; blue,0},fill={rgb,255:red,200; green,113; blue,55},opacity=0.29899997,fill opacity=0.29899997](5.65,-8.35) rectangle (6.35, -9.05);
\draw[draw={rgb,255:red,0; green,0; blue,0},fill={rgb,255:red,200; green,113; blue,55},opacity=0.29899997,fill opacity=0.29899997](6.65,-8.35) rectangle (7.35, -9.05);
\draw[draw={rgb,255:red,0; green,0; blue,0},fill={rgb,255:red,200; green,113; blue,55},opacity=0.29899997,fill opacity=0.29899997](7.65,-8.35) rectangle (8.35, -9.05);
\draw[draw={rgb,255:red,0; green,0; blue,0},fill={rgb,255:red,200; green,113; blue,55},opacity=0.29899997,fill opacity=0.29899997](7.65,-9.35) rectangle (8.35, -10.05);
\draw[draw={rgb,255:red,0; green,0; blue,0},fill={rgb,255:red,200; green,113; blue,55},opacity=0.29899997,fill opacity=0.29899997](7.65,-10.35) rectangle (8.35, -11.05);
\draw[draw={rgb,255:red,0; green,0; blue,0},fill={rgb,255:red,200; green,113; blue,55},opacity=0.29899997,fill opacity=0.29899997](7.65,-11.35) rectangle (8.35, -12.05);
\draw[draw={rgb,255:red,0; green,0; blue,0},fill={rgb,255:red,200; green,113; blue,55},opacity=0.29899997,fill opacity=0.29899997](6.65,-11.35) rectangle (7.35, -12.05);
\draw[draw={rgb,255:red,0; green,0; blue,0},fill={rgb,255:red,200; green,113; blue,55},opacity=0.29899997,fill opacity=0.29899997](5.65,-11.35) rectangle (6.35, -12.05);
\draw[draw={rgb,255:red,0; green,0; blue,0},fill={rgb,255:red,200; green,113; blue,55},opacity=0.29899997,fill opacity=0.29899997](4.65,-11.35) rectangle (5.35, -12.05);
\draw[draw={rgb,255:red,0; green,0; blue,0},fill={rgb,255:red,200; green,113; blue,55},opacity=0.29899997,fill opacity=0.29899997](3.65,-11.35) rectangle (4.35, -12.05);
\draw[draw={rgb,255:red,0; green,0; blue,0},fill={rgb,255:red,200; green,113; blue,55},opacity=0.29899997,fill opacity=0.29899997](3.65,-12.35) rectangle (4.35, -13.05);
\draw[draw={rgb,255:red,0; green,0; blue,0},fill={rgb,255:red,200; green,113; blue,55},opacity=0.29899997,fill opacity=0.29899997](3.65,-13.35) rectangle (4.35, -14.05);
\draw[draw={rgb,255:red,0; green,0; blue,0},fill={rgb,255:red,200; green,113; blue,55},opacity=0.29899997,fill opacity=0.29899997](4.65,-13.35) rectangle (5.35, -14.05);
\draw[draw={rgb,255:red,200; green,113; blue,55},opacity=0.29899997,thick](2,-8.7)--(8,-8.7)--(8,-11.7)--(4,-11.7)--(4,-13.7)--(5.5,-13.7);
\draw[draw={rgb,255:red,0; green,0; blue,0},fill={rgb,255:red,200; green,113; blue,55},opacity=0.5,fill opacity=0.5](5.65,-13.35) rectangle (6.35, -14.05);
\draw[draw={rgb,255:red,0; green,0; blue,0},fill={rgb,255:red,200; green,113; blue,55},opacity=0.5,fill opacity=0.5](6.65,-13.35) rectangle (7.35, -14.05);
\draw[draw={rgb,255:red,0; green,0; blue,0},fill={rgb,255:red,200; green,113; blue,55},opacity=0.5,fill opacity=0.5](7.65,-13.35) rectangle (8.35, -14.05);
\draw[draw={rgb,255:red,0; green,0; blue,0},fill={rgb,255:red,200; green,113; blue,55},opacity=0.5,fill opacity=0.5](7.65,-12.35) rectangle (8.35, -13.05);
\draw[draw={rgb,255:red,0; green,0; blue,0},fill={rgb,255:red,200; green,113; blue,55},opacity=0.5,fill opacity=0.5](8.65,-12.35) rectangle (9.35, -13.05);
\draw[draw={rgb,255:red,0; green,0; blue,0},fill={rgb,255:red,200; green,113; blue,55},opacity=0.5,fill opacity=0.5](9.65,-12.35) rectangle (10.35, -13.05);
\draw[draw={rgb,255:red,0; green,0; blue,0},fill={rgb,255:red,200; green,113; blue,55},opacity=0.5,fill opacity=0.5](9.65,-11.35) rectangle (10.35, -12.05);
\draw[draw={rgb,255:red,0; green,0; blue,0},fill={rgb,255:red,200; green,113; blue,55},opacity=0.5,fill opacity=0.5](9.65,-10.35) rectangle (10.35, -11.05);
\draw[draw={rgb,255:red,0; green,0; blue,0},fill={rgb,255:red,200; green,113; blue,55},opacity=0.5,fill opacity=0.5](9.65,-9.35) rectangle (10.35, -10.05);
\draw[draw={rgb,255:red,0; green,0; blue,0},fill={rgb,255:red,200; green,113; blue,55},opacity=0.5,fill opacity=0.5](10.65,-9.35) rectangle (11.35, -10.05);
\draw[draw={rgb,255:red,0; green,0; blue,0},fill={rgb,255:red,200; green,113; blue,55},opacity=0.5,fill opacity=0.5](11.65,-9.35) rectangle (12.35, -10.05);
\draw[draw={rgb,255:red,0; green,0; blue,0},fill={rgb,255:red,200; green,113; blue,55},opacity=0.5,fill opacity=0.5](12.65,-9.35) rectangle (13.35, -10.05);
\draw[draw={rgb,255:red,0; green,0; blue,0},fill={rgb,255:red,200; green,113; blue,55},opacity=0.5,fill opacity=0.5](13.65,-9.35) rectangle (14.35, -10.05);
\draw[draw={rgb,255:red,0; green,0; blue,0},fill={rgb,255:red,200; green,113; blue,55},opacity=0.5,fill opacity=0.5](14.65,-9.35) rectangle (15.35, -10.05);
\draw[draw={rgb,255:red,0; green,0; blue,0},fill={rgb,255:red,200; green,113; blue,55},opacity=0.5,fill opacity=0.5](15.65,-9.35) rectangle (16.35, -10.05);
\draw[draw={rgb,255:red,0; green,0; blue,0},fill={rgb,255:red,200; green,113; blue,55},opacity=0.5,fill opacity=0.5](16.65,-9.35) rectangle (17.35, -10.05);
\draw[draw={rgb,255:red,0; green,0; blue,0},fill={rgb,255:red,200; green,113; blue,55},opacity=0.5,fill opacity=0.5](16.65,-10.35) rectangle (17.35, -11.05);
\draw[draw={rgb,255:red,0; green,0; blue,0},fill={rgb,255:red,200; green,113; blue,55},opacity=0.5,fill opacity=0.5](16.65,-11.35) rectangle (17.35, -12.05);
\draw[draw={rgb,255:red,0; green,0; blue,0},fill={rgb,255:red,200; green,113; blue,55},opacity=0.5,fill opacity=0.5](16.65,-12.35) rectangle (17.35, -13.05);
\draw[draw={rgb,255:red,0; green,0; blue,0},fill={rgb,255:red,200; green,113; blue,55},opacity=0.5,fill opacity=0.5](17.65,-12.35) rectangle (18.35, -13.05);
\draw[draw={rgb,255:red,0; green,0; blue,0},fill={rgb,255:red,200; green,113; blue,55},opacity=0.5,fill opacity=0.5](18.65,-12.35) rectangle (19.35, -13.05);
\draw[draw={rgb,255:red,0; green,0; blue,0},fill={rgb,255:red,200; green,113; blue,55},opacity=0.5,fill opacity=0.5](18.65,-11.35) rectangle (19.35, -12.05);
\draw[draw={rgb,255:red,0; green,0; blue,0},fill={rgb,255:red,200; green,113; blue,55},opacity=0.5,fill opacity=0.5](18.65,-10.35) rectangle (19.35, -11.05);
\draw[draw={rgb,255:red,0; green,0; blue,0},fill={rgb,255:red,200; green,113; blue,55},opacity=0.5,fill opacity=0.5](18.65,-9.35) rectangle (19.35, -10.05);
\draw[draw={rgb,255:red,0; green,0; blue,0},fill={rgb,255:red,200; green,113; blue,55},opacity=0.5,fill opacity=0.5](19.65,-9.35) rectangle (20.35, -10.05);
\draw[draw={rgb,255:red,0; green,0; blue,0},fill={rgb,255:red,200; green,113; blue,55},opacity=0.5,fill opacity=0.5](19.65,-8.35) rectangle (20.35, -9.05);
\draw[draw={rgb,255:red,0; green,0; blue,0},fill={rgb,255:red,200; green,113; blue,55},opacity=0.5,fill opacity=0.5](19.65,-7.35) rectangle (20.35, -8.05);
\draw[draw={rgb,255:red,0; green,0; blue,0},fill={rgb,255:red,200; green,113; blue,55},opacity=0.5,fill opacity=0.5](20.65,-7.35) rectangle (21.35, -8.05);
\draw[draw={rgb,255:red,0; green,0; blue,0},fill={rgb,255:red,200; green,113; blue,55},opacity=0.5,fill opacity=0.5](21.65,-7.35) rectangle (22.35, -8.05);
\draw[draw={rgb,255:red,0; green,0; blue,0},fill={rgb,255:red,200; green,113; blue,55},opacity=0.5,fill opacity=0.5](22.65,-7.35) rectangle (23.35, -8.05);
\draw[draw={rgb,255:red,0; green,0; blue,0},fill={rgb,255:red,200; green,113; blue,55},opacity=0.5,fill opacity=0.5](22.65,-8.35) rectangle (23.35, -9.05);
\draw[draw={rgb,255:red,0; green,0; blue,0},fill={rgb,255:red,200; green,113; blue,55},opacity=0.5,fill opacity=0.5](22.65,-9.35) rectangle (23.35, -10.05);
\draw[draw={rgb,255:red,0; green,0; blue,0},fill={rgb,255:red,200; green,113; blue,55},opacity=0.5,fill opacity=0.5](22.65,-10.35) rectangle (23.35, -11.05);
\draw[draw={rgb,255:red,0; green,0; blue,0},fill={rgb,255:red,200; green,113; blue,55},opacity=0.5,fill opacity=0.5](23.65,-10.35) rectangle (24.35, -11.05);
\draw[draw={rgb,255:red,0; green,0; blue,0},fill={rgb,255:red,200; green,113; blue,55},opacity=0.5,fill opacity=0.5](24.65,-10.35) rectangle (25.35, -11.05);
\draw[draw={rgb,255:red,0; green,0; blue,0},fill={rgb,255:red,200; green,113; blue,55},opacity=0.5,fill opacity=0.5](24.65,-9.35) rectangle (25.35, -10.05);
\draw[draw={rgb,255:red,0; green,0; blue,0},fill={rgb,255:red,200; green,113; blue,55},opacity=0.5,fill opacity=0.5](24.65,-8.35) rectangle (25.35, -9.05);
\draw[draw={rgb,255:red,0; green,0; blue,0},fill={rgb,255:red,200; green,113; blue,55},opacity=0.5,fill opacity=0.5](24.65,-7.35) rectangle (25.35, -8.05);
\draw[draw={rgb,255:red,0; green,0; blue,0},fill={rgb,255:red,200; green,113; blue,55},opacity=0.5,fill opacity=0.5](24.65,-6.35) rectangle (25.35, -7.05);
\draw[draw={rgb,255:red,0; green,0; blue,0},fill={rgb,255:red,200; green,113; blue,55},opacity=0.5,fill opacity=0.5](24.65,-5.35) rectangle (25.35, -6.05);
\draw[draw={rgb,255:red,0; green,0; blue,0},fill={rgb,255:red,200; green,113; blue,55},opacity=0.5,fill opacity=0.5](23.65,-5.35) rectangle (24.35, -6.05);
\draw[draw={rgb,255:red,0; green,0; blue,0},fill={rgb,255:red,200; green,113; blue,55},opacity=0.5,fill opacity=0.5](22.65,-5.35) rectangle (23.35, -6.05);
\draw[draw={rgb,255:red,0; green,0; blue,0},fill={rgb,255:red,200; green,113; blue,55},opacity=0.5,fill opacity=0.5](21.65,-5.35) rectangle (22.35, -6.05);
\draw[draw={rgb,255:red,0; green,0; blue,0},fill={rgb,255:red,200; green,113; blue,55},opacity=0.5,fill opacity=0.5](20.65,-5.35) rectangle (21.35, -6.05);
\draw[draw={rgb,255:red,0; green,0; blue,0},fill={rgb,255:red,200; green,113; blue,55},opacity=0.5,fill opacity=0.5](19.65,-5.35) rectangle (20.35, -6.05);
\draw[draw={rgb,255:red,0; green,0; blue,0},fill={rgb,255:red,200; green,113; blue,55},opacity=0.5,fill opacity=0.5](19.65,-4.35) rectangle (20.35, -5.05);
\draw[draw={rgb,255:red,0; green,0; blue,0},fill={rgb,255:red,200; green,113; blue,55},opacity=0.5,fill opacity=0.5](18.65,-4.35) rectangle (19.35, -5.05);
\draw[draw={rgb,255:red,0; green,0; blue,0},fill={rgb,255:red,200; green,113; blue,55},opacity=0.5,fill opacity=0.5](17.65,-4.35) rectangle (18.35, -5.05);
\draw[draw={rgb,255:red,0; green,0; blue,0},fill={rgb,255:red,200; green,113; blue,55},opacity=0.5,fill opacity=0.5](16.65,-4.35) rectangle (17.35, -5.05);
\draw[draw={rgb,255:red,0; green,0; blue,0},fill={rgb,255:red,200; green,113; blue,55},opacity=0.5,fill opacity=0.5](16.65,-3.35) rectangle (17.35, -4.05);
\draw[draw={rgb,255:red,0; green,0; blue,0},fill={rgb,255:red,200; green,113; blue,55},opacity=0.5,fill opacity=0.5](17.65,-3.35) rectangle (18.35, -4.05);
\draw[draw={rgb,255:red,0; green,0; blue,0},fill={rgb,255:red,200; green,113; blue,55},opacity=0.5,fill opacity=0.5](18.65,-3.35) rectangle (19.35, -4.05);
\draw[draw={rgb,255:red,0; green,0; blue,0},fill={rgb,255:red,200; green,113; blue,55},opacity=0.5,fill opacity=0.5](18.65,-2.35) rectangle (19.35, -3.05);
\draw[draw={rgb,255:red,0; green,0; blue,0},fill={rgb,255:red,200; green,113; blue,55},opacity=0.5,fill opacity=0.5](18.65,-1.35) rectangle (19.35, -2.05);
\draw[draw={rgb,255:red,0; green,0; blue,0},fill={rgb,255:red,200; green,113; blue,55},opacity=0.5,fill opacity=0.5](17.65,-1.35) rectangle (18.35, -2.05);
\draw[draw={rgb,255:red,0; green,0; blue,0},fill={rgb,255:red,200; green,113; blue,55},opacity=0.5,fill opacity=0.5](16.65,-1.35) rectangle (17.35, -2.05);
\draw[draw={rgb,255:red,0; green,0; blue,0},fill={rgb,255:red,200; green,113; blue,55},opacity=0.5,fill opacity=0.5](15.65,-1.35) rectangle (16.35, -2.05);
\draw[draw={rgb,255:red,0; green,0; blue,0},fill={rgb,255:red,200; green,113; blue,55},opacity=0.5,fill opacity=0.5](15.65,-2.35) rectangle (16.35, -3.05);
\draw[draw={rgb,255:red,0; green,0; blue,0},fill={rgb,255:red,200; green,113; blue,55},opacity=0.5,fill opacity=0.5](15.65,-3.35) rectangle (16.35, -4.05);
\draw[draw={rgb,255:red,0; green,0; blue,0},fill={rgb,255:red,200; green,113; blue,55},opacity=0.5,fill opacity=0.5](15.65,-4.35) rectangle (16.35, -5.05);
\draw[draw={rgb,255:red,0; green,0; blue,0},fill={rgb,255:red,200; green,113; blue,55},opacity=0.5,fill opacity=0.5](15.65,-5.35) rectangle (16.35, -6.05);
\draw[draw={rgb,255:red,0; green,0; blue,0},fill={rgb,255:red,200; green,113; blue,55},opacity=0.5,fill opacity=0.5](14.65,-5.35) rectangle (15.35, -6.05);
\draw[draw={rgb,255:red,0; green,0; blue,0},fill={rgb,255:red,200; green,113; blue,55},opacity=0.5,fill opacity=0.5](13.65,-5.35) rectangle (14.35, -6.05);
\draw[draw={rgb,255:red,0; green,0; blue,0},fill={rgb,255:red,200; green,113; blue,55},opacity=0.5,fill opacity=0.5](12.65,-5.35) rectangle (13.35, -6.05);
\draw[draw={rgb,255:red,200; green,113; blue,55},opacity=0.5,thick](6,-13.7)--(8,-13.7)--(8,-12.7)--(10,-12.7)--(10,-9.7)--(17,-9.7)--(17,-12.7)--(19,-12.7)--(19,-9.7)--(20,-9.7)--(20,-7.7)--(23,-7.7)--(23,-10.7)--(25,-10.7)--(25,-5.7)--(20,-5.7)--(20,-4.7)--(17,-4.7)--(17,-3.7)--(19,-3.7)--(19,-1.7)--(16,-1.7)--(16,-5.7)--(13,-5.7);
\draw[draw={rgb,255:red,0; green,0; blue,0}](5.5,-13.7)--(5.5,-16.7);
\draw(4.32, -16.93) node[anchor=south west] {$l_i$};
\draw[draw={rgb,255:red,0; green,0; blue,0}](13.5,-5.7)--(13.5,0.3);
\draw(12.34, -0.81) node[anchor=south west] {$l^k$};
\draw[draw=none,fill={rgb,255:red,28; green,36; blue,31},thin](13.5, -5.7) ellipse (0.1cm and 0.1cm);\draw[draw=none,fill={rgb,255:red,28; green,36; blue,31},thin](5.5, -13.7) ellipse (0.1cm and 0.1cm);\draw[draw={rgb,255:red,0; green,0; blue,0},dashed,thin](5.5,-16.08)--(31,-9.7);
\draw(30.89, -10.28) node[anchor=south west] {$\rho$};
\draw(19.29, -13.68) node[anchor=south west] {$P_{m_0}$};
\draw[draw={rgb,255:red,0; green,0; blue,0}](19,-12.7)--(19,-16.7);
\draw(17.76, -17.73) node[anchor=south west] {$\lmz$};
\draw[draw=none,fill={rgb,255:red,28; green,36; blue,31},thin](19, -12.7) ellipse (0.1cm and 0.1cm);
\end{tikzpicture}
  \caption{The ray $\rho$ and tile $P_{m_0}$.
    We define $\rho$ as the southernmost ray of vector $\vect{P_i P_j}$ that starts on $l^i$ and intersects the position of at least one tile of $P_{i+1,i+2,\ldots,k}$. The easternmost such intersection is then defined to be   $\pos{P_{m_0}}$, and $P_{m_0}$ is called the dominant tile.}\label{fig:Pm0}
\end{figure}
\enlargethispage{-2\baselineskip}
\begin{sublemma}
  \label{lem:m0i1}
  $m_0>i+1$.
\end{sublemma}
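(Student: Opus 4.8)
The plan is to argue by contradiction. By construction $m_0\in\{i+1,i+2,\ldots,k\}$, so $m_0\ge i+1$ always holds, and it therefore suffices to rule out the possibility $m_0=i+1$. Assume toward a contradiction that $m_0=i+1$; then $\pos{P_{i+1}}$ lies on $\rho$ and, moreover, $P_{i+1}$ is the \emph{easternmost} tile of $P_{i+1,i+2,\ldots,k}$ whose position lies on $\rho$.

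First I would note that $\rho$ is then forced to coincide with the ray already exhibited in the definition of $\rho$, namely the ray of direction $\vpij$ that starts on $l^i$ and passes through $\pos{P_{i+1}}$. Indeed, $x_{\vpij}>0$ (as remarked just after Definition~\ref{def:shield}, via the contrapositive of Lemma~\ref{lem:glue:east} together with simplicity of $P$), so the line of direction $\vpij$ through $\pos{P_{i+1}}$ meets the vertical line containing $l^i$ in exactly one point; hence the data ``direction $\vpij$'', ``start point on $l^i$'', and ``passes through $\pos{P_{i+1}}$'' pin $\rho$ down uniquely. Next I would exhibit a second tile of $P_{i+1,\ldots,k}$ lying on $\rho$ and strictly east of $P_{i+1}$: since Hypothesis~\ref{lem:hp:shield ij} of Definition~\ref{def:shield} says both $\glueP{i}{i+1}$ and $\glueP{j}{j+1}$ point east, we have $\pos{P_{i+1}}=\pos{P_i}+(1,0)$ and $\pos{P_{j+1}}=\pos{P_j}+(1,0)$, and therefore $\pos{P_{j+1}}=\pos{P_{i+1}}+\vpij$. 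Thus $\pos{P_{j+1}}$ lies on $\rho$ — one period of $\vpij$ beyond $\pos{P_{i+1}}$ — and, because $x_{\vpij}>0$, it is strictly to the east of $\pos{P_{i+1}}$. As already observed in the paragraph defining $\rho$, $\pos{P_{j+1}}$ is the position of a tile of $P_{i+1,i+2,\ldots,k}$; this contradicts the choice of $P_{i+1}$ as the easternmost such tile on $\rho$. Hence $m_0\ne i+1$, i.e.\ $m_0>i+1$.

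The step I expect to require the most care is the final one: verifying that $\pos{P_{j+1}}$ really is the position of a tile of the \emph{subpath} $P_{i+1,i+2,\ldots,k}$ (equivalently, that $j+1\le k$), rather than merely of the ambient path $P$ — this is precisely the fact tacitly relied upon in the argument that $\rho$ exists, so I would reuse that observation while being careful about the index boundary when $j$ is close to $k$. The remaining ingredients — the uniqueness of $\rho$ once it is known to pass through $\pos{P_{i+1}}$, and the translation identity $\pos{P_{j+1}}=\pos{P_{i+1}}+\vpij$ coming from Hypothesis~\ref{lem:hp:shield ij} — are routine.
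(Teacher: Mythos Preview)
Your proof is correct and follows essentially the same approach as the paper: assume $m_0=i+1$, observe that $\pos{P_{j+1}}=\pos{P_{i+1}}+\vpij$ lies on $\rho$ strictly to the east of $\pos{P_{i+1}}$ (using $x_{\vpij}>0$), and contradict the easternmost choice of $P_{m_0}$. The boundary concern you flag about $j+1\le k$ is legitimate but is equally present (and equally unaddressed) in the paper's own argument and in its justification that $\rho$ exists; it is ultimately moot since the case $j=k$ is disposed of separately in \subl{lem:if j=k then P pumpable}.
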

\begin{proof}
 By its definition,  we know that $m_0\geq i+1$. If we had $m_0 = i+1$, then $P_{j+1}$ would also be on $\rho$, and hence
  since $\xcoord{\vpij}>0$ (i.e., $\vpij$ has positive x-component, a consequence of  Definition~\ref{def:shield} and Lemma~\ref{lem:glue:east}), we would have $m_0 \geq j+1$, contradicting the fact that $i<j$.
\end{proof}

\begin{sublemma}
  \label{lem:cmz}
  Let $\lmz$ be a vertical ray from $\pos{P_{m_0}}$ to the south.
  Then $\lmz$ only intersects $\embed{P_{i+1,i+2,\ldots,k}}$ at $\pos{P_{m_0}}$, and for all integers $n>0$,    $\lmz+n\vpij$ does not intersect $\embed{P_{i+1,i+2,\ldots,k}}$.
\end{sublemma}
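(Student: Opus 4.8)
The plan is to exploit the minimality of $\rho$ --- by definition it is the \emph{southernmost} ray of direction $\vpij$ that starts on $l^i$ and passes through a tile position of $P_{i+1,i+2,\ldots,k}$ --- to confine the part of $\embed{P_{i+1,i+2,\ldots,k}}$ lying east of $l^i$ to one closed side of the line supporting $\rho$. Then $\lmz$, and likewise every translate $\lmz+n\vpij$ with $n>0$, lies entirely east of $l^i$ and, apart from its start point, strictly on the \emph{opposite} side of that line, where $\embed{P_{i+1,i+2,\ldots,k}}$ has no points.

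To make this quantitative I would use the linear functional $\varphi(x,y)=\ycoord{\vpij}\,x-\xcoord{\vpij}\,y$: it is constant along every line of direction $\vpij$ and, since $\xcoord{\vpij}>0$, strictly increases as one moves south. Hence rays of direction $\vpij$ with start point on $l^i$ are ordered by the height of that start point exactly as by the value of $\varphi$, and the choice of $\rho$ says: $\varphi(\pos{P_{m_0}})$ is the maximum of $\varphi$ over those tile positions $p$ of $P_{i+1,i+2,\ldots,k}$ whose line of direction $\vpij$ meets $l^i$ --- and a one-line computation shows this ``meets $l^i$'' condition is equivalent to $\varphi(p)\ge\varphi(\pos{\glu P i})$, since $l^i$ runs south from $\pos{\glu P i}$. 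The key step is to remove the restriction: \emph{every} tile position $\pos{P_m}$ with $i+1\le m\le k$ and $\xcoord{P_m}\ge\xcoord{P_i}+1$ satisfies $\varphi(\pos{P_m})\le\varphi(\pos{P_{m_0}})$. If the line of direction $\vpij$ through $\pos{P_m}$ meets $l^i$, this is immediate from the minimality of $\rho$; if it does not, the same computation gives $\varphi(\pos{P_m})<\varphi(\pos{\glu P i})\le\varphi(\pos{P_{m_0}})$, the last step because $\rho(0)\in l^i$ is weakly south of $\pos{\glu P i}$, so $\varphi(\pos{P_{m_0}})=\varphi(\rho(0))\ge\varphi(\pos{\glu P i})$. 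Finally, every point of $\embed{P_{i+1,i+2,\ldots,k}}$ with $\xcoord{}\ge\xcoord{P_i}+1$ is either such a tile position or lies on the unit segment between two such tile positions, so by convexity of $\{\varphi\le\varphi(\pos{P_{m_0}})\}$ we get $\embed{P_{i+1,i+2,\ldots,k}}\cap\{\xcoord{}\ge\xcoord{P_i}+1\}\subseteq\{\varphi\le\varphi(\pos{P_{m_0}})\}$.

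The two conclusions then follow quickly. Since $P_{m_0}$ lies on $\rho$, which leaves $l^i$ heading east, $\xcoord{P_{m_0}}\ge\xcoord{P_i}+1$, so $\lmz$ sits in the column $\xcoord{}=\xcoord{P_{m_0}}$; every point of $\lmz$ other than $\pos{P_{m_0}}$ lies strictly south of $\pos{P_{m_0}}$, hence has $\varphi>\varphi(\pos{P_{m_0}})$, hence is not on $\embed{P_{i+1,i+2,\ldots,k}}$, while $\pos{P_{m_0}}$ is on it --- this is the first assertion. For the second, fix $n>0$ and let $s_0$ satisfy $\rho(s_0)=\pos{P_{m_0}}$. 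The point $\pos{P_{m_0}}+n\vpij$ has integer coordinates; the only integer points of $\embed{P_{i+1,i+2,\ldots,k}}$ are the $\pos{P_m}$, and if $\pos{P_{m_0}}+n\vpij$ were one of them it would equal $\rho(s_0+n)$ and hence be a tile of $P_{i+1,i+2,\ldots,k}$ lying on $\rho$ strictly east of $P_{m_0}$, contradicting the choice of $P_{m_0}$ as the easternmost such tile; so it is off $\embed{P_{i+1,i+2,\ldots,k}}$. Every remaining point of $\lmz+n\vpij$ lies in the column $\xcoord{}=\xcoord{P_{m_0}}+n\,\xcoord{\vpij}\ge\xcoord{P_i}+1$ and strictly south of $\pos{P_{m_0}}+n\vpij$, so it has $\varphi>\varphi(\pos{P_{m_0}}+n\vpij)=\varphi(\pos{P_{m_0}})$ and is, again, off $\embed{P_{i+1,i+2,\ldots,k}}$.

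The main obstacle --- and the reason the hypothesis $\xcoord{P_m}\ge\xcoord{P_i}+1$ is essential in the key step --- is that ``southernmost ray starting on $l^i$'' constrains only the tiles reachable from $l^i$ by an eastward ray, so a tile of $P_{i+1,i+2,\ldots,k}$ with $\xcoord{}\le\xcoord{P_i}$ could a priori have an arbitrarily large value of $\varphi$; this turns out to be harmless only because such a tile lies outside every column occupied by $\lmz$ or by a translate $\lmz+n\vpij$ with $n>0$. Beyond pinning that down, the only delicate bookkeeping is the elementary equivalence between ``meets $l^i$'', the start height of $\rho$, and $\varphi$, and the routine segment check underlying the last sentence of the second paragraph.
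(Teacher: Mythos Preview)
Your proof is correct and takes a genuinely different route from the paper's. The paper argues directly: any intersection of $\lmz$ (or $\lmz+n\vpij$) with $\embed{P_{i+1,\ldots,k}}$ strictly below the start point lies on an integer column, hence contains a tile position, and that tile position immediately witnesses a ray of direction $\vpij$ through a tile strictly south of $\rho$ (contradicting minimality of $\rho$), while an intersection at the start point $\lmz(0)+n\vpij$ contradicts the easternmost choice of $P_{m_0}$. You instead package the minimality of $\rho$ into the linear functional $\varphi(x,y)=\ycoord{\vpij}\,x-\xcoord{\vpij}\,y$, prove once that the part of $\embed{P_{i+1,\ldots,k}}$ in the half-plane $\{x\ge \xcoord{P_i}+1\}$ lies in the sublevel set $\{\varphi\le\varphi(\pos{P_{m_0}})\}$, and then read off both conclusions from the fact that $\lmz$ and its translates (minus their start points) lie in the complementary superlevel set. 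The paper's argument is shorter because it avoids your convexity/segment bookkeeping by using the integer-column observation; your approach, on the other hand, makes the underlying half-plane geometry explicit and would adapt more readily if one ever needed the same conclusion for rays not aligned with the grid. One small wording point: your sentence ``the choice of $\rho$ says: $\varphi(\pos{P_{m_0}})$ is the maximum of $\varphi$ over those tile positions $p$ \ldots\ whose line of direction $\vpij$ meets $l^i$'' is literally too strong, since a tile west of $l^i$ can have its line meet $l^i$ yet is unconstrained by $\rho$; but you are clearly aware of this (it is exactly the ``main obstacle'' you flag), and your actual key step only uses the claim under the additional hypothesis $\xcoord{P_m}\ge\xcoord{P_i}+1$, where it is correct.
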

\begin{proof}
Remember that $m_0 \in \{i+1,i+2,\ldots,k \}$.
Thus, by definition of $\lmz$, $\lmz$ intersects $\embed{P_{i+1,i+2,\ldots,k}}$ at 
 $\pos{P_{m_0}}$. If there were another intersection point then that intersection would be strictly lower (strictly to the south) on $\lmz$ and that intersection would define a ray of vector $\vect{P_i P_j}$ that intersects $P_{i,i+1,\ldots,k}$ strictly lower than $\rho$  thus contradicting the definition of $\rho$ as the lowest such ray. Thus $\lmz$ only intersects $\embed{P_{i+1,i+2,\ldots,k}}$ at $\pos{P_{m_0}}$.

For the second conclusion of this lemma,
suppose, for the sake of contradiction, that there is an integer $n > 0$ such that $\embed{P_{i+1,i+2,\ldots,k}}$ and $\lmz+n\vpij$ intersect.
  Then that intersection is either:
  \begin{itemize}
  \item At $\lmz(0)+n\vpij$.  
      Since $n>0$, that intersection is on $\rho$ (since $\rho$ is of vector $\vpij$ and goes through $\pos{P_{m_0}} = \lmz(0)$ for $n=0$) and strictly to the east of $P_{m_0}$ (because $\xcoord \vpij > 0$), contradicting the definition of $P_{m_0}$ as the easternmost tile of $P$ whose position is on $\rho$.

  \item Or else strictly lower (more southern) on $\lmz+n\vpij$ than $\lmz(0)+n\vpij$. 
    Since $\lmz+n\vpij$ is on a column (vertical line of integer positions), that intersection happens between $\lmz+n\vpij$ and a tile of $P_{i+1,i+2,\ldots,k}$, which contradicts the definition of $\rho$ as the lowest ray of vector $\vpij$ through the position of a tile of $P_{i+1,i+2,\ldots,k}$.
  \end{itemize}
\end{proof}

\begin{sublemma}
  \label{lem:lmzInC}
  $\lmz$ is entirely in $\mathcal C$.
\end{sublemma}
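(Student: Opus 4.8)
The plan is to show that the vertical ray $\lmz$ (going south from $\pos{P_{m_0}}$) cannot cross the boundary curve $c$ of the workspace $\mathcal{C}$, and that at least one point of $\lmz$ lies in $\mathcal{C}$; since $\mathcal{C}$ is one of the two connected components of $\R^2 \setminus c$ (together with $c$ itself), this will force all of $\lmz$ into $\mathcal{C}$. First I would recall from Definition~\ref{def:c} that $c = \concat{\reverse{l^i}, \gs{l^i(0)}{P_{i+1}}, \embed{P_{i+1,i+2,\ldots,k}}, \gs{P_k}{l^k(0)}, l^k}$, so I must check $\lmz$ against each of these five pieces. Since $m_0 \in \{i+1,\ldots,k\}$, the tile $P_{m_0}$ lies on $\embed{P_{i+1,i+2,\ldots,k}}$, hence $\lmz(0) = \pos{P_{m_0}}$ is a point of $c$; in particular $\lmz(0) \in \mathcal{C}$ (recall $\mathcal{C}$ includes $c$ itself). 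So the only thing to rule out is that $\lmz$, moving south from that point, exits $\mathcal{C}$ by crossing $c$ at some strictly interior point.

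The key step is the crossing analysis. Against $\embed{P_{i+1,i+2,\ldots,k}}$: by \subl{lem:cmz}, $\lmz$ intersects $\embed{P_{i+1,i+2,\ldots,k}}$ only at $\pos{P_{m_0}}$, so no further crossing there. Against $l^k$ (a ray to the north) and the half-segment $\gs{P_k}{l^k(0)}$: these lie on or above $\pos{P_k}$, and I would argue via Lemma~\ref{lem:glue:east}-type reasoning that $P_k$ is visible from the north and $\lmz$ goes south, so $\lmz$ below $\pos{P_{m_0}}$ cannot reach $l^k$ — more carefully, $l^k$ is at x-coordinate $\xcoord{P_k}+0.5$ (it is a glue ray) whereas $\lmz$ is at the integer x-coordinate $\xcoord{P_{m_0}}$, so they cannot intersect at all since a glue column and an integer column are disjoint; similarly $\gs{P_k}{l^k(0)}$ is a horizontal half-segment and only its endpoint $\pos{P_k}$ is at an integer x-coordinate, which equals $\xcoord{P_{m_0}}$ only if $m_0=k$ (handled by the $\embed{P}$ case). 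Against $\reverse{l^i}$ (the southward glue ray from $\pos{\glu P i}$) and $\gs{l^i(0)}{P_{i+1}}$: again $l^i$ is at half-integer x-coordinate $\xcoord{P_i}+0.5$ so it is disjoint from the integer column of $\lmz$; and $\gs{l^i(0)}{P_{i+1}}$ is horizontal at y-coordinate $\ycoord{P_i}=\ycoord{P_{i+1}}$, whose only integer-x point is $\pos{P_{i+1}}$, and if $\lmz$ passed through $\pos{P_{i+1}}$ that would be a second intersection with $\embed{P_{i+1,\ldots,k}}$, contradicting \subl{lem:cmz} unless $m_0=i+1$, which \subl{lem:m0i1} excludes.

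Having shown $\lmz$ meets $c$ only at its start point $\lmz(0)\in c\subseteq\mathcal{C}$, the ray $\lmz$ minus its endpoint lies entirely in one connected component of $\R^2\setminus c$. To identify which one: immediately below $\pos{P_{m_0}}$, the point $\pos{P_{m_0}}+(0,-\varepsilon)$ lies on the ray $\rho$'s ``south'' side, and I would note that by the definition of $\rho$ as the \emph{southernmost} ray and $P_{m_0}$ as the \emph{easternmost} tile on $\rho$, the region just south of $P_{m_0}$ along $\lmz$ is on the right-hand (east) side of $c$; concretely, near $\pos{P_{m_0}}$ the curve $c$ locally passes through via $\embed{P_{i+1,\ldots,k}}$, and the east/south side of this passage is $\mathcal{C}$. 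The main obstacle is making this last local-orientation argument rigorous — determining on which side of $c$ the point just south of $P_{m_0}$ lies — since it requires careful bookkeeping about how $\embed{P_{i+1,\ldots,k}}$ turns at $P_{m_0}$; I expect the cleanest route is to invoke that $\lmz$ shares its start with $\rho$, that $\rho$ points east-ish into $\mathcal{C}$ (as $\xcoord{\vpij}>0$), and that $P_{m_0}$ being easternmost-on-$\rho$ means $\rho$ continues east of $P_{m_0}$ without re-hitting $\embed{P_{i+1,\ldots,k}}$, placing the wedge south of $P_{m_0}$ on the $\mathcal{C}$ side. Thus all of $\lmz$ is in $\mathcal{C}$.
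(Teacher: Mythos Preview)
Your crossing analysis is correct and matches the paper's: use \subl{lem:cmz} for the $\embed{P_{i+1,\ldots,k}}$ part, and the integer-vs-half-integer x-coordinate distinction to rule out intersections with $l^i$ and $l^k$ (and hence with the two half-segments, whose only integer-$x$ points lie on the path).

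Where you diverge from the paper is the ``which side'' step, and here you make things much harder than necessary. You try to argue \emph{locally} at $\pos{P_{m_0}}$, analysing how $\embed{P_{i+1,\ldots,k}}$ turns there to decide which side of $c$ the point $\pos{P_{m_0}}+(0,-\varepsilon)$ falls on; you correctly flag this as the main obstacle, and indeed it would require case analysis on the directions of $\glue P{m_0-1}{m_0}$ and $\glue P{m_0}{m_0+1}$. The paper instead argues \emph{globally, far to the south}: since $\rho$ starts on $l^i$ and has direction $\vpij$ with $\xcoord{\vpij}>0$, the tile $P_{m_0}$ on $\rho$ lies strictly east of $l^i$; hence $\lmz$ is on an integer column strictly east of the glue column of $l^i$. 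Sufficiently far south, the only part of $c$ present is $l^i$, so there $\lmz$ is to the east of $c$, i.e.\ in $\mathcal C$ by Definition~\ref{def:workspace}. Since $\lmz$ meets $c$ only at $\lmz(0)\in c\subseteq\mathcal C$, all of $\lmz$ is in $\mathcal C$. This sidesteps the local-orientation bookkeeping entirely.
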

\begin{proof}
  By \subl{lem:cmz}, $\lmz$ only intersects $P_{i+1,i+2,\ldots,k}$ at $\lmz(0) = \pos{P_{m_0}}$,
  which is on the border $c$ of $\mathcal C$ since $m_0\in\{\range{i+1}{i+2}{k}\}$.
Also, $\lmz$ does not otherwise intersect $c$ since $l^i$ and $l^k$ 
are both on glue columns (half-integer x-coordinate) and $\lmz$ is on a column  (integer x-coordinate). 
Finally, $\lmz$ is strictly to the east of $l^i$ (by definition of $\rho$), and hence by Definition~\ref{def:workspace}, $\lmz$ is entirely in $\mathcal C$.
\end{proof}

\begin{sublemma}
  \label{lem:lmzEast}
  Recall that $\lmz$ and $l^j$ are vertical rays to the south starting at $\pos{P_{m_0}}$ and $\pos{\glu P j}$, \resp.
  If $m_0>j$, the vertical ray to the south $\lmz$ is strictly to the east of the ray $l^j$, and $\lmz+\vpji\cap\embed{P_{i+1,i+2,\ldots,k}}\subseteq\{\lmz(0)+\vpji\}$.
\end{sublemma}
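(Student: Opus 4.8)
Throughout, write the statement as two assertions and prove them in order; the second will follow from the first together with the minimality defining $\rho$. We use the hypothesis $m_0>j$, which gives $j<m_0\le k$, so that both $P_j$ and $P_{j+1}$ are tiles of $P_{i+1,i+2,\ldots,k}$. Recall that $\rho$ is the \emph{southernmost} ray of vector $\vpij$ starting on $l^i$ that meets a tile of $P_{i+1,i+2,\ldots,k}$, that $\rho(0)\in l^i$ has $x$-coordinate $\xcoord{P_i}+\tfrac12$ and lies at or below $\pos{\glu P i}$, and that $\pos{P_{m_0}}=\rho(t_0)$ for the parameter $t_0>0$ at which $P_{m_0}$ is the easternmost tile of $P_{i+1,i+2,\ldots,k}$ on $\rho$; hence $\xcoord{P_{m_0}}=\xcoord{P_i}+\tfrac12+t_0\,\xcoord{\vpij}$ with $\xcoord{\vpij}=\xcoord{P_j}-\xcoord{P_i}>0$ (Definition~\ref{def:shield} and Lemma~\ref{lem:glue:east}).

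\emph{First assertion ($\lmz$ strictly east of $l^j$).} Since $l^j$ lies on the half-integer glue column $\xcoord{P_j}+\tfrac12$ and $\lmz$ lies on the integer column $\xcoord{P_{m_0}}$, it suffices to show $\xcoord{P_{m_0}}>\xcoord{P_j}$, i.e.\ $t_0>1$. Suppose instead $\xcoord{P_{m_0}}\le\xcoord{P_j}$. If $\pos{P_j}$ were on $\rho$, then $\xcoord{P_j}\le\xcoord{P_{m_0}}\le\xcoord{P_j}$ would force $\pos{P_j}=\pos{P_{m_0}}$, hence $m_0=j$ by simplicity of $P$ --- contradicting $m_0>j$; and if $\pos{P_{j+1}}$ were on $\rho$, then $\xcoord{P_{m_0}}\ge\xcoord{P_{j+1}}=\xcoord{P_j}+1>\xcoord{P_j}$ --- again impossible. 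So neither $\pos{P_j}$ nor $\pos{P_{j+1}}$ is on $\rho$, and in particular $m_0\ge j+2$. Following the subpath $P_{j,j+1,\ldots,m_0}$ --- which steps east to $\pos{P_{j+1}}$ on column $\xcoord{P_j}+1$ and ends with $P_{m_0}$ on a column $\le\xcoord{P_j}$ --- there is an index $\ell$ with $j<\ell<m_0$, $\vect{P_\ell P_{\ell+1}}=(-1,0)$, and $\xcoord{P_\ell}=\xcoord{P_j}+1$. By simplicity $\pos{\glu P\ell}\ne\pos{\glu P j}$, so $\ycoord{P_\ell}\ne\ycoord{P_j}$; and $\ycoord{P_\ell}<\ycoord{P_j}$ is impossible, since then $\pos{\glu P\ell}=(\xcoord{P_j}+\tfrac12,\ycoord{P_\ell})$ would lie on the visibility ray $l^j$ of $\glu P j$. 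Hence $\ycoord{P_\ell}>\ycoord{P_j}$: the path re-crosses the column $\xcoord{P_j}+\tfrac12$ westward strictly north of $\pos{\glu P j}$. The remaining step of the plan --- and \emph{the main obstacle} --- is to combine this re-crossing with the minimality of $\rho$ to exhibit a ray of vector $\vpij$ that starts on $l^i$, meets a tile of $P_{i+1,i+2,\ldots,k}$, and lies strictly south of $\rho$, contradicting the choice of $\rho$. Concretely one must track, for the tiles of $P_{j+1,\ldots,m_0}$ lying on columns $\le\xcoord{P_j}$, their displacement in the direction $\vpij$ relative to $\rho(0)$, and use that $\rho(0)$ is at or below $\pos{\glu P i}$ on $l^i$ to be sure that the witnessing ray really emanates from $l^i$ (a case split on the sign of $\ycoord{\vpij}$ may be needed, since when $\ycoord{\vpij}<0$ one must invoke a tile further along the subpath rather than $P_{j+1}$ itself).

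\emph{Second assertion.} Granting the first assertion, $\xcoord{P_{m_0}}\ge\xcoord{P_j}+1=\xcoord{P_i}+\xcoord{\vpij}+1$, so $t_0\ge 1+\tfrac{1}{2\xcoord{\vpij}}>1$. Then $\lmz(0)+\vpji=\pos{P_{m_0}}-\vpij$ has $x$-coordinate $\xcoord{P_i}+\tfrac12+(t_0-1)\xcoord{\vpij}>\xcoord{P_i}+\tfrac12$, and equals $\rho(t_0-1)$ with $t_0-1>0$; in particular it is a point of the ray $\rho$. Suppose $\lmz+\vpji$ --- the vertical ray pointing south from $\rho(t_0-1)$ --- met $\embed{P_{i+1,i+2,\ldots,k}}$ at some $p\ne\lmz(0)+\vpji$. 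Then $p$ lies on the integer column $\xcoord{\rho(t_0-1)}$ strictly below $\rho(t_0-1)$, so there is a tile $P_\ell$ of $P_{i+1,i+2,\ldots,k}$ on that column with $\ycoord{P_\ell}\le\ycoord{p}<\ycoord{\rho(t_0-1)}$ (take $P$'s tile at $p$ if $p$ is a tile position, otherwise the lower endpoint of the vertical edge of $\embed{P_{i+1,i+2,\ldots,k}}$ through $p$). The ray of vector $\vpij$ through $\pos{P_\ell}$ meets the column $\xcoord{P_i}+\tfrac12$ at the point $\pos{P_\ell}-(t_0-1)\vpij$, whose $y$-coordinate $\ycoord{P_\ell}-(t_0-1)\ycoord{\vpij}$ is strictly less than $\ycoord{\rho(t_0-1)}-(t_0-1)\ycoord{\vpij}=\ycoord{\rho(0)}$ (using $\rho(t_0-1)=\rho(0)+(t_0-1)\vpij$); and $\ycoord{\rho(0)}\le\ycoord{P_i}$ since $\rho(0)\in l^i$, so that point lies on $l^i$ strictly south of $\rho(0)$. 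Hence the forward ray of vector $\vpij$ from that point meets the tile $P_\ell$ of $P_{i+1,i+2,\ldots,k}$ and lies strictly south of $\rho$, contradicting the definition of $\rho$. Therefore $(\lmz+\vpji)\cap\embed{P_{i+1,i+2,\ldots,k}}\subseteq\{\lmz(0)+\vpji\}$. (This is precisely the analogue, for the displacement $-\vpij$, of the argument proving \subl{lem:cmz} for the displacements $+n\vpij$ with $n>0$, the point of the first assertion being exactly that $\lmz(0)+\vpji$ does not fall west of $l^i$, which is what lets the ``$\rho$ is southernmost'' part of that argument go through.)
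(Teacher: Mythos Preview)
Your argument for the second assertion is correct and matches the paper's reasoning (the contradiction with the minimality of $\rho$). The issue is the first assertion: you explicitly leave the ``main obstacle'' unresolved, and the sketch you give for completing it does not obviously go through. Having a westward crossing of the glue column $\xcoord{P_j}+\tfrac12$ at some $P_\ell$ with $\ycoord{P_\ell}>\ycoord{P_j}$ does not by itself yield a tile strictly below $\rho$; there is no evident control on where the ray of direction $\vpij$ through such a $P_\ell$ meets $l^i$ relative to $\rho(0)$ (your own parenthetical about a case split on the sign of $\ycoord{\vpij}$ signals that you have not closed this). So the first assertion is a genuine gap.

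The paper's proof takes a completely different route that avoids tracking the subpath $P_{j,\ldots,m_0}$ geometrically. It performs a case analysis on the direction of $\glue{P}{m_0}{m_0+1}$ (and, in the north case, of $\glue{P}{m_0-1}{m_0}$). South is impossible because $P_{m_0+1}$ would sit strictly below $\rho$; west is impossible because, using \subl{lem:cmz}, $\glue{P}{m_0}{m_0+1}$ would then be visible from the south, and a west-pointing south-visible glue with index $>j$ contradicts Lemma~\ref{lem:glue:east}. In the surviving cases there is a south-visible east-pointing glue immediately adjacent to $P_{m_0}$ (either $\glue{P}{m_0}{m_0+1}$ or $\glue{P}{m_0-1}{m_0}$), and Lemma~\ref{lem:glue:east} applied to that glue together with $\glu P j$ forces it to lie strictly east of $l^j$, hence $\lmz$ is strictly east of $l^j$. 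This visibility-plus-Lemma~\ref{lem:glue:east} argument is short and complete; your direct $x$-coordinate chase would need substantially more work, if it can be made to work at all.
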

\begin{proof}
  The proof has four cases, depending on the glues around $P_{m_0}$.
  We show that cases 1 and 2 below cannot occur, and that cases 3 and 4 yield the conclusion of this claim:
  \begin{enumerate}
  \item If $\glue P {m_0}{m_0+1}$ is pointing south,
    the ray of vector $\vpij$ through $\pos{P_{m_0 +1}}$ is strictly lower than $\rho$, contradicting the definition of $\rho$.
  \item If $\glue P {m_0} {m_0+1}$ is pointing west, then we claim that $\glue P {m_0} {m_0+1}$ would be visible from the south relative to $P$. Indeed, by \subl{lem:lmzInC}, $L^{m_0}\in\mathcal C$, and since $L^{m_0}$ is on a tile column, $L^{m_0}-(0.5,0)$ is on a glue column, and at x-coordinate at least $\xcoord{\pos{\glu P i}}$.
  That, together with the fact that $\lmz$ intersects $P_{\range{i+1}{i+2}{k}}$ only at $\pos{P_{m_0}}$, implies that no glue of $P$ can be positioned on the line defined as  $L^{m_0}(z) - (0.5,0)$ for $z>0,z\in \mathbb{R}$ (i.e. on line $L^{m_0}-(0.5,0)$ and strictly below $\glu P {m_0}$).
  Thus $\glue P {m_0} {m_0+1}$ is visible from the south.
    But, since $m_0>j$ and $\glu P j$ points east, $\glue P {m_0} {m_0+1}$ pointing west and being visible from the south contradicts Lemma~\ref{lem:glue:east}.

  \item\label{case:lmzInter} If $\glue P {m_0}{m_0+1}$ is pointing east, then $\glue P {m_0}{m_0+1}$ is visible from the south relative to $P$ 
    (by the same argument as in the previous case, with $L^{m_0}+(0.5,0)$ instead of $L^{m_0}-(0.5,0)$).
  Therefore by \sublName~\ref{lem:m0i1} and Lemma~\ref{lem:glue:east}, $\lmz$ is strictly to the east of $l^i$, and if $m_0>j$, then $\lmz$ is strictly to the east of $l^j$ too.

  Furthermore, if $m_0>j$ and if $\lmz + \vpji$ intersected $\embed{P_{i+1,i+2,\ldots,k}}$ other than at $\lmz(0) + \vpji$, we could find a ray of vector $\vpij$ through a tile of $P_{i+1,i+2,\ldots,k}$ strictly lower (strictly to the south) than $\rho$, which is a contradiction.
  Hence, if $m_0>j$, then $\lmz+\vpji$ does not intersect $\embed{P_{\range {i+1}{i+2}k}}$ except possibly at $\lmz(0)+\vpji$.
\item If $\glue P {m_0}{m_0+1}$ is pointing north, we consider all possibilities for $\glue P {m_0-1}{m_0}$:
    \begin{enumerate}
    \item If $\glue P {m_0-1}{m_0}$ points north, this contradicts the definition of $\rho$, since the ray of vector $\vpij$ going through $P_{m_0-1}$ is strictly lower than $\rho$.
    \item If $\glue P {m_0-1}{m_0}$ points west, then $\glue P {m_0-1}{m_0}$ is visible relative to $P$ (by \sublName~\ref{lem:cmz}), and this contradicts Lemma~\ref{lem:glue:east} since $m_0>i$.
    \item $\glue P {m_0-1}{m_0}$ pointing south  contradicts $P$ being simple: indeed, this would mean that $\pos{P_{m_0-1}} = \pos{P_{m_0+1}}$.
    \item Therefore, $\glue P {m_0-1}{m_0}$ points east, and is therefore visible from the south relative to $P_{i+1,i+2,\ldots,k}$ (by \sublName~\ref{lem:cmz}). By Lemma~\ref{lem:glue:east}, since $m_0>i$, the visibility ray of $\glue P {m_0-1}{m_0}$ is strictly to the east of $l^i$, and hence $\lmz$ is strictly to the east of $l^i$. If $m_0>j$, then $\lmz$ is also strictly to the east of $l^j$, hence $\lmz+\vpji$ does not intersect $P_{i+1,i+2,\ldots,k}$, except possibly at $\lmz(0) + \vpji$, by the same argument as in Case~\ref{case:lmzInter} above.
    \end{enumerate}
  \end{enumerate}
\end{proof}

\subsubsection{Splitting $\mathcal C$ into two components, $\cm$ and $\cp$, using $P_{m_0}$}\label{shield:split}
By \subl{lem:lmzInC}, $\lmz$ is entirely in $\mathcal{C}$. Moreover, by \subl{lem:cmz}, $\lmz$ intersects $c$ only at $\pos{P_{m_0}}$. Therefore, we use $\lmz$ to split $\mathcal{C}$ into two components:
let $c^{m_0}$ be the curve defined as 
$$c^{m_0} = \concat{\reverse\lmz, \embed{P_{\rng{m_0}k}},\gs{\pos{P_k}}{l^k(0)},l^k}$$ 
Since $c^{m_0}$ starts and ends with vertical rays, and is otherwise made of a finite concatenation of horizontal and vertical segments, $c^{m_0}$ splits $\R^2$ into two connected components by Theorem~\ref{thm:infinite-jordan}. Moreover, since $c^{m_0}$ is in $\mathcal C$ (because the four curves that define $c^{m_0}$ are in $\mathcal C$), let $\cp\subseteq\mathcal C$ be the right-hand side of $c^{m_0}$ (including $c^{m_0}$), and let $\cm=(\mathcal C\setminus\cp)\cup \lmz(\R)$\footnote{Recall that $\lmz(\R)$ is the set of all points in the range of of $\lmz$}. See Figure~\ref{fig:shield-cpm} for an illustration.

Moreover, $P_{i+1,i+2,\ldots,m_0}$ is entirely in $\cm$ (and on the border of $\cm$, by definition of $\mathcal C$) and $P_{m_0,m_0+1,\ldots,k}$ is entirely in $\cp$ (and on the border of $\cp$, by definition of $\mathcal C$).
\begin{figure}[ht]
  \centering
  \begin{tikzpicture}[scale=\scale]\draw[draw={rgb,255:red,200; green,200; blue,200}](0.5,-0.2) rectangle (26.5, -15.2);
\draw[draw={rgb,255:red,200; green,200; blue,200}](0.5,-15.2)--(0.5,-0.2);
\draw[draw={rgb,255:red,200; green,200; blue,200}](1.5,-15.2)--(1.5,-0.2);
\draw[draw={rgb,255:red,200; green,200; blue,200}](2.5,-15.2)--(2.5,-0.2);
\draw[draw={rgb,255:red,200; green,200; blue,200}](3.5,-15.2)--(3.5,-0.2);
\draw[draw={rgb,255:red,200; green,200; blue,200}](4.5,-15.2)--(4.5,-0.2);
\draw[draw={rgb,255:red,200; green,200; blue,200}](5.5,-15.2)--(5.5,-0.2);
\draw[draw={rgb,255:red,200; green,200; blue,200}](6.5,-15.2)--(6.5,-0.2);
\draw[draw={rgb,255:red,200; green,200; blue,200}](7.5,-15.2)--(7.5,-0.2);
\draw[draw={rgb,255:red,200; green,200; blue,200}](8.5,-15.2)--(8.5,-0.2);
\draw[draw={rgb,255:red,200; green,200; blue,200}](9.5,-15.2)--(9.5,-0.2);
\draw[draw={rgb,255:red,200; green,200; blue,200}](10.5,-15.2)--(10.5,-0.2);
\draw[draw={rgb,255:red,200; green,200; blue,200}](11.5,-15.2)--(11.5,-0.2);
\draw[draw={rgb,255:red,200; green,200; blue,200}](12.5,-15.2)--(12.5,-0.2);
\draw[draw={rgb,255:red,200; green,200; blue,200}](13.5,-15.2)--(13.5,-0.2);
\draw[draw={rgb,255:red,200; green,200; blue,200}](14.5,-15.2)--(14.5,-0.2);
\draw[draw={rgb,255:red,200; green,200; blue,200}](15.5,-15.2)--(15.5,-0.2);
\draw[draw={rgb,255:red,200; green,200; blue,200}](16.5,-15.2)--(16.5,-0.2);
\draw[draw={rgb,255:red,200; green,200; blue,200}](17.5,-15.2)--(17.5,-0.2);
\draw[draw={rgb,255:red,200; green,200; blue,200}](18.5,-15.2)--(18.5,-0.2);
\draw[draw={rgb,255:red,200; green,200; blue,200}](19.5,-15.2)--(19.5,-0.2);
\draw[draw={rgb,255:red,200; green,200; blue,200}](20.5,-15.2)--(20.5,-0.2);
\draw[draw={rgb,255:red,200; green,200; blue,200}](21.5,-15.2)--(21.5,-0.2);
\draw[draw={rgb,255:red,200; green,200; blue,200}](22.5,-15.2)--(22.5,-0.2);
\draw[draw={rgb,255:red,200; green,200; blue,200}](23.5,-15.2)--(23.5,-0.2);
\draw[draw={rgb,255:red,200; green,200; blue,200}](24.5,-15.2)--(24.5,-0.2);
\draw[draw={rgb,255:red,200; green,200; blue,200}](25.5,-15.2)--(25.5,-0.2);
\draw[draw={rgb,255:red,200; green,200; blue,200}](0.5,-0.2)--(26.5,-0.2);
\draw[draw={rgb,255:red,200; green,200; blue,200}](0.5,-1.2)--(26.5,-1.2);
\draw[draw={rgb,255:red,200; green,200; blue,200}](0.5,-2.2)--(26.5,-2.2);
\draw[draw={rgb,255:red,200; green,200; blue,200}](0.5,-3.2)--(26.5,-3.2);
\draw[draw={rgb,255:red,200; green,200; blue,200}](0.5,-4.2)--(26.5,-4.2);
\draw[draw={rgb,255:red,200; green,200; blue,200}](0.5,-5.2)--(26.5,-5.2);
\draw[draw={rgb,255:red,200; green,200; blue,200}](0.5,-6.2)--(26.5,-6.2);
\draw[draw={rgb,255:red,200; green,200; blue,200}](0.5,-7.2)--(26.5,-7.2);
\draw[draw={rgb,255:red,200; green,200; blue,200}](0.5,-8.2)--(26.5,-8.2);
\draw[draw={rgb,255:red,200; green,200; blue,200}](0.5,-9.2)--(26.5,-9.2);
\draw[draw={rgb,255:red,200; green,200; blue,200}](0.5,-10.2)--(26.5,-10.2);
\draw[draw={rgb,255:red,200; green,200; blue,200}](0.5,-11.2)--(26.5,-11.2);
\draw[draw={rgb,255:red,200; green,200; blue,200}](0.5,-12.2)--(26.5,-12.2);
\draw[draw={rgb,255:red,200; green,200; blue,200}](0.5,-13.2)--(26.5,-13.2);
\draw[draw={rgb,255:red,200; green,200; blue,200}](0.5,-14.2)--(26.5,-14.2);
\draw[draw=none,fill={rgb,255:red,255; green,0; blue,0},opacity=0.1](5.45,-13.7)--(7.95,-13.7)--(7.95,-12.7)--(9.95,-12.7)--(9.95,-9.7)--(16.95,-9.7)--(16.95,-12.7)--(18.95,-12.7)--(18.95,-15.2)--(5.45,-15.2)--(5.45,-13.7);
\draw(11.81, -12.96) node[anchor=south west] {$\cm$};
\draw[draw=none,fill={rgb,255:red,0; green,102; blue,255},opacity=0.1](13.45,-5.7)--(13.5,-0.2)--(26.5,-0.2)--(26.45,-15.2)--(18.95,-15.2)--(18.95,-9.7)--(19.95,-9.7)--(19.95,-7.7)--(22.95,-7.7)--(22.95,-10.7)--(24.95,-10.7)--(24.95,-5.7)--(19.95,-5.7)--(19.95,-4.7)--(17,-4.7)--(17,-3.7)--(19,-3.7)--(19,-1.7)--(16,-1.7)--(15.95,-3.7)--(15.95,-5.7)--(13.45,-5.7);
\draw(22.87, -4.18) node[anchor=south west] {$\cp$};
\draw[draw={rgb,255:red,0; green,0; blue,0},fill={rgb,255:red,200; green,113; blue,55},opacity=0.29899997,fill opacity=0.29899997](1.65,-8.35) rectangle (2.35, -9.05);
\draw[draw={rgb,255:red,0; green,0; blue,0},fill={rgb,255:red,200; green,113; blue,55},opacity=0.29899997,fill opacity=0.29899997](2.65,-8.35) rectangle (3.35, -9.05);
\draw[draw={rgb,255:red,0; green,0; blue,0},fill={rgb,255:red,200; green,113; blue,55},opacity=0.29899997,fill opacity=0.29899997](3.65,-8.35) rectangle (4.35, -9.05);
\draw[draw={rgb,255:red,0; green,0; blue,0},fill={rgb,255:red,200; green,113; blue,55},opacity=0.29899997,fill opacity=0.29899997](4.65,-8.35) rectangle (5.35, -9.05);
\draw[draw={rgb,255:red,0; green,0; blue,0},fill={rgb,255:red,200; green,113; blue,55},opacity=0.29899997,fill opacity=0.29899997](5.65,-8.35) rectangle (6.35, -9.05);
\draw[draw={rgb,255:red,0; green,0; blue,0},fill={rgb,255:red,200; green,113; blue,55},opacity=0.29899997,fill opacity=0.29899997](6.65,-8.35) rectangle (7.35, -9.05);
\draw[draw={rgb,255:red,0; green,0; blue,0},fill={rgb,255:red,200; green,113; blue,55},opacity=0.29899997,fill opacity=0.29899997](7.65,-8.35) rectangle (8.35, -9.05);
\draw[draw={rgb,255:red,0; green,0; blue,0},fill={rgb,255:red,200; green,113; blue,55},opacity=0.29899997,fill opacity=0.29899997](7.65,-9.35) rectangle (8.35, -10.05);
\draw[draw={rgb,255:red,0; green,0; blue,0},fill={rgb,255:red,200; green,113; blue,55},opacity=0.29899997,fill opacity=0.29899997](7.65,-10.35) rectangle (8.35, -11.05);
\draw[draw={rgb,255:red,0; green,0; blue,0},fill={rgb,255:red,200; green,113; blue,55},opacity=0.29899997,fill opacity=0.29899997](7.65,-11.35) rectangle (8.35, -12.05);
\draw[draw={rgb,255:red,0; green,0; blue,0},fill={rgb,255:red,200; green,113; blue,55},opacity=0.29899997,fill opacity=0.29899997](6.65,-11.35) rectangle (7.35, -12.05);
\draw[draw={rgb,255:red,0; green,0; blue,0},fill={rgb,255:red,200; green,113; blue,55},opacity=0.29899997,fill opacity=0.29899997](5.65,-11.35) rectangle (6.35, -12.05);
\draw[draw={rgb,255:red,0; green,0; blue,0},fill={rgb,255:red,200; green,113; blue,55},opacity=0.29899997,fill opacity=0.29899997](4.65,-11.35) rectangle (5.35, -12.05);
\draw[draw={rgb,255:red,0; green,0; blue,0},fill={rgb,255:red,200; green,113; blue,55},opacity=0.29899997,fill opacity=0.29899997](3.65,-11.35) rectangle (4.35, -12.05);
\draw[draw={rgb,255:red,0; green,0; blue,0},fill={rgb,255:red,200; green,113; blue,55},opacity=0.29899997,fill opacity=0.29899997](3.65,-12.35) rectangle (4.35, -13.05);
\draw[draw={rgb,255:red,0; green,0; blue,0},fill={rgb,255:red,200; green,113; blue,55},opacity=0.29899997,fill opacity=0.29899997](3.65,-13.35) rectangle (4.35, -14.05);
\draw[draw={rgb,255:red,0; green,0; blue,0},fill={rgb,255:red,200; green,113; blue,55},opacity=0.29899997,fill opacity=0.29899997](4.65,-13.35) rectangle (5.35, -14.05);
\draw[draw={rgb,255:red,200; green,113; blue,55},opacity=0.29899997,thick](2,-8.7)--(8,-8.7)--(8,-11.7)--(4,-11.7)--(4,-13.7)--(5.5,-13.7);
\draw[draw={rgb,255:red,0; green,0; blue,0},fill={rgb,255:red,200; green,113; blue,55},opacity=0.5,fill opacity=0.5](5.65,-13.35) rectangle (6.35, -14.05);
\draw[draw={rgb,255:red,0; green,0; blue,0},fill={rgb,255:red,200; green,113; blue,55},opacity=0.5,fill opacity=0.5](6.65,-13.35) rectangle (7.35, -14.05);
\draw[draw={rgb,255:red,0; green,0; blue,0},fill={rgb,255:red,200; green,113; blue,55},opacity=0.5,fill opacity=0.5](7.65,-13.35) rectangle (8.35, -14.05);
\draw[draw={rgb,255:red,0; green,0; blue,0},fill={rgb,255:red,200; green,113; blue,55},opacity=0.5,fill opacity=0.5](7.65,-12.35) rectangle (8.35, -13.05);
\draw[draw={rgb,255:red,0; green,0; blue,0},fill={rgb,255:red,200; green,113; blue,55},opacity=0.5,fill opacity=0.5](8.65,-12.35) rectangle (9.35, -13.05);
\draw[draw={rgb,255:red,0; green,0; blue,0},fill={rgb,255:red,200; green,113; blue,55},opacity=0.5,fill opacity=0.5](9.65,-12.35) rectangle (10.35, -13.05);
\draw[draw={rgb,255:red,0; green,0; blue,0},fill={rgb,255:red,200; green,113; blue,55},opacity=0.5,fill opacity=0.5](9.65,-11.35) rectangle (10.35, -12.05);
\draw[draw={rgb,255:red,0; green,0; blue,0},fill={rgb,255:red,200; green,113; blue,55},opacity=0.5,fill opacity=0.5](9.65,-10.35) rectangle (10.35, -11.05);
\draw[draw={rgb,255:red,0; green,0; blue,0},fill={rgb,255:red,200; green,113; blue,55},opacity=0.5,fill opacity=0.5](9.65,-9.35) rectangle (10.35, -10.05);
\draw[draw={rgb,255:red,0; green,0; blue,0},fill={rgb,255:red,200; green,113; blue,55},opacity=0.5,fill opacity=0.5](10.65,-9.35) rectangle (11.35, -10.05);
\draw[draw={rgb,255:red,0; green,0; blue,0},fill={rgb,255:red,200; green,113; blue,55},opacity=0.5,fill opacity=0.5](11.65,-9.35) rectangle (12.35, -10.05);
\draw[draw={rgb,255:red,0; green,0; blue,0},fill={rgb,255:red,200; green,113; blue,55},opacity=0.5,fill opacity=0.5](12.65,-9.35) rectangle (13.35, -10.05);
\draw[draw={rgb,255:red,0; green,0; blue,0},fill={rgb,255:red,200; green,113; blue,55},opacity=0.5,fill opacity=0.5](13.65,-9.35) rectangle (14.35, -10.05);
\draw[draw={rgb,255:red,0; green,0; blue,0},fill={rgb,255:red,200; green,113; blue,55},opacity=0.5,fill opacity=0.5](14.65,-9.35) rectangle (15.35, -10.05);
\draw[draw={rgb,255:red,0; green,0; blue,0},fill={rgb,255:red,200; green,113; blue,55},opacity=0.5,fill opacity=0.5](15.65,-9.35) rectangle (16.35, -10.05);
\draw[draw={rgb,255:red,0; green,0; blue,0},fill={rgb,255:red,200; green,113; blue,55},opacity=0.5,fill opacity=0.5](16.65,-9.35) rectangle (17.35, -10.05);
\draw[draw={rgb,255:red,0; green,0; blue,0},fill={rgb,255:red,200; green,113; blue,55},opacity=0.5,fill opacity=0.5](16.65,-10.35) rectangle (17.35, -11.05);
\draw[draw={rgb,255:red,0; green,0; blue,0},fill={rgb,255:red,200; green,113; blue,55},opacity=0.5,fill opacity=0.5](16.65,-11.35) rectangle (17.35, -12.05);
\draw[draw={rgb,255:red,0; green,0; blue,0},fill={rgb,255:red,200; green,113; blue,55},opacity=0.5,fill opacity=0.5](16.65,-12.35) rectangle (17.35, -13.05);
\draw[draw={rgb,255:red,0; green,0; blue,0},fill={rgb,255:red,200; green,113; blue,55},opacity=0.5,fill opacity=0.5](17.65,-12.35) rectangle (18.35, -13.05);
\draw[draw={rgb,255:red,0; green,0; blue,0},fill={rgb,255:red,200; green,113; blue,55},opacity=0.5,fill opacity=0.5](18.65,-12.35) rectangle (19.35, -13.05);
\draw[draw={rgb,255:red,0; green,0; blue,0},fill={rgb,255:red,200; green,113; blue,55},opacity=0.5,fill opacity=0.5](18.65,-11.35) rectangle (19.35, -12.05);
\draw[draw={rgb,255:red,0; green,0; blue,0},fill={rgb,255:red,200; green,113; blue,55},opacity=0.5,fill opacity=0.5](18.65,-10.35) rectangle (19.35, -11.05);
\draw[draw={rgb,255:red,0; green,0; blue,0},fill={rgb,255:red,200; green,113; blue,55},opacity=0.5,fill opacity=0.5](18.65,-9.35) rectangle (19.35, -10.05);
\draw[draw={rgb,255:red,0; green,0; blue,0},fill={rgb,255:red,200; green,113; blue,55},opacity=0.5,fill opacity=0.5](19.65,-9.35) rectangle (20.35, -10.05);
\draw[draw={rgb,255:red,0; green,0; blue,0},fill={rgb,255:red,200; green,113; blue,55},opacity=0.5,fill opacity=0.5](19.65,-8.35) rectangle (20.35, -9.05);
\draw[draw={rgb,255:red,0; green,0; blue,0},fill={rgb,255:red,200; green,113; blue,55},opacity=0.5,fill opacity=0.5](19.65,-7.35) rectangle (20.35, -8.05);
\draw[draw={rgb,255:red,0; green,0; blue,0},fill={rgb,255:red,200; green,113; blue,55},opacity=0.5,fill opacity=0.5](20.65,-7.35) rectangle (21.35, -8.05);
\draw[draw={rgb,255:red,0; green,0; blue,0},fill={rgb,255:red,200; green,113; blue,55},opacity=0.5,fill opacity=0.5](21.65,-7.35) rectangle (22.35, -8.05);
\draw[draw={rgb,255:red,0; green,0; blue,0},fill={rgb,255:red,200; green,113; blue,55},opacity=0.5,fill opacity=0.5](22.65,-7.35) rectangle (23.35, -8.05);
\draw[draw={rgb,255:red,0; green,0; blue,0},fill={rgb,255:red,200; green,113; blue,55},opacity=0.5,fill opacity=0.5](22.65,-8.35) rectangle (23.35, -9.05);
\draw[draw={rgb,255:red,0; green,0; blue,0},fill={rgb,255:red,200; green,113; blue,55},opacity=0.5,fill opacity=0.5](22.65,-9.35) rectangle (23.35, -10.05);
\draw[draw={rgb,255:red,0; green,0; blue,0},fill={rgb,255:red,200; green,113; blue,55},opacity=0.5,fill opacity=0.5](22.65,-10.35) rectangle (23.35, -11.05);
\draw[draw={rgb,255:red,0; green,0; blue,0},fill={rgb,255:red,200; green,113; blue,55},opacity=0.5,fill opacity=0.5](23.65,-10.35) rectangle (24.35, -11.05);
\draw[draw={rgb,255:red,0; green,0; blue,0},fill={rgb,255:red,200; green,113; blue,55},opacity=0.5,fill opacity=0.5](24.65,-10.35) rectangle (25.35, -11.05);
\draw[draw={rgb,255:red,0; green,0; blue,0},fill={rgb,255:red,200; green,113; blue,55},opacity=0.5,fill opacity=0.5](24.65,-9.35) rectangle (25.35, -10.05);
\draw[draw={rgb,255:red,0; green,0; blue,0},fill={rgb,255:red,200; green,113; blue,55},opacity=0.5,fill opacity=0.5](24.65,-8.35) rectangle (25.35, -9.05);
\draw[draw={rgb,255:red,0; green,0; blue,0},fill={rgb,255:red,200; green,113; blue,55},opacity=0.5,fill opacity=0.5](24.65,-7.35) rectangle (25.35, -8.05);
\draw[draw={rgb,255:red,0; green,0; blue,0},fill={rgb,255:red,200; green,113; blue,55},opacity=0.5,fill opacity=0.5](24.65,-6.35) rectangle (25.35, -7.05);
\draw[draw={rgb,255:red,0; green,0; blue,0},fill={rgb,255:red,200; green,113; blue,55},opacity=0.5,fill opacity=0.5](24.65,-5.35) rectangle (25.35, -6.05);
\draw[draw={rgb,255:red,0; green,0; blue,0},fill={rgb,255:red,200; green,113; blue,55},opacity=0.5,fill opacity=0.5](23.65,-5.35) rectangle (24.35, -6.05);
\draw[draw={rgb,255:red,0; green,0; blue,0},fill={rgb,255:red,200; green,113; blue,55},opacity=0.5,fill opacity=0.5](22.65,-5.35) rectangle (23.35, -6.05);
\draw[draw={rgb,255:red,0; green,0; blue,0},fill={rgb,255:red,200; green,113; blue,55},opacity=0.5,fill opacity=0.5](21.65,-5.35) rectangle (22.35, -6.05);
\draw[draw={rgb,255:red,0; green,0; blue,0},fill={rgb,255:red,200; green,113; blue,55},opacity=0.5,fill opacity=0.5](20.65,-5.35) rectangle (21.35, -6.05);
\draw[draw={rgb,255:red,0; green,0; blue,0},fill={rgb,255:red,200; green,113; blue,55},opacity=0.5,fill opacity=0.5](19.65,-5.35) rectangle (20.35, -6.05);
\draw[draw={rgb,255:red,0; green,0; blue,0},fill={rgb,255:red,200; green,113; blue,55},opacity=0.5,fill opacity=0.5](19.65,-4.35) rectangle (20.35, -5.05);
\draw[draw={rgb,255:red,0; green,0; blue,0},fill={rgb,255:red,200; green,113; blue,55},opacity=0.5,fill opacity=0.5](18.65,-4.35) rectangle (19.35, -5.05);
\draw[draw={rgb,255:red,0; green,0; blue,0},fill={rgb,255:red,200; green,113; blue,55},opacity=0.5,fill opacity=0.5](17.65,-4.35) rectangle (18.35, -5.05);
\draw[draw={rgb,255:red,0; green,0; blue,0},fill={rgb,255:red,200; green,113; blue,55},opacity=0.5,fill opacity=0.5](16.65,-4.35) rectangle (17.35, -5.05);
\draw[draw={rgb,255:red,0; green,0; blue,0},fill={rgb,255:red,200; green,113; blue,55},opacity=0.5,fill opacity=0.5](16.65,-3.35) rectangle (17.35, -4.05);
\draw[draw={rgb,255:red,0; green,0; blue,0},fill={rgb,255:red,200; green,113; blue,55},opacity=0.5,fill opacity=0.5](17.65,-3.35) rectangle (18.35, -4.05);
\draw[draw={rgb,255:red,0; green,0; blue,0},fill={rgb,255:red,200; green,113; blue,55},opacity=0.5,fill opacity=0.5](18.65,-3.35) rectangle (19.35, -4.05);
\draw[draw={rgb,255:red,0; green,0; blue,0},fill={rgb,255:red,200; green,113; blue,55},opacity=0.5,fill opacity=0.5](18.65,-2.35) rectangle (19.35, -3.05);
\draw[draw={rgb,255:red,0; green,0; blue,0},fill={rgb,255:red,200; green,113; blue,55},opacity=0.5,fill opacity=0.5](18.65,-1.35) rectangle (19.35, -2.05);
\draw[draw={rgb,255:red,0; green,0; blue,0},fill={rgb,255:red,200; green,113; blue,55},opacity=0.5,fill opacity=0.5](17.65,-1.35) rectangle (18.35, -2.05);
\draw[draw={rgb,255:red,0; green,0; blue,0},fill={rgb,255:red,200; green,113; blue,55},opacity=0.5,fill opacity=0.5](16.65,-1.35) rectangle (17.35, -2.05);
\draw[draw={rgb,255:red,0; green,0; blue,0},fill={rgb,255:red,200; green,113; blue,55},opacity=0.5,fill opacity=0.5](15.65,-1.35) rectangle (16.35, -2.05);
\draw[draw={rgb,255:red,0; green,0; blue,0},fill={rgb,255:red,200; green,113; blue,55},opacity=0.5,fill opacity=0.5](15.65,-2.35) rectangle (16.35, -3.05);
\draw[draw={rgb,255:red,0; green,0; blue,0},fill={rgb,255:red,200; green,113; blue,55},opacity=0.5,fill opacity=0.5](15.65,-3.35) rectangle (16.35, -4.05);
\draw[draw={rgb,255:red,0; green,0; blue,0},fill={rgb,255:red,200; green,113; blue,55},opacity=0.5,fill opacity=0.5](15.65,-4.35) rectangle (16.35, -5.05);
\draw[draw={rgb,255:red,0; green,0; blue,0},fill={rgb,255:red,200; green,113; blue,55},opacity=0.5,fill opacity=0.5](15.65,-5.35) rectangle (16.35, -6.05);
\draw[draw={rgb,255:red,0; green,0; blue,0},fill={rgb,255:red,200; green,113; blue,55},opacity=0.5,fill opacity=0.5](14.65,-5.35) rectangle (15.35, -6.05);
\draw[draw={rgb,255:red,0; green,0; blue,0},fill={rgb,255:red,200; green,113; blue,55},opacity=0.5,fill opacity=0.5](13.65,-5.35) rectangle (14.35, -6.05);
\draw[draw={rgb,255:red,0; green,0; blue,0},fill={rgb,255:red,200; green,113; blue,55},opacity=0.5,fill opacity=0.5](12.65,-5.35) rectangle (13.35, -6.05);
\draw[draw={rgb,255:red,200; green,113; blue,55},opacity=0.5,thick](6,-13.7)--(8,-13.7)--(8,-12.7)--(10,-12.7)--(10,-9.7)--(17,-9.7)--(17,-12.7)--(19,-12.7)--(19,-9.7)--(20,-9.7)--(20,-7.7)--(23,-7.7)--(23,-10.7)--(25,-10.7)--(25,-5.7)--(20,-5.7)--(20,-4.7)--(17,-4.7)--(17,-3.7)--(19,-3.7)--(19,-1.7)--(16,-1.7)--(16,-5.7)--(13,-5.7);
\draw[draw={rgb,255:red,0; green,0; blue,0}](9.5,-12.69)--(9.5,-16.7);
\draw(8.23, -16.93) node[anchor=south west] {$l_j$};
\draw[draw=none,fill={rgb,255:red,28; green,36; blue,31},thin](9.5, -12.69) ellipse (0.1cm and 0.1cm);\draw[draw={rgb,255:red,0; green,0; blue,0}](5.5,-13.7)--(5.5,-16.7);
\draw(4.32, -16.93) node[anchor=south west] {$l_i$};
\draw[draw={rgb,255:red,0; green,0; blue,0}](13.5,-5.7)--(13.5,0.3);
\draw(12.34, -0.81) node[anchor=south west] {$l^k$};
\draw[draw=none,fill={rgb,255:red,28; green,36; blue,31},thin](13.5, -5.7) ellipse (0.1cm and 0.1cm);\draw[draw=none,fill={rgb,255:red,28; green,36; blue,31},thin](5.5, -13.7) ellipse (0.1cm and 0.1cm);\draw[draw={rgb,255:red,0; green,0; blue,0},dashed,thin](5.5,-16.08)--(31,-9.7);
\draw(30.89, -10.28) node[anchor=south west] {$\rho$};
\draw(19.29, -13.68) node[anchor=south west] {$P_{m_0}$};
\draw[draw={rgb,255:red,0; green,0; blue,0}](19,-12.7)--(19,-16.7);
\draw(17.76, -17.73) node[anchor=south west] {$\lmz$};
\draw[draw=none,fill={rgb,255:red,28; green,36; blue,31},thin](19, -12.7) ellipse (0.1cm and 0.1cm);
\end{tikzpicture}
  \caption{The components $\cp$ and $\cm$.  In Subsection~\ref{subsec:first},  $\cp$ and $\cm$ were defined so that $\cp \cup \cm = \mathcal{C}$ and  $\cp \cap \cm = L^{m_0}$.}\label{fig:shield-cpm}
\end{figure}

\enlargethispage{-2\baselineskip}
\begin{sublemma}
  \label{lem:lmzInCp}
  For any integer $n > 0$, $\lmz+n\vpij$ is in $\cp$.
\end{sublemma}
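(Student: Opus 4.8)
The plan is to show that the ray $\lmz+n\vpij$ is disjoint from the curve $c^{m_0}$ and that at least one of its points lies in $\cp$; since $\lmz+n\vpij$ is connected, this forces the whole ray into $\cp$.

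First I would check that $\lmz+n\vpij$ misses each of the four curves comprising $c^{m_0}=\concat{\reverse{\lmz},\embed{P_{\rng{m_0}k}},\gs{\pos{P_k}}{l^k(0)},l^k}$. Since $\xcoord{\vpij}>0$ (noted after Definition~\ref{def:shield}) and $n>0$, the ray $\lmz+n\vpij$ lies on the integer column $\xcoord{P_{m_0}}+n\,\xcoord{\vpij}$, which is strictly east of the column $\xcoord{P_{m_0}}$ carrying $\reverse{\lmz}$, so these are disjoint; it is disjoint from $l^k$ because $l^k$ sits on a half-integer (glue) column; it is disjoint from $\embed{P_{\rng{m_0}k}}$, a subpath of $P_{i+1,i+2,\ldots,k}$, directly by \subl{lem:cmz}, whose second conclusion is precisely that $\lmz+n\vpij$ does not meet $\embed{P_{i+1,i+2,\ldots,k}}$ for $n>0$; and it is disjoint from the horizontal half-segment $\gs{\pos{P_k}}{l^k(0)}$, whose only integer-$x$ point is $\pos{P_k}$, which already lies on $\embed{P_{\rng{m_0}k}}$ and is therefore avoided. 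By Theorem~\ref{thm:infinite-jordan}, $\lmz+n\vpij$ then lies entirely in one of the two open components cut out by $c^{m_0}$.

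Second, I would exhibit a point of the ray on the right-hand side of $c^{m_0}$. Choose $t$ large enough that the point $Q:=\lmz(t)+n\vpij$ (which is on $\lmz+n\vpij$) has $y$-coordinate strictly below the $y$-coordinate of every tile of $\sigma\cup\asm{P}$ and of $l^k(0)$. Then the horizontal ray from $Q$ going east is disjoint from $c^{m_0}$: it cannot meet $\reverse{\lmz}$, which lies strictly to its west, nor $\embed{P_{\rng{m_0}k}}$, $\gs{\pos{P_k}}{l^k(0)}$, or $l^k$, all of which lie strictly above it. This eastward escape ray certifies that $Q$ is in the component ``to the east'' of $c^{m_0}$, i.e.\ the right-hand side, by the same identification of the right-hand side with the eastern component that is used for $c$ in Definition~\ref{def:C}. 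Hence $Q\in\cp$, and combining with the first step, the connected set $\lmz+n\vpij$ meets $\cp$ while avoiding $c^{m_0}$, so $\lmz+n\vpij\subseteq\cp$.

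The step I expect to be the main obstacle is making the ``$Q$ lies on the right-hand side'' claim airtight: it relies on the precise characterisation of the left/right-hand sides of an infinite almost-vertical polygonal curve (Theorem~\ref{thm:infinite-jordan} and the related appendix definitions) and on coupling the choice of $t$ to explicit lower bounds on the $y$-coordinates of $\sigma$, $\asm{P}$ and $l^k(0)$, so that the eastward escape ray provably clears the entire finite portion of $c^{m_0}$. The disjointness verifications in the first step, by contrast, are routine column/parity arguments once \subl{lem:cmz} is available.
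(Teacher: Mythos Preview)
Your proposal is correct and follows essentially the same approach as the paper: verify that $\lmz+n\vpij$ avoids each piece of the boundary $c^{m_0}$ (via \subl{lem:cmz}, the integer-versus-half-integer column argument for $l^k$, and the $\xcoord\vpij>0$ argument for $\lmz$), then conclude it lies in $\cp$ by locating it on the right-hand side. The only difference is that the paper dispatches the second step in one line---$\lmz+n\vpij$ is a vertical ray to the south strictly east of $\lmz$ and disjoint from $c^{m_0}$, hence on the right-hand side by the general fact recorded at the end of the appendix---whereas you rebuild this by hand with a low point $Q$ and an eastward escape ray; both arguments are equivalent.
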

\begin{proof}
  Let integer $n>0$.
  By \subl{lem:cmz}, $\lmz+n\vpij$ does not intersect $\embed{P_{i+1,i+2,\ldots,k}}$.
  Moreover, since $\xcoord\vpij>0$ we know that $\lmz+n\vpij$ is strictly to the east of $\lmz$, and $\lmz+n\vpij$ does not intersect $l^k$ (because $\lmz$ is on a tile column, and $l^k$ on a glue column) nor $\gs{P_k}{l^k(0)}$ (because that would mean that $\lmz+n\vpij$ intersects $P_k$).

  Therefore, $\lmz+n\vpij$ does not intersect the border of $\cp$, and starts to the east of $\lmz$ (which is on the border of $\cp$). Therefore, $\lmz+n\vpij$ is entirely in $\cp$.
\end{proof}

\subsubsection{A path $R$ that can grow in two different translations}
\label{subsec:r}
The goal of this subsection is to define a path $R$ that can grow in two possible translations (respectively, $R$ that starts at $\pos{P_{i+1}}$ and $R+\vpij$ that starts at $\pos{P_{j+1}}$). We proceed in two steps: the first step is to specify a \emph{binding path} called $r$ below (from Definition~\ref{def:binding graph}, a binding path is a simple sequence of adjacent positions in $\mathbb{Z}^2$, i.e. a path but without tiles), and the second step is to ``tile'' that binding path to get the path $R$.

\paragraph{The binding path $r$.}

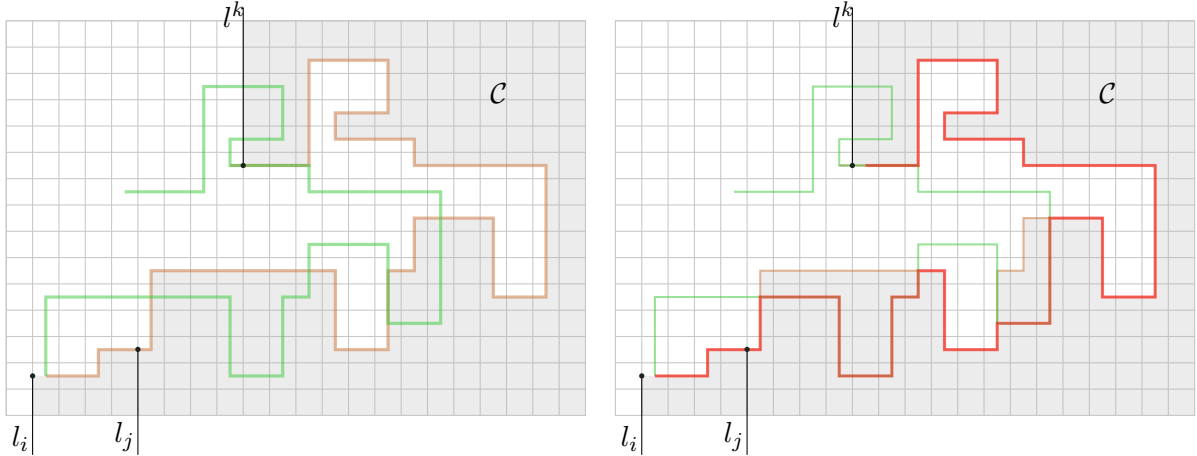
\begin{figure}[h]
  \centering
  \renewcommand\scale{0.35}
  \begin{tikzpicture}[scale=\scale]\draw[draw={rgb,255:red,200; green,200; blue,200}](4.5,-0.2) rectangle (26.5, -15.2);
\draw[draw={rgb,255:red,200; green,200; blue,200}](4.5,-15.2)--(4.5,-0.2);
\draw[draw={rgb,255:red,200; green,200; blue,200}](5.5,-15.2)--(5.5,-0.2);
\draw[draw={rgb,255:red,200; green,200; blue,200}](6.5,-15.2)--(6.5,-0.2);
\draw[draw={rgb,255:red,200; green,200; blue,200}](7.5,-15.2)--(7.5,-0.2);
\draw[draw={rgb,255:red,200; green,200; blue,200}](8.5,-15.2)--(8.5,-0.2);
\draw[draw={rgb,255:red,200; green,200; blue,200}](9.5,-15.2)--(9.5,-0.2);
\draw[draw={rgb,255:red,200; green,200; blue,200}](10.5,-15.2)--(10.5,-0.2);
\draw[draw={rgb,255:red,200; green,200; blue,200}](11.5,-15.2)--(11.5,-0.2);
\draw[draw={rgb,255:red,200; green,200; blue,200}](12.5,-15.2)--(12.5,-0.2);
\draw[draw={rgb,255:red,200; green,200; blue,200}](13.5,-15.2)--(13.5,-0.2);
\draw[draw={rgb,255:red,200; green,200; blue,200}](14.5,-15.2)--(14.5,-0.2);
\draw[draw={rgb,255:red,200; green,200; blue,200}](15.5,-15.2)--(15.5,-0.2);
\draw[draw={rgb,255:red,200; green,200; blue,200}](16.5,-15.2)--(16.5,-0.2);
\draw[draw={rgb,255:red,200; green,200; blue,200}](17.5,-15.2)--(17.5,-0.2);
\draw[draw={rgb,255:red,200; green,200; blue,200}](18.5,-15.2)--(18.5,-0.2);
\draw[draw={rgb,255:red,200; green,200; blue,200}](19.5,-15.2)--(19.5,-0.2);
\draw[draw={rgb,255:red,200; green,200; blue,200}](20.5,-15.2)--(20.5,-0.2);
\draw[draw={rgb,255:red,200; green,200; blue,200}](21.5,-15.2)--(21.5,-0.2);
\draw[draw={rgb,255:red,200; green,200; blue,200}](22.5,-15.2)--(22.5,-0.2);
\draw[draw={rgb,255:red,200; green,200; blue,200}](23.5,-15.2)--(23.5,-0.2);
\draw[draw={rgb,255:red,200; green,200; blue,200}](24.5,-15.2)--(24.5,-0.2);
\draw[draw={rgb,255:red,200; green,200; blue,200}](25.5,-15.2)--(25.5,-0.2);
\draw[draw={rgb,255:red,200; green,200; blue,200}](4.5,-0.2)--(26.5,-0.2);
\draw[draw={rgb,255:red,200; green,200; blue,200}](4.5,-1.2)--(26.5,-1.2);
\draw[draw={rgb,255:red,200; green,200; blue,200}](4.5,-2.2)--(26.5,-2.2);
\draw[draw={rgb,255:red,200; green,200; blue,200}](4.5,-3.2)--(26.5,-3.2);
\draw[draw={rgb,255:red,200; green,200; blue,200}](4.5,-4.2)--(26.5,-4.2);
\draw[draw={rgb,255:red,200; green,200; blue,200}](4.5,-5.2)--(26.5,-5.2);
\draw[draw={rgb,255:red,200; green,200; blue,200}](4.5,-6.2)--(26.5,-6.2);
\draw[draw={rgb,255:red,200; green,200; blue,200}](4.5,-7.2)--(26.5,-7.2);
\draw[draw={rgb,255:red,200; green,200; blue,200}](4.5,-8.2)--(26.5,-8.2);
\draw[draw={rgb,255:red,200; green,200; blue,200}](4.5,-9.2)--(26.5,-9.2);
\draw[draw={rgb,255:red,200; green,200; blue,200}](4.5,-10.2)--(26.5,-10.2);
\draw[draw={rgb,255:red,200; green,200; blue,200}](4.5,-11.2)--(26.5,-11.2);
\draw[draw={rgb,255:red,200; green,200; blue,200}](4.5,-12.2)--(26.5,-12.2);
\draw[draw={rgb,255:red,200; green,200; blue,200}](4.5,-13.2)--(26.5,-13.2);
\draw[draw={rgb,255:red,200; green,200; blue,200}](4.5,-14.2)--(26.5,-14.2);
\draw[draw=none,fill={rgb,255:red,179; green,179; blue,179},opacity=0.25](13.5,-5.7)--(13.5,-0.2)--(26.5,-0.2)--(26.5,-15.2)--(5.5,-15.2)--(5.5,-13.7)(5.5,-13.7)--(8,-13.7)--(8,-12.7)--(10,-12.7)--(10,-9.7)--(17,-9.7)--(17,-12.7)--(19,-12.7)--(19,-9.7)--(20,-9.7)--(20,-7.7)--(23,-7.7)--(23,-10.7)--(25,-10.7)--(25,-5.7)--(20,-5.7)--(20,-4.7)--(17,-4.7)--(17,-3.7)--(19,-3.7)--(19,-1.7)--(16,-1.7)--(16,-5.7)--(13.5,-5.7);
\draw(22.5, -3.7) node[anchor=south west] {$\mathcal C$};
\draw[draw={rgb,255:red,200; green,113; blue,55},opacity=0.5,very thick](6,-13.7)--(8,-13.7)--(8,-12.7)--(10,-12.7)--(10,-9.7)--(17,-9.7)--(17,-12.7)--(19,-12.7)--(19,-9.7)--(20,-9.7)--(20,-7.7)--(23,-7.7)--(23,-10.7)--(25,-10.7)--(25,-5.7)--(20,-5.7)--(20,-4.7)--(17,-4.7)--(17,-3.7)--(19,-3.7)--(19,-1.7)--(16,-1.7)--(16,-5.7)--(13,-5.7);
\draw[draw={rgb,255:red,55; green,200; blue,55},opacity=0.5,very thick](6,-13.7)--(6,-10.7)--(13,-10.7)--(13,-13.7)--(15,-13.7)--(15,-10.7)--(16,-10.7)--(16,-8.7)--(19,-8.7)--(19,-11.7)--(21,-11.7)--(21,-6.7)--(16,-6.7)--(16,-5.7)--(13,-5.7)--(13,-4.7)--(15,-4.7)--(15,-2.7)--(12,-2.7)--(12,-6.7)--(9,-6.7);
\draw[draw={rgb,255:red,0; green,0; blue,0}](9.5,-12.69)--(9.5,-16.7);
\draw(8.23, -16.93) node[anchor=south west] {$l_j$};
\draw[draw=none,fill={rgb,255:red,28; green,36; blue,31},thin](9.5, -12.69) ellipse (0.1cm and 0.1cm);\draw[draw={rgb,255:red,0; green,0; blue,0}](5.5,-13.7)--(5.5,-16.7);
\draw(4.32, -16.93) node[anchor=south west] {$l_i$};
\draw[draw={rgb,255:red,0; green,0; blue,0}](13.5,-5.7)--(13.5,0.3);
\draw(12.34, -0.81) node[anchor=south west] {$l^k$};
\draw[draw=none,fill={rgb,255:red,28; green,36; blue,31},thin](13.5, -5.7) ellipse (0.1cm and 0.1cm);\draw[draw=none,fill={rgb,255:red,28; green,36; blue,31},thin](5.5, -13.7) ellipse (0.1cm and 0.1cm);
\end{tikzpicture}\hfill
  \begin{tikzpicture}[scale=\scale]\draw[draw={rgb,255:red,200; green,200; blue,200}](4.5,-0.2) rectangle (26.5, -15.2);
\draw[draw={rgb,255:red,200; green,200; blue,200}](4.5,-15.2)--(4.5,-0.2);
\draw[draw={rgb,255:red,200; green,200; blue,200}](5.5,-15.2)--(5.5,-0.2);
\draw[draw={rgb,255:red,200; green,200; blue,200}](6.5,-15.2)--(6.5,-0.2);
\draw[draw={rgb,255:red,200; green,200; blue,200}](7.5,-15.2)--(7.5,-0.2);
\draw[draw={rgb,255:red,200; green,200; blue,200}](8.5,-15.2)--(8.5,-0.2);
\draw[draw={rgb,255:red,200; green,200; blue,200}](9.5,-15.2)--(9.5,-0.2);
\draw[draw={rgb,255:red,200; green,200; blue,200}](10.5,-15.2)--(10.5,-0.2);
\draw[draw={rgb,255:red,200; green,200; blue,200}](11.5,-15.2)--(11.5,-0.2);
\draw[draw={rgb,255:red,200; green,200; blue,200}](12.5,-15.2)--(12.5,-0.2);
\draw[draw={rgb,255:red,200; green,200; blue,200}](13.5,-15.2)--(13.5,-0.2);
\draw[draw={rgb,255:red,200; green,200; blue,200}](14.5,-15.2)--(14.5,-0.2);
\draw[draw={rgb,255:red,200; green,200; blue,200}](15.5,-15.2)--(15.5,-0.2);
\draw[draw={rgb,255:red,200; green,200; blue,200}](16.5,-15.2)--(16.5,-0.2);
\draw[draw={rgb,255:red,200; green,200; blue,200}](17.5,-15.2)--(17.5,-0.2);
\draw[draw={rgb,255:red,200; green,200; blue,200}](18.5,-15.2)--(18.5,-0.2);
\draw[draw={rgb,255:red,200; green,200; blue,200}](19.5,-15.2)--(19.5,-0.2);
\draw[draw={rgb,255:red,200; green,200; blue,200}](20.5,-15.2)--(20.5,-0.2);
\draw[draw={rgb,255:red,200; green,200; blue,200}](21.5,-15.2)--(21.5,-0.2);
\draw[draw={rgb,255:red,200; green,200; blue,200}](22.5,-15.2)--(22.5,-0.2);
\draw[draw={rgb,255:red,200; green,200; blue,200}](23.5,-15.2)--(23.5,-0.2);
\draw[draw={rgb,255:red,200; green,200; blue,200}](24.5,-15.2)--(24.5,-0.2);
\draw[draw={rgb,255:red,200; green,200; blue,200}](25.5,-15.2)--(25.5,-0.2);
\draw[draw={rgb,255:red,200; green,200; blue,200}](4.5,-0.2)--(26.5,-0.2);
\draw[draw={rgb,255:red,200; green,200; blue,200}](4.5,-1.2)--(26.5,-1.2);
\draw[draw={rgb,255:red,200; green,200; blue,200}](4.5,-2.2)--(26.5,-2.2);
\draw[draw={rgb,255:red,200; green,200; blue,200}](4.5,-3.2)--(26.5,-3.2);
\draw[draw={rgb,255:red,200; green,200; blue,200}](4.5,-4.2)--(26.5,-4.2);
\draw[draw={rgb,255:red,200; green,200; blue,200}](4.5,-5.2)--(26.5,-5.2);
\draw[draw={rgb,255:red,200; green,200; blue,200}](4.5,-6.2)--(26.5,-6.2);
\draw[draw={rgb,255:red,200; green,200; blue,200}](4.5,-7.2)--(26.5,-7.2);
\draw[draw={rgb,255:red,200; green,200; blue,200}](4.5,-8.2)--(26.5,-8.2);
\draw[draw={rgb,255:red,200; green,200; blue,200}](4.5,-9.2)--(26.5,-9.2);
\draw[draw={rgb,255:red,200; green,200; blue,200}](4.5,-10.2)--(26.5,-10.2);
\draw[draw={rgb,255:red,200; green,200; blue,200}](4.5,-11.2)--(26.5,-11.2);
\draw[draw={rgb,255:red,200; green,200; blue,200}](4.5,-12.2)--(26.5,-12.2);
\draw[draw={rgb,255:red,200; green,200; blue,200}](4.5,-13.2)--(26.5,-13.2);
\draw[draw={rgb,255:red,200; green,200; blue,200}](4.5,-14.2)--(26.5,-14.2);
\draw[draw=none,fill={rgb,255:red,179; green,179; blue,179},opacity=0.25](13.5,-5.7)--(13.5,-0.2)--(26.5,-0.2)--(26.5,-15.2)--(5.5,-15.2)--(5.5,-13.7)(5.5,-13.7)--(8,-13.7)--(8,-12.7)--(10,-12.7)--(10,-9.7)--(17,-9.7)--(17,-12.7)--(19,-12.7)--(19,-9.7)--(20,-9.7)--(20,-7.7)--(23,-7.7)--(23,-10.7)--(25,-10.7)--(25,-5.7)--(20,-5.7)--(20,-4.7)--(17,-4.7)--(17,-3.7)--(19,-3.7)--(19,-1.7)--(16,-1.7)--(16,-5.7)--(13.5,-5.7);
\draw(22.5, -3.7) node[anchor=south west] {$\mathcal C$};
\draw[draw={rgb,255:red,200; green,113; blue,55},opacity=0.5,thick](6,-13.7)--(8,-13.7)--(8,-12.7)--(10,-12.7)--(10,-9.7)--(17,-9.7)--(17,-12.7)--(19,-12.7)--(19,-9.7)--(20,-9.7)--(20,-7.7)--(23,-7.7)--(23,-10.7)--(25,-10.7)--(25,-5.7)--(20,-5.7)--(20,-4.7)--(17,-4.7)--(17,-3.7)--(19,-3.7)--(19,-1.7)--(16,-1.7)--(16,-5.7)--(13,-5.7);
\draw[draw={rgb,255:red,55; green,200; blue,55},opacity=0.5,thick](6,-13.7)--(6,-10.7)--(13,-10.7)--(13,-13.7)--(15,-13.7)--(15,-10.7)--(16,-10.7)--(16,-8.7)--(19,-8.7)--(19,-11.7)--(21,-11.7)--(21,-6.7)--(16,-6.7)--(16,-5.7)--(13,-5.7)--(13,-4.7)--(15,-4.7)--(15,-2.7)--(12,-2.7)--(12,-6.7)--(9,-6.7);
\draw[draw={rgb,255:red,255; green,0; blue,0},opacity=0.5,very thick](6,-13.7)--(8,-13.7)--(8,-12.7)--(10,-12.7)--(10,-10.7)--(13,-10.7)--(13,-13.7)--(15,-13.7)--(15,-10.7)--(16,-10.7)--(16,-9.7)--(17,-9.7)--(17,-12.7)--(19,-12.7)--(19,-11.7)--(21,-11.7)--(21,-7.7)--(23,-7.7)--(23,-10.7)--(25,-10.7)--(25,-5.7)--(20,-5.7)--(20,-4.7)--(17,-4.7)--(17,-3.7)--(19,-3.7)--(19,-1.7)--(16,-1.7) .. controls (16,-3.03) and (16,-4.37) .. (16,-5.7)--(14,-5.7);
\draw[draw={rgb,255:red,0; green,0; blue,0}](9.5,-12.69)--(9.5,-16.7);
\draw(8.23, -16.93) node[anchor=south west] {$l_j$};
\draw[draw=none,fill={rgb,255:red,28; green,36; blue,31},thin](9.5, -12.69) ellipse (0.1cm and 0.1cm);\draw[draw={rgb,255:red,0; green,0; blue,0}](5.5,-13.7)--(5.5,-16.7);
\draw(4.32, -16.93) node[anchor=south west] {$l_i$};
\draw[draw={rgb,255:red,0; green,0; blue,0}](13.5,-5.7)--(13.5,0.3);
\draw(12.34, -0.81) node[anchor=south west] {$l^k$};
\draw[draw=none,fill={rgb,255:red,28; green,36; blue,31},thin](13.5, -5.7) ellipse (0.1cm and 0.1cm);\draw[draw=none,fill={rgb,255:red,28; green,36; blue,31},thin](5.5, -13.7) ellipse (0.1cm and 0.1cm);
\end{tikzpicture}
  \renewcommand\scale{0.4}
  \caption{Left: Following from Figure~\ref{fig:shield-setup-c}, $\embed{P_{i+1,i+2,\ldots,k+1}}$ is shown in brown and $\embed{P_{j+1,j+2,\ldots,k+1}}+ \vect{P_j P_i}$ is shown in green.  Right: The route traced by (the embedding of) the binding path $r$ defined in Subsection~\ref{subsec:r}: red indicates  when $r$ takes positions from $P_{i+1,i+2,\ldots,k}$ only, and  brown indicates when $r$ takes positions from $P_{j+1,j+2,\ldots,k}+ \vect{P_j P_i}$ and/or $P_{i+1,i+2,\ldots,k}$.}\label{fig:shield-r}
\end{figure}

The definition of $r$ is illustrated in Figure~\ref{fig:shield-r}.
Let $G$ be the binding graph $G = (V, E)$ where:
\begin{eqnarray*}
V& = & \{\pos{P_n} \mid n\in\{\range {i+1}{i+2} k\}\}\cup\{\pos{P_{n}}+\vpji \mid n\in\{\range {j+1}{j+2}k\}\} \\
E& = & \{ \{ \pos{P_{n}},\pos{P_{n+1}} \} \mid n\in\{\range {i+1}{i+2}{k-1}\}\}\\
  &  & \cup \;\; \{ \{ \pos{P_{n}}+\vpji,\pos{P_{n+1}}+\vpji \} \mid n\in\{\range {j+1}{j+2}{k-1}\}\}
\end{eqnarray*}

We define the set $\overline{\mathbb{S}}$ of all binding paths
in the graph $G' = (V \cup \{ \pos{P_i} \} , E\cup \{ \pos{P_i} ,\pos{P_{i+1}} \})$ such that $q\in\overline{\mathbb{S}}$ if and only if:
\begin{itemize}
\item $q$ starts with $q_0 q_1 = \pos{P_i} \pos{P_{i+1}}$ and has its final vertex $q_{|q|-1}$ being a vertex of $G$ that is ``adjacent'' to $l^k$; more precisely there exists $z\in \mathbb{R}, z \geq 0$ such that  $l^k(z) \in \{ q_{|q|-1} - (0.5,0), \, q_{|q|-1} + (0.5,0) \}$; and
\item the curve $\concat{ \gs{ l^i(0) }{\pos{P_{i+1}}}, \, \embed{q_{1,2,\ldots,q_{|q|-1}}}  ,\, \gs{q_{|q|-1}}{l^k(z)}}$ is entirely in $\mathcal C$.
\end{itemize}

We now define the set $\mathbb S\subseteq\overline{\mathbb S}$ of all paths $q\in\overline{\mathbb S}$ such that for all $q'\in\overline{\mathbb S}$ that is a strict prefix or a strict extension of $q$, the last tile of $q'$ is strictly lower than the last tile of $q$, i.e.\ $y_{q'_{|q'|-1}} < y_{q_{|q|-1}}$.

Let $q'$ be the most right priority binding path of $\mathbb{S}$  (see Definition~\ref{def:rp}). 
We claim that $q'$ is well-defined.
First, we claim that $\mathbb{S}$ is nonempty: indeed, $\overline{\mathbb{S}}$ is nonempty as there is at least one binding path $q$ that starts with  
$q_0 q_1 = \pos{P_i} \pos{P_{i+1}}$, ends with a vertex adjacent to $l^k$, and stays in $\mathcal C$, since $\pos{ P_{\rng i k} }$ satisfies the first and second conditions. 
Second, either $P$ or one of its strict prefixes is in~$\mathbb{S}$.  
Third, since all binding paths $q \in \mathbb{S}$ have their first two vertices $q_0 q_1 = \pos{P_i} \pos{P_{i+1}}$ in common, 
the right priority path of $\mathbb{S}$ is well-defined. Thus $q'$ is well-defined. 
Finally, let $r = r_{\range{0}{1}{|r|-1}} = q'_{\range 1 2 {|q'|-1}}$, i.e. $r$ is the same binding path as $q'$ but without its the first vertex $q'_0$. 

Thus, by definition, $r$ is entirely in $\mathcal C$. We now go on to prove some properties about $r$. The proof of \subl{lem:order} is illustrated in Figures~\ref{fig:shield-order} and~\ref{fig:shield-order1}.

\begin{sublemma}
  \label{lem:order}
  Let $a$ and $b$ be two integers such that $0\leq a<b \leq |r|-1$.
  If there are indices $d, e \in \{\range{i+1}{i+2}{k} \}$ such that $\pos{P_d} = r_a$ and $\pos{P_e} = r_b$, then $d<e$.
\end{sublemma}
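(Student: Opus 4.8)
I will argue by contradiction, turning a ``backward jump'' of $r$ along $P$ into a simple closed curve that traps part of $r$ inside a bounded region, which is impossible because $r$ must escape all the way to $l^k$. Suppose the conclusion fails: there are $a<b$ in $\{0,1,\ldots,|r|-1\}$ and $d,e\in\{i+1,i+2,\ldots,k\}$ with $\pos{P_d}=r_a$, $\pos{P_e}=r_b$ and $d\ge e$; since $r$ is simple, $r_a\neq r_b$, so $\pos{P_d}\neq\pos{P_e}$ and hence $d>e$. Among all such ``bad pairs'' $(a,b)$ pick one with $b$ minimal. Minimality of $b$ forces a monotonicity: if $r_t=\pos{P_g}$ and $r_{t'}=\pos{P_{g'}}$ with $g,g'\in\{i+1,\ldots,k\}$ and $t<t'<b$, then $g<g'$ (else $(t,t')$ is a bad pair with strictly smaller second index). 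Let $a$ be the largest index $<b$ with $r_a\in\{\pos{P_{i+1}},\ldots,\pos{P_k}\}$, say $r_a=\pos{P_d}$; by the monotonicity $d$ is the largest $P$-index visited by $r$ before time $b$, so $d>e$, $(a,b)$ is still a bad pair, and none of $r_{a+1},\ldots,r_{b-1}$ is a position of $P_{i+1,i+2,\ldots,k}$. Hence the embedded arcs $\embed{r_{a,a+1,\ldots,b}}$ (from $\pos{P_d}$ to $\pos{P_e}$) and $\embed{P_{e,e+1,\ldots,d}}$ (from $\pos{P_e}$ to $\pos{P_d}$) meet exactly at their two common endpoints $\pos{P_d},\pos{P_e}$ — lattice edges can only meet at lattice points — so
$$\kappa \;=\; \concat{\embed{r_{a,a+1,\ldots,b}},\ \embed{P_{e,e+1,\ldots,d}}}$$
is a simple closed polygonal curve. (The only degenerate case, $b=a+1$ and $d=e+1$, where the two arcs are the same unit segment, is handled directly, using that $r=q'_{1,\ldots}$ with $q'$ right-priority in $\mathbb S$.)

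\textbf{Locating the disk.} By the Jordan Curve Theorem (Theorem~\ref{thm:jordan}), $\kappa$ bounds a closed disk $D$. Since $\embed{r_{a,\ldots,b}}\subseteq\embed{r}\subseteq\mathcal C$ and $\embed{P_{e,\ldots,d}}\subseteq\embed{P_{i+1,i+2,\ldots,k}}\subseteq c\subseteq\mathcal C$, while $\mathbb R^2\setminus\mathcal C$ is connected and unbounded (Theorem~\ref{thm:infinite-jordan}, Definition~\ref{def:c}), the bounded region $D$ must lie inside $\mathcal C$; and a point of $c$ in $\mathrm{int}(D)$ would have a whole neighbourhood in $\mathcal C$, contradicting $c=\partial\mathcal C$, so $\mathrm{int}(D)\subseteq\mathcal C\setminus c$. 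Writing $\beta=\embed{P_{e,\ldots,d}}$, the set $c\setminus\beta$ splits into two arcs: $c_-$, running from the south end of $l^i$ through $l^i(0),\pos{P_{i+1}},\ldots$ to $\pos{P_e}$, and $c_+$, running from $\pos{P_d}$ through $\ldots,\pos{P_k},l^k(0)$ and up $l^k$ to the north. Because $c\cap\mathrm{int}(D)=\varnothing$ and the only vertices of $\partial D$ lying on $c$ are the endpoints $\pos{P_d},\pos{P_e}$ of $\beta$, we get $c_-\cap D=\{\pos{P_e}\}$ and $c_+\cap D=\{\pos{P_d}\}$.

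\textbf{The contradiction.} Extend $r$ to an arc $R=\concat{\embed{r},\ \gs{r_{|r|-1}}{l^k(z_0)}}\subseteq\mathcal C$ (the half-unit segment lies in $\mathcal C$ by the defining property of $\overline{\mathbb S}$), running from $r_0=\pos{P_{i+1}}\in c_-$, through the corner $\pos{P_d}=r_a$, along $\embed{r_{a,\ldots,b}}\subseteq\partial D$ to the corner $\pos{P_e}=r_b$, and then on to $l^k(z_0)\in l^k\subseteq c_+$. Just after the corner $\pos{P_e}$, the tail of $R$ enters either $\mathrm{int}(D)$ or $\mathcal C\setminus D$. In the first case the tail is trapped: it cannot cross $\embed{r_{a,\ldots,b}}\subseteq r$ (simplicity of $r$), and at any point where it meets the relative interior of $\beta$ — where locally $D$ is exactly the $\mathcal C$-side of $c$ — it is pushed back into $\mathrm{int}(D)$, so the whole tail, in particular $l^k(z_0)$, stays in the bounded set $D$; but $l^k(z_0)\in c_+$ and $c_+\cap D=\{\pos{P_d}\}$, while $l^k(z_0)\neq\pos{P_d}$ since $l^k(z_0)$ has half-integer $x$-coordinate. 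Contradiction. The same argument applied to the portion of $R$ before the corner $\pos{P_d}$ shows $r_{a-1}\notin\mathrm{int}(D)$ as well, and then $r_0=\pos{P_{i+1}}$ would have to lie in $D\cap c_-=\{\pos{P_e}\}$, which is impossible since $i+1<e$. Thus the only surviving possibility is that $r$ meets $\partial D$ only along $\embed{r_{a,\ldots,b}}$, approaching and leaving it from $\mathcal C\setminus D$ at both corners. \emph{This last case is the main obstacle:} ruling it out needs a finer argument exploiting the extremal definition of $q'$ (its endpoint being locally highest in the prefix order on $\overline{\mathbb S}$) together with its right-priority, and the delicate point is controlling the many ways $r$ can re-touch $P_{i+1,i+2,\ldots,k}$ or $c$ away from the two corners; the remainder is routine planar bookkeeping.
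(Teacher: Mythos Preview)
Your proof is explicitly incomplete at precisely the point where the real work happens. After building the Jordan curve $\kappa$ and its disk $D$, you correctly dispose of the cases where the prefix $r_{0,\ldots,a}$ or the suffix $r_{b,\ldots,|r|-1}$ enters $\mathrm{int}(D)$, but you then acknowledge that the remaining case---both prefix and suffix staying in $\mathcal C\setminus D$---is ``the main obstacle,'' and you leave it as ``routine planar bookkeeping.'' It is not routine: nothing about the bounded disk $D$ alone forbids this configuration. To rule it out you must invoke the right-priority definition of $r$ in a way that controls how $P_{e,\ldots,k}$ sits relative to $r_{b,\ldots,|r|-1}$, and your sketch does not do this. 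The degenerate case $b=a+1$, $d=e+1$ is also dismissed without argument.

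The paper's proof avoids this trap by building a different curve. Rather than the closed curve $\kappa$, it takes the \emph{infinite} curve
\[
s=\concat{\reverse{l^i},\ \gs{l^i(0)}{\pos{P_{i+1}}},\ \embed{P_{i+1,\ldots,e}},\ \embed{r_{b,\ldots,|r|-1}},\ \gs{r_{|r|-1}}{l^k(z)},\ l^k_z},
\]
which already incorporates the suffix $r_{b,\ldots,|r|-1}$ and the rays $l^i,l^k$. The right-priority of $r$ (together with $r\in\mathbb S$) is then used directly to show that $\embed{P_{e,\ldots,k}}$ cannot turn right from $s$; this places $r_a=\pos{P_d}$ on the strict left-hand side of $s$, whence $\embed{r_{0,\ldots,a}}$ must turn left from $s$---impossible, since each piece of $s$ is either part of $r$ (simplicity), part of $c$ ($r$ stays in $\mathcal C$), or a visibility ray. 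The single infinite cut $s$ does in one stroke what your bounded $\kappa$ cannot: it simultaneously constrains both the prefix and the suffix of $r$ via the global geometry of $\mathcal C$, and the right-priority argument is a short case check rather than the open-ended ``finer argument'' you gesture at.
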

\begin{proof}
  Assume, for the sake of contradiction, that there are two indices $d$ and $e$ such that $\pos{P_d}=r_a$, $\pos{P_e}=r_b$ and $d\geq e$.
  Because $r$ and $P$ are simple and $a\neq b$, we also have $d > e$. Note that $e>i+1$ (since $r_0=\pos{P_{i+1}}$, $b>a\geq 0$ and $r$ is simple).
  We assume without loss of generality that $e$ is the smallest integer satisfying the hypotheses of this \sublname. (See Figure~\ref{fig:shield-order} for an example.)
    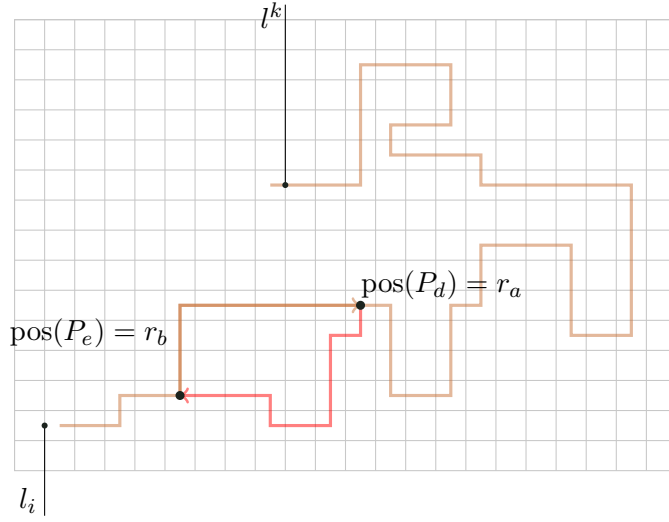
\begin{figure}[ht]
    \centering
    \begin{tikzpicture}[scale=\scale]\draw[draw={rgb,255:red,200; green,200; blue,200}](4.5,-0.2) rectangle (26.5, -15.2);
\draw[draw={rgb,255:red,200; green,200; blue,200}](4.5,-15.2)--(4.5,-0.2);
\draw[draw={rgb,255:red,200; green,200; blue,200}](5.5,-15.2)--(5.5,-0.2);
\draw[draw={rgb,255:red,200; green,200; blue,200}](6.5,-15.2)--(6.5,-0.2);
\draw[draw={rgb,255:red,200; green,200; blue,200}](7.5,-15.2)--(7.5,-0.2);
\draw[draw={rgb,255:red,200; green,200; blue,200}](8.5,-15.2)--(8.5,-0.2);
\draw[draw={rgb,255:red,200; green,200; blue,200}](9.5,-15.2)--(9.5,-0.2);
\draw[draw={rgb,255:red,200; green,200; blue,200}](10.5,-15.2)--(10.5,-0.2);
\draw[draw={rgb,255:red,200; green,200; blue,200}](11.5,-15.2)--(11.5,-0.2);
\draw[draw={rgb,255:red,200; green,200; blue,200}](12.5,-15.2)--(12.5,-0.2);
\draw[draw={rgb,255:red,200; green,200; blue,200}](13.5,-15.2)--(13.5,-0.2);
\draw[draw={rgb,255:red,200; green,200; blue,200}](14.5,-15.2)--(14.5,-0.2);
\draw[draw={rgb,255:red,200; green,200; blue,200}](15.5,-15.2)--(15.5,-0.2);
\draw[draw={rgb,255:red,200; green,200; blue,200}](16.5,-15.2)--(16.5,-0.2);
\draw[draw={rgb,255:red,200; green,200; blue,200}](17.5,-15.2)--(17.5,-0.2);
\draw[draw={rgb,255:red,200; green,200; blue,200}](18.5,-15.2)--(18.5,-0.2);
\draw[draw={rgb,255:red,200; green,200; blue,200}](19.5,-15.2)--(19.5,-0.2);
\draw[draw={rgb,255:red,200; green,200; blue,200}](20.5,-15.2)--(20.5,-0.2);
\draw[draw={rgb,255:red,200; green,200; blue,200}](21.5,-15.2)--(21.5,-0.2);
\draw[draw={rgb,255:red,200; green,200; blue,200}](22.5,-15.2)--(22.5,-0.2);
\draw[draw={rgb,255:red,200; green,200; blue,200}](23.5,-15.2)--(23.5,-0.2);
\draw[draw={rgb,255:red,200; green,200; blue,200}](24.5,-15.2)--(24.5,-0.2);
\draw[draw={rgb,255:red,200; green,200; blue,200}](25.5,-15.2)--(25.5,-0.2);
\draw[draw={rgb,255:red,200; green,200; blue,200}](4.5,-0.2)--(26.5,-0.2);
\draw[draw={rgb,255:red,200; green,200; blue,200}](4.5,-1.2)--(26.5,-1.2);
\draw[draw={rgb,255:red,200; green,200; blue,200}](4.5,-2.2)--(26.5,-2.2);
\draw[draw={rgb,255:red,200; green,200; blue,200}](4.5,-3.2)--(26.5,-3.2);
\draw[draw={rgb,255:red,200; green,200; blue,200}](4.5,-4.2)--(26.5,-4.2);
\draw[draw={rgb,255:red,200; green,200; blue,200}](4.5,-5.2)--(26.5,-5.2);
\draw[draw={rgb,255:red,200; green,200; blue,200}](4.5,-6.2)--(26.5,-6.2);
\draw[draw={rgb,255:red,200; green,200; blue,200}](4.5,-7.2)--(26.5,-7.2);
\draw[draw={rgb,255:red,200; green,200; blue,200}](4.5,-8.2)--(26.5,-8.2);
\draw[draw={rgb,255:red,200; green,200; blue,200}](4.5,-9.2)--(26.5,-9.2);
\draw[draw={rgb,255:red,200; green,200; blue,200}](4.5,-10.2)--(26.5,-10.2);
\draw[draw={rgb,255:red,200; green,200; blue,200}](4.5,-11.2)--(26.5,-11.2);
\draw[draw={rgb,255:red,200; green,200; blue,200}](4.5,-12.2)--(26.5,-12.2);
\draw[draw={rgb,255:red,200; green,200; blue,200}](4.5,-13.2)--(26.5,-13.2);
\draw[draw={rgb,255:red,200; green,200; blue,200}](4.5,-14.2)--(26.5,-14.2);
\draw[draw={rgb,255:red,200; green,113; blue,55},opacity=0.5,very thick](6,-13.7)--(8,-13.7)--(8,-12.7)--(10,-12.7)--(10,-9.7)--(17,-9.7)--(17,-12.7)--(19,-12.7)--(19,-9.7)--(20,-9.7)--(20,-7.7)--(23,-7.7)--(23,-10.7)--(25,-10.7)--(25,-5.7)--(20,-5.7)--(20,-4.7)--(17,-4.7)--(17,-3.7)--(19,-3.7)--(19,-1.7)--(16,-1.7)--(16,-5.7)--(13,-5.7);
\draw[draw={rgb,255:red,255; green,0; blue,0},opacity=0.5,<-,very thick](10,-12.7)--(13,-12.7)--(13,-13.7)--(15,-13.7)--(15,-10.7)--(16,-10.7)--(16,-9.7);
\draw[draw={rgb,255:red,200; green,113; blue,55},opacity=0.5,->,very thick](10,-12.7)--(10,-9.7)--(16,-9.7);
\draw(4, -11.44) node[anchor=south west] {$\pos{P_e} = r_b$};
\draw(15.68, -9.84) node[anchor=south west] {$\pos{P_d} = r_a$};
\draw[draw=none,fill={rgb,255:red,28; green,36; blue,31},thin](16, -9.7) ellipse (0.15cm and 0.15cm);\draw[draw=none,fill={rgb,255:red,28; green,36; blue,31},thin](10, -12.7) ellipse (0.15cm and 0.15cm);\draw[draw={rgb,255:red,0; green,0; blue,0}](5.5,-13.7)--(5.5,-16.7);
\draw(4.32, -16.93) node[anchor=south west] {$l_i$};
\draw[draw={rgb,255:red,0; green,0; blue,0}](13.5,-5.7)--(13.5,0.3);
\draw(12.34, -0.81) node[anchor=south west] {$l^k$};
\draw[draw=none,fill={rgb,255:red,28; green,36; blue,31},thin](13.5, -5.7) ellipse (0.1cm and 0.1cm);\draw[draw=none,fill={rgb,255:red,28; green,36; blue,31},thin](5.5, -13.7) ellipse (0.1cm and 0.1cm);
\end{tikzpicture}
    \caption{Initial setup for the proof \subl{lem:order}. We assume, for the sake of contradiction, that there are two indices $a<b$ such that $r_a = \pos{P_d}$, $r_b = \pos{P_e}$, for some $d$ and $e$ such that $d>e$. Here the brown path is $\pos{P_{\range{i+1}{i+2}{k+1}}}$, and the red path is $r$.}\label{fig:shield-order}
  \end{figure}
  \begin{figure}[ht]
    \centering
    \begin{tikzpicture}[scale=\scale]\draw[draw={rgb,255:red,200; green,200; blue,200}](4.5,0.8) rectangle (29.5, -17.2);
\draw[draw={rgb,255:red,200; green,200; blue,200}](4.5,-17.2)--(4.5,0.8);
\draw[draw={rgb,255:red,200; green,200; blue,200}](5.5,-17.2)--(5.5,0.8);
\draw[draw={rgb,255:red,200; green,200; blue,200}](6.5,-17.2)--(6.5,0.8);
\draw[draw={rgb,255:red,200; green,200; blue,200}](7.5,-17.2)--(7.5,0.8);
\draw[draw={rgb,255:red,200; green,200; blue,200}](8.5,-17.2)--(8.5,0.8);
\draw[draw={rgb,255:red,200; green,200; blue,200}](9.5,-17.2)--(9.5,0.8);
\draw[draw={rgb,255:red,200; green,200; blue,200}](10.5,-17.2)--(10.5,0.8);
\draw[draw={rgb,255:red,200; green,200; blue,200}](11.5,-17.2)--(11.5,0.8);
\draw[draw={rgb,255:red,200; green,200; blue,200}](12.5,-17.2)--(12.5,0.8);
\draw[draw={rgb,255:red,200; green,200; blue,200}](13.5,-17.2)--(13.5,0.8);
\draw[draw={rgb,255:red,200; green,200; blue,200}](14.5,-17.2)--(14.5,0.8);
\draw[draw={rgb,255:red,200; green,200; blue,200}](15.5,-17.2)--(15.5,0.8);
\draw[draw={rgb,255:red,200; green,200; blue,200}](16.5,-17.2)--(16.5,0.8);
\draw[draw={rgb,255:red,200; green,200; blue,200}](17.5,-17.2)--(17.5,0.8);
\draw[draw={rgb,255:red,200; green,200; blue,200}](18.5,-17.2)--(18.5,0.8);
\draw[draw={rgb,255:red,200; green,200; blue,200}](19.5,-17.2)--(19.5,0.8);
\draw[draw={rgb,255:red,200; green,200; blue,200}](20.5,-17.2)--(20.5,0.8);
\draw[draw={rgb,255:red,200; green,200; blue,200}](21.5,-17.2)--(21.5,0.8);
\draw[draw={rgb,255:red,200; green,200; blue,200}](22.5,-17.2)--(22.5,0.8);
\draw[draw={rgb,255:red,200; green,200; blue,200}](23.5,-17.2)--(23.5,0.8);
\draw[draw={rgb,255:red,200; green,200; blue,200}](24.5,-17.2)--(24.5,0.8);
\draw[draw={rgb,255:red,200; green,200; blue,200}](25.5,-17.2)--(25.5,0.8);
\draw[draw={rgb,255:red,200; green,200; blue,200}](26.5,-17.2)--(26.5,0.8);
\draw[draw={rgb,255:red,200; green,200; blue,200}](27.5,-17.2)--(27.5,0.8);
\draw[draw={rgb,255:red,200; green,200; blue,200}](28.5,-17.2)--(28.5,0.8);
\draw[draw={rgb,255:red,200; green,200; blue,200}](4.5,0.8)--(29.5,0.8);
\draw[draw={rgb,255:red,200; green,200; blue,200}](4.5,-0.2)--(29.5,-0.2);
\draw[draw={rgb,255:red,200; green,200; blue,200}](4.5,-1.2)--(29.5,-1.2);
\draw[draw={rgb,255:red,200; green,200; blue,200}](4.5,-2.2)--(29.5,-2.2);
\draw[draw={rgb,255:red,200; green,200; blue,200}](4.5,-3.2)--(29.5,-3.2);
\draw[draw={rgb,255:red,200; green,200; blue,200}](4.5,-4.2)--(29.5,-4.2);
\draw[draw={rgb,255:red,200; green,200; blue,200}](4.5,-5.2)--(29.5,-5.2);
\draw[draw={rgb,255:red,200; green,200; blue,200}](4.5,-6.2)--(29.5,-6.2);
\draw[draw={rgb,255:red,200; green,200; blue,200}](4.5,-7.2)--(29.5,-7.2);
\draw[draw={rgb,255:red,200; green,200; blue,200}](4.5,-8.2)--(29.5,-8.2);
\draw[draw={rgb,255:red,200; green,200; blue,200}](4.5,-9.2)--(29.5,-9.2);
\draw[draw={rgb,255:red,200; green,200; blue,200}](4.5,-10.2)--(29.5,-10.2);
\draw[draw={rgb,255:red,200; green,200; blue,200}](4.5,-11.2)--(29.5,-11.2);
\draw[draw={rgb,255:red,200; green,200; blue,200}](4.5,-12.2)--(29.5,-12.2);
\draw[draw={rgb,255:red,200; green,200; blue,200}](4.5,-13.2)--(29.5,-13.2);
\draw[draw={rgb,255:red,200; green,200; blue,200}](4.5,-14.2)--(29.5,-14.2);
\draw[draw={rgb,255:red,200; green,200; blue,200}](4.5,-15.2)--(29.5,-15.2);
\draw[draw={rgb,255:red,200; green,200; blue,200}](4.5,-16.2)--(29.5,-16.2);
\draw[draw={rgb,255:red,200; green,113; blue,55},opacity=0.5,very thick](6,-13.7)--(8,-13.7)--(8,-12.7)--(10,-12.7)--(10,-9.7)--(17,-9.7)--(17,-12.7)--(19,-12.7)--(19,-9.7)--(20,-9.7)--(20,-7.7)--(23,-7.7)--(23,-10.7)--(25,-10.7)--(25,-5.7)--(20,-5.7)--(20,-4.7)--(17,-4.7)--(17,-3.7)--(19,-3.7)--(19,-1.7)--(16,-1.7)--(16,-5.7)--(13,-5.7);
\draw[draw={rgb,255:red,200; green,113; blue,55},fill={rgb,255:red,0; green,0; blue,0},opacity=0.1](13.5,-1.7)--(13.5,-5.7)--(16,-5.7)--(16,-1.7)--(19,-1.7)--(19,-3.7)--(17,-3.7)--(17,-4.7)--(20,-4.7)--(20,-5.7)--(25,-5.7)--(25,-10.7)--(23,-10.7)--(23,-7.7)--(20,-7.7)--(20,-9.7)--(19,-9.7)--(19,-12.7)--(17,-12.7)--(17,-9.7)--(10,-9.7) .. controls (10,-11.7) and (10,-13.7) .. (10,-15.7)--(27,-15.7)--(27,-1.7)--(26,-1.7)--(26,-0.7)--(15,-0.7)--(15,-1.7)-- cycle;
\draw[draw={rgb,255:red,255; green,0; blue,0},opacity=0.5,->,very thick](10,-12.7)--(10,-15.7)--(27,-15.7)--(27,-1.7)--(26,-1.7)--(26,-0.7)--(15,-0.7)--(15,-1.7)--(13.5,-1.7);
\draw(25, -3.7) node[anchor=south west] {$s$};
\draw[draw={rgb,255:red,255; green,0; blue,0},opacity=0.5,->,very thick](5.5,-13.7)--(9,-13.7);
\draw(5.02, -13.78) node[anchor=south west] {$r_0$};
\draw[draw=none,fill={rgb,255:red,28; green,36; blue,31},thin](6.03, -13.7) ellipse (0.15cm and 0.15cm);\draw[draw={rgb,255:red,255; green,0; blue,0},opacity=0.5,<-,very thick](10,-12.7)--(13,-12.7)--(13,-13.7)--(15,-13.7)--(15,-10.7)--(16,-10.7)--(16,-9.7);
\draw[draw={rgb,255:red,200; green,113; blue,55},opacity=0.5,->,very thick](10,-12.7)--(10,-9.7)--(16,-9.7);
\draw(4, -11.44) node[anchor=south west] {$\pos{P_e} = r_b$};
\draw(15.68, -9.84) node[anchor=south west] {$\pos{P_d} = r_a$};
\draw[draw=none,fill={rgb,255:red,28; green,36; blue,31},thin](16, -9.7) ellipse (0.15cm and 0.15cm);\draw[draw=none,fill={rgb,255:red,28; green,36; blue,31},thin](10, -12.7) ellipse (0.15cm and 0.15cm);\draw[draw={rgb,255:red,255; green,0; blue,0},dashed,->,thick](8.5,-13.7) .. controls (11,-13.7) and (11.87,-13.62) .. (12.5,-14.7) .. controls (13.13,-15.78) and (14.61,-15.26) .. (15.5,-14.7) .. controls (16.39,-14.14) and (17.15,-14.02) .. (16.71,-12.56) .. controls (16.41,-11.53) and (17.16,-10.37) .. (16.21,-9.81);
\draw[draw={rgb,255:red,0; green,0; blue,0}](5.5,-13.7)--(5.5,-16.7);
\draw(4.32, -16.93) node[anchor=south west] {$l_i$};
\draw[draw={rgb,255:red,0; green,0; blue,0}](13.5,-5.7)--(13.5,0.3);
\draw(12.34, -0.81) node[anchor=south west] {$l^k$};
\draw[draw=none,fill={rgb,255:red,28; green,36; blue,31},thin](13.5, -5.7) ellipse (0.1cm and 0.1cm);\draw[draw=none,fill={rgb,255:red,28; green,36; blue,31},thin](5.5, -13.7) ellipse (0.1cm and 0.1cm);
\end{tikzpicture}
    \caption{Illustration for the proof of \subl{lem:order}. $\pos{ P_{\range{i+1}{i+2}{k+1}} }$  is in brown, $r$ is in red, and we assume that there are two indices $a<b$ such that $r_a = \pos{P_d}$ and $r_b = \pos{P_e}$ and such that (for the sake of contradiction) we have $d>e$.
      If we draw $r_{\rng b {|r|-1}}$ after $r_{\rng a b}$, then since $r$ is simple, then although $r_{\range 0 1 a}$ must enter the grey zone where $r_a$ is, any attempt by $r_{\range 0 1 a}$ to do so leads it having position(s) outside $\mathcal C$ (contradicting that $\embed{r}$ stays in $\mathcal C$ by definition) or 
  $r$ intersecting itself (contradicting that $r$ is a simple binding path -- a situation shown by the dashed curve).}\label{fig:shield-order1}
  \end{figure}

  Let $s$ be the curve defined by:
  $$s = \concat{
  \reverse{l^i},\,
  \gs{l^i(0)}{\pos{P_{i+1}}}, \,
  \embed{P_{\range{i+1}{i+2}e}}, \,
  \embed{r_{\rng b {|r|-1}}},\, 
  \gs{r_{|r|-1}}{l^k(z)},\,
  l^k_z
}$$
where $l^k(z) \in\mathbb{R}^2$ is the point of $l^k$ at the same y-coordinate as $r_{|r|-1}$ and $\ensuremath{l^k_z}$ is the ray to the north starting at $l^k(z)$.
The curve $s$ is entirely in $\mathcal C$ because the six curves used to define $s$ are in $\mathcal C$.
 The six curves used to define $s$ only intersect pairwise at their endpoints meaning that $s$ is simple (in particular by the minimality of $e$, $P_{\range {i+1}{i+2}{e-1}}$ does not intersect $r_{\rng b {|r|-1}}$).
  Therefore, since $s$ starts and ends with vertical rays, and is otherwise made of a finite number of horizontal and vertical segments, $s$ cuts the plane into two connected components, by Theorem~\ref{thm:infinite-jordan}.
  We claim that $\embed{P_{\range{e}{e+1}k}}$ does not turn right from~$s$, by looking at each part of $s$:
  \begin{itemize}
  \item $\glu P i$ and $\glu P k$ are visible relative to $P$, hence $\embed{P_{\range{e}{e+1}k}}$ does not intersect $\reverse{l^i}$ nor $l^k$.
  \item $P$ is simple, hence $\embed{P_{\range{e} {e+1} k}}$ cannot intersect $\embed{P_{\range {i+1} {i+2} {e-1}}}$.
  \item If $\embed{P_{\range{e}{e+1}k}}$ turns right from a point of $\gs{l^i(0)}{ \pos{P_{i+1}}}$, then
    $\embed{P_{\range{e}{e+1}k}}$ must intersect $P_i$ or $l^i(0)$ (or both), which is already covered by the two previous points.
  \item $\embed{P_{\range{e}{e+1}k}}$ (which is in $\mathcal C$) does not turn right from $\embed{r_{\rng b {|r|-1}}}$. There are two subcases:
    \begin{itemize}
    \item If that right turn happens on a point of $r$ strictly before $|r|-1$, then this contradicts the definition of $r$ as a most right-priority path.
    \item Else, that right turn happens at $r_{|r|-1} = \pos{P_f}$ for some integer $f$, which means that $f<k$.
      This means that $\embed{P}$ also turns right from the curve defined as $\rho = \concat{\reverse{l^i}, \gs{l^i(0)}{\pos{r_0}}, \embed{r},   \gs{r_{|r|-1}}{l^k(z)}, l^k_z}$. However, $P_k$ is not in the connected component on the right-hand side of $\rho$, since $f < k$. Therefore, $P$ must intersect $\rho$, and that can only happen on $r$. However, this implies that not all of $r$ is on the same side of $P$, contradicting the definition of $r$ as being in $\mathcal C$.
    \end{itemize}
  \item if $\embed{P_{\range{e}{e+1}k}}$ turns right from a point of $\gs{r_{|r|-1}}{l^k(z)}$ then this can only occur at one of the two endpoints of this segment (which is a horizontal line segment of length 0.5) which is already covered in the first or fourth point above.
  \end{itemize}
  This means that $\embed{P_{\range{e}{e+1}k}}$ does not turn right from~$s$, and therefore, $r_a = \pos{P_d}$ is either on $s$, or in the component of $\mathbb{R}^2$ that is on the left-hand side of $s$.

  If $r_a$ is on $s$, then since $d > e$ and $P$ is simple, $r_a$ cannot be on $P_{\range{i+1}{i+2}e}$. But $r_a$ cannot be on $r_{\rng b {|r|-1}}$ either, since $r$ is simple and $a < b$.
  Therefore, $r_a$ is not on $s$, and since $r_0 = \pos{P_{i+1}}$, this implies that $\embed{r_{\range 0 1 a}}$ turns left from $\embed{s}$. The position of that left turn from $\embed{s}$ must be on one of the six curves used to define~$s$ (see Figure~\ref{fig:shield-order1} for an example):
  \begin{itemize}
  \item If the left turn is from a point on $\embed{r_{\range b {b+1} {|r|-1}}}$, this contradicts that $r$ is simple.
  \item If the left turn is from a point on $\reverse{l^i}$,  $\gs{l^i(0)}{\pos{P_{i+1}}}$, $\embed{P_{\range {i+1} {i+2} e}}$, $\gs{r_{|r|-1}}{l^k(z)}$ or $\ensuremath{l^k_z}$, then either we are also in the previous case (i.e. the left turn is on $\embed{r_{\range b {b+1}{|r|-1}}}$, or we contradict the definition of $\embed{r}$ as being in $\mathcal C$.
  \end{itemize}
  In all cases, we get a contradiction.
  Therefore, $d<e$.
\end{proof}
\vspace{\baselineskip}

Next we show that if $(i,j,j)$ is a shield for $P$, then  $P$ is pumpable with pumping vector $\vpij$.
\begin{sublemma}\label{lem:if j=k then P pumpable}
Let $(i,j,k)$ be a shield for $P$ (as in Definition~\ref{def:shield}). If $j=k$ then $P$ is pumpable.
\end{sublemma}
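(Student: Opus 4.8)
The key observation is that when $j=k$, Hypothesis~\ref{lem:hp:shield backup} of Definition~\ref{def:shield} becomes a very strong statement: it says $\embed{P_{i,i+1,\ldots,k}} \cap (l^k + \vpji) \subseteq \{l^k(0)+\vpji\}$, where now $l^k = l^j$ is the visibility ray of the glue $\glu P j$, which by Hypothesis~\ref{lem:hp:shield k} is visible from the north, and by Hypothesis~\ref{lem:hp:shield ij} also visible from the south. So $l^j$ here is a bi-infinite vertical line (both a south ray and a north ray) passing through $\pos{\glu P j}$, and the hypothesis controls how the path $P_{i,\ldots,j}$ interacts with the translate of this line by $\vpji$, i.e.\ the vertical line through $\pos{\glu P i}$ (up to the half-integer offset; more precisely $l^j + \vpji$ is a vertical ray/line at the x-coordinate of $\glu P i$). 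The plan is to show this forces the translated copy $P_{i+1,\ldots,j}+\vpij$ to fit together with $P_{i,\ldots,j}$ to form a simple infinite path, which is exactly the pumping $\olq$ of $P$ between $i$ and $j$.

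\textbf{Key steps.} First I would recall that $\glu P i$ and $\glu P j$ have the same glue type and both point east (Hypothesis~\ref{lem:hp:shield ij}), so $P_{i+1}+\vpij$ interacts with $P_j$; this is the condition needed for $\olq = P_{0,\ldots,i}(P_{i+1,\ldots,j}+\vpij)(P_{i+1,\ldots,j}+2\vpij)\cdots$ to be a sequence in which consecutive tiles interact. Second, and this is the crux, I would show $\olq$ is self-avoiding. By Lemma~\ref{lem:precious} (cited just after Definition~\ref{def:pumpable path}), a sufficient condition for this is that $P_{i+1,\ldots,j}$ does not intersect $P_{i+1,\ldots,j}+\vpij$. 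I would prove the contrapositive fails using the visibility structure: suppose $\pos{P_a} = \pos{P_b}+\vpij$ for some $i<a,b\leq j$. Then $\pos{P_b}+\vpij$ lies on $\embed{P_{i,\ldots,j}}$, i.e.\ $\pos{P_b}$ lies on $\embed{P_{i,\ldots,j}}+\vpji = \embed{P_{i,\ldots,k}}+\vpji$ (using $j=k$). Now I want to relate this to the ray $l^k+\vpji$. Since $\glu P j = \glu P k$ is visible from the south, the vertical ray $l^j$ below $\pos{\glu P j}$ avoids $\embed P$ and $\sigma$; translating, $l^j+\vpji$ is a vertical ray starting at $l^j(0)+\vpji$ going south. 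The intersection of $\embed{P_{i,\ldots,k}}$ with this ray is constrained to the single point $l^k(0)+\vpji$ by Hypothesis~\ref{lem:hp:shield backup}. From this, together with the fact that $\glu P j$ points east and $P$ is simple, I would argue (similarly to the proof of Lemma~\ref{lem:glue:east}, using a Jordan-curve argument) that $P_{i+1,\ldots,j}$ lies entirely on one side of the vertical line through $\pos{\glu P i}$ while $P_{i+1,\ldots,j}+\vpij$ lies entirely on the other side, with only the ``seam'' tiles $P_{i+1}$ (of the translate) and $P_j$ meeting at the interface, so no intersection can occur. Hence $\olq$ is a simple path. Third, I would check $\olq$ is producible: each translate $P_{i+1,\ldots,j}+n\vpij$ is a translated copy of a producible subpath and, by the separation just established, lies in a fresh region of the plane not blocked by previous translates nor by $\sigma \cup P_{0,\ldots,i}$ (here I can invoke the workspace machinery: $\mathcal C$ shields these translates, cf.\ \subl{lem:c}, \subl{lem:c-lk}, and in fact when $j=k$ the curve $c$ degenerates so that each $P_{i+1,\ldots,j}+n\vpij$ sits cleanly in $\mathcal C$ by \subl{lem:lmzInCp}-type reasoning), so tiles can be added in path order. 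Therefore $\olq \in \prodpathsT$ and $P$ is pumpable with pumping vector $\vpij$, as required.

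\textbf{Main obstacle.} The hard part will be the self-avoidance argument for $\olq$, specifically showing that the copies $P_{i+1,\ldots,j}+n\vpij$ do not collide with each other or with $P_{0,\ldots,j}$. Hypothesis~\ref{lem:hp:shield backup} only directly controls one intersection of $\embed{P_{i,\ldots,k}}$ with $l^k+\vpji$; to leverage it I need to set up the right Jordan curve — presumably $c$ itself (Definition~\ref{def:c}), which in the $j=k$ case becomes $\concat{\reverse{l^i}, \gs{l^i(0)}{P_{i+1}}, \embed{P_{i+1,\ldots,j}}, \gs{P_j}{l^j(0)}, l^j}$ — and then show via \subl{lem:c-lk} that $l^j + \vpji$ (equivalently the copy of the ``right wall'' of the next iteration) stays outside $\mathcal C$ except at one boundary point, which propagates inductively to show every translate lands strictly inside the previous workspace. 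I expect the cleanest route is an induction on $n$: assuming $P_{i+1,\ldots,j}+n\vpij$ is in $\mathcal C$ and disjoint from all earlier translates, the shield hypotheses guarantee the curve $c$ translated by $n\vpij$ is again a valid cut whose right-hand side contains $P_{i+1,\ldots,j}+(n{+}1)\vpij$ and excludes everything to its west, closing the induction. Making this induction precise while correctly handling the half-integer offsets between tile columns and glue columns is where the bulk of the careful work lies.
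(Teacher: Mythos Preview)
Your high-level plan (show $\olq$ is simple via Lemma~\ref{lem:precious}, then producible) is right, but you are proposing Jordan-curve arguments and an induction on translated copies of $c$ where the paper gets away with almost nothing. The observation you are circling around but not quite stating is that when $j=k$, the glue $\glu P j=\glu P k$ is visible from the south (Hypothesis~\ref{lem:hp:shield ij}) \emph{and} from the north (Hypothesis~\ref{lem:hp:shield k}), so the full vertical line through $\pos{\glu P j}$ meets $\embed{P}$ only at that single point. Moreover $l^k+\vpji$ is exactly the north ray from $\pos{\glu P i}$, and Hypothesis~\ref{lem:hp:shield backup} says it meets $\embed{P_{i,\ldots,k}}$ only at its start; together with the south-visibility of $\glu P i$, the full vertical line through $\pos{\glu P i}$ also meets $\embed{P_{i,\ldots,k}}$ only once. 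Thus $P_{i+1,\ldots,j}$ is trapped in the vertical strip between these two lines, whose horizontal width is exactly $\xcoord{\vpij}$; translates by $n\vpij$ land in pairwise disjoint strips strictly to the east, and $\sigma\cup\asm{P_{0,\ldots,i}}$ lies strictly west of the first line since $\glu P i$ points east. That is the paper's entire argument---no Jordan curves, no workspace machinery, no induction. One slip to note: you name the line through $\pos{\glu P i}$ as the separator between $P_{i+1,\ldots,j}$ and $P_{i+1,\ldots,j}+\vpij$, but both lie to its east; the separating line is the one through $\pos{\glu P j}$.
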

\begin{proof}
Since $j=k$, the ray $l^k+\vpji = \pos{\glu P k}+\vpji = \pos{\glu P j}+\vpji = \pos{\glu P i} = l^i(0)$, and thus, by Definition~\ref{def:shield}, $\glu P i$ is visible from both the north and the south.
Moreover, $\glu P i$ points east   (Definition~\ref{def:shield}), hence all positions of $\sigma \cup \asm{P_{\range 0 1 i}}$ are strictly to the west of the vertical line through $\pos{\glu P i}$.

Since $j=k$, then $l^j(0)=l^k(0)$ and by Definition~\ref{def:shield} $\glu P k$ is visible from both the north and the south. 
Moreover, by the visibility of $\glu P i$, the path $P_{i+1,i+2,\ldots,k}$ is entirely positioned between the two vertical lines that run through $\glu P i$ and  $\glu P k$. 
As noted after Definition~\ref{def:shield}, $\xcoord{\vect{P_i P_j}} = \xcoord{\vect{P_i P_k}}> 0$.
Hence, for all $n\in\mathbb{N}$, the path $P_{\range {i+1} {i+2} k} + (n+1)\vpij$ is strictly to the east of $P_{\range {i+1} {i+2} k}+ n \vpij$. 
Moreover, for all $n\in\mathbb{N}$, the pair of tiles
$P_{k} + n\vpij$ and
$P_{i+1} + (n+1)\vpij$ interact, 
 hence the pumping of $P$ between $i$ and $k$ is simple and producible (Definition~\ref{def:pumpingPbetweeniandj}). 
 Hence  $P$ is a pumpable path (Definition~\ref{def:pumpable path}) with pumping vector~$\vpij$.
\end{proof}

From now on, we assume that $j<k$, otherwise $P$ is pumpable by the previous claim. We use this to define a second connected component $\mathcal D\subseteq\mathcal C$ (see Figure~\ref{fig:shield-d} for an example), since $j<k$ makes the following curve simple:
\begin{definition}\label{def:d}
  Let $d$ be the curve defined as the concatenation of five curves:
  $$d = \concat{
    \reverse{l^j},  \,\gs{l^j(0)}{\pos{P_{j+1}}},\, \embed{P_{\range {j+1}{j+2}k} }, \,\gs{\pos{P_{k} }}{l^k(0)},\, l^k
  }$$
\end{definition}

  By a similar argument to 
  \subl{lem:c-cuts} for curve $c$ (Definition~\ref{def:c}), 
  here $d$ is a simple connected curve starting with a vertical ray to the south ($l^j$) and ending with a vertical ray to the north ($l^k$). Therefore, by Theorem~\ref{thm:infinite-jordan}, $d$ cuts $\R^2$ into two infinite connected components.   Let $\mathcal{D} \subsetneq \mathbb{R}^2$ be the connected component on the right-hand side of $d$ (intuitively, the component connected to the east of $l^i$ and $l^k$), including $d$ itself.

  Moreover, since 
   each of the five curves used to define $d$  are in $\mathcal C$ (Definition~\ref{def:C}) we get that 
  $d$ is entirely in $\mathcal C$. 
  Hence $\mathcal D$ is a subset of $\mathcal C$.

The following two claims are illustrated in Figure~\ref{fig:shield-d}.

\begin{figure}[ht]
  \centering
  \begin{tikzpicture}[scale=\scale]\draw[draw={rgb,255:red,200; green,200; blue,200}](4.5,0.8) rectangle (30.5, -15.2);
\draw[draw={rgb,255:red,200; green,200; blue,200}](4.5,-15.2)--(4.5,0.8);
\draw[draw={rgb,255:red,200; green,200; blue,200}](5.5,-15.2)--(5.5,0.8);
\draw[draw={rgb,255:red,200; green,200; blue,200}](6.5,-15.2)--(6.5,0.8);
\draw[draw={rgb,255:red,200; green,200; blue,200}](7.5,-15.2)--(7.5,0.8);
\draw[draw={rgb,255:red,200; green,200; blue,200}](8.5,-15.2)--(8.5,0.8);
\draw[draw={rgb,255:red,200; green,200; blue,200}](9.5,-15.2)--(9.5,0.8);
\draw[draw={rgb,255:red,200; green,200; blue,200}](10.5,-15.2)--(10.5,0.8);
\draw[draw={rgb,255:red,200; green,200; blue,200}](11.5,-15.2)--(11.5,0.8);
\draw[draw={rgb,255:red,200; green,200; blue,200}](12.5,-15.2)--(12.5,0.8);
\draw[draw={rgb,255:red,200; green,200; blue,200}](13.5,-15.2)--(13.5,0.8);
\draw[draw={rgb,255:red,200; green,200; blue,200}](14.5,-15.2)--(14.5,0.8);
\draw[draw={rgb,255:red,200; green,200; blue,200}](15.5,-15.2)--(15.5,0.8);
\draw[draw={rgb,255:red,200; green,200; blue,200}](16.5,-15.2)--(16.5,0.8);
\draw[draw={rgb,255:red,200; green,200; blue,200}](17.5,-15.2)--(17.5,0.8);
\draw[draw={rgb,255:red,200; green,200; blue,200}](18.5,-15.2)--(18.5,0.8);
\draw[draw={rgb,255:red,200; green,200; blue,200}](19.5,-15.2)--(19.5,0.8);
\draw[draw={rgb,255:red,200; green,200; blue,200}](20.5,-15.2)--(20.5,0.8);
\draw[draw={rgb,255:red,200; green,200; blue,200}](21.5,-15.2)--(21.5,0.8);
\draw[draw={rgb,255:red,200; green,200; blue,200}](22.5,-15.2)--(22.5,0.8);
\draw[draw={rgb,255:red,200; green,200; blue,200}](23.5,-15.2)--(23.5,0.8);
\draw[draw={rgb,255:red,200; green,200; blue,200}](24.5,-15.2)--(24.5,0.8);
\draw[draw={rgb,255:red,200; green,200; blue,200}](25.5,-15.2)--(25.5,0.8);
\draw[draw={rgb,255:red,200; green,200; blue,200}](26.5,-15.2)--(26.5,0.8);
\draw[draw={rgb,255:red,200; green,200; blue,200}](27.5,-15.2)--(27.5,0.8);
\draw[draw={rgb,255:red,200; green,200; blue,200}](28.5,-15.2)--(28.5,0.8);
\draw[draw={rgb,255:red,200; green,200; blue,200}](29.5,-15.2)--(29.5,0.8);
\draw[draw={rgb,255:red,200; green,200; blue,200}](4.5,0.8)--(30.5,0.8);
\draw[draw={rgb,255:red,200; green,200; blue,200}](4.5,-0.2)--(30.5,-0.2);
\draw[draw={rgb,255:red,200; green,200; blue,200}](4.5,-1.2)--(30.5,-1.2);
\draw[draw={rgb,255:red,200; green,200; blue,200}](4.5,-2.2)--(30.5,-2.2);
\draw[draw={rgb,255:red,200; green,200; blue,200}](4.5,-3.2)--(30.5,-3.2);
\draw[draw={rgb,255:red,200; green,200; blue,200}](4.5,-4.2)--(30.5,-4.2);
\draw[draw={rgb,255:red,200; green,200; blue,200}](4.5,-5.2)--(30.5,-5.2);
\draw[draw={rgb,255:red,200; green,200; blue,200}](4.5,-6.2)--(30.5,-6.2);
\draw[draw={rgb,255:red,200; green,200; blue,200}](4.5,-7.2)--(30.5,-7.2);
\draw[draw={rgb,255:red,200; green,200; blue,200}](4.5,-8.2)--(30.5,-8.2);
\draw[draw={rgb,255:red,200; green,200; blue,200}](4.5,-9.2)--(30.5,-9.2);
\draw[draw={rgb,255:red,200; green,200; blue,200}](4.5,-10.2)--(30.5,-10.2);
\draw[draw={rgb,255:red,200; green,200; blue,200}](4.5,-11.2)--(30.5,-11.2);
\draw[draw={rgb,255:red,200; green,200; blue,200}](4.5,-12.2)--(30.5,-12.2);
\draw[draw={rgb,255:red,200; green,200; blue,200}](4.5,-13.2)--(30.5,-13.2);
\draw[draw={rgb,255:red,200; green,200; blue,200}](4.5,-14.2)--(30.5,-14.2);
\draw[draw=none,fill={rgb,255:red,179; green,179; blue,179},opacity=0.25](13.5,-5.7)--(13.5,-0.2)--(30.5,-0.2)--(30.5,-15.2)--(9.5,-15.2)--(9.5,-12.7)--(10,-12.7)--(10,-9.7)--(17,-9.7)--(17,-12.7)--(19,-12.7)--(19,-9.7)--(20,-9.7)--(20,-7.7)--(23,-7.7)--(23,-10.7)--(25,-10.7)--(25,-5.7)--(20,-5.7)--(20,-4.7)--(17,-4.7)--(17,-3.7)--(19,-3.7)--(19,-1.7)--(16,-1.7)--(16,-5.7)--(13.5,-5.7);
\draw(27.26, -3.7) node[anchor=south west] {$\mathcal D$};
\draw(26.59, -11.3) node[anchor=south west] {$\embed{r}+\vpij$};
\draw[draw={rgb,255:red,200; green,113; blue,55},opacity=0.5,very thick](6,-13.7)--(8,-13.7)--(8,-12.7)--(10,-12.7)--(10,-9.7)--(17,-9.7)--(17,-12.7)--(19,-12.7)--(19,-9.7)--(20,-9.7)--(20,-7.7)--(23,-7.7)--(23,-10.7)--(25,-10.7)--(25,-5.7)--(20,-5.7)--(20,-4.7)--(17,-4.7)--(17,-3.7)--(19,-3.7)--(19,-1.7)--(16,-1.7)--(16,-5.7)--(13,-5.7);
\draw[draw={rgb,255:red,255; green,0; blue,0},opacity=0.5,very thick](9.5,-12.7)--(12,-12.7)--(12,-11.7)--(14,-11.7)--(14,-9.7)--(17,-9.7)--(17,-12.7)--(19,-12.7)--(19,-9.7)--(20,-9.7)--(20,-8.7)--(21,-8.7)--(21,-11.7)--(23,-11.7)--(23,-10.7)--(25,-10.7)--(25,-6.7)--(27,-6.7)--(27,-9.7)--(29,-9.7)--(29,-4.7)--(24,-4.7)--(24,-3.7)--(21,-3.7)--(21,-2.7)--(23,-2.7)--(23,-0.7)--(20,-0.7)--(20,-4.7)--(19,-4.7);
\draw[draw={rgb,255:red,0; green,0; blue,0}](9.5,-12.69)--(9.5,-16.7);
\draw(8.23, -16.93) node[anchor=south west] {$l_j$};
\draw[draw=none,fill={rgb,255:red,28; green,36; blue,31},thin](9.5, -12.69) ellipse (0.1cm and 0.1cm);\draw[draw={rgb,255:red,0; green,0; blue,0}](5.5,-13.7)--(5.5,-16.7);
\draw(4.32, -16.93) node[anchor=south west] {$l_i$};
\draw[draw={rgb,255:red,0; green,0; blue,0}](13.5,-5.7)--(13.5,0.3);
\draw(12.34, -0.81) node[anchor=south west] {$l^k$};
\draw[draw=none,fill={rgb,255:red,28; green,36; blue,31},thin](13.5, -5.7) ellipse (0.1cm and 0.1cm);\draw[draw=none,fill={rgb,255:red,28; green,36; blue,31},thin](5.5, -13.7) ellipse (0.1cm and 0.1cm);
\end{tikzpicture}
  \caption{The component $\mathcal D \subset \mathbb{R}^2$ (shaded) and the embedded binding path $\embed{r}+\vpij$. 
  \subl{lem:P0j is not in D} asserts that no tile of $\sigma \cup \asm{P_{0,1,\ldots, j}}$ is in $\mathcal D$ and 
  \subl{lem:d} asserts that $\embed{r}+\vpij$ is entirely contained in $\mathcal D$.}\label{fig:shield-d}
\end{figure}

\enlargethispage{-2\baselineskip}
\begin{sublemma}\label{lem:P0j is not in D}
  $\dom{\sigma \cup \asm{P_{0,1,\ldots, j}}} \subset (\R^2 \setminus \mathcal{D})$.
\end{sublemma}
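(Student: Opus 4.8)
The plan is to mimic the proof of \subl{lem:c} almost verbatim, replacing $c$ with $d$, the index $i$ with $j$, and the ray $l^i$ with $l^j$. First I would observe that since $\glu P j$ points east, $\pos{P_j}$ lies on the strict left-hand side of the curve $d$, hence $\pos{P_j}\notin\mathcal D$. Next, since $P$ is a producible path, $\sigma\cup\asm{P_{0,1,\ldots,j}}$ is a connected assembly whose domain is a connected subset of $\Z^2$; therefore if even one of its tiles were in $\mathcal D$, then the connected set $\dom{\sigma\cup\asm{P_{0,1,\ldots,j}}}$ (together with $\embed{P_{0,1,\ldots,j}}$) would have to cross the cut $d$ to get from $\pos{P_j}$ (outside $\mathcal D$) to that tile (inside $\mathcal D$). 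So it suffices to rule out every possible intersection of $\sigma\cup\asm{P_{0,1,\ldots,j}}$ (as positions, embedded path, and glue segments) with each of the five constituent curves of $d = \concat{\reverse{l^j}, \gs{l^j(0)}{\pos{P_{j+1}}}, \embed{P_{j+1,j+2,\ldots,k}}, \gs{\pos{P_k}}{l^k(0)}, l^k}$.

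The case analysis runs as follows. (i) If $\embed{P_{0,1,\ldots,j}}$ intersects $\embed{P_{j+1,j+2,\ldots,k}}$, or either of the two length-$1/2$ horizontal segments $\gs{l^j(0)}{\pos{P_{j+1}}}$ or $\gs{\pos{P_k}}{l^k(0)}$, this contradicts the fact that $P$ is simple (note the segment $\gs{l^j(0)}{\pos{P_{j+1}}}$ meets $\embed P$ only at $\pos{P_{j+1}}$, and $\gs{\pos{P_k}}{l^k(0)}$ only at $\pos{P_k}$). (ii) If $\dom\sigma$ intersects $\dom{\asm{P_{j+1,j+2,\ldots,k}}}$, this contradicts the fact that $P$ is a producible path (its assembly does not intersect $\sigma$). (iii) If $\dom{\sigma\cup\asm{P_{0,1,\ldots,j}}}$ has a tile position or an abutting-tile glue segment lying on the ray $\reverse{l^j}$ (to the south of $\glu P j$) or on the ray $l^k$ (to the north of $\glu P k$), this contradicts that $\glu P j$ is visible from the south and $\glu P k$ is visible from the north, relative to $P$. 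Since $d$ is composed exactly of these five pieces, all cases lead to a contradiction, so $\dom{\sigma\cup\asm{P_{0,1,\ldots,j}}}$ is disjoint from $\mathcal D$, i.e.\ $\dom{\sigma\cup\asm{P_{0,1,\ldots,j}}}\subset\R^2\setminus\mathcal D$.

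I do not expect a genuine obstacle here: the statement is a routine re-run of \subl{lem:c} with $d$ (which was already argued to be simple, cut the plane via Theorem~\ref{thm:infinite-jordan}, and lie inside $\mathcal C$) in place of $c$. The only point requiring a line of care is that $\glu P j$ is visible from the south relative to $P$ (which is Hypothesis~\ref{lem:hp:shield ij} of Definition~\ref{def:shield}, so immediate) and that $P_{0,1,\ldots,j}$ and $P_{j+1,\ldots,k}$ are disjoint and $\sigma$ is disjoint from $\asm P$ — all of which are part of the standing hypotheses. One should also note explicitly that it is enough to place a single tile of $\sigma\cup\asm{P_{0,1,\ldots,j}}$ in $\mathcal D$ to derive the contradiction, because the whole domain is connected and contains $\pos{P_j}\notin\mathcal D$, so connectivity forces a crossing of the Jordan-type curve $d$.
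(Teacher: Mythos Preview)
Your proposal is correct and takes essentially the same approach as the paper: the paper's proof simply says to rerun the argument of \subl{lem:c} with $j,d,\mathcal D$ substituted for $i,c,\mathcal C$, which is precisely what you have spelled out in detail.
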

\begin{proof} Similar argument to \subl{lem:c}, but respectively substitute $j,d,\mathcal{D}$ for $i,c,\mathcal{C}$ in the proof of \subl{lem:c}.
\end{proof}

\begin{sublemma}
  \label{lem:d}
  $\embed{r}+\vect{P_iP_j}$ is entirely in $\mathcal{D}$.
\end{sublemma}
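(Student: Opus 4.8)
Since the text following Definition~\ref{def:d} already establishes, via Theorem~\ref{thm:infinite-jordan}, that $d$ cuts $\R^2$ into $\mathcal D$ (its right-hand side, including $d$) and the open left-hand side $\R^2\setminus\mathcal D$, the plan is to show that $\embed{r}+\vpij$ has \emph{no} point strictly on the left-hand side of $d$. This suffices: $\embed{r}+\vpij$ is a connected curve, and its first vertex is $r_0+\vpij=\pos{P_{i+1}}+\vpij=\pos{P_{j+1}}$ (using that $\glu P i$ and $\glu P j$ both point east, so $\pos{P_{i+1}}=\pos{P_i}+(1,0)$ and $\pos{P_{j+1}}=\pos{P_j}+(1,0)$), which lies on $\embed{P_{j+1,j+2,\ldots,k}}$ since $j<k$, hence on $d\subseteq\mathcal D$; a connected curve that starts in $\mathcal D$ and never meets the open left-hand side must stay in $\mathcal D$.

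Two containments drive the argument. First, $\embed{r}\subseteq\mathcal C$ (immediate from the construction of $r$), so translating by $\vpij$, the curve $\embed{r}+\vpij$ lies in $\mathcal C+\vpij$, the right-hand side of the translated cut $c+\vpij$. Here one uses that (a) the first two pieces of $d$, namely $\reverse{l^j}$ and $\gs{l^j(0)}{\pos{P_{j+1}}}$, are exactly the $\vpij$-translates of the first two pieces $\reverse{l^i}$ and $\gs{l^i(0)}{\pos{P_{i+1}}}$ of $c$ — because $\pos{\glu P j}=\pos{\glu P i}+\vpij$ (both glues point east and differ by $\vpij$) and $\pos{P_{i+1}}+\vpij=\pos{P_{j+1}}$ — and that (b) translating \subl{lem:c-lk} by $\vpij$ (so that $l^k+\vpji$ becomes $l^k$) yields $l^k\cap(\mathcal C+\vpij)\subseteq\{l^k(0)\}$, controlling $\embed{r}+\vpij$ near the northern ray of $d$. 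Second, every edge of the binding path $r$ belongs to the edge set $E$ — the extra edge $\{\pos{P_i},\pos{P_{i+1}}\}$ of $G'$ is unusable, since $\pos{P_i}$ has no other neighbour in $G'$ and $r$ (simple, with $r_0=\pos{P_{i+1}}$) cannot return to $\pos{P_{i+1}}$ — and the edges of $E$ are unit segments of either $\embed{P_{i+1,i+2,\ldots,k}}$ or $\embed{P_{j+1,j+2,\ldots,k}}+\vpji$, so $\embed{r}\subseteq\embed{P_{i+1,i+2,\ldots,k}}\cup(\embed{P_{j+1,j+2,\ldots,k}}+\vpji)$ and hence $\embed{r}+\vpij\subseteq(\embed{P_{i+1,i+2,\ldots,k}}+\vpij)\cup\embed{P_{j+1,j+2,\ldots,k}}$.

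Now, for contradiction, assume some point of $\embed{r}+\vpij$ lies strictly on the left-hand side of $d$, and let $p$ be the first such point in curve order; by minimality the curve approaches $p$ from $\mathcal D$, so there is a point $q^\star$ of $\embed{r}+\vpij$ on $d$ arbitrarily shortly before $p$. One eliminates each location of $q^\star$ among the five sub-curves composing $d$. If $q^\star$ is on $\reverse{l^j}$ or $\gs{l^j(0)}{\pos{P_{j+1}}}$, then near $q^\star$ the curve $d$ agrees with $c+\vpij$, so ``strictly left of $d$ near $q^\star$'' means ``strictly left of $c+\vpij$'', contradicting $\embed{r}+\vpij\subseteq\mathcal C+\vpij$. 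If $q^\star$ is on $l^k$, the translated \subl{lem:c-lk} forces $q^\star=l^k(0)$, and then (by the second containment, since $\embed{P_{j+1,j+2,\ldots,k}}$ does not meet $l^k$) the portion of $\embed{r}+\vpij$ near and beyond $q^\star$ lies on $\embed{P_{i+1,i+2,\ldots,k}}+\vpij$; a local inspection of $c+\vpij$ at the corner $l^k(0)$ together with $\embed{r}+\vpij\subseteq\mathcal C+\vpij$ again gives a contradiction, and the short segment $\gs{\pos{P_k}}{l^k(0)}$ is handled in the same vein. The remaining, and crucial, case is $q^\star$ on $\embed{P_{j+1,j+2,\ldots,k}}$: by the second containment, near $q^\star$ the points of $\embed{r}+\vpij$ lie either on $\embed{P_{j+1,j+2,\ldots,k}}$ itself (hence on $d$, not strictly to its left) or on $\embed{P_{i+1,i+2,\ldots,k}}+\vpij$; one then combines $\embed{r}+\vpij\subseteq\mathcal C+\vpij$ with the monotonicity of $r$ along $P_{i+1,i+2,\ldots,k}$ (\subl{lem:order}, pulled back by $\vpji$) and the fact that $r$ is the most right-priority binding path of $\mathbb S$ to conclude that at $q^\star$ the curve $\embed{r}+\vpij$ can only leave $\embed{P_{j+1,j+2,\ldots,k}}$ towards the right-hand (i.e.\ $\mathcal D$) side — the required contradiction.

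The main obstacle is precisely this last case: it is the only one in which $d$ is not locally a sub-curve of $c+\vpij$, so the translation of $\embed{r}\subseteq\mathcal C$ gives nothing by itself. One must reconcile $\embed{P_{j+1,j+2,\ldots,k}}$ (a piece of $d$) with $\embed{P_{i+1,i+2,\ldots,k}}+\vpij$ (which $\embed{r}+\vpij$ may partly follow), and since $P$ is not yet known to be pumpable these two curves can genuinely interleave; the only leverage is the structural control on $r$ coming from its definition as the right-priority path of $\mathbb S$ that stays inside $\mathcal C$, together with \subl{lem:order}. Figure~\ref{fig:shield-d} is the picture to keep in mind while carrying this out.
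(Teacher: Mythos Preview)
Your overall plan matches the paper's: both argue by contradiction, assuming $\embed r+\vpij$ leaves $\mathcal D$, and both case-split on which of the five pieces of $d$ the departure happens. Your handling of the $l^j$-piece via the containment $\embed r+\vpij\subseteq\mathcal C+\vpij$ is a slightly different packaging than the paper's (which argues directly from visibility that $\embed r$ never meets $l^i$, hence $\embed r+\vpij$ never meets $l^j$), but both work.

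The genuine gap is the crucial case where the departure is on $\embed{P_{\range{j+1}{j+2}{k}}}$. You correctly flag it as the main obstacle and name the right ingredients (right-priority of $r$, \subl{lem:order}), but ``one then combines'' is not a proof, and the containment $\embed r+\vpij\subseteq\mathcal C+\vpij$ gives nothing here since $d$ is not locally a piece of $c+\vpij$. The paper's argument for this case is long and delicate: it introduces an auxiliary curve $\rho=\concat{l^i,\gs{l^i(0)}{r_0},\embed r,\gs{r_{|r|-1}}{l^k(z)},l^k_z}$, locates indices $a,b$ with $r_b+\vpij=\pos{P_a}$ at which the left turn occurs, first proves $a<k$, then shows that $\embed{P_{\range{j+1}{j+2}{k}}+\vpji}$ must enter the strict right-hand side of $\rho$ (this alone splits into several subcases depending on $r_{b-1}$), and finally---since $l^k(0)+\vpji$ is not in that strict right-hand side by \subl{lem:c-lk}---shows that $\embed{P_{\range{a'}{a'+1}{k}}+\vpji}$ must re-cross $\rho$; each possible crossing location is then eliminated, the ones on $\embed r$ by explicitly building a binding path $r'\in\mathbb S$ that is strictly more right-priority than $r$ (or extends $r$ to a higher endpoint), contradicting the definition of $r$. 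None of this follows from \subl{lem:order} or the right-priority of $r$ by a one-line ``combination''; the right-priority property has to be invoked \emph{constructively}, by exhibiting competing paths. Your treatment of the $l^k$/$\gs{\pos{P_k}}{l^k(0)}$ case (``a local inspection'') is likewise thinner than the paper's, which splits on whether $\glu P k$ points east or west and, in the west case, uses \subl{lem:order} to derive a contradiction about the orientation of a shared segment.
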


\begin{proof}
  Assume otherwise, for the sake of contradiction. Since $r_0+\vpij = \pos{P_{j+1}}$ is on the border $d$ of $\mathcal D$ and thus in $\mathcal D$, this would mean that $\embed r +\vpij$ turns left from~$d$ (see Definition~\ref{def:curve turn} for one curve turning from another). That left turn is on one of the five curves that define $d$ (Definition~\ref{def:d}), which we handle in four cases:

  \begin{enumerate}
  \item $\embed r +\vpij$ turns left from $d$ at a position on~$\reverse{l^j}$:  But since $\glu P i$ and $\glu P j$ are both visible relative to $P$, and $r$ is made of positions taken from $P_{i+1,i+2,\ldots,k}$ and $P_{j+1,j+2,\ldots,k}+\vpji$, this implies $\embed{r}$ does not intersect the visibility ray $l^i$ of $\glu P i$, and hence $\embed{r}+\vpij$ does not intersect $l^j$. Thus having the left turn be from $\reverse{l^j}$ is a contradiction.

\newcommand{\lkz}{\ensuremath{l^k_z}}
\item \label{lem:d:proof:case:Pjk} $\embed r +\vpij$ turns left from $d$ at a position on  $\embed{P_{\range {j+1}{j+2}k}}$. Since this case is rather involved, we split it into a number of paragraphs, and each paragraph header gives the core argument being proven in the paragraph.
  We first define the curve $\rho$ as:
  $$
  \rho = \concat{l^i,\gs{l^i(0)}{r_0} , \embed{r} , \gs{ r_{|r|-1} }{ \lkz(0)} , \lkz}
  $$
  where $l^k_z = l^k([z,+\infty[)$ is the ray to the north starting at  $ \lkz(0) = l^k(z) \in\mathbb{R}^2$   which is the point of $l^k$ at the same y-coordinate as $r_{|r|-1}$. 
    
  \paragraph{Setup: Indices $a$ and $b$ such that $\pos{P_a} = r_b+\vpij$ and $\embed{r_{\range 0 1 {b+1}}}+\vpij$ turns left from~$d$.}
  First, $\embed r$ turns left from $d+\vpji $ at a position on 
  $\embed{P_{\range {j+1}{j+2}k} +\vpji }$.
The curve $\rho$ is a simple infinite almost-vertical polygonal curve (Definition~\ref{def:simple infinite almost-vertical polygonal curve}) -- in particular, it is simple because its five component curves intersect each other only at their endpoints in the order given.
Using Definition~\ref{def:curve LHS RHS},
let $\rhsr \subsetneq \mathbb R$ be the right-hand side of $\rho$, and 
let $\srhsr$ denote the strict right-hand side 
of~$\rho$.

Since we are in Case~\ref{lem:d:proof:case:Pjk}, $\embed r$ has a point in the strict left hand side (Definition~\ref{def:curve LHS RHS}) of the simple infinite almost-vertical polygonal curve $d+\vpji$, with the left turn being on \embed{P_{\range {j+1}{j+2}k}}.
 Hence, let $b \in \{\range{0}{1}{|r|-1} \}$ be the smallest integer such that 
 there is an $a\in \{\range{j+1}{j+2}{k}\}$ where $r_b+\vpij = \pos{P_a}$ 
 and $\embed{ r_{0,1,\ldots,b+1} } +\vpij$ has a point on the strict left hand side of curve $d$.
 Hence, $r_{0,1,\ldots,b}$ is entirely in the right hand side of $d + \vpji$, 
 and the point $\hp{r}{b}{b+1}$ is in the strict left hand side of $d + \vpji$.  

 \paragraph{We claim that $a<k$.}
 Suppose for the sake of contradiction that $a=k$. Then $r_b=\pos{P_k}+\vpji$. 
Since $\embed{r}$ makes a left turn from $d+\vpji$ at $\pos{P_k}+\vpji$, 
and since $\embed{r_{0,1,\ldots,b}}$ is on the right hand side of $d+\vpji$, 
$\embed{r}$ makes a left turn from the curve
   $$
   \rho' = \concat{l^i,\gs{l^i(0)}{r_0} , \embed{r_{0,1,\ldots,b}}, \gs{ \pos{P_k}+\vpji }{ l^k(0) +\vpji}, l^k+\vpji}  
   $$
   at the position $r_b$.
 By the definition of $r$,  we know that $r_{b+1,b+2,\ldots,|r|-1}$ has its last point within horizontal distance $\pm 0.5$ of the ray $l^k$. 
But any such position is in the strict right hand side of $\rho'$ because:
 (a) since $i<j$, $l^k$ is at least distance $1$ to the east of $l^k+\vpji$, 
  (b) since $i<k$ (because $i< j \leq k$), and by visibility, $l^k$ does not intersect $l^i$, and
 (c) $\embed{r}$ does not intersect $l^k$ by definition of $r$.  Hence $\embed{r_{b+1,b+2,\ldots,|r|-1}}$ intersects $\rho'$, 
 but that intersection can not happen along $l^i$ (by its definition $\embed{r}$ does not intersect $l^i$)
 nor along $\embed{r_{0,1,\ldots,b}}$ (because $r$ is simple), 
 nor at $l^k(z) + \vpji$ for all $z\in\mathbb{R}, z>0$ (because, by Definition~\ref{def:shield}, the curves $\embed{P_{i+1,i+2,\ldots,k}}$ and $\embed{P_{j+1,j+2,\ldots,k}}+\vpji$ from which $\embed{r}$ is composed do not touch $l^k(z) + \vpji$),
 nor $\gs{ \pos{P_k}+\vpji }{ l^k(0) +\vpji} $because $r$ turns left from $d+\vpji$ at $r_b= \pos{P_k}+\vpji$. 
 Thus we get a contraction. 
Hence  $a<k$ as claimed.

\paragraph{We claim that $\embed{ P_{j+1,j+2,\ldots,k}+\vpji }$ has a point in $\srhsr$ (the strict right hand side of $\rho$).}

We first treat the case $b=0$ as a special case: in this case, $r_b=r_0 = \pos{P_{j+1}} +\vpji $ and, by the statement of the case we are in (Case~\ref{lem:d:proof:case:Pjk}), the point $\hp{P}{j+1}{j+2}+\vpji$ is in the right hand side of $\rho$.
The point $\hp{P}{j+1}{j+2}+\vpji$ is not on $\rho$, because $\hp{r}{0}{1}\neq \hp{P}{j+1}{j+2}+\vpji$ (because of the turn at $r_0$)
and because if it were $r$ would not be simple (there would be an intersection of $r_{1,2,\ldots,|r|-1}$ with $r_0$). 
Hence  $\hp{P}{j+1}{j+2}+\vpji$ is in the strict right hand side of~$\rho$.
Thus if $b=0$, we are done with our current argument, i.e. $\embed{ P_{j+1,j+2,\ldots,k}+\vpji }$ has a point in $\srhsr$.

Else $b>0$. 
There are two possibilities for $r_{b-1}$. First, suppose  
 $r_{b-1} = \pos{P_{a-1}}+\vpji$. 
 Since $r_{b-1,b} = \pos{P_{a-1,a}}+\vpji$, we get that $r_{b-1,b,b+1}$ turns left from\footnote{\label{fn:a b > 0}We've already shown that $a<k$, hence the tile $P_{a+1} +\vpji$ is a tile of $P_{j+1,j+2,\ldots,k} +\vpji$ and thus is well-defined. Also, $b>0$ implies $a \neq j+1$ which in turn implies $a>j+1$, hence $P_{a-1} +\vpji$ is a tile of $P_{j+1,j+2,\ldots,k} +\vpji$ and thus is well-defined.} 
 $\pos{P_{a-1,a,a+1}+\vpji}$ and thus 
 the point $\hp{P}{a}{a+1}+\vpji$ is in $\srhsr$ 
 and we are done with our current argument, i.e. that $\embed{ P_{j+1,j+2,\ldots,k}+\vpji }$ has a point in $\srhsr$.

Otherwise, we have $r_{b-1} \neq \pos{P_{a-1}}+\vpji$, which we split into two cases:
\begin{itemize} 
\item $r_{b-1} \neq \pos{P_{a+1}}+\vpji$. 

\newcommand{\Rset}{\mathbb{R}}
In this case, the four positions around $P_a+\vpji=r_b$ are each occupied by one of the following tiles $P_{a-1}+\vpji$, $P_{a+1}+\vpji$, $r_{b-1}$ and $r_{b+1}$ (since $r$ and $P$ are simple and by the hypothesis of this case). Consider the enumeration of these four tiles around $P_a+\vpji=r_b$ in clockwise order starting with $P_{a-1}+\vpij$. By definition of $b$, $r_{b+1}$ comes before $P_{a+1}+\vpji$ and by minimality of $b$, $r_{b-1}$ cannot come before $P_{a+1}+\vpij$, then the four tiles are ordered as follow: $$P_{a-1}+\vpji, r_{b+1}, P_{a+1}+\vpji,r_{b-1}$$ which means that $P_{a-1,a,a+1}+\vpji$
turns right form $r$ at $P_a+\vpji=r_b$.

Then, the point $\hp{P}{a}{a+1}+\vpji$ is on the strict right hand side of $\rho$ (note in particular that $\hp{P}{a}{a+1}+\vpji$ is not on $l^k_z$, for else we would get an extension of $r$ that would be in $\overline{\mathbb S}$ and end higher than $r$, contradicting the fact that by definition, $r\in\mathbb S$).
Thus $\embed{ P_{j+1,j+2,\ldots,k}+\vpji }$ has a point in $\srhsr$, which is the claim of this paragraph.

\item $r_{b-1} = \pos{P_{a+1}}+\vpji$.
There are two cases:

\begin{list}{$\ast$}{} 
\item $b - (k-a) > 0$.
We claim that  $\pos{ P_{a,a+1,\ldots,k} + \vpji } = r_{b,b-1,\ldots,b - (k-a)}$ (intuitively, we are claiming that $r_{b,b-1,\ldots,b - (k-a)}$ tracks ``backwards'' along  $\pos{ P_{a,a+1,\ldots,k} + \vpji }$, without either turning from the other). 
Since $r_{0,1,\ldots b}$ is on the right hand side of $d+\vpji$, $\embed{r_{b,b-1,\ldots,b - (k-a)}}$ has all its points on the right hand side of $d+\vpji$ (and none on the strict left hand side of $d+\vpji$).
Hence if  $\embed{r_{b,b-1,\ldots,b - (k-a)}}$ turns from $d+\vpji$, it turns to the right from $d+\vpji$ (and in particular has points in the strict right hand side of $d+\vpji$). 

But this means that $d+\vpji$ has points on the strict left hand side of $\reverse{\rho}$, or in other words $d+\vpji$ has points
on the strict right hand side of $\rho$. This yields the claim of this paragraph.

\item $b - (k-a) \leq 0$. There are two cases, depending on whether or not  $r_{b,b-1,\ldots,0}$  turns from $\pos{ P_{a,a+1,\ldots,k} + \vpji }$. First, if $r_{b,b-1,\ldots,0}$  turns (right or left) from $\pos{ P_{a,a+1,\ldots,k} + \vpji }$, then
by using the same argument as the previous bullet, we get the claim of this paragraph.

Finally, assume for the sake of contradiction, that $r_{b,b-1,\ldots,0}$ does not turn from $\pos{ P_{a,a+1,\ldots,k} + \vpji }$. We have $r_0 = P_{m}+\vpji$ for some $m\in \{ a,a+1,\ldots,k \}$.
Since we are in a case where $b>0$, we also get\footnoteref{fn:a b > 0} that $j+1 < a $,
hence $j+1 \neq m$.
But, by the definition of $r$, $r_0 =  \pos{P_{j+1}}+\vpji$, hence $j+1= m$, which is a contradiction, hence $r_{b,b-1,\ldots,0}$ must turn from $\pos{ P_{a,a+1,\ldots,k} + \vpji }$ we get a contradiction.
\end{list}
\end{itemize}

Hence we have proven that $\embed{ P_{j+1,j+2,\ldots,k}+\vpji }$ has a point in $\srhsr$.

\paragraph{$P_{\range{j+1}{j+2}k}+\vpji$ does not intersect $\rho$ again after entering $\srhsr$, yielding a contradiction.}
We have already proven that 
$\embed{ P_{j+1,j+2,\ldots,k}+\vpji }$ has a point in $\srhsr$ (previous paragraph). 
Using that fact, let $a' \in \{j+1,j+2,\ldots, k-1\}$, $b' \in \{0,1,\ldots,|r|-1\}$ be so that $b'$ is the smallest integer such that 
$r_{b'} = \pos{P_{a'}} + \vpji$ 
and $\pos{P_{j+1,j+2,\ldots,a'+1}} + \vpji$ turns right from $\rho$ at position $r_{b'}$. Hence $\pos{\glue{P}{a'}{a'+1}+\vpji}$ is in $\srhsr$ (since $\embed{P_{j+1,j+2,\ldots,a'+1}} + \vpji$ cannot intersect with $l^i$ and $l^k$).

Claim~\ref{lem:c-lk},  states that
$
 (( l^k+\vpji)  \cap \mathcal{C} ) 
  = 
 ( (l^k+\vpji) \cap c )
 \subseteq 
 \{l^k(0)+\vpji \}
 $ 
which in turn implies that the point $\pos{\glu P k}+\vpji = l^k(0)+\vpji$ is not in $\mathcal C\setminus c$, and therefore not in $\mathcal R\setminus c$ (recall that $\mathcal R \subset \mathcal C$).
Therefore, $\embed {P_{\range{a'}{a'+1}{k+1}}+\vpji}$ intersects $\rho$
for some real number $t>0$ at a point $Z = \embed {P_{\rng {a'} k}+\vpji}(t) \in\mathbb{R}^2$.
\footnote{Remember that $\embed {P_{\range{a'}{a'+1}{k+1}}+\vpji}$ has domain $[0,k+1-a']$.}
We analyse where such an intersection point $Z$ might occur on $\rho$:
\begin{itemize}
  \item $Z$ is not on $l^i$ nor $\gs{l^i(0)}{r_0}$:  
    Since $j+1\leq a'$, the curve $\embed{P_{\rng {a'} k}}$ does not intersect $l^j$ by visibility, and thus $\embed{ P_{\rng {a'} k}+\vpji}$ does not intersect $l^i$. 
    Also, $Z$ is not on $\gs{l^i(0)}{r_0} = \gs{l^i(0)}{\pos{P_{j+1}+\vpji}}$, since $j+1\leq a'$, $t>0$ and $P$ is simple.
  \item If the first such intersection point $Z$ along $\embed{P_{\rng {a'} k}+\vpji}$ (i.e., smallest $t>0$)
   is with $l^k$ or $\gs{r_{|r|-1}}{l^k_z(0)}$,
    then let $r' = r_{\range 0 1 {b'}}\,\pos{P_{\range {a'+1}{a'+2}{g}}+\vpji}$ where the index $g \in \{ a'+1,a'+2,\ldots,k \}$ is defined such that $\gs{\pos{P_g}}{\pos{P_{g+1}}} + \vpji$ intersects $\gs{\pos{P_k}}{l^k(z)}\cup l^k$.

    Then, by the definition of $a'$ and $b'$, either $r$ is a prefix of the binding path $r'$ (if $b'=|r|-1$) or $r'$ turns right from $r$ at $r_{b'}=\pos{P_{a'}+\vpji}$ (if $b'<|r|-1$). In the former case, this contradicts the definition of $r$, since then $r'\in\mathbb S$ and $r\not\in\mathbb S$. In the latter case, this contradicts that $r$ is the most right-priority path of $\mathbb S$, since $r'\in\mathbb S$ is more right-priority.

  \item Else, the first (i.e., smallest $t > 0$) such intersection point $Z$ along $\embed{P_{\rng {a'} k}+\vpji}$ is on $\embed{r}$, in other words
   $P_{\range{a'+1}{a'+2}{k}}+\vpji$ has a position on $r$.
	 Since $\pos{P_{a'}} +\vpji = r_{b'}$, we know that  $P_{\range{a'+1}{a'+2}{k}}+\vpji$ does not intersect $r_{b'}$, and we have two subcases: 
        \begin{itemize}
    \item If $P_{\range {a'+1}{a'+2}{k}}+\vpji$ intersects $r_{\range {b'+1}{b'+2}{|r|-1}}$, let 
    $g \in \{ \range{a'+1}{a'+2}{k} \}$ be the smallest index such that $\pos{P_g+\vpji} = r_h$ for some $h \in \{ \range{b'+1}{b'+2}{|r|-1} \}$.

      Then $r' = r_{\range 0 1 {b'}}\, \pos{ P_{\range {a'+1}{a'+2}{g-1}}+\vpji}\, r_{\range{h}{h+1}{|r|-1}}$ 
      turns right from $r$ by definition of $a'$ and $b'$.\footnote{In the special case of $g=a'+1$, 
      although $\pos{ P_{\range{a'+1}{a'+2}{g-1}}+\vpji}$ is empty, 
      the segment $\gs{r_{b'}}{r_h}$ is not entirely on $\embed{r}$ (only its endpoints are). 
      This follows from 
      the fact that $b'<h$, i.e., $b' \neq h$ (due to the existence of the turn from $r$ at $r_{b'}$).
      Moreover,  $\gs{r_{b'}}{r_h}$ is entirely in $\mathcal R \subset \mathcal C$ 
      since $r_{b'-1,b',h}$ is a right turn from $r$ at $r_{b'}$.}
      This right turn contradicts the definition of $r$ as the most right-priority binding-graph path that starts at $r_0$, ends adjacent to $l^k$ ($r_{|r|-1}$ is horizontal distance $\pm 0.5$ from $l^k$), and whose embedding stays entirely in~$\mathcal C$.

    \item If $P_{\range {a'+1}{a'+2}{k}}+\vpji$ intersects $r_{\range 0 1 {b'-1}}$,
      then let $g \in \{ a'+1,a'+2, \ldots, k\}$ be the smallest index such that $\pos{P_g +\vpji}= r_h$ for some index $h \in \{0,1,\ldots,b'-1\} $, and let
      $r' = r_{\range 0 1 h}\pos{ (P_{\range{a'+1}{a'+2}{g-1}}+\vpji)^\leftarrow}r_{\range {b'} {b'+1} {|r|-1}}$.\footnote{Here the ``reverse arrow'' notation applied to a path denotes the path in reverse order, i.e. $ (P_{\range {a'}{a'+1}{g-1}})^\leftarrow+\vpji = P_{g-1}  P_{g-2}  \ldots P_{a'+1} P_{a'} +\vpji $. 
      }
      \footnote{In the special case of $g=a'+1$, 
      although $\pos{ (P_{\range{a'+1}{a'+2}{g-1}}+\vpji)^\leftarrow}$ is empty, 
      the segment $\gs{r_h}{r_b}$ is not entirely on $\embed{r}$ (although its endpoints are).
      This follows from the fact that $h<b'$, i.e., $h \neq b'$ (due to the existence of the turn from $r$ at $r_h$). 
            Moreover, the segment $\gs{r_h}{r_{b'}}$ is entirely in $\mathbb R \subset \mathbb C$ since $r_{h-1,h,{b'}}$ is a right turn from $r$ at $r_h$.}

      Since $\pos{\glue P {a'} {a'+1}+\vpji}\in\mathcal R\setminus \embed{r}$, then $r'$ turns right from $r$. 
      Also, $\embed{r'}$ is in $\mathcal R\subseteq \mathcal C$, since $\rev{\embed{P_{\range {a'+1}{a'+2}{g-1}}+\vpji}}$  does not intersect $\embed{r}$ and all other points of $\embed{r'}$ are in $\mathcal R$ since they are taken from $\embed{r}$ (plus two length 1 line segments that are in $\mathcal R$). 
      Moreover, $r'$ starts and ends at the same positions as $r$ does, which contradicts the definition of $r$ as the most right priority such path.
    \end{itemize}
  \end{itemize}

  In all cases we contradict the existence of the intersection point $Z$ on $\rho$, and hence
  $\embed r +\vpij$ does not turn left from $d$ at a position on  $\embed{P_{\range {j+1}{j+2}k}}$.

   \item $\embed r +\vpij$ turns left from $d$ at a position on~$\gs{l^j(0)}{\pos{P_{j+1}}}$: since $r$ is a binding path in graph $G$, this can only happen at a tile position, i.e., at $\pos{P_{j+1}}$, and this case was already covered in  Case~\ref{lem:d:proof:case:Pjk}.

  \item\label{r+PiPj is in D: case 4 of DW} $\embed r +\vpij$ turns left from $d$ at a position on~$\concat{ \gs{\pos{P_k}}{l^k(0)} , l^k }$.
       If $\embed r +\vpij$ turns left from $d$ at~$\pos{P_k}$  this was already covered in Case~\ref{lem:d:proof:case:Pjk}. 
  
  By Hypothesis~\ref{lem:hp:shield backup} of Definition~\ref{def:shield}, $P$ may only intersect $l^k+\vpji$ at $l^k(0)+\vpji$, and 
  by Hypothesis~\ref{lem:hp:shield k} of Definition~\ref{def:shield} $P$ may only intersect $l^k$ at $l^k(0)$.  
 Therefore, $P$ and $P+\vpji$ can only intersect $l^k+\vpji$ at $l^k(0)+\vpji$, and $\embed{r}+\vpij$ can only intersect $l^k$ at $l^k(0)$,
 thus for all $z \in\mathbb{R}, z>0$ the left turn does {\em not} occur at $l^k(z)$.
 
It remains to handle the case of the left turn of $\embed{r}+\vpij$ from $d$ being along the half-open segment $\left( \pos{P_k} , l^k(0)  \right]$. 
  Since $r$ is a binding path in graph $G$,  an intersection of $\embed{r} +\vpij$ and $l^k(0) = \pos{\glu P k}$
  can only occur if $\embed{r} +\vpij$
  contains exactly one of the unit-length horizontal line segments  $\gs { \pos{P_{k+1}}}{\pos{P_k} }$ or $\gs { \pos{P_k}}{\pos{P_{k+1}} }$. 
  There are two cases. 
  \begin{itemize}
    \item If $\glu P k$ points to the east, then 
             $\gs { \pos{P_{k+1}}}{\pos{P_k} }$ is not (does not contain) a left turn from $\concat{ \gs{\pos{P_k}}{l^k(0)} , l^k }$, 
             and $\gs { \pos{P_k}}{\pos{P_{k+1}} }$ is a right turn, from $\concat{ \gs{\pos{P_k}}{l^k(0)} , l^k }$.
             Thus, in this case  we contradict that $\embed r +\vpij$ turns left from $d$ at a point on~$\concat{ \gs{\pos{P_k}}{l^k(0)} , l^k }$.
    \item Else $\glu P k$ points to the west. 
    Since $\gs { \pos{P_{k+1}}}{\pos{P_k} }$ is not (does not contain) a left turn from $\concat{ \gs{\pos{P_k}}{l^k(0)} , l^k }$
    and since 
  $\gs { \pos{P_k}}{\pos{P_{k+1}} }$ is a left turn from $\concat{ \gs{\pos{P_k}}{l^k(0)} , l^k }$, we know that $\embed{r}+\vpij$
  contains $\gs { \pos{P_k}}{\pos{P_{k+1}} }$. Thus  $\embed{r}$ contains the segment $s = \gs { \pos{P_k}}{\pos{P_{k+1}} } + \vpji$. 
  The segment $s$ is entirely in $\mathcal C$ because $\embed{r}$ is (by definition of $r$). 
  We also claim that the segment $s$ has no points in $\mathcal C\setminus c$: 
        By Claim~\ref{lem:c-lk},  
      the midpoint of $s$ is not in $\mathcal C \setminus c$, 
      and hence that midpoint is on $c$.
   The segment $s$ is not on $l^i$ nor $l^k$ (because $\embed{r}$ does not intersect $l^i$ nor $l^k$).
  Hence the midpoint of $s$ is on $\embed{P_{i+1,i+2,\ldots,k}}$, 
  a curve whose unit length horizontal segments start and end at integer coordinates and   
  thus $s$ has all of its points on $c$.
    By \subl{lem:order}, $s$ (a segment of $\embed{r}$) and $\embed{P_{i+1,...,k}}$ have the same curve-direction (points along both curves occur in the same order).
    
  We claim that  $s$ being a segment of $\embed{P_{i+1,i+2,\ldots,k}}$ yields a contradiction: 
Since $s$ is a horizontal segment ``pointed'' to the west, $l^k+\vpji$ is a right turn from $s$.  
But  $l^k(z)+\vpji$, for all $z\in\mathbb{R}, z>0$ is on the left hand side of $c$, and thus   $l^k+\vpji$ is a left turn from $\embed{P_{i+1,i+2,\ldots,k}}$, yielding the claimed contradiction.
Thus $\embed r +\vpij$ does not turn left from $d$ at a point on~$\concat{ \gs{\pos{P_k}}{l^k(0)} , l^k }$.
  \end{itemize}
 
     \end{enumerate}
Each of the four cases contradicts the claim that $\embed r +\vpij$ turns left from $d$.  
\end{proof}

\paragraph{The path $R$.}
We now define a path $R$ capable of growing in two translations, as stated in the following \sublname:

\begin{sublemma}
  \label{lem:r}
  There is a path $R$ such that all of the following hold:
  \begin{itemize}
  \item $P_{\range 0 1 i}R$ is a producible path, and
  \item $P_{\range 0 1 j}(R+\vpij)$ is a producible path, and
  \item Exactly one of the following is the case:
    \begin{itemize}
    \item $\pos R = r$, and $R$ does not conflict with
      $P_{i+1,i+2,\ldots,k}$ nor $P_{j+1,j+2,\ldots,k} + \vpji$.
    \item $\pos R$ is a strict prefix of $r$ and $R$ or $R+\vpij$ (or both) conflict with $P$, meaning in particular that $P$ is fragile.
    \end{itemize}
  \end{itemize}
\end{sublemma}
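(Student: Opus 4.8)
The plan is to build $R$ by \emph{tiling} the binding path $r$ greedily, position by position, choosing at each position the tile type forced by whichever of $P_{i+1,i+2,\ldots,k}$ or $P_{j+1,j+2,\ldots,k}+\vpji$ passes through it, and to show that the only way this can get stuck is to exhibit a conflict between $R$ (or $R+\vpij$) and $P$. For a position $p$ on $r$, call a tile type a \emph{candidate for $p$} if it is $\type{P_d}$ for some $d\in\{i+1,\ldots,k\}$ with $\pos{P_d}=p$, or $\type{P_e}$ for some $e\in\{j+1,\ldots,k\}$ with $\pos{P_e}+\vpji=p$; since $r$ is a binding path in the graph $G'$ of Subsection~\ref{subsec:r}, every position of $r$ has at least one candidate. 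I would then set $\pos R$ to be the longest prefix $r_{0,1,\ldots,t}$ of $r$ such that $r_0,\ldots,r_{t-1}$ each have a \emph{unique} candidate, assign those unique candidates to $R_0,\ldots,R_{t-1}$, and assign to $R_t$ the candidate of $r_t$ coming from the family that contains the edge $\{r_{t-1},r_t\}$ (and, if $r$ has a unique candidate at every position, simply take $\pos R=r$).

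First I would check that $R$ is a genuine path: for consecutive $R_{t'-1},R_{t'}$, whichever of $P_{i+1,\ldots,k}$ or $P_{j+1,\ldots,k}+\vpji$ contains the edge $\{r_{t'-1},r_{t'}\}$, the two tiles placed there reproduce a consecutive pair of that path, and consecutive tiles of $P$ interact regardless of orientation; and $R$ is simple because $r$ is. Using Hypothesis~\ref{lem:hp:shield ij} of Definition~\ref{def:shield} (both $\glu P i$ and $\glu P j$ point east and have the same glue type), one gets $\pos{R_0}+\vpij=\pos{P_{j+1}}$, that $R_0$ interacts with $P_i$, and that $R_0+\vpij$ interacts with $P_j$. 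Then $P_{0,1,\ldots,i}R$ is a producible path: $P_{0,1,\ldots,i}$ is producible, $\embed r\subseteq\mathcal C$, and $\dom{\sigma\cup\asm{P_{0,1,\ldots,i}}}\subseteq\R^2\setminus\mathcal C$ by \subl{lem:c}, so $R$ can be grown one tile at a time along $r$ without ever colliding with $\sigma\cup\asm{P_{0,1,\ldots,i}}$ or with itself. Symmetrically, $P_{0,1,\ldots,j}(R+\vpij)$ is producible because $\embed r+\vpij\subseteq\mathcal D$ by \subl{lem:d} and $\dom{\sigma\cup\asm{P_{0,1,\ldots,j}}}\subseteq\R^2\setminus\mathcal D$ by \subl{lem:P0j is not in D}. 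These verifications are routine once the shielding claims are in hand.

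It remains to establish the trichotomy. If every position of $r$ has a unique candidate, then $\pos R=r$ and, by construction, $R$ agrees with $P_{i+1,\ldots,k}$ and with $P_{j+1,\ldots,k}+\vpji$ wherever they are defined, which is the first alternative. Otherwise, let $r_t$ be the first position with two distinct candidates $\type{P_d}\neq\type{P_e}$, so $\pos{P_d}=r_t=\pos{P_e}+\vpji$, equivalently $\pos{P_d}+\vpij=\pos{P_e}$. The edge $\{r_{t-1},r_t\}$ (or, when $t=0$, the requirement that $R_0$ interact with $P_i$) forces $R_t$ to be a candidate of one of the two families, hence either $\type{P_d}$ or $\type{P_e}$: in the former case $R+\vpij$ places $\type{P_d}$ at $\pos{P_e}$, where $P$ has $\type{P_e}$; in the latter $R$ places $\type{P_e}$ at $\pos{P_d}$, where $P$ has $\type{P_d}$. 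Either way there is a conflict with $P$, so $P$ is fragile, which is the second alternative.

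The main obstacle I anticipate is precisely this last dichotomy — in particular, arguing that the first position of $r$ with two candidates has index $t<|r|-1$, so that $\pos R=r_{0,1,\ldots,t}$ really is a \emph{strict} prefix of $r$ exactly when a conflict is produced; I expect this to require exploiting that $r$ is the right-priority path of $\mathbb S$ and that $r_{|r|-1}$ is only horizontal distance $\tfrac12$ from $l^k$. A secondary source of bookkeeping is keeping ``the candidate placed at $r_{t-1}$'' consistent across the two tile families when a single position of $r$ is visited by both $P_{i+1,\ldots,k}$ and $P_{j+1,\ldots,k}+\vpji$, which is handled by the uniqueness of candidates at the earlier positions.
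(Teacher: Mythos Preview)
Your approach is essentially the paper's: tile $r$ greedily, stop at the first position where the two families $P_{i+1,\ldots,k}$ and $P_{j+1,\ldots,k}+\vpji$ disagree, and use \subl{lem:c}, \subl{lem:P0j is not in D}, \subl{lem:d} to show producibility of both $P_{0,\ldots,i}R$ and $P_{0,\ldots,j}(R+\vpij)$.

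The one meaningful difference is where you stop. The paper lets $s_0$ be the first index with a type conflict and takes $|R|=s_0$, i.e.\ $R$ stops \emph{just before} the conflicting position; then $\pos R=r_{0,\ldots,s_0-1}$ is automatically a strict prefix whenever $s_0<|r|$, regardless of whether $s_0=|r|-1$. Fragility is then shown by growing $P_{0,\ldots,i}R$ (resp.\ $P_{0,\ldots,j}(R+\vpij)$) and attaching \emph{one more} tile at $r_{s_0}$, coming from whichever family supplied the edge into $r_{s_0}$; that extra tile is the one that conflicts with $P$. So your anticipated obstacle---that the first conflict might sit at $r_{|r|-1}$---evaporates with this bookkeeping choice, and no appeal to the right-priority property of $r$ or to the position of $r_{|r|-1}$ relative to $l^k$ is needed here. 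Your instinct that this would be the delicate step is reasonable given your indexing, but the fix is just to shift the cut by one.
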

\begin{proof}
  We first define the length $s_0$ of $R$.
  If there is an integer $s\in\{\range 0 1 {|r|-1}\}$ such that
  $r_s=\pos{P_{a}} = \pos{P_{b}+\vpji}$ for some $a\in\{\range{i+1}{i+2}k\}$ and $b\in\{\range{j+1}{j+2}k\}$,
  and $\type{P_a}\neq\type{P_b}$, then let $s_0$ be the smallest such $s$ (and note that, in this case $s_0$ is an index on $r$ and $r_{s_0}$ is the position of the conflict).
  Else, we simply let $s_0 = |r|$\footnote{Note that, in particular, $s_0$ is not an index on $r$ (nor $R$) since the maximum index on $R$ is $|r|-1$}.

  We next define $R$ to be of length $s_0$, and for all $s\in\{\range 0 1 {s_0-1}\}$, we define the tile $R_s$ as follows:
  \begin{itemize}
  \item If there is an index $a\in\{\range{i+1}{i+2}k\}$ such that $r_s = \pos{P_a}$ then we let $R_s = P_a$.
  \item Else, by definition of $r$, there is an index $a$ such that $r_s = \pos{P_{a}}+\vpji$. In this case, we let $R_s=P_a+\vpji$.
  \end{itemize}
  
  \vspace{0.5\baselineskip}
  \noindent Regardless of whether or not $s_0<|r|$, we prove the following two claims:
  \begin{itemize}
  \item First, we claim that $P_{\range 0 1 i}R$ is a producible path:
    \begin{itemize}
    \item We  claim that the glues along $R$ match: indeed, by definition of the graph $G$ in which $r$ is a path, for any $a\in\{\range 0 1 {|r|-2}\}$, there is an edge $\{ r_a, r_{a+1} \}$ in~$G$.
      Moreover, since $\asm{P_{\range {i+1}{i+2}k}}$ and $\asm{P_{\range {j+1}{j+2}k}+\vpji}$ agree on all tiles that happen to share positions of $r_{\range 0 1{s_0-1}}$ the glues along $R$ match.

    \item Moreover, we claim that $P_{\range 0 1 i}R$ is simple and producible: indeed, by definition of $r$, and since the positions of $R$ are exactly $r$ (i.e., $\pos R = r$), $R$ is entirely in $\mathcal C$, and since neither $\sigma \cup \asm{P_{\range 0 1 i}}$  has no tile in $\mathcal C$ (\subl{lem:c}),  $\sigma\cup P_{\range 0 1 i}$ does not conflict with $R$. Finally, 
          since $P_i$ and $R_0 = P_{i+1}$ interact, $P_{\range 0 1 i}R$ is a producible path.
    \end{itemize}
  \item Next, we claim that $P_{\range 0 1 j}(R+\vpij)$ is also a producible path.
    First, we already proved in the previous bullet that $R$ has matching glues and is simple.
    By \subl{lem:d},
        $R$ is in $\mathcal D$ (note that if $\mathcal{D}$ is not correctly defined, \emph{i.e.} $j=k$, then $P$ is pumpable by lemma \ref{lem:if j=k then P pumpable}), and since $\sigma\cup P_{\range 0 1 j}$ is in $\R^2\setminus\mathcal D$ (Claim~\ref{lem:P0j is not in D}), $R+\vpij$ does not intersect $\sigma\cup \asm{ P_{\range 0 1 j}}$.     Moreover this implies that $P_{\range 0 1 j}(R+\vpij)$ is simple. Finally, since $P_j$ and $R_0+\vpij=P_{j+1}$ interact, $P_{\range 0 1 j}(R+\vpij)$ is producible.
    \item Finally, $R$ conflicts with neither $P_{i+1,i+2,\ldots,k}$ and $P_{j+1,j+2,\ldots,k} + \vpji$ since by the definition of $R$ the index $s_0$ (the first index of a potential tile conflict along the positions of $r$) is not an index on $R$.
  \end{itemize}

\vspace{0.5\baselineskip}
  \noindent 
  Then, there are two cases, depending on whether $s_0=|r|$ or $s_0<|r|$:
  \begin{itemize}
  \item If $s_0 = |r|$, we are done, since in this case $\pos R = r$, and we have already proved the other three conclusions of this lemma for the case where we are not showing $P$ to be fragile.
  \item Else $s_0 < |r|$,  and we claim that $P$ is fragile. By the definition of $r_s$ we have that $P_{i+1,i+2,\ldots,k}$ and $P_{j+1,j+2,\ldots,k} +\vect{P_j P_i}$ have a conflict at position $r_{s_0}$.
  There are two cases:

    \begin{figure}[ht]
      \centering
      \begin{tikzpicture}[scale=\scale]\draw[draw={rgb,255:red,200; green,200; blue,200}](4.5,-0.2) rectangle (26.5, -15.2);
\draw[draw={rgb,255:red,200; green,200; blue,200}](4.5,-15.2)--(4.5,-0.2);
\draw[draw={rgb,255:red,200; green,200; blue,200}](5.5,-15.2)--(5.5,-0.2);
\draw[draw={rgb,255:red,200; green,200; blue,200}](6.5,-15.2)--(6.5,-0.2);
\draw[draw={rgb,255:red,200; green,200; blue,200}](7.5,-15.2)--(7.5,-0.2);
\draw[draw={rgb,255:red,200; green,200; blue,200}](8.5,-15.2)--(8.5,-0.2);
\draw[draw={rgb,255:red,200; green,200; blue,200}](9.5,-15.2)--(9.5,-0.2);
\draw[draw={rgb,255:red,200; green,200; blue,200}](10.5,-15.2)--(10.5,-0.2);
\draw[draw={rgb,255:red,200; green,200; blue,200}](11.5,-15.2)--(11.5,-0.2);
\draw[draw={rgb,255:red,200; green,200; blue,200}](12.5,-15.2)--(12.5,-0.2);
\draw[draw={rgb,255:red,200; green,200; blue,200}](13.5,-15.2)--(13.5,-0.2);
\draw[draw={rgb,255:red,200; green,200; blue,200}](14.5,-15.2)--(14.5,-0.2);
\draw[draw={rgb,255:red,200; green,200; blue,200}](15.5,-15.2)--(15.5,-0.2);
\draw[draw={rgb,255:red,200; green,200; blue,200}](16.5,-15.2)--(16.5,-0.2);
\draw[draw={rgb,255:red,200; green,200; blue,200}](17.5,-15.2)--(17.5,-0.2);
\draw[draw={rgb,255:red,200; green,200; blue,200}](18.5,-15.2)--(18.5,-0.2);
\draw[draw={rgb,255:red,200; green,200; blue,200}](19.5,-15.2)--(19.5,-0.2);
\draw[draw={rgb,255:red,200; green,200; blue,200}](20.5,-15.2)--(20.5,-0.2);
\draw[draw={rgb,255:red,200; green,200; blue,200}](21.5,-15.2)--(21.5,-0.2);
\draw[draw={rgb,255:red,200; green,200; blue,200}](22.5,-15.2)--(22.5,-0.2);
\draw[draw={rgb,255:red,200; green,200; blue,200}](23.5,-15.2)--(23.5,-0.2);
\draw[draw={rgb,255:red,200; green,200; blue,200}](24.5,-15.2)--(24.5,-0.2);
\draw[draw={rgb,255:red,200; green,200; blue,200}](25.5,-15.2)--(25.5,-0.2);
\draw[draw={rgb,255:red,200; green,200; blue,200}](4.5,-0.2)--(26.5,-0.2);
\draw[draw={rgb,255:red,200; green,200; blue,200}](4.5,-1.2)--(26.5,-1.2);
\draw[draw={rgb,255:red,200; green,200; blue,200}](4.5,-2.2)--(26.5,-2.2);
\draw[draw={rgb,255:red,200; green,200; blue,200}](4.5,-3.2)--(26.5,-3.2);
\draw[draw={rgb,255:red,200; green,200; blue,200}](4.5,-4.2)--(26.5,-4.2);
\draw[draw={rgb,255:red,200; green,200; blue,200}](4.5,-5.2)--(26.5,-5.2);
\draw[draw={rgb,255:red,200; green,200; blue,200}](4.5,-6.2)--(26.5,-6.2);
\draw[draw={rgb,255:red,200; green,200; blue,200}](4.5,-7.2)--(26.5,-7.2);
\draw[draw={rgb,255:red,200; green,200; blue,200}](4.5,-8.2)--(26.5,-8.2);
\draw[draw={rgb,255:red,200; green,200; blue,200}](4.5,-9.2)--(26.5,-9.2);
\draw[draw={rgb,255:red,200; green,200; blue,200}](4.5,-10.2)--(26.5,-10.2);
\draw[draw={rgb,255:red,200; green,200; blue,200}](4.5,-11.2)--(26.5,-11.2);
\draw[draw={rgb,255:red,200; green,200; blue,200}](4.5,-12.2)--(26.5,-12.2);
\draw[draw={rgb,255:red,200; green,200; blue,200}](4.5,-13.2)--(26.5,-13.2);
\draw[draw={rgb,255:red,200; green,200; blue,200}](4.5,-14.2)--(26.5,-14.2);
\draw[draw={rgb,255:red,200; green,113; blue,55},opacity=0.5,thick](6,-13.7)--(8,-13.7)--(8,-12.7)--(10,-12.7)--(10,-9.7)--(17,-9.7)--(17,-12.7)--(19,-12.7)--(19,-9.7)--(20,-9.7)--(20,-7.7)--(23,-7.7)--(23,-10.7)--(25,-10.7)--(25,-5.7)--(20,-5.7)--(20,-4.7)--(17,-4.7)--(17,-3.7)--(19,-3.7)--(19,-1.7)--(16,-1.7)--(16,-5.7)--(13,-5.7);
\draw[draw={rgb,255:red,0; green,0; blue,0},fill={rgb,255:red,200; green,113; blue,55},opacity=0.5,fill opacity=0.5](5.65,-13.35) rectangle (6.35, -14.05);
\draw[draw={rgb,255:red,0; green,0; blue,0},fill={rgb,255:red,200; green,113; blue,55},opacity=0.5,fill opacity=0.5](6.65,-13.35) rectangle (7.35, -14.05);
\draw[draw={rgb,255:red,0; green,0; blue,0},fill={rgb,255:red,200; green,113; blue,55},opacity=0.5,fill opacity=0.5](7.65,-13.35) rectangle (8.35, -14.05);
\draw[draw={rgb,255:red,0; green,0; blue,0},fill={rgb,255:red,200; green,113; blue,55},opacity=0.5,fill opacity=0.5](7.65,-12.35) rectangle (8.35, -13.05);
\draw[draw={rgb,255:red,0; green,0; blue,0},fill={rgb,255:red,200; green,113; blue,55},opacity=0.5,fill opacity=0.5](8.65,-12.35) rectangle (9.35, -13.05);
\draw[draw={rgb,255:red,0; green,0; blue,0},fill={rgb,255:red,200; green,113; blue,55},opacity=0.5,fill opacity=0.5](9.65,-12.35) rectangle (10.35, -13.05);
\draw[draw={rgb,255:red,0; green,0; blue,0},fill={rgb,255:red,200; green,113; blue,55},opacity=0.5,fill opacity=0.5](9.65,-11.35) rectangle (10.35, -12.05);
\draw[draw={rgb,255:red,0; green,0; blue,0},fill={rgb,255:red,200; green,113; blue,55},opacity=0.5,fill opacity=0.5](9.65,-10.35) rectangle (10.35, -11.05);
\draw[draw={rgb,255:red,0; green,0; blue,0},fill={rgb,255:red,200; green,113; blue,55},opacity=0.5,fill opacity=0.5](9.65,-9.35) rectangle (10.35, -10.05);
\draw[draw={rgb,255:red,0; green,0; blue,0},fill={rgb,255:red,200; green,113; blue,55},opacity=0.5,fill opacity=0.5](10.65,-9.35) rectangle (11.35, -10.05);
\draw[draw={rgb,255:red,0; green,0; blue,0},fill={rgb,255:red,200; green,113; blue,55},opacity=0.5,fill opacity=0.5](11.65,-9.35) rectangle (12.35, -10.05);
\draw[draw={rgb,255:red,0; green,0; blue,0},fill={rgb,255:red,200; green,113; blue,55},opacity=0.5,fill opacity=0.5](12.65,-9.35) rectangle (13.35, -10.05);
\draw[draw={rgb,255:red,0; green,0; blue,0},fill={rgb,255:red,200; green,113; blue,55},opacity=0.5,fill opacity=0.5](13.65,-9.35) rectangle (14.35, -10.05);
\draw[draw={rgb,255:red,0; green,0; blue,0},fill={rgb,255:red,200; green,113; blue,55},opacity=0.5,fill opacity=0.5](14.65,-9.35) rectangle (15.35, -10.05);
\draw[draw={rgb,255:red,200; green,113; blue,55},opacity=0.5](6,-13.7)--(8,-13.7)--(8,-12.7)--(10,-12.7)--(10,-9.7)--(15,-9.7);
\draw[draw={rgb,255:red,0; green,0; blue,0},fill={rgb,255:red,255; green,0; blue,0},opacity=0.5,fill opacity=0.5](5.65,-13.35) rectangle (6.35, -14.05);
\draw[draw={rgb,255:red,0; green,0; blue,0},fill={rgb,255:red,255; green,0; blue,0},opacity=0.5,fill opacity=0.5](6.65,-13.35) rectangle (7.35, -14.05);
\draw[draw={rgb,255:red,0; green,0; blue,0},fill={rgb,255:red,255; green,0; blue,0},opacity=0.5,fill opacity=0.5](7.65,-13.35) rectangle (8.35, -14.05);
\draw[draw={rgb,255:red,0; green,0; blue,0},fill={rgb,255:red,255; green,0; blue,0},opacity=0.5,fill opacity=0.5](7.65,-12.35) rectangle (8.35, -13.05);
\draw[draw={rgb,255:red,0; green,0; blue,0},fill={rgb,255:red,255; green,0; blue,0},opacity=0.5,fill opacity=0.5](8.65,-12.35) rectangle (9.35, -13.05);
\draw[draw={rgb,255:red,0; green,0; blue,0},fill={rgb,255:red,255; green,0; blue,0},opacity=0.5,fill opacity=0.5](9.65,-12.35) rectangle (10.35, -13.05);
\draw[draw={rgb,255:red,0; green,0; blue,0},fill={rgb,255:red,255; green,0; blue,0},opacity=0.5,fill opacity=0.5](9.65,-11.35) rectangle (10.35, -12.05);
\draw[draw={rgb,255:red,0; green,0; blue,0},fill={rgb,255:red,255; green,0; blue,0},opacity=0.5,fill opacity=0.5](9.65,-10.35) rectangle (10.35, -11.05);
\draw[draw={rgb,255:red,0; green,0; blue,0},fill={rgb,255:red,255; green,0; blue,0},opacity=0.5,fill opacity=0.5](10.65,-10.35) rectangle (11.35, -11.05);
\draw[draw={rgb,255:red,0; green,0; blue,0},fill={rgb,255:red,255; green,0; blue,0},opacity=0.5,fill opacity=0.5](11.65,-10.35) rectangle (12.35, -11.05);
\draw[draw={rgb,255:red,0; green,0; blue,0},fill={rgb,255:red,255; green,0; blue,0},opacity=0.5,fill opacity=0.5](12.65,-10.35) rectangle (13.35, -11.05);
\draw[draw={rgb,255:red,0; green,0; blue,0},fill={rgb,255:red,255; green,0; blue,0},opacity=0.5,fill opacity=0.5](12.65,-11.35) rectangle (13.35, -12.05);
\draw[draw={rgb,255:red,0; green,0; blue,0},fill={rgb,255:red,255; green,0; blue,0},opacity=0.5,fill opacity=0.5](12.65,-12.35) rectangle (13.35, -13.05);
\draw[draw={rgb,255:red,0; green,0; blue,0},fill={rgb,255:red,255; green,0; blue,0},opacity=0.5,fill opacity=0.5](12.65,-13.35) rectangle (13.35, -14.05);
\draw[draw={rgb,255:red,0; green,0; blue,0},fill={rgb,255:red,255; green,0; blue,0},opacity=0.5,fill opacity=0.5](13.65,-13.35) rectangle (14.35, -14.05);
\draw[draw={rgb,255:red,0; green,0; blue,0},fill={rgb,255:red,255; green,0; blue,0},opacity=0.5,fill opacity=0.5](14.65,-13.35) rectangle (15.35, -14.05);
\draw[draw={rgb,255:red,0; green,0; blue,0},fill={rgb,255:red,255; green,0; blue,0},opacity=0.5,fill opacity=0.5](14.65,-12.35) rectangle (15.35, -13.05);
\draw[draw={rgb,255:red,0; green,0; blue,0},fill={rgb,255:red,255; green,0; blue,0},opacity=0.5,fill opacity=0.5](14.65,-11.35) rectangle (15.35, -12.05);
\draw[draw={rgb,255:red,0; green,0; blue,0},fill={rgb,255:red,255; green,0; blue,0},opacity=0.5,fill opacity=0.5](14.65,-10.35) rectangle (15.35, -11.05);
\draw[draw={rgb,255:red,0; green,0; blue,0},fill={rgb,255:red,255; green,0; blue,0},opacity=0.5,fill opacity=0.5](15.65,-10.35) rectangle (16.35, -11.05);
\draw[draw={rgb,255:red,0; green,0; blue,0},fill={rgb,255:red,255; green,0; blue,0},opacity=0.5,fill opacity=0.5](15.65,-9.35) rectangle (16.35, -10.05);
\draw[draw={rgb,255:red,255; green,0; blue,0},opacity=0.5,very thick](6,-13.7)--(8,-13.7)--(8,-12.7)--(10,-12.7)--(10,-10.7)--(13,-10.7)--(13,-13.7)--(15,-13.7)--(15,-10.7)--(16,-10.7)--(16,-9.7);
\draw[draw={rgb,255:red,255; green,0; blue,0},opacity=0.5,very thick](6,-13.7)--(8,-13.7)--(8,-12.7)--(10,-12.7)--(10,-10.7)--(13,-10.7)--(13,-13.7)--(15,-13.7)--(15,-10.7)--(16,-10.7)--(16,-9.7)--(17,-9.7)--(17,-12.7)--(19,-12.7)--(19,-11.7)--(21,-11.7)--(21,-7.7)--(23,-7.7)--(23,-10.7)--(25,-10.7)--(25,-5.7)--(20,-5.7)--(20,-4.7)--(17,-4.7)--(17,-3.7)--(19,-3.7)--(19,-1.7)--(16,-1.7) .. controls (16,-3.03) and (16,-4.37) .. (16,-5.7)--(14,-5.7);
\draw[draw={rgb,255:red,0; green,0; blue,0},very thick](15.72,-9.42)--(16.28,-9.98);
\draw[draw={rgb,255:red,0; green,0; blue,0},very thick](15.72,-9.98)--(16.28,-9.42);
\draw[draw={rgb,255:red,0; green,0; blue,0}](9.5,-12.69)--(9.5,-16.7);
\draw(8.23, -16.93) node[anchor=south west] {$l_j$};
\draw[draw=none,fill={rgb,255:red,28; green,36; blue,31},thin](9.5, -12.69) ellipse (0.1cm and 0.1cm);\draw[draw={rgb,255:red,0; green,0; blue,0}](5.5,-13.7)--(5.5,-16.7);
\draw(4.32, -16.93) node[anchor=south west] {$l_i$};
\draw[draw={rgb,255:red,0; green,0; blue,0}](13.5,-5.7)--(13.5,0.3);
\draw(12.34, -0.81) node[anchor=south west] {$l^k$};
\draw[draw=none,fill={rgb,255:red,28; green,36; blue,31},thin](13.5, -5.7) ellipse (0.1cm and 0.1cm);\draw[draw=none,fill={rgb,255:red,28; green,36; blue,31},thin](5.5, -13.7) ellipse (0.1cm and 0.1cm);
\end{tikzpicture}
      \caption{We are in the case where $s_0<|r|$ and $r_{s_0-1}=\pos{P_a}+\vpji$. To show that $P$ is fragile, we first produce $\sigma \cup \asm{P_{0,1,\ldots,i}}$ (not shown), then produce $R$ (red tiles) and then we place  the tile $P_{a+1}+\vpji$ (marked  $\times$) at the position $r_{s_0}$. The path $P$, shown in brown, can not be grown from this assembly due to the conflict at $\times$.}\label{fig:shield-R1}
    \end{figure}

    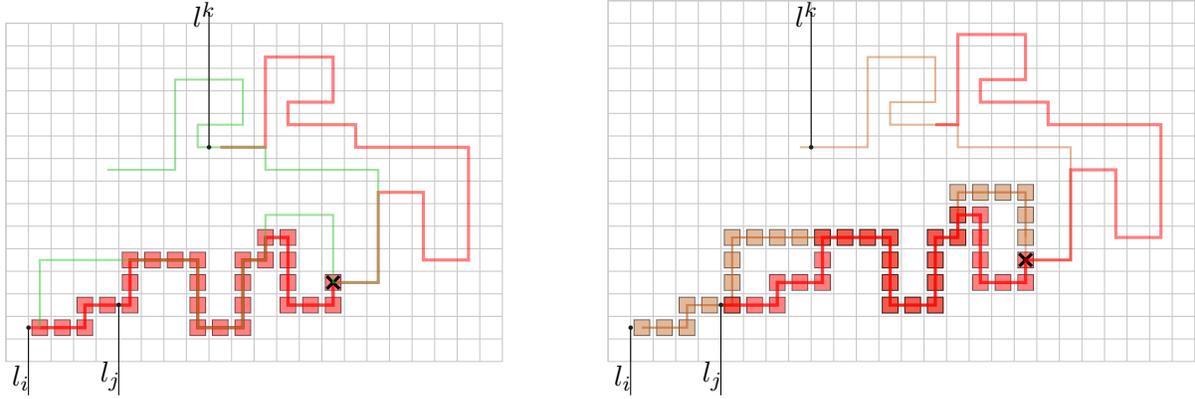
\begin{figure}[ht]
      \renewcommand\scale{0.3}
      \begin{tikzpicture}[scale=\scale]\draw[draw={rgb,255:red,200; green,200; blue,200}](4.5,-0.2) rectangle (26.5, -15.2);
\draw[draw={rgb,255:red,200; green,200; blue,200}](4.5,-15.2)--(4.5,-0.2);
\draw[draw={rgb,255:red,200; green,200; blue,200}](5.5,-15.2)--(5.5,-0.2);
\draw[draw={rgb,255:red,200; green,200; blue,200}](6.5,-15.2)--(6.5,-0.2);
\draw[draw={rgb,255:red,200; green,200; blue,200}](7.5,-15.2)--(7.5,-0.2);
\draw[draw={rgb,255:red,200; green,200; blue,200}](8.5,-15.2)--(8.5,-0.2);
\draw[draw={rgb,255:red,200; green,200; blue,200}](9.5,-15.2)--(9.5,-0.2);
\draw[draw={rgb,255:red,200; green,200; blue,200}](10.5,-15.2)--(10.5,-0.2);
\draw[draw={rgb,255:red,200; green,200; blue,200}](11.5,-15.2)--(11.5,-0.2);
\draw[draw={rgb,255:red,200; green,200; blue,200}](12.5,-15.2)--(12.5,-0.2);
\draw[draw={rgb,255:red,200; green,200; blue,200}](13.5,-15.2)--(13.5,-0.2);
\draw[draw={rgb,255:red,200; green,200; blue,200}](14.5,-15.2)--(14.5,-0.2);
\draw[draw={rgb,255:red,200; green,200; blue,200}](15.5,-15.2)--(15.5,-0.2);
\draw[draw={rgb,255:red,200; green,200; blue,200}](16.5,-15.2)--(16.5,-0.2);
\draw[draw={rgb,255:red,200; green,200; blue,200}](17.5,-15.2)--(17.5,-0.2);
\draw[draw={rgb,255:red,200; green,200; blue,200}](18.5,-15.2)--(18.5,-0.2);
\draw[draw={rgb,255:red,200; green,200; blue,200}](19.5,-15.2)--(19.5,-0.2);
\draw[draw={rgb,255:red,200; green,200; blue,200}](20.5,-15.2)--(20.5,-0.2);
\draw[draw={rgb,255:red,200; green,200; blue,200}](21.5,-15.2)--(21.5,-0.2);
\draw[draw={rgb,255:red,200; green,200; blue,200}](22.5,-15.2)--(22.5,-0.2);
\draw[draw={rgb,255:red,200; green,200; blue,200}](23.5,-15.2)--(23.5,-0.2);
\draw[draw={rgb,255:red,200; green,200; blue,200}](24.5,-15.2)--(24.5,-0.2);
\draw[draw={rgb,255:red,200; green,200; blue,200}](25.5,-15.2)--(25.5,-0.2);
\draw[draw={rgb,255:red,200; green,200; blue,200}](4.5,-0.2)--(26.5,-0.2);
\draw[draw={rgb,255:red,200; green,200; blue,200}](4.5,-1.2)--(26.5,-1.2);
\draw[draw={rgb,255:red,200; green,200; blue,200}](4.5,-2.2)--(26.5,-2.2);
\draw[draw={rgb,255:red,200; green,200; blue,200}](4.5,-3.2)--(26.5,-3.2);
\draw[draw={rgb,255:red,200; green,200; blue,200}](4.5,-4.2)--(26.5,-4.2);
\draw[draw={rgb,255:red,200; green,200; blue,200}](4.5,-5.2)--(26.5,-5.2);
\draw[draw={rgb,255:red,200; green,200; blue,200}](4.5,-6.2)--(26.5,-6.2);
\draw[draw={rgb,255:red,200; green,200; blue,200}](4.5,-7.2)--(26.5,-7.2);
\draw[draw={rgb,255:red,200; green,200; blue,200}](4.5,-8.2)--(26.5,-8.2);
\draw[draw={rgb,255:red,200; green,200; blue,200}](4.5,-9.2)--(26.5,-9.2);
\draw[draw={rgb,255:red,200; green,200; blue,200}](4.5,-10.2)--(26.5,-10.2);
\draw[draw={rgb,255:red,200; green,200; blue,200}](4.5,-11.2)--(26.5,-11.2);
\draw[draw={rgb,255:red,200; green,200; blue,200}](4.5,-12.2)--(26.5,-12.2);
\draw[draw={rgb,255:red,200; green,200; blue,200}](4.5,-13.2)--(26.5,-13.2);
\draw[draw={rgb,255:red,200; green,200; blue,200}](4.5,-14.2)--(26.5,-14.2);
\draw[draw={rgb,255:red,0; green,0; blue,0},fill={rgb,255:red,255; green,0; blue,0},opacity=0.5,fill opacity=0.5](5.65,-13.35) rectangle (6.35, -14.05);
\draw[draw={rgb,255:red,0; green,0; blue,0},fill={rgb,255:red,255; green,0; blue,0},opacity=0.5,fill opacity=0.5](6.65,-13.35) rectangle (7.35, -14.05);
\draw[draw={rgb,255:red,0; green,0; blue,0},fill={rgb,255:red,255; green,0; blue,0},opacity=0.5,fill opacity=0.5](7.65,-13.35) rectangle (8.35, -14.05);
\draw[draw={rgb,255:red,0; green,0; blue,0},fill={rgb,255:red,255; green,0; blue,0},opacity=0.5,fill opacity=0.5](7.65,-12.35) rectangle (8.35, -13.05);
\draw[draw={rgb,255:red,0; green,0; blue,0},fill={rgb,255:red,255; green,0; blue,0},opacity=0.5,fill opacity=0.5](8.65,-12.35) rectangle (9.35, -13.05);
\draw[draw={rgb,255:red,0; green,0; blue,0},fill={rgb,255:red,255; green,0; blue,0},opacity=0.5,fill opacity=0.5](9.65,-12.35) rectangle (10.35, -13.05);
\draw[draw={rgb,255:red,0; green,0; blue,0},fill={rgb,255:red,255; green,0; blue,0},opacity=0.5,fill opacity=0.5](9.65,-11.35) rectangle (10.35, -12.05);
\draw[draw={rgb,255:red,0; green,0; blue,0},fill={rgb,255:red,255; green,0; blue,0},opacity=0.5,fill opacity=0.5](9.65,-10.35) rectangle (10.35, -11.05);
\draw[draw={rgb,255:red,0; green,0; blue,0},fill={rgb,255:red,255; green,0; blue,0},opacity=0.5,fill opacity=0.5](10.65,-10.35) rectangle (11.35, -11.05);
\draw[draw={rgb,255:red,0; green,0; blue,0},fill={rgb,255:red,255; green,0; blue,0},opacity=0.5,fill opacity=0.5](11.65,-10.35) rectangle (12.35, -11.05);
\draw[draw={rgb,255:red,0; green,0; blue,0},fill={rgb,255:red,255; green,0; blue,0},opacity=0.5,fill opacity=0.5](12.65,-10.35) rectangle (13.35, -11.05);
\draw[draw={rgb,255:red,0; green,0; blue,0},fill={rgb,255:red,255; green,0; blue,0},opacity=0.5,fill opacity=0.5](12.65,-11.35) rectangle (13.35, -12.05);
\draw[draw={rgb,255:red,0; green,0; blue,0},fill={rgb,255:red,255; green,0; blue,0},opacity=0.5,fill opacity=0.5](12.65,-12.35) rectangle (13.35, -13.05);
\draw[draw={rgb,255:red,0; green,0; blue,0},fill={rgb,255:red,255; green,0; blue,0},opacity=0.5,fill opacity=0.5](12.65,-13.35) rectangle (13.35, -14.05);
\draw[draw={rgb,255:red,0; green,0; blue,0},fill={rgb,255:red,255; green,0; blue,0},opacity=0.5,fill opacity=0.5](13.65,-13.35) rectangle (14.35, -14.05);
\draw[draw={rgb,255:red,0; green,0; blue,0},fill={rgb,255:red,255; green,0; blue,0},opacity=0.5,fill opacity=0.5](14.65,-13.35) rectangle (15.35, -14.05);
\draw[draw={rgb,255:red,0; green,0; blue,0},fill={rgb,255:red,255; green,0; blue,0},opacity=0.5,fill opacity=0.5](14.65,-12.35) rectangle (15.35, -13.05);
\draw[draw={rgb,255:red,0; green,0; blue,0},fill={rgb,255:red,255; green,0; blue,0},opacity=0.5,fill opacity=0.5](14.65,-11.35) rectangle (15.35, -12.05);
\draw[draw={rgb,255:red,0; green,0; blue,0},fill={rgb,255:red,255; green,0; blue,0},opacity=0.5,fill opacity=0.5](14.65,-10.35) rectangle (15.35, -11.05);
\draw[draw={rgb,255:red,0; green,0; blue,0},fill={rgb,255:red,255; green,0; blue,0},opacity=0.5,fill opacity=0.5](15.65,-10.35) rectangle (16.35, -11.05);
\draw[draw={rgb,255:red,0; green,0; blue,0},fill={rgb,255:red,255; green,0; blue,0},opacity=0.5,fill opacity=0.5](15.65,-9.35) rectangle (16.35, -10.05);
\draw[draw={rgb,255:red,0; green,0; blue,0},fill={rgb,255:red,255; green,0; blue,0},opacity=0.5,fill opacity=0.5](16.65,-9.35) rectangle (17.35, -10.05);
\draw[draw={rgb,255:red,0; green,0; blue,0},fill={rgb,255:red,255; green,0; blue,0},opacity=0.5,fill opacity=0.5](16.65,-10.35) rectangle (17.35, -11.05);
\draw[draw={rgb,255:red,0; green,0; blue,0},fill={rgb,255:red,255; green,0; blue,0},opacity=0.5,fill opacity=0.5](16.65,-11.35) rectangle (17.35, -12.05);
\draw[draw={rgb,255:red,0; green,0; blue,0},fill={rgb,255:red,255; green,0; blue,0},opacity=0.5,fill opacity=0.5](16.65,-12.35) rectangle (17.35, -13.05);
\draw[draw={rgb,255:red,0; green,0; blue,0},fill={rgb,255:red,255; green,0; blue,0},opacity=0.5,fill opacity=0.5](17.65,-12.35) rectangle (18.35, -13.05);
\draw[draw={rgb,255:red,0; green,0; blue,0},fill={rgb,255:red,255; green,0; blue,0},opacity=0.5,fill opacity=0.5](18.65,-12.35) rectangle (19.35, -13.05);
\draw[draw={rgb,255:red,0; green,0; blue,0},fill={rgb,255:red,255; green,0; blue,0},opacity=0.5,fill opacity=0.5](18.65,-11.35) rectangle (19.35, -12.05);
\draw[draw={rgb,255:red,255; green,0; blue,0},opacity=0.5,very thick](5.5,-13.7)--(8,-13.7)--(8,-12.7)--(10,-12.7)--(10,-10.7)--(13,-10.7)--(13,-13.7)--(15,-13.7)--(15,-10.7)--(16,-10.7)--(16,-9.7)--(17,-9.7)--(17,-12.7)--(19,-12.7)--(19,-11.7);
\draw[draw={rgb,255:red,255; green,0; blue,0},opacity=0.5,very thick](6,-13.7)--(8,-13.7)--(8,-12.7)--(10,-12.7)--(10,-10.7)--(13,-10.7)--(13,-13.7)--(15,-13.7)--(15,-10.7)--(16,-10.7)--(16,-9.7)--(17,-9.7)--(17,-12.7)--(19,-12.7)--(19,-11.7)--(21,-11.7)--(21,-7.7)--(23,-7.7)--(23,-10.7)--(25,-10.7)--(25,-5.7)--(20,-5.7)--(20,-4.7)--(17,-4.7)--(17,-3.7)--(19,-3.7)--(19,-1.7)--(16,-1.7) .. controls (16,-3.03) and (16,-4.37) .. (16,-5.7)--(14,-5.7);
\draw[draw={rgb,255:red,0; green,0; blue,0},very thick](18.72,-11.42)--(19.28,-11.98);
\draw[draw={rgb,255:red,0; green,0; blue,0},very thick](18.72,-11.98)--(19.28,-11.42);
\draw[draw={rgb,255:red,55; green,200; blue,55},opacity=0.5,thick](6,-13.7)--(6,-10.7)--(13,-10.7)--(13,-13.7)--(15,-13.7)--(15,-10.7)--(16,-10.7)--(16,-8.7)--(19,-8.7)--(19,-11.7)--(21,-11.7)--(21,-6.7)--(16,-6.7)--(16,-5.7)--(13,-5.7)--(13,-4.7)--(15,-4.7)--(15,-2.7)--(12,-2.7)--(12,-6.7)--(9,-6.7);
\draw[draw={rgb,255:red,0; green,0; blue,0}](9.5,-12.69)--(9.5,-16.7);
\draw(8.23, -16.93) node[anchor=south west] {$l_j$};
\draw[draw=none,fill={rgb,255:red,28; green,36; blue,31},thin](9.5, -12.69) ellipse (0.1cm and 0.1cm);\draw[draw={rgb,255:red,0; green,0; blue,0}](5.5,-13.7)--(5.5,-16.7);
\draw(4.32, -16.93) node[anchor=south west] {$l_i$};
\draw[draw={rgb,255:red,0; green,0; blue,0}](13.5,-5.7)--(13.5,0.3);
\draw(12.34, -0.81) node[anchor=south west] {$l^k$};
\draw[draw=none,fill={rgb,255:red,28; green,36; blue,31},thin](13.5, -5.7) ellipse (0.1cm and 0.1cm);\draw[draw=none,fill={rgb,255:red,28; green,36; blue,31},thin](5.5, -13.7) ellipse (0.1cm and 0.1cm);
\end{tikzpicture}\hfill
      \begin{tikzpicture}[scale=\scale]\draw[draw={rgb,255:red,200; green,200; blue,200}](4.5,0.8) rectangle (30.5, -15.2);
\draw[draw={rgb,255:red,200; green,200; blue,200}](4.5,-15.2)--(4.5,0.8);
\draw[draw={rgb,255:red,200; green,200; blue,200}](5.5,-15.2)--(5.5,0.8);
\draw[draw={rgb,255:red,200; green,200; blue,200}](6.5,-15.2)--(6.5,0.8);
\draw[draw={rgb,255:red,200; green,200; blue,200}](7.5,-15.2)--(7.5,0.8);
\draw[draw={rgb,255:red,200; green,200; blue,200}](8.5,-15.2)--(8.5,0.8);
\draw[draw={rgb,255:red,200; green,200; blue,200}](9.5,-15.2)--(9.5,0.8);
\draw[draw={rgb,255:red,200; green,200; blue,200}](10.5,-15.2)--(10.5,0.8);
\draw[draw={rgb,255:red,200; green,200; blue,200}](11.5,-15.2)--(11.5,0.8);
\draw[draw={rgb,255:red,200; green,200; blue,200}](12.5,-15.2)--(12.5,0.8);
\draw[draw={rgb,255:red,200; green,200; blue,200}](13.5,-15.2)--(13.5,0.8);
\draw[draw={rgb,255:red,200; green,200; blue,200}](14.5,-15.2)--(14.5,0.8);
\draw[draw={rgb,255:red,200; green,200; blue,200}](15.5,-15.2)--(15.5,0.8);
\draw[draw={rgb,255:red,200; green,200; blue,200}](16.5,-15.2)--(16.5,0.8);
\draw[draw={rgb,255:red,200; green,200; blue,200}](17.5,-15.2)--(17.5,0.8);
\draw[draw={rgb,255:red,200; green,200; blue,200}](18.5,-15.2)--(18.5,0.8);
\draw[draw={rgb,255:red,200; green,200; blue,200}](19.5,-15.2)--(19.5,0.8);
\draw[draw={rgb,255:red,200; green,200; blue,200}](20.5,-15.2)--(20.5,0.8);
\draw[draw={rgb,255:red,200; green,200; blue,200}](21.5,-15.2)--(21.5,0.8);
\draw[draw={rgb,255:red,200; green,200; blue,200}](22.5,-15.2)--(22.5,0.8);
\draw[draw={rgb,255:red,200; green,200; blue,200}](23.5,-15.2)--(23.5,0.8);
\draw[draw={rgb,255:red,200; green,200; blue,200}](24.5,-15.2)--(24.5,0.8);
\draw[draw={rgb,255:red,200; green,200; blue,200}](25.5,-15.2)--(25.5,0.8);
\draw[draw={rgb,255:red,200; green,200; blue,200}](26.5,-15.2)--(26.5,0.8);
\draw[draw={rgb,255:red,200; green,200; blue,200}](27.5,-15.2)--(27.5,0.8);
\draw[draw={rgb,255:red,200; green,200; blue,200}](28.5,-15.2)--(28.5,0.8);
\draw[draw={rgb,255:red,200; green,200; blue,200}](29.5,-15.2)--(29.5,0.8);
\draw[draw={rgb,255:red,200; green,200; blue,200}](4.5,0.8)--(30.5,0.8);
\draw[draw={rgb,255:red,200; green,200; blue,200}](4.5,-0.2)--(30.5,-0.2);
\draw[draw={rgb,255:red,200; green,200; blue,200}](4.5,-1.2)--(30.5,-1.2);
\draw[draw={rgb,255:red,200; green,200; blue,200}](4.5,-2.2)--(30.5,-2.2);
\draw[draw={rgb,255:red,200; green,200; blue,200}](4.5,-3.2)--(30.5,-3.2);
\draw[draw={rgb,255:red,200; green,200; blue,200}](4.5,-4.2)--(30.5,-4.2);
\draw[draw={rgb,255:red,200; green,200; blue,200}](4.5,-5.2)--(30.5,-5.2);
\draw[draw={rgb,255:red,200; green,200; blue,200}](4.5,-6.2)--(30.5,-6.2);
\draw[draw={rgb,255:red,200; green,200; blue,200}](4.5,-7.2)--(30.5,-7.2);
\draw[draw={rgb,255:red,200; green,200; blue,200}](4.5,-8.2)--(30.5,-8.2);
\draw[draw={rgb,255:red,200; green,200; blue,200}](4.5,-9.2)--(30.5,-9.2);
\draw[draw={rgb,255:red,200; green,200; blue,200}](4.5,-10.2)--(30.5,-10.2);
\draw[draw={rgb,255:red,200; green,200; blue,200}](4.5,-11.2)--(30.5,-11.2);
\draw[draw={rgb,255:red,200; green,200; blue,200}](4.5,-12.2)--(30.5,-12.2);
\draw[draw={rgb,255:red,200; green,200; blue,200}](4.5,-13.2)--(30.5,-13.2);
\draw[draw={rgb,255:red,200; green,200; blue,200}](4.5,-14.2)--(30.5,-14.2);
\draw[draw={rgb,255:red,200; green,113; blue,55},opacity=0.5,thick](6,-13.7)--(8,-13.7)--(8,-12.7)--(10,-12.7)--(10,-9.7)--(17,-9.7)--(17,-12.7)--(19,-12.7)--(19,-9.7)--(20,-9.7)--(20,-7.7)--(23,-7.7)--(23,-10.7)--(25,-10.7)--(25,-5.7)--(20,-5.7)--(20,-4.7)--(17,-4.7)--(17,-3.7)--(19,-3.7)--(19,-1.7)--(16,-1.7)--(16,-5.7)--(13,-5.7);
\draw[draw={rgb,255:red,0; green,0; blue,0},fill={rgb,255:red,200; green,113; blue,55},opacity=0.5,fill opacity=0.5](5.65,-13.35) rectangle (6.35, -14.05);
\draw[draw={rgb,255:red,0; green,0; blue,0},fill={rgb,255:red,200; green,113; blue,55},opacity=0.5,fill opacity=0.5](6.65,-13.35) rectangle (7.35, -14.05);
\draw[draw={rgb,255:red,0; green,0; blue,0},fill={rgb,255:red,200; green,113; blue,55},opacity=0.5,fill opacity=0.5](7.65,-13.35) rectangle (8.35, -14.05);
\draw[draw={rgb,255:red,0; green,0; blue,0},fill={rgb,255:red,200; green,113; blue,55},opacity=0.5,fill opacity=0.5](7.65,-12.35) rectangle (8.35, -13.05);
\draw[draw={rgb,255:red,0; green,0; blue,0},fill={rgb,255:red,200; green,113; blue,55},opacity=0.5,fill opacity=0.5](8.65,-12.35) rectangle (9.35, -13.05);
\draw[draw={rgb,255:red,0; green,0; blue,0},fill={rgb,255:red,200; green,113; blue,55},opacity=0.5,fill opacity=0.5](9.65,-12.35) rectangle (10.35, -13.05);
\draw[draw={rgb,255:red,0; green,0; blue,0},fill={rgb,255:red,200; green,113; blue,55},opacity=0.5,fill opacity=0.5](9.65,-11.35) rectangle (10.35, -12.05);
\draw[draw={rgb,255:red,0; green,0; blue,0},fill={rgb,255:red,200; green,113; blue,55},opacity=0.5,fill opacity=0.5](9.65,-10.35) rectangle (10.35, -11.05);
\draw[draw={rgb,255:red,0; green,0; blue,0},fill={rgb,255:red,200; green,113; blue,55},opacity=0.5,fill opacity=0.5](9.65,-9.35) rectangle (10.35, -10.05);
\draw[draw={rgb,255:red,0; green,0; blue,0},fill={rgb,255:red,200; green,113; blue,55},opacity=0.5,fill opacity=0.5](10.65,-9.35) rectangle (11.35, -10.05);
\draw[draw={rgb,255:red,0; green,0; blue,0},fill={rgb,255:red,200; green,113; blue,55},opacity=0.5,fill opacity=0.5](11.65,-9.35) rectangle (12.35, -10.05);
\draw[draw={rgb,255:red,0; green,0; blue,0},fill={rgb,255:red,200; green,113; blue,55},opacity=0.5,fill opacity=0.5](12.65,-9.35) rectangle (13.35, -10.05);
\draw[draw={rgb,255:red,0; green,0; blue,0},fill={rgb,255:red,200; green,113; blue,55},opacity=0.5,fill opacity=0.5](13.65,-9.35) rectangle (14.35, -10.05);
\draw[draw={rgb,255:red,0; green,0; blue,0},fill={rgb,255:red,200; green,113; blue,55},opacity=0.5,fill opacity=0.5](14.65,-9.35) rectangle (15.35, -10.05);
\draw[draw={rgb,255:red,0; green,0; blue,0},fill={rgb,255:red,200; green,113; blue,55},opacity=0.5,fill opacity=0.5](15.65,-9.35) rectangle (16.35, -10.05);
\draw[draw={rgb,255:red,0; green,0; blue,0},fill={rgb,255:red,200; green,113; blue,55},opacity=0.5,fill opacity=0.5](16.65,-9.35) rectangle (17.35, -10.05);
\draw[draw={rgb,255:red,0; green,0; blue,0},fill={rgb,255:red,200; green,113; blue,55},opacity=0.5,fill opacity=0.5](16.65,-10.35) rectangle (17.35, -11.05);
\draw[draw={rgb,255:red,0; green,0; blue,0},fill={rgb,255:red,200; green,113; blue,55},opacity=0.5,fill opacity=0.5](16.65,-11.35) rectangle (17.35, -12.05);
\draw[draw={rgb,255:red,0; green,0; blue,0},fill={rgb,255:red,200; green,113; blue,55},opacity=0.5,fill opacity=0.5](16.65,-12.35) rectangle (17.35, -13.05);
\draw[draw={rgb,255:red,0; green,0; blue,0},fill={rgb,255:red,200; green,113; blue,55},opacity=0.5,fill opacity=0.5](17.65,-12.35) rectangle (18.35, -13.05);
\draw[draw={rgb,255:red,0; green,0; blue,0},fill={rgb,255:red,200; green,113; blue,55},opacity=0.5,fill opacity=0.5](18.65,-12.35) rectangle (19.35, -13.05);
\draw[draw={rgb,255:red,0; green,0; blue,0},fill={rgb,255:red,200; green,113; blue,55},opacity=0.5,fill opacity=0.5](18.65,-11.35) rectangle (19.35, -12.05);
\draw[draw={rgb,255:red,0; green,0; blue,0},fill={rgb,255:red,200; green,113; blue,55},opacity=0.5,fill opacity=0.5](18.65,-10.35) rectangle (19.35, -11.05);
\draw[draw={rgb,255:red,0; green,0; blue,0},fill={rgb,255:red,200; green,113; blue,55},opacity=0.5,fill opacity=0.5](18.65,-9.35) rectangle (19.35, -10.05);
\draw[draw={rgb,255:red,0; green,0; blue,0},fill={rgb,255:red,200; green,113; blue,55},opacity=0.5,fill opacity=0.5](19.65,-9.35) rectangle (20.35, -10.05);
\draw[draw={rgb,255:red,0; green,0; blue,0},fill={rgb,255:red,200; green,113; blue,55},opacity=0.5,fill opacity=0.5](19.65,-8.35) rectangle (20.35, -9.05);
\draw[draw={rgb,255:red,0; green,0; blue,0},fill={rgb,255:red,200; green,113; blue,55},opacity=0.5,fill opacity=0.5](19.65,-7.35) rectangle (20.35, -8.05);
\draw[draw={rgb,255:red,0; green,0; blue,0},fill={rgb,255:red,200; green,113; blue,55},opacity=0.5,fill opacity=0.5](20.65,-7.35) rectangle (21.35, -8.05);
\draw[draw={rgb,255:red,0; green,0; blue,0},fill={rgb,255:red,200; green,113; blue,55},opacity=0.5,fill opacity=0.5](21.65,-7.35) rectangle (22.35, -8.05);
\draw[draw={rgb,255:red,0; green,0; blue,0},fill={rgb,255:red,200; green,113; blue,55},opacity=0.5,fill opacity=0.5](22.65,-7.35) rectangle (23.35, -8.05);
\draw[draw={rgb,255:red,0; green,0; blue,0},fill={rgb,255:red,200; green,113; blue,55},opacity=0.5,fill opacity=0.5](22.65,-8.35) rectangle (23.35, -9.05);
\draw[draw={rgb,255:red,0; green,0; blue,0},fill={rgb,255:red,200; green,113; blue,55},opacity=0.5,fill opacity=0.5](22.65,-9.35) rectangle (23.35, -10.05);
\draw[draw={rgb,255:red,200; green,113; blue,55},opacity=0.5,thick](6,-13.7)--(8,-13.7)--(8,-12.7)--(10,-12.7)--(10,-9.7)--(17,-9.7)--(17,-12.7)--(19,-12.7)--(19,-9.7)--(20,-9.7)--(20,-7.7)--(23,-7.7)--(23,-9.7);
\draw[draw={rgb,255:red,0; green,0; blue,0},fill={rgb,255:red,255; green,0; blue,0},opacity=0.5,fill opacity=0.5](9.65,-12.35) rectangle (10.35, -13.05);
\draw[draw={rgb,255:red,0; green,0; blue,0},fill={rgb,255:red,255; green,0; blue,0},opacity=0.5,fill opacity=0.5](10.65,-12.35) rectangle (11.35, -13.05);
\draw[draw={rgb,255:red,0; green,0; blue,0},fill={rgb,255:red,255; green,0; blue,0},opacity=0.5,fill opacity=0.5](11.65,-12.35) rectangle (12.35, -13.05);
\draw[draw={rgb,255:red,0; green,0; blue,0},fill={rgb,255:red,255; green,0; blue,0},opacity=0.5,fill opacity=0.5](11.65,-11.35) rectangle (12.35, -12.05);
\draw[draw={rgb,255:red,0; green,0; blue,0},fill={rgb,255:red,255; green,0; blue,0},opacity=0.5,fill opacity=0.5](12.65,-11.35) rectangle (13.35, -12.05);
\draw[draw={rgb,255:red,0; green,0; blue,0},fill={rgb,255:red,255; green,0; blue,0},opacity=0.5,fill opacity=0.5](13.65,-11.35) rectangle (14.35, -12.05);
\draw[draw={rgb,255:red,0; green,0; blue,0},fill={rgb,255:red,255; green,0; blue,0},opacity=0.5,fill opacity=0.5](13.65,-10.35) rectangle (14.35, -11.05);
\draw[draw={rgb,255:red,0; green,0; blue,0},fill={rgb,255:red,255; green,0; blue,0},opacity=0.5,fill opacity=0.5](13.65,-9.35) rectangle (14.35, -10.05);
\draw[draw={rgb,255:red,0; green,0; blue,0},fill={rgb,255:red,255; green,0; blue,0},opacity=0.5,fill opacity=0.5](14.65,-9.35) rectangle (15.35, -10.05);
\draw[draw={rgb,255:red,0; green,0; blue,0},fill={rgb,255:red,255; green,0; blue,0},opacity=0.5,fill opacity=0.5](15.65,-9.35) rectangle (16.35, -10.05);
\draw[draw={rgb,255:red,0; green,0; blue,0},fill={rgb,255:red,255; green,0; blue,0},opacity=0.5,fill opacity=0.5](16.65,-9.35) rectangle (17.35, -10.05);
\draw[draw={rgb,255:red,0; green,0; blue,0},fill={rgb,255:red,255; green,0; blue,0},opacity=0.5,fill opacity=0.5](16.65,-10.35) rectangle (17.35, -11.05);
\draw[draw={rgb,255:red,0; green,0; blue,0},fill={rgb,255:red,255; green,0; blue,0},opacity=0.5,fill opacity=0.5](16.65,-11.35) rectangle (17.35, -12.05);
\draw[draw={rgb,255:red,0; green,0; blue,0},fill={rgb,255:red,255; green,0; blue,0},opacity=0.5,fill opacity=0.5](16.65,-12.35) rectangle (17.35, -13.05);
\draw[draw={rgb,255:red,0; green,0; blue,0},fill={rgb,255:red,255; green,0; blue,0},opacity=0.5,fill opacity=0.5](17.65,-12.35) rectangle (18.35, -13.05);
\draw[draw={rgb,255:red,0; green,0; blue,0},fill={rgb,255:red,255; green,0; blue,0},opacity=0.5,fill opacity=0.5](18.65,-12.35) rectangle (19.35, -13.05);
\draw[draw={rgb,255:red,0; green,0; blue,0},fill={rgb,255:red,255; green,0; blue,0},opacity=0.5,fill opacity=0.5](18.65,-11.35) rectangle (19.35, -12.05);
\draw[draw={rgb,255:red,0; green,0; blue,0},fill={rgb,255:red,255; green,0; blue,0},opacity=0.5,fill opacity=0.5](18.65,-10.35) rectangle (19.35, -11.05);
\draw[draw={rgb,255:red,0; green,0; blue,0},fill={rgb,255:red,255; green,0; blue,0},opacity=0.5,fill opacity=0.5](18.65,-9.35) rectangle (19.35, -10.05);
\draw[draw={rgb,255:red,0; green,0; blue,0},fill={rgb,255:red,255; green,0; blue,0},opacity=0.5,fill opacity=0.5](19.65,-9.35) rectangle (20.35, -10.05);
\draw[draw={rgb,255:red,0; green,0; blue,0},fill={rgb,255:red,255; green,0; blue,0},opacity=0.5,fill opacity=0.5](19.65,-8.35) rectangle (20.35, -9.05);
\draw[draw={rgb,255:red,0; green,0; blue,0},fill={rgb,255:red,255; green,0; blue,0},opacity=0.5,fill opacity=0.5](20.65,-8.35) rectangle (21.35, -9.05);
\draw[draw={rgb,255:red,0; green,0; blue,0},fill={rgb,255:red,255; green,0; blue,0},opacity=0.5,fill opacity=0.5](20.65,-9.35) rectangle (21.35, -10.05);
\draw[draw={rgb,255:red,0; green,0; blue,0},fill={rgb,255:red,255; green,0; blue,0},opacity=0.5,fill opacity=0.5](20.65,-10.35) rectangle (21.35, -11.05);
\draw[draw={rgb,255:red,0; green,0; blue,0},fill={rgb,255:red,255; green,0; blue,0},opacity=0.5,fill opacity=0.5](20.65,-11.35) rectangle (21.35, -12.05);
\draw[draw={rgb,255:red,0; green,0; blue,0},fill={rgb,255:red,255; green,0; blue,0},opacity=0.5,fill opacity=0.5](21.65,-11.35) rectangle (22.35, -12.05);
\draw[draw={rgb,255:red,0; green,0; blue,0},fill={rgb,255:red,255; green,0; blue,0},opacity=0.5,fill opacity=0.5](22.65,-11.35) rectangle (23.35, -12.05);
\draw[draw={rgb,255:red,0; green,0; blue,0},fill={rgb,255:red,255; green,0; blue,0},opacity=0.5,fill opacity=0.5](22.65,-10.35) rectangle (23.35, -11.05);
\draw[draw={rgb,255:red,255; green,0; blue,0},opacity=0.5,very thick](9.5,-12.7)--(12,-12.7)--(12,-11.7)--(14,-11.7)--(14,-9.7)--(17,-9.7)--(17,-12.7)--(19,-12.7)--(19,-9.7)--(20,-9.7)--(20,-8.7)--(21,-8.7)--(21,-11.7)--(23,-11.7)--(23,-10.7);
\draw[draw={rgb,255:red,0; green,0; blue,0},very thick](22.72,-10.42)--(23.28,-10.98);
\draw[draw={rgb,255:red,0; green,0; blue,0},very thick](22.72,-10.98)--(23.28,-10.42);
\draw[draw={rgb,255:red,255; green,0; blue,0},opacity=0.5,very thick](9.5,-12.7)--(12,-12.7)--(12,-11.7)--(14,-11.7)--(14,-9.7)--(17,-9.7)--(17,-12.7)--(19,-12.7)--(19,-9.7)--(20,-9.7)--(20,-8.7)--(21,-8.7)--(21,-11.7)--(23,-11.7)--(23,-10.7)--(25,-10.7)--(25,-6.7)--(27,-6.7)--(27,-9.7)--(29,-9.7)--(29,-4.7)--(24,-4.7)--(24,-3.7)--(21,-3.7)--(21,-2.7)--(23,-2.7)--(23,-0.7)--(20,-0.7)--(20,-4.7)--(19,-4.7);
\draw[draw={rgb,255:red,0; green,0; blue,0}](9.5,-12.69)--(9.5,-16.7);
\draw(8.23, -16.93) node[anchor=south west] {$l_j$};
\draw[draw=none,fill={rgb,255:red,28; green,36; blue,31},thin](9.5, -12.69) ellipse (0.1cm and 0.1cm);\draw[draw={rgb,255:red,0; green,0; blue,0}](5.5,-13.7)--(5.5,-16.7);
\draw(4.32, -16.93) node[anchor=south west] {$l_i$};
\draw[draw={rgb,255:red,0; green,0; blue,0}](13.5,-5.7)--(13.5,0.3);
\draw(12.34, -0.81) node[anchor=south west] {$l^k$};
\draw[draw=none,fill={rgb,255:red,28; green,36; blue,31},thin](13.5, -5.7) ellipse (0.1cm and 0.1cm);\draw[draw=none,fill={rgb,255:red,28; green,36; blue,31},thin](5.5, -13.7) ellipse (0.1cm and 0.1cm);
\end{tikzpicture}
      \renewcommand\scale{0.4}
      \caption{Left: We are in the case where $s_0<|r|$ and $r_{s_0-1}=\pos{P_a}$.
      Right: We first grow $\sigma\cup \asm{P_{\range 0 1 i}}$ (not shown), then we grow $\asm{P_{\range {i+1}{i+2} j}(R+\vpij)}$ (red tiles), and then $P_{a+1}+\vpij$ (marked with a $\times$).  $P_{\range{j+1}{j+2}k}$ is shown in brown and can not grow from this assembly due to the conflict at $\times$.
      }\label{fig:shield-R2}
    \end{figure}

    \begin{enumerate}
    \item\label{use R to break P:case1} $r_{s_0-1}=\pos{P_a}+\vpji$ for some $a\in\{\range{j+1}{j+2}k\}$ (an example is shown in Figure~\ref{fig:shield-R1}).
      In this case, we first grow $\sigma\cup \asm{P_{\range 0 1 i}R}$ (we have already proved that $P_{\range 0 1 i}R$ is a producible path). We claim we can producibly place the tile  $P_{a+1}+\vpji$:
      \begin{itemize}
      \item $\pos{P_{a+1}+\vpji} = r_{s_0}$, and thus $P_{a+1}+\vpji$ interacts (has a matching abutting glue) with $P_{a}+\vpji = R_{|R|-1}$  which is at position $r_{s_0-1}$,
      \item $P_{a+1}+\vpji$ conflicts with $P_{\range{i+1}{i+2}k}$ and hence does not share any position of a tile of $\sigma\cup\asm{P_{\range 0 1 i}}$
        (since $P_{\range{i+1}{i+2}k}$ does not intersect $\sigma\cup\asm{P_{\range 0 1 i}}$ by the definition of producible path), and
      \item since $r$ is simple that position is not occupied by any other tile of $R$.
      \end{itemize}

      By the definition of ${s_0}$, the tile  $P_{a+1}+\vpji$  conflicts with $P_{\range{i+1}{i+2}k}$, which shows that $P$ is fragile.

    \item\label{use R to break P:case2} Else, $r_{s_0-1}=\pos{P_a}$ for some $a\in\{\range{i+1}{i+2}k\}$  (an example is shown in Figure~\ref{fig:shield-R2}).
      We first grow $\sigma\cup \asm{P_{\range 0 1 j}(R+\vpij)}$ (which we have shown is producible). We claim we can then producibly place the tile $P_{a+1}+\vpij$:
      \begin{itemize}
      \item $\pos{P_{a+1}+\vpij} = r_{s_0} +\vpij$ and
        thus $P_{a+1}+\vpij$ interacts (has a matching abutting glue)  with $P_{a}+\vpij = R_{|R|-1} +\vpij$ which is at position $r_{s_0-1}$,

     \item $P_{a+1}+\vpij$ conflicts with  $P_{\range{j+1}{j+2}k}$ and hence does not share any position of a tile of $\sigma\cup\asm{P_{\range 0 1 j}}$,
        (since $P_{\range{j+1}{j+2}k}$ does not intersect $\sigma\cup\asm{P_{\range 0 1 j}}$ by the definition of a producible path), and
      \item since $r$ is simple $\pos{P_{a+1}}+\vpij$ is not occupied by any other tile of $R + \vpij$.
      \end{itemize}
      By the definition of ${s_0}$, the tile
      $P_{a+1}$ conflicts with $P_{j+1,j+2,\ldots,k}+\vect{P_j P_i}$
      thus $P_{a+1}+\vpij$  conflicts with $P_{j+1,j+2,\ldots,k}$, which shows that $P$ is fragile.
    \end{enumerate}
    In either Case~\ref{use R to break P:case1} or~\ref{use R to break P:case2},  $P$ is fragile.
  \end{itemize}
\end{proof}

\subsubsection{Using $m_0$ and $R$ to define $u_0$ and $v_0$}
\label{subsec:u0v0}

Using the index $m_0$ from Subsection~\ref{subsec:first}, and the path $R$ from Subsection~\ref{subsec:r} we prove the following claim (illustrated with an example in Figure~\ref{fig:shield-u0v0}):
\begin{sublemma}
  \label{lem:u0v0}
  If $|R| = |r|$ (i.e. if $R$ does not conflict with $P_{\rng i k}$ nor with $P_{\rng j k}+\vpij$), there are two indices $u_0$ and $v_0$ satisfying all of the following conditions:
\begin{enumerate}
\item\label{c1} $i+1 \leq u_0 \leq m_0\leq v_0$, and 
\item\label{c2} $P_{u_0}=P_{v_0}+\vect{P_jP_i}$, and 
\item\label{c3} $P_{\rng{u_0}{m_0}}+\vect{P_iP_j}$ is entirely in $\mathcal{C^+}$, and 
\item\label{c4} $(P_{\rng{u_0}{m_0}}+\vect{P_iP_j}) \cap P_{i+1, i+2, \ldots, k} = \{ P_{v_0} \}$.
\end{enumerate}
\end{sublemma}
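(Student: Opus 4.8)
The plan is to read off $u_0$ and $v_0$ from the translated subpath $P_{u_0,\ldots,m_0}+\vpij$, verify condition~\ref{c4} and ``$i+1\leq u_0\leq m_0$'' essentially by construction, and then work to establish condition~\ref{c3} (containment in $\cp$), the inequality $m_0\leq v_0$ (the remainder of condition~\ref{c1}), and condition~\ref{c2} (the tile-type identity). Throughout we may assume $j<k$, since otherwise $P$ is pumpable by \subl{lem:if j=k then P pumpable}. I would let $u_0$ be the largest index in $\{i+1,\ldots,m_0\}$ for which $\pos{P_{u_0}}+\vpij$ is a position of $P_{i+1,\ldots,k}$, and let $v_0$ be the unique index with $\pos{P_{v_0}}=\pos{P_{u_0}}+\vpij$ (well-defined since $P$ is simple, and $v_0\in\{i+1,\ldots,k\}$). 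Such a $u_0$ exists because $n=i+1$ qualifies: $\pos{P_{i+1}}+\vpij=\pos{P_{j+1}}$, a position of $P_{i+1,\ldots,k}$ as $j<k$; and $i+1\leq m_0$ by \subl{lem:m0i1}. Moreover $u_0<m_0$, because $\pos{P_{m_0}}+\vpij$ lies on $\rho$ strictly east of $\pos{P_{m_0}}$ (as $x_{\vpij}>0$), so it is not a position of $P_{i+1,\ldots,k}$ by the defining property of $P_{m_0}$ as the easternmost tile of $P_{i+1,\ldots,k}$ on $\rho$. This yields $i+1\leq u_0<m_0$. Condition~\ref{c4} is then immediate from maximality of $u_0$: for every $n\in\{u_0+1,\ldots,m_0\}$ the point $\pos{P_n}+\vpij$ is not a position of $P_{i+1,\ldots,k}$, and since everything in sight is a union of unit grid edges, $\embed{P_{u_0,\ldots,m_0}}+\vpij$ meets $\embed{P_{i+1,\ldots,k}}$ exactly at $\pos{P_{u_0}}+\vpij=\pos{P_{v_0}}$.

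Next I would establish condition~\ref{c3} together with $m_0\leq v_0$ by a Jordan-curve argument using the curve $c^{m_0}=\concat{\reverse{\lmz},\embed{P_{m_0,\ldots,k}},\gs{\pos{P_k}}{l^k(0)},l^k}$ from Subsection~\ref{shield:split} and Theorem~\ref{thm:infinite-jordan}. First I claim $\embed{P_{u_0+1,\ldots,m_0}}+\vpij$ is disjoint from $c^{m_0}$: it cannot meet $\embed{P_{m_0,\ldots,k}}$ (by maximality of $u_0$, as above); it cannot meet $\lmz$, because a point of $\embed{P_n}+\vpij$ on $\lmz$ would produce a ray of vector $\vpij$ through the tile $P_n$ of $P_{i+1,\ldots,k}$ strictly lower than $\rho$, contradicting the minimality of $\rho$ (as in \subl{lem:cmz} and \subl{lem:lmzEast}), unless $\pos{P_n}+\vpij=\pos{P_{m_0}}$, which again contradicts maximality of $u_0$; and it cannot meet $\gs{\pos{P_k}}{l^k(0)}$ or $l^k$, by Hypothesis~\ref{lem:hp:shield backup} of Definition~\ref{def:shield} translated by $\vpij$ (which gives $(\embed{P_{i,\ldots,k}}+\vpij)\cap l^k\subseteq\{l^k(0)\}$), the visibility of $\glu P k$, and once more maximality of $u_0$. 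Since $\pos{P_{m_0}}+\vpij$ lies in $\cp$ and off $c^{m_0}$ (by \subl{lem:lmzInCp} applied with $n=1$, together with $x_{\vpij}>0$ and the defining property of $P_{m_0}$), it follows from Theorem~\ref{thm:infinite-jordan} that $\embed{P_{u_0+1,\ldots,m_0}}+\vpij$ lies entirely on the strict right-hand side of $c^{m_0}$. It then remains to place the first edge $\gs{\pos{P_{v_0}}}{\pos{P_{u_0+1}}+\vpij}$: inspecting where this unit edge could meet $c^{m_0}$ (again via maximality of $u_0$, minimality of $\rho$, and \subl{lem:cmz}) shows that either a contradiction arises or $\pos{P_{v_0}}$ lies on $\embed{P_{m_0,\ldots,k}}$ --- i.e.\ $v_0\geq m_0$, the remainder of condition~\ref{c1} --- whence $\pos{P_{v_0}}\in\embed{P_{m_0,\ldots,k}}\subseteq c^{m_0}\subseteq\cp$. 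Combining the two facts, $P_{u_0,\ldots,m_0}+\vpij\subseteq\cp$, which is condition~\ref{c3}.

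For condition~\ref{c2} I would use the hypothesis $|R|=|r|$. From $v_0\geq m_0$, together with a short argument that in fact $v_0>j$ (immediate when $m_0>j$; otherwise obtained along the lines of \subl{lem:lmzEast}), the position $\pos{P_{u_0}}=\pos{P_{v_0}}+\vpji$ is at once a position of $P_{i+1,\ldots,k}$ (occupied there by $P_{u_0}$) and of $P_{j+1,\ldots,k}+\vpji$ (occupied there by $P_{v_0}+\vpji$). I would then show $\pos{P_{u_0}}$ lies on the binding path $r$; granting this, since $\pos R=r$ and, by \subl{lem:r}, $R$ conflicts with neither $P_{i+1,\ldots,k}$ nor $P_{j+1,\ldots,k}+\vpji$, the tile of $R$ at $\pos{P_{u_0}}$ has type equal both to $\type{P_{u_0}}$ and to $\type{P_{v_0}+\vpji}=\type{P_{v_0}}$. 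Hence $\type{P_{u_0}}=\type{P_{v_0}}$, i.e.\ $P_{u_0}=P_{v_0}+\vpji$, which is condition~\ref{c2}, and the proof is complete.

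The step I expect to be the main obstacle is the Jordan-curve analysis of the second paragraph: showing that the translated subpath $P_{u_0,\ldots,m_0}+\vpij$ cannot escape the component $\cp$ requires carefully treating how it may meet each of the five pieces of $c^{m_0}$, and it must simultaneously pin down $v_0\geq m_0$. A secondary delicate point is verifying that $\pos{P_{u_0}}$ lies on $r$, which is needed only for condition~\ref{c2}.
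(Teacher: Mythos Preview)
Your route diverges from the paper's in a meaningful way. The paper does \emph{not} define $u_0$ directly on $P$; instead it first proves that $R$ (hence $r$) passes through $\pos{P_{m_0}}$ --- by observing that $\embed R$ starts in $\cm$, ends in $\cp$, and can cross the separating ray $\lmz$ only at $\lmz(0)=\pos{P_{m_0}}$ --- and then sets $a$ to be the largest index below that crossing with $R_a+\vpij$ a tile of $P_{\range{i+1}{i+2}{k}}$. Since $R_{\range{a+1}{a+2}{b}}$ then avoids $P_{\range{j+1}{j+2}{k}}+\vpji$, it is forced to be a contiguous segment $P_{\rng{u_0}{m_0}}$ of $P$ (via \subl{lem:order}), which \emph{defines} $u_0$. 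This buys Condition~\ref{c2} for free: $R_a=P_{u_0}$ and $R_a+\vpij=P_{v_0}$ hold as \emph{tiles} because $R$ has no conflicts. For Conditions~\ref{c1} and~\ref{c3} the paper then leans on \subl{lem:d} ($\embed r+\vpij\subseteq\mathcal D$) together with the inclusion $\cp\subseteq\mathcal D$ or $\mathcal D\subseteq\cp$, splitting on $m_0>j$ versus $m_0\le j$; in the latter case $\pos{P_{v_0}}\in\mathcal D$ forces $v_0>j$ directly.

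Your direct Jordan-curve argument with $c^{m_0}$ for Conditions~\ref{c1}, \ref{c3}, \ref{c4} is a legitimate and rather clean alternative to invoking \subl{lem:d}; the disjointness checks you sketch (minimality of $\rho$, maximality of $u_0$, shield Hypothesis~\ref{lem:hp:shield backup}) go through. But the ``secondary delicate point'' you flag --- that $\pos{P_{u_0}}$ lies on $r$ --- is precisely where the paper's machinery is doing the real work. To establish it from your definition of $u_0$ you must still (i)~prove $r$ passes through $\pos{P_{m_0}}$ (the paper's opening paragraph), and then (ii)~argue inductively, using \subl{lem:order} together with your own Condition~\ref{c4}, that $r$ tracks $P$ backwards as $\pos{P_{m_0}},\pos{P_{m_0-1}},\ldots,\pos{P_{u_0}}$. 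That is the paper's construction of $u_0$ reappearing inside your proof of Condition~\ref{c2}. Similarly, your handwave for $v_0>j$ when $m_0\le j$ is not ``along the lines of \subl{lem:lmzEast}'': that claim explicitly assumes $m_0>j$. You would need either \subl{lem:d} (as the paper does) or a second Jordan-curve argument, this time with the curve $d$ bounding $\mathcal D$, to place $\pos{P_{v_0}}$ inside $\mathcal D$ and hence force $v_0>j$.
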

\begin{proof}
    We first claim that $P_{m_0}$ is a tile of $R$, and moreover that $\embed R\cap \lmz = \{\lmz(0)\} = \{\pos{P_{m_0}}\}$.
  By definition of $R$, $R_0=P_{i+1}$, and since $l^i$ is strictly to the west of $\lmz$ (by \subl{lem:lmzEast} and since $x_{\vect{P_iP_j}}>0$ meaning $l^i$ is to the west of $l^j$),
  since $m_0>i+1$ (\subl{lem:m0i1}),
  and by the definition of $\cp$ (Subsection~\ref{shield:split}), $\pos{P_{i+1}} \not\in \cp$.
  By definition of $r$, $\pos{R_{|R|-1}} \in \mathcal{C}$.

  We also claim that $\pos{R_{|R|-1}} \in  \cp$. Indeed, assume for the sake of contradiction that $\pos{R_{|R|-1}}\in\cm$. However, $l^k$ is in $\cp$, and $\pos{R_{|R|-1}}$ is at a horizontal distance 0.5 to the east or to the west of $l^k$ and is in $\mathcal C$ (by definition of $r$).
  Then the horizontal segment from $R_{|R|-1}$ to $l^k$ crosses $\lmz$, and since $\lmz$ is on a tile column, the only position where that can happen is at $R_{|R|-1}$, contradicting our assumption. Therefore, $R_{|R|-1} \in \cp$.

  Now, since $\embed R$ is entirely in $\mathcal{C}$ (by definition of $r$) and has a position in $\cm$ and another one in $\cp$, $R$ intersects the border of $\cp$.
  Moreover, that intersection happens on $\lmz$ since $\lmz$ partitions $\mathcal C$ into $\cm$ and $\cp$ (Subsection~\ref{shield:split}).
  However, since $R$ is composed only of segments of $P_{\range{i+1}{i+2}k}$ and $P_{\range{j+1}{j+2}k}+\vpji$, $\embed R$ cannot intersect $\lmz$ strictly below $\lmz(0)=\pos{P_{m_0}}$, because by \subl{lem:cmz}, $\embed{P_{\range{i+1}{i+2}k}}$ can only intersect $\lmz$ at $\lmz(0)$, and $\embed{P_{\range{j+1}{j+2}k}+\vpji}$ does not intersect $\lmz$ (because by \subl{lem:cmz}, $\embed{P_{\range{i+1}{i+2}k}}$ does not intersect $\lmz+\vpij$).
  Therefore, $R$ intersects $\lmz$ at position $\lmz(0)$, and hence $P_{m_0}$ is also a tile of $R$. Let $b\in\{0,1,\ldots, |R|-1 \}$ be the index such that $R_b = P_{m_0}$.

    We now define the index $v_0$ to be used in the statement of this \sublname:
  let $a \in \{0,1,\ldots,b-1\}$ be the largest index such that there is an index $v_0\in\{\range{i+1}{i+2}k\}$ such that $R_a+\vpij = P_{v_0}$.
  Indices $a$ and $v_0$ exist because with $a = 0$ and $v_0 = j+1$, we have $R_0+\vpij = P_{i+1} +\vpij = P_{j+1}$
  (this equality holds by the definition of $R$ at the beginning of the proof of \subl{lem:r}, and by the final conclusion of \subl{lem:r} which states that there are no conflicts between $R$ and $P_{i+1,i+2,\ldots,k}$, and no conflicts between $R$ and $P_{j+1,j+2,\ldots,k} +\vpji$).

    Next, we define the index $u_0$ to be used in the statement of this \sublname: $R_{\range{a+1}{a+2}b}+\vpij$ does not intersect $\Pik$ (by definition of $a$ and $b$), therefore $R_{\range{a+1}{a+2}b}$ does not intersect $\Pjk$ either. Hence, $R_{\range{a+1}{a+2}b}$ is a segment of $P$ (because $R$ is only composed of segments of $\Pik$ and segments of $\Pjk$), and moreover $R_a$ is both a tile of $\Pik$ and a tile of $\Pjk$. We then define $u_0 \in \{ i+1,i+2,\ldots,k \}$ to be the index such that $R_a = P_{u_0}$. By \subl{lem:order}, since $a<b$ and $R_b = P_{m_0}$, this means that $u_0 < m_0$, and we get Conclusion~\ref{c2}.

  \begin{figure}[ht]
    \centering
    \begin{tikzpicture}[scale=\scale]\draw[draw={rgb,255:red,200; green,200; blue,200}](0.5,-0.2) rectangle (26.5, -15.2);
\draw[draw={rgb,255:red,200; green,200; blue,200}](0.5,-15.2)--(0.5,-0.2);
\draw[draw={rgb,255:red,200; green,200; blue,200}](1.5,-15.2)--(1.5,-0.2);
\draw[draw={rgb,255:red,200; green,200; blue,200}](2.5,-15.2)--(2.5,-0.2);
\draw[draw={rgb,255:red,200; green,200; blue,200}](3.5,-15.2)--(3.5,-0.2);
\draw[draw={rgb,255:red,200; green,200; blue,200}](4.5,-15.2)--(4.5,-0.2);
\draw[draw={rgb,255:red,200; green,200; blue,200}](5.5,-15.2)--(5.5,-0.2);
\draw[draw={rgb,255:red,200; green,200; blue,200}](6.5,-15.2)--(6.5,-0.2);
\draw[draw={rgb,255:red,200; green,200; blue,200}](7.5,-15.2)--(7.5,-0.2);
\draw[draw={rgb,255:red,200; green,200; blue,200}](8.5,-15.2)--(8.5,-0.2);
\draw[draw={rgb,255:red,200; green,200; blue,200}](9.5,-15.2)--(9.5,-0.2);
\draw[draw={rgb,255:red,200; green,200; blue,200}](10.5,-15.2)--(10.5,-0.2);
\draw[draw={rgb,255:red,200; green,200; blue,200}](11.5,-15.2)--(11.5,-0.2);
\draw[draw={rgb,255:red,200; green,200; blue,200}](12.5,-15.2)--(12.5,-0.2);
\draw[draw={rgb,255:red,200; green,200; blue,200}](13.5,-15.2)--(13.5,-0.2);
\draw[draw={rgb,255:red,200; green,200; blue,200}](14.5,-15.2)--(14.5,-0.2);
\draw[draw={rgb,255:red,200; green,200; blue,200}](15.5,-15.2)--(15.5,-0.2);
\draw[draw={rgb,255:red,200; green,200; blue,200}](16.5,-15.2)--(16.5,-0.2);
\draw[draw={rgb,255:red,200; green,200; blue,200}](17.5,-15.2)--(17.5,-0.2);
\draw[draw={rgb,255:red,200; green,200; blue,200}](18.5,-15.2)--(18.5,-0.2);
\draw[draw={rgb,255:red,200; green,200; blue,200}](19.5,-15.2)--(19.5,-0.2);
\draw[draw={rgb,255:red,200; green,200; blue,200}](20.5,-15.2)--(20.5,-0.2);
\draw[draw={rgb,255:red,200; green,200; blue,200}](21.5,-15.2)--(21.5,-0.2);
\draw[draw={rgb,255:red,200; green,200; blue,200}](22.5,-15.2)--(22.5,-0.2);
\draw[draw={rgb,255:red,200; green,200; blue,200}](23.5,-15.2)--(23.5,-0.2);
\draw[draw={rgb,255:red,200; green,200; blue,200}](24.5,-15.2)--(24.5,-0.2);
\draw[draw={rgb,255:red,200; green,200; blue,200}](25.5,-15.2)--(25.5,-0.2);
\draw[draw={rgb,255:red,200; green,200; blue,200}](0.5,-0.2)--(26.5,-0.2);
\draw[draw={rgb,255:red,200; green,200; blue,200}](0.5,-1.2)--(26.5,-1.2);
\draw[draw={rgb,255:red,200; green,200; blue,200}](0.5,-2.2)--(26.5,-2.2);
\draw[draw={rgb,255:red,200; green,200; blue,200}](0.5,-3.2)--(26.5,-3.2);
\draw[draw={rgb,255:red,200; green,200; blue,200}](0.5,-4.2)--(26.5,-4.2);
\draw[draw={rgb,255:red,200; green,200; blue,200}](0.5,-5.2)--(26.5,-5.2);
\draw[draw={rgb,255:red,200; green,200; blue,200}](0.5,-6.2)--(26.5,-6.2);
\draw[draw={rgb,255:red,200; green,200; blue,200}](0.5,-7.2)--(26.5,-7.2);
\draw[draw={rgb,255:red,200; green,200; blue,200}](0.5,-8.2)--(26.5,-8.2);
\draw[draw={rgb,255:red,200; green,200; blue,200}](0.5,-9.2)--(26.5,-9.2);
\draw[draw={rgb,255:red,200; green,200; blue,200}](0.5,-10.2)--(26.5,-10.2);
\draw[draw={rgb,255:red,200; green,200; blue,200}](0.5,-11.2)--(26.5,-11.2);
\draw[draw={rgb,255:red,200; green,200; blue,200}](0.5,-12.2)--(26.5,-12.2);
\draw[draw={rgb,255:red,200; green,200; blue,200}](0.5,-13.2)--(26.5,-13.2);
\draw[draw={rgb,255:red,200; green,200; blue,200}](0.5,-14.2)--(26.5,-14.2);
\draw[draw=none,fill={rgb,255:red,0; green,102; blue,255},opacity=0.1](13.45,-5.7)--(13.5,-0.2)--(26.5,-0.2)--(26.45,-15.2)--(18.95,-15.2)--(18.95,-9.7)--(19.95,-9.7)--(19.95,-7.7)--(22.95,-7.7)--(22.95,-10.7)--(24.95,-10.7)--(24.95,-5.7)--(19.95,-5.7)--(19.95,-4.7)--(17,-4.7)--(17,-3.7)--(19,-3.7)--(19,-1.7)--(16,-1.7)--(15.95,-3.7)--(15.95,-5.7)--(13.45,-5.7);
\draw(22.87, -4.18) node[anchor=south west] {$\cp$};
\draw[draw={rgb,255:red,0; green,0; blue,0},fill={rgb,255:red,200; green,113; blue,55},opacity=0.29899997,fill opacity=0.29899997](1.65,-8.35) rectangle (2.35, -9.05);
\draw[draw={rgb,255:red,0; green,0; blue,0},fill={rgb,255:red,200; green,113; blue,55},opacity=0.29899997,fill opacity=0.29899997](2.65,-8.35) rectangle (3.35, -9.05);
\draw[draw={rgb,255:red,0; green,0; blue,0},fill={rgb,255:red,200; green,113; blue,55},opacity=0.29899997,fill opacity=0.29899997](3.65,-8.35) rectangle (4.35, -9.05);
\draw[draw={rgb,255:red,0; green,0; blue,0},fill={rgb,255:red,200; green,113; blue,55},opacity=0.29899997,fill opacity=0.29899997](4.65,-8.35) rectangle (5.35, -9.05);
\draw[draw={rgb,255:red,0; green,0; blue,0},fill={rgb,255:red,200; green,113; blue,55},opacity=0.29899997,fill opacity=0.29899997](5.65,-8.35) rectangle (6.35, -9.05);
\draw[draw={rgb,255:red,0; green,0; blue,0},fill={rgb,255:red,200; green,113; blue,55},opacity=0.29899997,fill opacity=0.29899997](6.65,-8.35) rectangle (7.35, -9.05);
\draw[draw={rgb,255:red,0; green,0; blue,0},fill={rgb,255:red,200; green,113; blue,55},opacity=0.29899997,fill opacity=0.29899997](7.65,-8.35) rectangle (8.35, -9.05);
\draw[draw={rgb,255:red,0; green,0; blue,0},fill={rgb,255:red,200; green,113; blue,55},opacity=0.29899997,fill opacity=0.29899997](7.65,-9.35) rectangle (8.35, -10.05);
\draw[draw={rgb,255:red,0; green,0; blue,0},fill={rgb,255:red,200; green,113; blue,55},opacity=0.29899997,fill opacity=0.29899997](7.65,-10.35) rectangle (8.35, -11.05);
\draw[draw={rgb,255:red,0; green,0; blue,0},fill={rgb,255:red,200; green,113; blue,55},opacity=0.29899997,fill opacity=0.29899997](7.65,-11.35) rectangle (8.35, -12.05);
\draw[draw={rgb,255:red,0; green,0; blue,0},fill={rgb,255:red,200; green,113; blue,55},opacity=0.29899997,fill opacity=0.29899997](6.65,-11.35) rectangle (7.35, -12.05);
\draw[draw={rgb,255:red,0; green,0; blue,0},fill={rgb,255:red,200; green,113; blue,55},opacity=0.29899997,fill opacity=0.29899997](5.65,-11.35) rectangle (6.35, -12.05);
\draw[draw={rgb,255:red,0; green,0; blue,0},fill={rgb,255:red,200; green,113; blue,55},opacity=0.29899997,fill opacity=0.29899997](4.65,-11.35) rectangle (5.35, -12.05);
\draw[draw={rgb,255:red,0; green,0; blue,0},fill={rgb,255:red,200; green,113; blue,55},opacity=0.29899997,fill opacity=0.29899997](3.65,-11.35) rectangle (4.35, -12.05);
\draw[draw={rgb,255:red,0; green,0; blue,0},fill={rgb,255:red,200; green,113; blue,55},opacity=0.29899997,fill opacity=0.29899997](3.65,-12.35) rectangle (4.35, -13.05);
\draw[draw={rgb,255:red,0; green,0; blue,0},fill={rgb,255:red,200; green,113; blue,55},opacity=0.29899997,fill opacity=0.29899997](3.65,-13.35) rectangle (4.35, -14.05);
\draw[draw={rgb,255:red,0; green,0; blue,0},fill={rgb,255:red,200; green,113; blue,55},opacity=0.29899997,fill opacity=0.29899997](4.65,-13.35) rectangle (5.35, -14.05);
\draw[draw={rgb,255:red,200; green,113; blue,55},opacity=0.29899997,thick](2,-8.7)--(8,-8.7)--(8,-11.7)--(4,-11.7)--(4,-13.7)--(5.5,-13.7);
\draw[draw={rgb,255:red,0; green,0; blue,0},fill={rgb,255:red,200; green,113; blue,55},opacity=0.5,fill opacity=0.5](5.65,-13.35) rectangle (6.35, -14.05);
\draw[draw={rgb,255:red,0; green,0; blue,0},fill={rgb,255:red,200; green,113; blue,55},opacity=0.5,fill opacity=0.5](6.65,-13.35) rectangle (7.35, -14.05);
\draw[draw={rgb,255:red,0; green,0; blue,0},fill={rgb,255:red,200; green,113; blue,55},opacity=0.5,fill opacity=0.5](7.65,-13.35) rectangle (8.35, -14.05);
\draw[draw={rgb,255:red,0; green,0; blue,0},fill={rgb,255:red,200; green,113; blue,55},opacity=0.5,fill opacity=0.5](7.65,-12.35) rectangle (8.35, -13.05);
\draw[draw={rgb,255:red,0; green,0; blue,0},fill={rgb,255:red,200; green,113; blue,55},opacity=0.5,fill opacity=0.5](8.65,-12.35) rectangle (9.35, -13.05);
\draw[draw={rgb,255:red,0; green,0; blue,0},fill={rgb,255:red,200; green,113; blue,55},opacity=0.5,fill opacity=0.5](9.65,-12.35) rectangle (10.35, -13.05);
\draw[draw={rgb,255:red,0; green,0; blue,0},fill={rgb,255:red,200; green,113; blue,55},opacity=0.5,fill opacity=0.5](9.65,-11.35) rectangle (10.35, -12.05);
\draw[draw={rgb,255:red,0; green,0; blue,0},fill={rgb,255:red,200; green,113; blue,55},opacity=0.5,fill opacity=0.5](9.65,-10.35) rectangle (10.35, -11.05);
\draw[draw={rgb,255:red,0; green,0; blue,0},fill={rgb,255:red,200; green,113; blue,55},opacity=0.5,fill opacity=0.5](9.65,-9.35) rectangle (10.35, -10.05);
\draw[draw={rgb,255:red,0; green,0; blue,0},fill={rgb,255:red,200; green,113; blue,55},opacity=0.5,fill opacity=0.5](10.65,-9.35) rectangle (11.35, -10.05);
\draw[draw={rgb,255:red,0; green,0; blue,0},fill={rgb,255:red,200; green,113; blue,55},opacity=0.5,fill opacity=0.5](11.65,-9.35) rectangle (12.35, -10.05);
\draw[draw={rgb,255:red,0; green,0; blue,0},fill={rgb,255:red,200; green,113; blue,55},opacity=0.5,fill opacity=0.5](12.65,-9.35) rectangle (13.35, -10.05);
\draw[draw={rgb,255:red,0; green,0; blue,0},fill={rgb,255:red,200; green,113; blue,55},opacity=0.5,fill opacity=0.5](13.65,-9.35) rectangle (14.35, -10.05);
\draw[draw={rgb,255:red,0; green,0; blue,0},fill={rgb,255:red,200; green,113; blue,55},opacity=0.5,fill opacity=0.5](14.65,-9.35) rectangle (15.35, -10.05);
\draw[draw={rgb,255:red,0; green,0; blue,0},fill={rgb,255:red,200; green,113; blue,55},opacity=0.5,fill opacity=0.5](15.65,-9.35) rectangle (16.35, -10.05);
\draw[draw={rgb,255:red,0; green,0; blue,0},fill={rgb,255:red,200; green,113; blue,55},opacity=0.5,fill opacity=0.5](16.65,-9.35) rectangle (17.35, -10.05);
\draw[draw={rgb,255:red,0; green,0; blue,0},fill={rgb,255:red,200; green,113; blue,55},opacity=0.5,fill opacity=0.5](16.65,-10.35) rectangle (17.35, -11.05);
\draw[draw={rgb,255:red,0; green,0; blue,0},fill={rgb,255:red,200; green,113; blue,55},opacity=0.5,fill opacity=0.5](16.65,-11.35) rectangle (17.35, -12.05);
\draw[draw={rgb,255:red,0; green,0; blue,0},fill={rgb,255:red,200; green,113; blue,55},opacity=0.5,fill opacity=0.5](16.65,-12.35) rectangle (17.35, -13.05);
\draw[draw={rgb,255:red,0; green,0; blue,0},fill={rgb,255:red,200; green,113; blue,55},opacity=0.5,fill opacity=0.5](17.65,-12.35) rectangle (18.35, -13.05);
\draw[draw={rgb,255:red,0; green,0; blue,0},fill={rgb,255:red,200; green,113; blue,55},opacity=0.5,fill opacity=0.5](18.65,-12.35) rectangle (19.35, -13.05);
\draw[draw={rgb,255:red,0; green,0; blue,0},fill={rgb,255:red,200; green,113; blue,55},opacity=0.5,fill opacity=0.5](18.65,-11.35) rectangle (19.35, -12.05);
\draw[draw={rgb,255:red,0; green,0; blue,0},fill={rgb,255:red,200; green,113; blue,55},opacity=0.5,fill opacity=0.5](18.65,-10.35) rectangle (19.35, -11.05);
\draw[draw={rgb,255:red,0; green,0; blue,0},fill={rgb,255:red,200; green,113; blue,55},opacity=0.5,fill opacity=0.5](18.65,-9.35) rectangle (19.35, -10.05);
\draw[draw={rgb,255:red,0; green,0; blue,0},fill={rgb,255:red,200; green,113; blue,55},opacity=0.5,fill opacity=0.5](19.65,-9.35) rectangle (20.35, -10.05);
\draw[draw={rgb,255:red,0; green,0; blue,0},fill={rgb,255:red,200; green,113; blue,55},opacity=0.5,fill opacity=0.5](19.65,-8.35) rectangle (20.35, -9.05);
\draw[draw={rgb,255:red,0; green,0; blue,0},fill={rgb,255:red,200; green,113; blue,55},opacity=0.5,fill opacity=0.5](19.65,-7.35) rectangle (20.35, -8.05);
\draw[draw={rgb,255:red,0; green,0; blue,0},fill={rgb,255:red,200; green,113; blue,55},opacity=0.5,fill opacity=0.5](20.65,-7.35) rectangle (21.35, -8.05);
\draw[draw={rgb,255:red,0; green,0; blue,0},fill={rgb,255:red,200; green,113; blue,55},opacity=0.5,fill opacity=0.5](21.65,-7.35) rectangle (22.35, -8.05);
\draw[draw={rgb,255:red,0; green,0; blue,0},fill={rgb,255:red,200; green,113; blue,55},opacity=0.5,fill opacity=0.5](22.65,-7.35) rectangle (23.35, -8.05);
\draw[draw={rgb,255:red,0; green,0; blue,0},fill={rgb,255:red,200; green,113; blue,55},opacity=0.5,fill opacity=0.5](22.65,-8.35) rectangle (23.35, -9.05);
\draw[draw={rgb,255:red,0; green,0; blue,0},fill={rgb,255:red,200; green,113; blue,55},opacity=0.5,fill opacity=0.5](22.65,-9.35) rectangle (23.35, -10.05);
\draw[draw={rgb,255:red,0; green,0; blue,0},fill={rgb,255:red,200; green,113; blue,55},opacity=0.5,fill opacity=0.5](22.65,-10.35) rectangle (23.35, -11.05);
\draw[draw={rgb,255:red,0; green,0; blue,0},fill={rgb,255:red,200; green,113; blue,55},opacity=0.5,fill opacity=0.5](23.65,-10.35) rectangle (24.35, -11.05);
\draw[draw={rgb,255:red,0; green,0; blue,0},fill={rgb,255:red,200; green,113; blue,55},opacity=0.5,fill opacity=0.5](24.65,-10.35) rectangle (25.35, -11.05);
\draw[draw={rgb,255:red,0; green,0; blue,0},fill={rgb,255:red,200; green,113; blue,55},opacity=0.5,fill opacity=0.5](24.65,-9.35) rectangle (25.35, -10.05);
\draw[draw={rgb,255:red,0; green,0; blue,0},fill={rgb,255:red,200; green,113; blue,55},opacity=0.5,fill opacity=0.5](24.65,-8.35) rectangle (25.35, -9.05);
\draw[draw={rgb,255:red,0; green,0; blue,0},fill={rgb,255:red,200; green,113; blue,55},opacity=0.5,fill opacity=0.5](24.65,-7.35) rectangle (25.35, -8.05);
\draw[draw={rgb,255:red,0; green,0; blue,0},fill={rgb,255:red,200; green,113; blue,55},opacity=0.5,fill opacity=0.5](24.65,-6.35) rectangle (25.35, -7.05);
\draw[draw={rgb,255:red,0; green,0; blue,0},fill={rgb,255:red,200; green,113; blue,55},opacity=0.5,fill opacity=0.5](24.65,-5.35) rectangle (25.35, -6.05);
\draw[draw={rgb,255:red,0; green,0; blue,0},fill={rgb,255:red,200; green,113; blue,55},opacity=0.5,fill opacity=0.5](23.65,-5.35) rectangle (24.35, -6.05);
\draw[draw={rgb,255:red,0; green,0; blue,0},fill={rgb,255:red,200; green,113; blue,55},opacity=0.5,fill opacity=0.5](22.65,-5.35) rectangle (23.35, -6.05);
\draw[draw={rgb,255:red,0; green,0; blue,0},fill={rgb,255:red,200; green,113; blue,55},opacity=0.5,fill opacity=0.5](21.65,-5.35) rectangle (22.35, -6.05);
\draw[draw={rgb,255:red,0; green,0; blue,0},fill={rgb,255:red,200; green,113; blue,55},opacity=0.5,fill opacity=0.5](20.65,-5.35) rectangle (21.35, -6.05);
\draw[draw={rgb,255:red,0; green,0; blue,0},fill={rgb,255:red,200; green,113; blue,55},opacity=0.5,fill opacity=0.5](19.65,-5.35) rectangle (20.35, -6.05);
\draw[draw={rgb,255:red,0; green,0; blue,0},fill={rgb,255:red,200; green,113; blue,55},opacity=0.5,fill opacity=0.5](19.65,-4.35) rectangle (20.35, -5.05);
\draw[draw={rgb,255:red,0; green,0; blue,0},fill={rgb,255:red,200; green,113; blue,55},opacity=0.5,fill opacity=0.5](18.65,-4.35) rectangle (19.35, -5.05);
\draw[draw={rgb,255:red,0; green,0; blue,0},fill={rgb,255:red,200; green,113; blue,55},opacity=0.5,fill opacity=0.5](17.65,-4.35) rectangle (18.35, -5.05);
\draw[draw={rgb,255:red,0; green,0; blue,0},fill={rgb,255:red,200; green,113; blue,55},opacity=0.5,fill opacity=0.5](16.65,-4.35) rectangle (17.35, -5.05);
\draw[draw={rgb,255:red,0; green,0; blue,0},fill={rgb,255:red,200; green,113; blue,55},opacity=0.5,fill opacity=0.5](16.65,-3.35) rectangle (17.35, -4.05);
\draw[draw={rgb,255:red,0; green,0; blue,0},fill={rgb,255:red,200; green,113; blue,55},opacity=0.5,fill opacity=0.5](17.65,-3.35) rectangle (18.35, -4.05);
\draw[draw={rgb,255:red,0; green,0; blue,0},fill={rgb,255:red,200; green,113; blue,55},opacity=0.5,fill opacity=0.5](18.65,-3.35) rectangle (19.35, -4.05);
\draw[draw={rgb,255:red,0; green,0; blue,0},fill={rgb,255:red,200; green,113; blue,55},opacity=0.5,fill opacity=0.5](18.65,-2.35) rectangle (19.35, -3.05);
\draw[draw={rgb,255:red,0; green,0; blue,0},fill={rgb,255:red,200; green,113; blue,55},opacity=0.5,fill opacity=0.5](18.65,-1.35) rectangle (19.35, -2.05);
\draw[draw={rgb,255:red,0; green,0; blue,0},fill={rgb,255:red,200; green,113; blue,55},opacity=0.5,fill opacity=0.5](17.65,-1.35) rectangle (18.35, -2.05);
\draw[draw={rgb,255:red,0; green,0; blue,0},fill={rgb,255:red,200; green,113; blue,55},opacity=0.5,fill opacity=0.5](16.65,-1.35) rectangle (17.35, -2.05);
\draw[draw={rgb,255:red,0; green,0; blue,0},fill={rgb,255:red,200; green,113; blue,55},opacity=0.5,fill opacity=0.5](15.65,-1.35) rectangle (16.35, -2.05);
\draw[draw={rgb,255:red,0; green,0; blue,0},fill={rgb,255:red,200; green,113; blue,55},opacity=0.5,fill opacity=0.5](15.65,-2.35) rectangle (16.35, -3.05);
\draw[draw={rgb,255:red,0; green,0; blue,0},fill={rgb,255:red,200; green,113; blue,55},opacity=0.5,fill opacity=0.5](15.65,-3.35) rectangle (16.35, -4.05);
\draw[draw={rgb,255:red,0; green,0; blue,0},fill={rgb,255:red,200; green,113; blue,55},opacity=0.5,fill opacity=0.5](15.65,-4.35) rectangle (16.35, -5.05);
\draw[draw={rgb,255:red,0; green,0; blue,0},fill={rgb,255:red,200; green,113; blue,55},opacity=0.5,fill opacity=0.5](15.65,-5.35) rectangle (16.35, -6.05);
\draw[draw={rgb,255:red,0; green,0; blue,0},fill={rgb,255:red,200; green,113; blue,55},opacity=0.5,fill opacity=0.5](14.65,-5.35) rectangle (15.35, -6.05);
\draw[draw={rgb,255:red,0; green,0; blue,0},fill={rgb,255:red,200; green,113; blue,55},opacity=0.5,fill opacity=0.5](13.65,-5.35) rectangle (14.35, -6.05);
\draw[draw={rgb,255:red,0; green,0; blue,0},fill={rgb,255:red,200; green,113; blue,55},opacity=0.5,fill opacity=0.5](12.65,-5.35) rectangle (13.35, -6.05);
\draw[draw={rgb,255:red,200; green,113; blue,55},opacity=0.5,thick](6,-13.7)--(8,-13.7)--(8,-12.7)--(10,-12.7)--(10,-9.7)--(17,-9.7)--(17,-12.7)--(19,-12.7)--(19,-9.7)--(20,-9.7)--(20,-7.7)--(23,-7.7)--(23,-10.7)--(25,-10.7)--(25,-5.7)--(20,-5.7)--(20,-4.7)--(17,-4.7)--(17,-3.7)--(19,-3.7)--(19,-1.7)--(16,-1.7)--(16,-5.7)--(13,-5.7);
\draw[draw={rgb,255:red,0; green,0; blue,0},fill={rgb,255:red,255; green,0; blue,0},opacity=0.5,fill opacity=0.5](19.65,-8.35) rectangle (20.35, -9.05);
\draw[draw={rgb,255:red,0; green,0; blue,0},fill={rgb,255:red,255; green,0; blue,0},opacity=0.5,fill opacity=0.5](20.65,-8.35) rectangle (21.35, -9.05);
\draw[draw={rgb,255:red,0; green,0; blue,0},fill={rgb,255:red,255; green,0; blue,0},opacity=0.5,fill opacity=0.5](20.65,-9.35) rectangle (21.35, -10.05);
\draw[draw={rgb,255:red,0; green,0; blue,0},fill={rgb,255:red,255; green,0; blue,0},opacity=0.5,fill opacity=0.5](20.65,-10.35) rectangle (21.35, -11.05);
\draw[draw={rgb,255:red,0; green,0; blue,0},fill={rgb,255:red,255; green,0; blue,0},opacity=0.5,fill opacity=0.5](20.65,-11.35) rectangle (21.35, -12.05);
\draw[draw={rgb,255:red,0; green,0; blue,0},fill={rgb,255:red,255; green,0; blue,0},opacity=0.5,fill opacity=0.5](21.65,-11.35) rectangle (22.35, -12.05);
\draw[draw={rgb,255:red,0; green,0; blue,0},fill={rgb,255:red,255; green,0; blue,0},opacity=0.5,fill opacity=0.5](22.65,-11.35) rectangle (23.35, -12.05);
\draw[draw={rgb,255:red,255; green,0; blue,0},opacity=0.5,thick](20,-8.7)--(21,-8.7)--(21,-11.7)--(23,-11.7);
\draw(23.26, -12.94) node[anchor=south west] {$R_{\rng a b}+\vpij$};
\draw[draw=none,fill={rgb,255:red,28; green,36; blue,31},thin](20, -8.7) ellipse (0.15cm and 0.15cm);\draw(17.76, -9.18) node[anchor=south west] {$P_{v_0}$};
\draw(14.74, -9.55) node[anchor=south west] {$P_{u_0}$};
\draw[draw=none,fill={rgb,255:red,28; green,36; blue,31},thin](16, -9.7) ellipse (0.15cm and 0.15cm);\draw[draw=none,fill={rgb,255:red,28; green,36; blue,31},thin](23, -11.7) ellipse (0.1cm and 0.1cm);\draw[draw={rgb,255:red,0; green,0; blue,0}](9.5,-12.69)--(9.5,-16.7);
\draw(8.23, -16.93) node[anchor=south west] {$l_j$};
\draw[draw=none,fill={rgb,255:red,28; green,36; blue,31},thin](9.5, -12.69) ellipse (0.1cm and 0.1cm);\draw[draw={rgb,255:red,0; green,0; blue,0}](5.5,-13.7)--(5.5,-16.7);
\draw(4.32, -16.93) node[anchor=south west] {$l_i$};
\draw[draw={rgb,255:red,0; green,0; blue,0}](13.5,-5.7)--(13.5,0.3);
\draw(12.34, -0.81) node[anchor=south west] {$l^k$};
\draw[draw=none,fill={rgb,255:red,28; green,36; blue,31},thin](13.5, -5.7) ellipse (0.1cm and 0.1cm);\draw[draw=none,fill={rgb,255:red,28; green,36; blue,31},thin](5.5, -13.7) ellipse (0.1cm and 0.1cm);\draw[draw={rgb,255:red,0; green,0; blue,0}](19,-12.7)--(19,-16.7);
\draw(17.76, -17.73) node[anchor=south west] {$\lmz$};
\draw[draw=none,fill={rgb,255:red,28; green,36; blue,31},thin](19, -12.7) ellipse (0.1cm and 0.1cm);
\end{tikzpicture}
    \caption{Definition of $u_0$ and  $v_0$. The path $R_{\range{a}{a+1}b}+\vpij$ is shown in red and shares its start position $\pos{R_a}$  with the tile $P_{v_0}$. Also, in the example, it can be seen that $\pos{P_{u_0}} = \pos{R_a}$. In the proof of \subl{lem:u0v0} we find that $R_{\range{a}{a+1}b} = P_{\range{u_0}{u_0+1}m_0}$, and we go on to show that $u_0 \leq m_0 \leq v_0$, 
    $P_{u_0}=P_{v_0}+\vpji$, 
    and $P_{\range{u_0}{u_0+1}m_0} + \vpij$ is in $\cp$.}\label{fig:shield-u0v0}
  \end{figure}
    Then, we claim that $v_0\geq m_0$ and that $R_{\rng a b}+\vpij = P_{\rng{u_0}{m_0}}+\vpij$ is in $\cp$.
  There are two cases:

  \begin{itemize}
  \item If $m_0>j$, $\lmz$ is entirely in $\mathcal D$, since $\lmz$ is then a right turn from curve $d$. Therefore, $\cp\subseteq\mathcal D$.
    By \subl{lem:cmz}, $P_{m_0}+\vpij$ is in $\cp$. Moreover, by \subl{lem:lmzEast}, $\lmz+\vpji$ does not intersect $\Pik$, except possibly at $\lmz(0)+\vpji$, hence $\lmz$ does not intersect $P_{\rng{u_0}{m_0}}+\vpij$, except possibly at $\lmz(0) = \pos{P_{m_0}}$ (and in this case, $\pos{P_{m_0}}=\pos{P_{u_0}}+\vpij$ and $m_0=v_0$). Thus, $P_{\rng{u_0}{m_0}}+\vpij$ does not turn left from $\lmz$.

    Therefore, since by \subl{lem:d}, $R_{\rng a b}+\vpij$ is entirely in $\mathcal D$, then $P_{\rng{u_0}{m_0}}+\vpij$ is entirely in $\mathcal D$. Then, $P_{\rng{u_0}{m_0}}+\vpij$ does not turn left from $\embed{P_{\rng {m_0}{k}}}$ or from $l^k$.
    This implies that $P_{\rng{u_0}{m_0}}+\vpij$ is inside $\cp$ and that $P_{v_0}$ is a tile of $P_{\rng{m_0}k}$ (which is the only part of $P$ on the border of $\cp$), and hence that $v_0\geq m_0$, showing Conclusions~\ref{c1} and~\ref{c3}. Since $v_0$ was chosen as a largest index, we also get Conclusion~\ref{c4}.

  \item
    If $m_0\leq j$, then $\mathcal D\subseteq\cp$.
    Similarly as the previous case, by \subl{lem:d}, $P_{\rng{u_0}{m_0}}+\vpij$ is inside $D\subset\cp$, hence we have Conclusion~\ref{c3}. In particular, $\pos{P_{u_0}+\vpij}=\pos{P_{v_0}}$ is in $\mathcal D$. Since the only part of $\Pik$ that is in $\mathcal D$ is $P_{\range{j+1}{j+2}k}$, this means that $v_0\geq j+1 > m_0$ showing Conclusion~\ref{c1}.
    Since $v_0$ was chosen as a largest index, we also get Conclusion~\ref{c4}.
  \end{itemize}
\end{proof}

\subsection{Proof of Lemma~\ref{lem:shield}}
\label{subsec:dominant}
We restate Lemma~\ref{lem:shield} and give its proof.

\begin{replemma}{lem:shield}
  \shieldStatement
\end{replemma}
\begin{proof}
As noted at the beginning of Section~\ref{sec:shield main section}, 
without loss of generality, we suppose that the last tile of $P$ is $P_{k+1}$, i.e. that $P=P_{0,1,\ldots,k+1}$.
  We prove this lemma by induction on a triple $(u_n, m_n, v_n)$ of indices of $P$ and a curve~$f_n$.

  We first apply Claim~\ref{lem:r} to $P$ and shield $(i, j, k)$. If $|R|<|r|$, this yields the conclusion that $P$ is fragile, with a path that satisfies the conclusion of this lemma and we are done with the proof.
  Otherwise, we assume until the end of this proof that $|R|=|r|$, and $R$ conflicts with neither $P_{i+1,i+2,\ldots,k}$ nor with $P_{j+1,j+2,\ldots,k}+\vpji$.

  \argument{Induction hypothesis} The induction hypothesis is that for all natural numbers $n\geq 0$, the following four conditions are satisfied
  (we recall some notation: $i,j,k,l^k$ were defined in Definition~\ref{def:shield} and $l^i,l^j$  immediately after it;
  $c$ and $\mathcal{C}$ were defined in Definition~\ref{def:c}; $P_{m_0},\lmz$ were introduced in Subsection~\ref{subsec:initial}):
  \begin{enumerate}[label=H\arabic*]
  \item\label{umv:f} $f_n$ is a simple curve, and is the concatenation of: 
  (a) $\reverse{\lmz}+s_n\vpij$ for some natural number $s_n\geq 0$, and 
  (b) a curve that is the concatenation of embeddings of translations of segments\footnote{We are intentionally not fully describing where these segments appear on $P$; that is one of the goals of the remainder of the proof.} of $P_{\range {i+1} {i+2} k}$ and that connects $\lmz(0)+s_n\vpij$ to $\pos{P_{m_n}}$.

  \item\label{umv:c} $f_n$ is entirely in $\mathcal C$ and intersects $c$ exactly once, at the endpoint $\pos{P_{m_n}}$ of $f_n$.
  \item\label{umv:s} for all integers $u>0$, $f_n+u\vpij$ intersects neither $f_n$ nor $c$.
  \item\label{umv:unvn} $u_n\leq m_n\leq v_n$ and $\embed{P_{\rng {u_n} {m_n}}}+\vpij$ intersects $c$ exactly once, at $\pos{P_{u_n}}+\vpij = \pos{P_{v_n}}$, and intersects $g_n$ only at that position, where $g_n$ is the curve defined by:
    $$g_n = \concat{f_n,\embed{P_{\rng {m_n}k}},\gs{P_k}{l^k(0)},l^k}$$
  \end{enumerate}

  \argument{Special case} Sometimes, we need to consider some cases where \ref{umv:f}, \ref{umv:c}, \ref{umv:s} and $u_n\leq m_n\leq v_n$ hold, but where $v_n = k$, $P_{u_n}+\vpij=P_{k}$ and $\pos{P_{u_{n+1}}}+\vpij=\pos{P_{k+1}}$  (which can occur only if $\glu P k$ points to the east) and in this case, $P_{\rng{u_n}{m_n}}+\vpij$ intersects $c$ only at $\gs{P_k}{l^k(0)}$. We now prove that $P$ is pumpable in such a case by considering the curve:
  $$\rho=\concat{
    f_n,\,
    \reverse{\embed{P_{\range {u_n+1} {u_n+2} {m_n}}}},\,
    \gs{P_{u_n+1}}{l^k(0)+\vpji}, \,
    l^k+\vpij}
  $$
  By Theorem~$\ref{thm:infinite-jordan}$, $\rho$ partitions the plane $\R^2$ into two connected components. Note that for all $t\geq0$, $\rho+t\vpij$ does not intersect $\rho+(t+1)\vpij$. Moreover, since $l^k$ does not intersect $\rho$ then $l^k$ is in the right hand-side of this curve (and then in particular $l^k(0)$ and $P_{k}$). Thus $P_{\range {u_{n+1}} {u_{n+2}} {v_n}}$ is in the right-hand side of $\rho$.  Now note that for all $t>0$, the right-hand side of $\rho+t\vpij$ is included into $\mathcal{C}$ and $\sigma\cup P$ has no tile in the right-hand side of $\rho+t\vpij$. In particular, for $t=1$ this means that $P_{\range {u_{n+1}} {u_{n+2}} {v_n}}$ is in the strict left-hand side of $\rho+\vpij$. Thus for all $t$, $P_{\range {u_{n+1}} {u_{n+2}} {v_n}}+t\vpij$ is in the right hand-side of $\rho$ and in the strict left-hand side of $\rho+(t+1)\vpij$. Since $P_{v_n}+t\vpij$ and $P_{u_{n+1}+(t+1)\vpij}$ interact, this means that $P$ is pumpable.

  \argument{Initialisation ($n=0$)}
  We initialise the induction by applying \subl{lem:u0v0} to $P$ to get $i+1 \leq u_0\leq m_0\leq v_0$,
  and by letting $f_0=\reverse\lmz$.
  We show that the induction hypothesis is indeed satisfied for $(u_0, m_0, v_0)$ and $f_0$.

  \begin{enumerate}[label=H\arabic*]
  \item\label{umv:f zero} $f_0$ is indeed equal to $\reverse{\lmz}$ (i.e. $s_0 = 0$) concatenated with a zero length segment of $P_{\range {i+1} {i+2} k}$ (i.e. $\pos{P_{m_0}}$), and $f_0$ ends  at $\pos{P_{m_0}}$.
  \item\label{umv:c zero} By \subl{lem:lmzInC}, $f_0$ is indeed entirely in $\mathcal C$. By \subl{lem:cmz} and the fact that $\lmz$ is on a column and $l^i$, $l^k$ are on glue columns, $f_0$ intersects $c$ exactly once, at $\pos{P_{m_0}}$ (which is on $c$ since $i+1 \leq m_0 \leq k$). 
  \item\label{umv:s zero} 
    Let $u>0$ be an integer. 
  Since $\xcoord{\vpij}>0$, $f_0 + u\vpij = \lmz + u\vpij$ does not intersect $f_0=\rev{\lmz}$.
   \subl{lem:cmz} shows that  
   $\lmz+u\vpij$ does not intersect $\embed{P_{i+1,i+2,\ldots,k}}$, and since $\lmz+u\vpij$ does not intersect $l^k$ nor $l^i$ (because $\lmz+u\vpij$ is on a tile column, and $l^k$, $l^i$ are on glue columns) we get that 
       $\rev{\lmz}+u\vpij$ does not intersect $c$.

  \item\label{umv:unvn zero}
  By \subl{lem:u0v0}, $u_0\leq m_0\leq v_0$ and $P_{\rng {u_0} {m_0}}+\vpij$ intersects $P$ exactly once, at $P_{u_0}+\vpij = P_{v_0}$.
  Since $m_0 \leq v_0 \leq k$, $P_{\rng {u_0} {m_0}}+\vpij$ intersects $c$ and $g_0$ at $\pos{P_{v_0}}$.

  We claim that this is the only intersection between $P_{\rng {u_0} {m_0}}+\vpij$ and $g_0$: indeed, by Hypothesis~\ref{lem:hp:shield backup} of Definition~\ref{def:shield}, $P_{\rng {u_0} {m_0}}+\vpij$ cannot intersect $l^k$ or $\gs{P_k}{l^k(0)}$ (or we are in the special case where $P$ is pumpable).   Moreover, by definition of $\lmz$, $P_{\rng {u_0} {m_0}}+\vpij$ can only intersect $f_0 = \lmz$ at position $\pos{P_{m_0}}$, and by the previous argument, this can only happen if $m_0 = v_0$.

  Therefore, there is exactly one intersection between $g_0$ and $P_{\rng {u_0} {m_0}}+\vpij$, and that intersection is at $\pos{P_{v_0}}$.

  Finally, by \subl{lem:u0v0}, $P_{\rng {u_0} {m_0}}+\vpij$ is entirely in $\cp$. Therefore, $P_{\rng {u_0} {m_0}}+\vpij$ cannot intersect $l^i$ either, which shows that the only intersection between $P_{\rng {u_0} {m_0}}+\vpij$ and $c$ is at position $\pos{P_{v_0}}$.
  \end{enumerate}

  \begin{figure}[ht]
    \begin{center}
      \begin{tikzpicture}[scale=\scale]\draw[draw={rgb,255:red,200; green,200; blue,200}](2.5,1.8) rectangle (39.5, -18.2);
\draw[draw={rgb,255:red,200; green,200; blue,200}](2.5,-18.2)--(2.5,1.8);
\draw[draw={rgb,255:red,200; green,200; blue,200}](3.5,-18.2)--(3.5,1.8);
\draw[draw={rgb,255:red,200; green,200; blue,200}](4.5,-18.2)--(4.5,1.8);
\draw[draw={rgb,255:red,200; green,200; blue,200}](5.5,-18.2)--(5.5,1.8);
\draw[draw={rgb,255:red,200; green,200; blue,200}](6.5,-18.2)--(6.5,1.8);
\draw[draw={rgb,255:red,200; green,200; blue,200}](7.5,-18.2)--(7.5,1.8);
\draw[draw={rgb,255:red,200; green,200; blue,200}](8.5,-18.2)--(8.5,1.8);
\draw[draw={rgb,255:red,200; green,200; blue,200}](9.5,-18.2)--(9.5,1.8);
\draw[draw={rgb,255:red,200; green,200; blue,200}](10.5,-18.2)--(10.5,1.8);
\draw[draw={rgb,255:red,200; green,200; blue,200}](11.5,-18.2)--(11.5,1.8);
\draw[draw={rgb,255:red,200; green,200; blue,200}](12.5,-18.2)--(12.5,1.8);
\draw[draw={rgb,255:red,200; green,200; blue,200}](13.5,-18.2)--(13.5,1.8);
\draw[draw={rgb,255:red,200; green,200; blue,200}](14.5,-18.2)--(14.5,1.8);
\draw[draw={rgb,255:red,200; green,200; blue,200}](15.5,-18.2)--(15.5,1.8);
\draw[draw={rgb,255:red,200; green,200; blue,200}](16.5,-18.2)--(16.5,1.8);
\draw[draw={rgb,255:red,200; green,200; blue,200}](17.5,-18.2)--(17.5,1.8);
\draw[draw={rgb,255:red,200; green,200; blue,200}](18.5,-18.2)--(18.5,1.8);
\draw[draw={rgb,255:red,200; green,200; blue,200}](19.5,-18.2)--(19.5,1.8);
\draw[draw={rgb,255:red,200; green,200; blue,200}](20.5,-18.2)--(20.5,1.8);
\draw[draw={rgb,255:red,200; green,200; blue,200}](21.5,-18.2)--(21.5,1.8);
\draw[draw={rgb,255:red,200; green,200; blue,200}](22.5,-18.2)--(22.5,1.8);
\draw[draw={rgb,255:red,200; green,200; blue,200}](23.5,-18.2)--(23.5,1.8);
\draw[draw={rgb,255:red,200; green,200; blue,200}](24.5,-18.2)--(24.5,1.8);
\draw[draw={rgb,255:red,200; green,200; blue,200}](25.5,-18.2)--(25.5,1.8);
\draw[draw={rgb,255:red,200; green,200; blue,200}](26.5,-18.2)--(26.5,1.8);
\draw[draw={rgb,255:red,200; green,200; blue,200}](27.5,-18.2)--(27.5,1.8);
\draw[draw={rgb,255:red,200; green,200; blue,200}](28.5,-18.2)--(28.5,1.8);
\draw[draw={rgb,255:red,200; green,200; blue,200}](29.5,-18.2)--(29.5,1.8);
\draw[draw={rgb,255:red,200; green,200; blue,200}](30.5,-18.2)--(30.5,1.8);
\draw[draw={rgb,255:red,200; green,200; blue,200}](31.5,-18.2)--(31.5,1.8);
\draw[draw={rgb,255:red,200; green,200; blue,200}](32.5,-18.2)--(32.5,1.8);
\draw[draw={rgb,255:red,200; green,200; blue,200}](33.5,-18.2)--(33.5,1.8);
\draw[draw={rgb,255:red,200; green,200; blue,200}](34.5,-18.2)--(34.5,1.8);
\draw[draw={rgb,255:red,200; green,200; blue,200}](35.5,-18.2)--(35.5,1.8);
\draw[draw={rgb,255:red,200; green,200; blue,200}](36.5,-18.2)--(36.5,1.8);
\draw[draw={rgb,255:red,200; green,200; blue,200}](37.5,-18.2)--(37.5,1.8);
\draw[draw={rgb,255:red,200; green,200; blue,200}](38.5,-18.2)--(38.5,1.8);
\draw[draw={rgb,255:red,200; green,200; blue,200}](39.5,-18.2)--(39.5,1.8);
\draw[draw={rgb,255:red,200; green,200; blue,200}](2.5,1.8)--(39.5,1.8);
\draw[draw={rgb,255:red,200; green,200; blue,200}](2.5,0.8)--(39.5,0.8);
\draw[draw={rgb,255:red,200; green,200; blue,200}](2.5,-0.2)--(39.5,-0.2);
\draw[draw={rgb,255:red,200; green,200; blue,200}](2.5,-1.2)--(39.5,-1.2);
\draw[draw={rgb,255:red,200; green,200; blue,200}](2.5,-2.2)--(39.5,-2.2);
\draw[draw={rgb,255:red,200; green,200; blue,200}](2.5,-3.2)--(39.5,-3.2);
\draw[draw={rgb,255:red,200; green,200; blue,200}](2.5,-4.2)--(39.5,-4.2);
\draw[draw={rgb,255:red,200; green,200; blue,200}](2.5,-5.2)--(39.5,-5.2);
\draw[draw={rgb,255:red,200; green,200; blue,200}](2.5,-6.2)--(39.5,-6.2);
\draw[draw={rgb,255:red,200; green,200; blue,200}](2.5,-7.2)--(39.5,-7.2);
\draw[draw={rgb,255:red,200; green,200; blue,200}](2.5,-8.2)--(39.5,-8.2);
\draw[draw={rgb,255:red,200; green,200; blue,200}](2.5,-9.2)--(39.5,-9.2);
\draw[draw={rgb,255:red,200; green,200; blue,200}](2.5,-10.2)--(39.5,-10.2);
\draw[draw={rgb,255:red,200; green,200; blue,200}](2.5,-11.2)--(39.5,-11.2);
\draw[draw={rgb,255:red,200; green,200; blue,200}](2.5,-12.2)--(39.5,-12.2);
\draw[draw={rgb,255:red,200; green,200; blue,200}](2.5,-13.2)--(39.5,-13.2);
\draw[draw={rgb,255:red,200; green,200; blue,200}](2.5,-14.2)--(39.5,-14.2);
\draw[draw={rgb,255:red,200; green,200; blue,200}](2.5,-15.2)--(39.5,-15.2);
\draw[draw={rgb,255:red,200; green,200; blue,200}](2.5,-16.2)--(39.5,-16.2);
\draw[draw={rgb,255:red,200; green,200; blue,200}](2.5,-17.2)--(39.5,-17.2);
\draw[draw={rgb,255:red,200; green,200; blue,200}](2.5,-18.2)--(39.5,-18.2);
\draw[draw={rgb,255:red,0; green,0; blue,0},fill={rgb,255:red,200; green,113; blue,55},opacity=0.5,fill opacity=0.5](3.65,-5.35) rectangle (4.35, -6.05);
\draw[draw={rgb,255:red,0; green,0; blue,0},fill={rgb,255:red,200; green,113; blue,55},opacity=0.5,fill opacity=0.5](4.65,-5.35) rectangle (5.35, -6.05);
\draw[draw={rgb,255:red,0; green,0; blue,0},fill={rgb,255:red,200; green,113; blue,55},opacity=0.5,fill opacity=0.5](4.65,-4.35) rectangle (5.35, -5.05);
\draw[draw={rgb,255:red,0; green,0; blue,0},fill={rgb,255:red,200; green,113; blue,55},opacity=0.5,fill opacity=0.5](4.65,-3.35) rectangle (5.35, -4.05);
\draw[draw={rgb,255:red,0; green,0; blue,0},fill={rgb,255:red,200; green,113; blue,55},opacity=0.5,fill opacity=0.5](5.65,-3.35) rectangle (6.35, -4.05);
\draw[draw={rgb,255:red,0; green,0; blue,0},fill={rgb,255:red,200; green,113; blue,55},opacity=0.5,fill opacity=0.5](5.65,-4.35) rectangle (6.35, -5.05);
\draw[draw={rgb,255:red,0; green,0; blue,0},fill={rgb,255:red,200; green,113; blue,55},opacity=0.5,fill opacity=0.5](6.65,-4.35) rectangle (7.35, -5.05);
\draw[draw={rgb,255:red,0; green,0; blue,0},fill={rgb,255:red,200; green,113; blue,55},opacity=0.5,fill opacity=0.5](7.65,-4.35) rectangle (8.35, -5.05);
\draw[draw={rgb,255:red,0; green,0; blue,0},fill={rgb,255:red,200; green,113; blue,55},opacity=0.5,fill opacity=0.5](7.65,-3.35) rectangle (8.35, -4.05);
\draw[draw={rgb,255:red,0; green,0; blue,0},fill={rgb,255:red,200; green,113; blue,55},opacity=0.5,fill opacity=0.5](7.65,-2.35) rectangle (8.35, -3.05);
\draw[draw={rgb,255:red,0; green,0; blue,0},fill={rgb,255:red,200; green,113; blue,55},opacity=0.5,fill opacity=0.5](7.65,-1.35) rectangle (8.35, -2.05);
\draw[draw={rgb,255:red,0; green,0; blue,0},fill={rgb,255:red,200; green,113; blue,55},opacity=0.5,fill opacity=0.5](8.65,-1.35) rectangle (9.35, -2.05);
\draw[draw={rgb,255:red,0; green,0; blue,0},fill={rgb,255:red,200; green,113; blue,55},opacity=0.5,fill opacity=0.5](9.65,-1.35) rectangle (10.35, -2.05);
\draw[draw={rgb,255:red,0; green,0; blue,0},fill={rgb,255:red,200; green,113; blue,55},opacity=0.5,fill opacity=0.5](9.65,-2.35) rectangle (10.35, -3.05);
\draw[draw={rgb,255:red,0; green,0; blue,0},fill={rgb,255:red,200; green,113; blue,55},opacity=0.5,fill opacity=0.5](10.65,-2.35) rectangle (11.35, -3.05);
\draw[draw={rgb,255:red,0; green,0; blue,0},fill={rgb,255:red,200; green,113; blue,55},opacity=0.5,fill opacity=0.5](11.65,-2.35) rectangle (12.35, -3.05);
\draw[draw={rgb,255:red,0; green,0; blue,0},fill={rgb,255:red,200; green,113; blue,55},opacity=0.5,fill opacity=0.5](12.65,-2.35) rectangle (13.35, -3.05);
\draw[draw={rgb,255:red,0; green,0; blue,0},fill={rgb,255:red,200; green,113; blue,55},opacity=0.5,fill opacity=0.5](12.65,-3.35) rectangle (13.35, -4.05);
\draw[draw={rgb,255:red,0; green,0; blue,0},fill={rgb,255:red,200; green,113; blue,55},opacity=0.5,fill opacity=0.5](12.65,-4.35) rectangle (13.35, -5.05);
\draw[draw={rgb,255:red,0; green,0; blue,0},fill={rgb,255:red,200; green,113; blue,55},opacity=0.5,fill opacity=0.5](12.65,-5.35) rectangle (13.35, -6.05);
\draw[draw={rgb,255:red,0; green,0; blue,0},fill={rgb,255:red,200; green,113; blue,55},opacity=0.5,fill opacity=0.5](12.65,-6.35) rectangle (13.35, -7.05);
\draw[draw={rgb,255:red,0; green,0; blue,0},fill={rgb,255:red,200; green,113; blue,55},opacity=0.5,fill opacity=0.5](13.65,-6.35) rectangle (14.35, -7.05);
\draw[draw={rgb,255:red,0; green,0; blue,0},fill={rgb,255:red,200; green,113; blue,55},opacity=0.5,fill opacity=0.5](14.65,-6.35) rectangle (15.35, -7.05);
\draw[draw={rgb,255:red,0; green,0; blue,0},fill={rgb,255:red,200; green,113; blue,55},opacity=0.5,fill opacity=0.5](14.65,-5.35) rectangle (15.35, -6.05);
\draw[draw={rgb,255:red,0; green,0; blue,0},fill={rgb,255:red,200; green,113; blue,55},opacity=0.5,fill opacity=0.5](15.65,-5.35) rectangle (16.35, -6.05);
\draw[draw={rgb,255:red,0; green,0; blue,0},fill={rgb,255:red,200; green,113; blue,55},opacity=0.5,fill opacity=0.5](16.65,-5.35) rectangle (17.35, -6.05);
\draw[draw={rgb,255:red,0; green,0; blue,0},fill={rgb,255:red,200; green,113; blue,55},opacity=0.5,fill opacity=0.5](16.65,-4.35) rectangle (17.35, -5.05);
\draw[draw={rgb,255:red,0; green,0; blue,0},fill={rgb,255:red,200; green,113; blue,55},opacity=0.5,fill opacity=0.5](16.65,-3.35) rectangle (17.35, -4.05);
\draw[draw={rgb,255:red,0; green,0; blue,0},fill={rgb,255:red,200; green,113; blue,55},opacity=0.5,fill opacity=0.5](16.65,-2.35) rectangle (17.35, -3.05);
\draw[draw={rgb,255:red,0; green,0; blue,0},fill={rgb,255:red,200; green,113; blue,55},opacity=0.5,fill opacity=0.5](16.65,-1.35) rectangle (17.35, -2.05);
\draw[draw={rgb,255:red,0; green,0; blue,0},fill={rgb,255:red,200; green,113; blue,55},opacity=0.5,fill opacity=0.5](16.65,-0.35) rectangle (17.35, -1.05);
\draw[draw={rgb,255:red,0; green,0; blue,0},fill={rgb,255:red,200; green,113; blue,55},opacity=0.5,fill opacity=0.5](17.65,-0.35) rectangle (18.35, -1.05);
\draw[draw={rgb,255:red,0; green,0; blue,0},fill={rgb,255:red,200; green,113; blue,55},opacity=0.5,fill opacity=0.5](18.65,-0.35) rectangle (19.35, -1.05);
\draw[draw={rgb,255:red,0; green,0; blue,0},fill={rgb,255:red,200; green,113; blue,55},opacity=0.5,fill opacity=0.5](18.65,0.65) rectangle (19.35, -0.05);
\draw[draw={rgb,255:red,0; green,0; blue,0},fill={rgb,255:red,200; green,113; blue,55},opacity=0.5,fill opacity=0.5](19.65,0.65) rectangle (20.35, -0.05);
\draw[draw={rgb,255:red,0; green,0; blue,0},fill={rgb,255:red,200; green,113; blue,55},opacity=0.5,fill opacity=0.5](20.65,0.65) rectangle (21.35, -0.05);
\draw[draw={rgb,255:red,0; green,0; blue,0},fill={rgb,255:red,200; green,113; blue,55},opacity=0.5,fill opacity=0.5](20.65,-0.35) rectangle (21.35, -1.05);
\draw[draw={rgb,255:red,0; green,0; blue,0},fill={rgb,255:red,200; green,113; blue,55},opacity=0.5,fill opacity=0.5](20.65,-1.35) rectangle (21.35, -2.05);
\draw[draw={rgb,255:red,0; green,0; blue,0},fill={rgb,255:red,200; green,113; blue,55},opacity=0.5,fill opacity=0.5](21.65,-1.35) rectangle (22.35, -2.05);
\draw[draw={rgb,255:red,0; green,0; blue,0},fill={rgb,255:red,200; green,113; blue,55},opacity=0.5,fill opacity=0.5](22.65,-1.35) rectangle (23.35, -2.05);
\draw[draw={rgb,255:red,0; green,0; blue,0},fill={rgb,255:red,200; green,113; blue,55},opacity=0.5,fill opacity=0.5](23.65,-1.35) rectangle (24.35, -2.05);
\draw[draw={rgb,255:red,0; green,0; blue,0},fill={rgb,255:red,200; green,113; blue,55},opacity=0.5,fill opacity=0.5](23.65,-2.35) rectangle (24.35, -3.05);
\draw[draw={rgb,255:red,0; green,0; blue,0},fill={rgb,255:red,200; green,113; blue,55},opacity=0.5,fill opacity=0.5](23.65,-3.35) rectangle (24.35, -4.05);
\draw[draw={rgb,255:red,0; green,0; blue,0},fill={rgb,255:red,200; green,113; blue,55},opacity=0.5,fill opacity=0.5](24.65,-3.35) rectangle (25.35, -4.05);
\draw[draw={rgb,255:red,0; green,0; blue,0},fill={rgb,255:red,200; green,113; blue,55},opacity=0.5,fill opacity=0.5](25.65,-3.35) rectangle (26.35, -4.05);
\draw[draw={rgb,255:red,0; green,0; blue,0},fill={rgb,255:red,200; green,113; blue,55},opacity=0.5,fill opacity=0.5](26.65,-3.35) rectangle (27.35, -4.05);
\draw[draw={rgb,255:red,0; green,0; blue,0},fill={rgb,255:red,200; green,113; blue,55},opacity=0.5,fill opacity=0.5](26.65,-2.35) rectangle (27.35, -3.05);
\draw[draw={rgb,255:red,0; green,0; blue,0},fill={rgb,255:red,200; green,113; blue,55},opacity=0.5,fill opacity=0.5](26.65,-1.35) rectangle (27.35, -2.05);
\draw[draw={rgb,255:red,0; green,0; blue,0},fill={rgb,255:red,200; green,113; blue,55},opacity=0.5,fill opacity=0.5](26.65,-0.35) rectangle (27.35, -1.05);
\draw[draw={rgb,255:red,0; green,0; blue,0},fill={rgb,255:red,200; green,113; blue,55},opacity=0.5,fill opacity=0.5](26.65,0.65) rectangle (27.35, -0.05);
\draw[draw={rgb,255:red,0; green,0; blue,0},fill={rgb,255:red,200; green,113; blue,55},opacity=0.5,fill opacity=0.5](27.65,0.65) rectangle (28.35, -0.05);
\draw[draw={rgb,255:red,0; green,0; blue,0},fill={rgb,255:red,200; green,113; blue,55},opacity=0.5,fill opacity=0.5](28.65,0.65) rectangle (29.35, -0.05);
\draw[draw={rgb,255:red,0; green,0; blue,0},fill={rgb,255:red,200; green,113; blue,55},opacity=0.5,fill opacity=0.5](29.65,0.65) rectangle (30.35, -0.05);
\draw[draw={rgb,255:red,0; green,0; blue,0},fill={rgb,255:red,200; green,113; blue,55},opacity=0.5,fill opacity=0.5](30.65,0.65) rectangle (31.35, -0.05);
\draw[draw={rgb,255:red,0; green,0; blue,0},fill={rgb,255:red,200; green,113; blue,55},opacity=0.5,fill opacity=0.5](31.65,0.65) rectangle (32.35, -0.05);
\draw[draw={rgb,255:red,0; green,0; blue,0},fill={rgb,255:red,200; green,113; blue,55},opacity=0.5,fill opacity=0.5](32.65,0.65) rectangle (33.35, -0.05);
\draw[draw={rgb,255:red,200; green,113; blue,55},opacity=0.5](4,-5.7)--(5,-5.7)--(5,-3.7)--(6,-3.7)--(6,-4.7)--(8,-4.7)--(8,-1.7)--(10,-1.7)--(10,-2.7)--(13,-2.7)--(13,-6.7)--(15,-6.7)--(15,-5.7)--(17,-5.7)--(17,-0.7)--(19,-0.7)--(19,0.3)--(21,0.3)--(21,-1.7)--(24,-1.7)--(24,-3.7)--(27,-3.7)--(27,0.3)--(33,0.3);
\draw[draw=none,fill={rgb,255:red,0; green,0; blue,0},opacity=0.05](27,0.3)--(27,-3.7)--(24,-3.7)--(24,-1.7)--(21,-1.7)--(21,0.3)--(19,0.3)--(19,-0.7)--(17,-0.7)--(17,-5.7)--(15,-5.7)--(15,-6.7)--(13,-6.7)--(13,-2.7)--(10,-2.7)--(10,-1.7)--(8,-1.7)--(8,-4.7)--(6,-4.7)--(6,-3.7)--(5,-3.7)--(5,-5.7)--(2.5,-5.7)--(2.5,-18.2)--(39.5,-18.2)--(39.5,1.8)--(32.5,1.8)--(32.5,0.3)-- cycle;
\draw(37, 0.11) node[anchor=south west] {$\mathcal C$};
\draw[draw={rgb,255:red,0; green,0; blue,0},opacity=0.05](27,0.3)--(27,-3.7)--(24,-3.7)--(24,-1.7)--(21,-1.7)--(21,0.3)--(19,0.3)--(19,-0.7)--(17,-0.7)--(17,-5.7)--(15,-5.7)--(15,-6.7)--(13,-6.7)--(13,-2.7)--(10,-2.7)--(10,-1.7)--(8,-1.7)--(8,-4.7)--(6,-4.7)--(6,-3.7)--(5,-3.7)--(5,-5.7)--(3.5,-5.7)--(3.5,-17.2)--(38.5,-17.2)--(38.5,0.8)--(32.5,0.8)--(32.5,0.3)-- cycle;
\draw[draw=none,fill={rgb,255:red,0; green,0; blue,0},thin](11, -2.7) ellipse (0.1cm and 0.1cm);\draw[draw=none,fill={rgb,255:red,0; green,0; blue,0},thin](17, -2.7) ellipse (0.1cm and 0.1cm);\draw[draw=none,fill={rgb,255:red,0; green,0; blue,0},thin](14, -6.7) ellipse (0.1cm and 0.1cm);\draw(10.24, -2.46) node[anchor=south west] {$u_n$};
\draw(15.26, -3.26) node[anchor=south west] {$v_n$};
\draw(13.25, -8.23) node[anchor=south west] {$m_n$};
\draw[draw={rgb,255:red,0; green,0; blue,0}](14,-6.7)--(14,-8.7)--(16,-8.7)--(16,-10.7)--(15,-10.7)--(15,-11.7)--(18,-11.7)--(18,-9.7)--(20,-9.7)--(20,-16.7);
\draw[draw=none,fill={rgb,255:red,0; green,0; blue,0},thin](32.5, 0.3) ellipse (0.1cm and 0.1cm);\draw(32.63, 4.3) node[anchor=south west] {$\ell^k$};
\draw[draw={rgb,255:red,0; green,0; blue,0}](32.5,0.3)--(32.5,5.3);
\draw[draw={rgb,255:red,0; green,0; blue,0},dashed,thin](20,-18.7)--(20,-16.7);
\draw(18.65, -17.51) node[anchor=south west] {$f_n$};
\end{tikzpicture}
    \end{center}
    \caption{Definition of $u_n \leq m_n \leq v_n$. We have $P_{u_n} = P_{v_n}+\vpji$. The curve $f_n$ ends in $m_n$ which is the only intersection between $f_n$ and $c$.}
    \label{fig:induction-setting}
  \end{figure}

  \argument{Restricting $\mathcal C$}
  We start by defining a connected component $\mathcal H_n\subset\mathcal C$, so that at least one translation of $P_{\rng {u_n} {m_n}}$ by a multiple of $\vpij$ must intersect the border of $\mathcal H_n$.
  Let $h_n$ be the curve defined as:
  $$h_n=\concat{
  f_n+\vpij,\,
  \reverse{\embed{P_{\rng {u_n} {m_n}}}}+\vpij,\,
   \embed{P_{\rng {v_n} k}},\,
    \gs{P_k}{l^k(0)}, \,
    l^k}
  $$
  The endpoints of the successive parts of that concatenation are equal. Moreover, by~\ref{umv:f}, $f_n$ is a curve, hence $h_n$ is also a curve. We now claim that $h_n$ is simple, by considering each of its parts in the order of the concatenation, and checking each time that that part does not intersect the remaining parts.
  First, by~\ref{umv:c}, $\concat{f_n+\vpij,\reverse{\embed{P_{\rng {u_n} {m_n}}}}+\vpij} = \concat{f_n,\reverse{\embed{P_{\rng {u_n} {m_n}}}}}+\vpij$ is a simple curve. Moreover, by~\ref{umv:s}, $f_n+\vpij$ does not intersect $c$, hence intersects neither $\embed{P_{\rng {v_n} k}}$, $\gs{P_k,l^k(0)}$ nor $l^k$. And finally, by~\ref{umv:unvn}, the only intersection between $\embed{P_{\rng {u_n} {m_n}}}+\vpij$ and $c$ is at $\pos{P_{u_n}}+\vpij = \pos{P_{v_n}}$. Finally, since $c$ is a simple curve (by \subl{lem:c-cuts}), so is $\concat{   \embed{P_{\rng {v_n} k}},\,
    \gs{P_k}{l^k(0)}, \,
    l^k}$, and therefore, $h_n$ is simple.

Also, we claim that $h_n$ is entirely in $\mathcal C$. Indeed, $\embed{P_{\rng {v_n} k}},\,
    \gs{P_k}{l^k(0)}, \,
    l^k$ are on $c$, the border of $\mathcal{C}$. Moreover, since $f_n$ is in $\mathcal{C}$ (\ref{umv:c}), and $f_n+\vpij $ is to the east of of $f_n$ and does not intersect $c$ (\ref{umv:s}),
    we get that $f_n+\vpij $ is~in~$\mathcal{C}$.
    Finally, $f_n(0)+\vpij =\pos{P_{m_n}}+\vpij$ (\ref{umv:c}) is~in~$\mathcal{C}$ and since the only intersection between $\reverse{\embed{P_{\rng {u_n} {m_n}}}}+\vpij$ and $c$ is $\pos{P_{u_n}}+\vpij$ (\ref{umv:unvn}), we get that
    $\reverse{\embed{P_{\rng {u_n} {m_n}}}}+\vpij$ is in $\mathcal{C}$.
    
    Since $h_n$ is a simple curve, entirely in $\mathcal{C}$, that begins and ends with a vertical ray, $h_n$ partitions $\mathcal C$ into two connected components. Let $\mathcal H_n$ be the connected component on the right-hand side of $h_n$, including $h_n$ itself, as shown in Figure~\ref{fig:induction-hn}.

  \begin{figure}[t]
    \begin{center}
      \begin{tikzpicture}[scale=\scale]\draw[draw={rgb,255:red,200; green,200; blue,200}](2.5,1.8) rectangle (39.5, -18.2);
\draw[draw={rgb,255:red,200; green,200; blue,200}](2.5,-18.2)--(2.5,1.8);
\draw[draw={rgb,255:red,200; green,200; blue,200}](3.5,-18.2)--(3.5,1.8);
\draw[draw={rgb,255:red,200; green,200; blue,200}](4.5,-18.2)--(4.5,1.8);
\draw[draw={rgb,255:red,200; green,200; blue,200}](5.5,-18.2)--(5.5,1.8);
\draw[draw={rgb,255:red,200; green,200; blue,200}](6.5,-18.2)--(6.5,1.8);
\draw[draw={rgb,255:red,200; green,200; blue,200}](7.5,-18.2)--(7.5,1.8);
\draw[draw={rgb,255:red,200; green,200; blue,200}](8.5,-18.2)--(8.5,1.8);
\draw[draw={rgb,255:red,200; green,200; blue,200}](9.5,-18.2)--(9.5,1.8);
\draw[draw={rgb,255:red,200; green,200; blue,200}](10.5,-18.2)--(10.5,1.8);
\draw[draw={rgb,255:red,200; green,200; blue,200}](11.5,-18.2)--(11.5,1.8);
\draw[draw={rgb,255:red,200; green,200; blue,200}](12.5,-18.2)--(12.5,1.8);
\draw[draw={rgb,255:red,200; green,200; blue,200}](13.5,-18.2)--(13.5,1.8);
\draw[draw={rgb,255:red,200; green,200; blue,200}](14.5,-18.2)--(14.5,1.8);
\draw[draw={rgb,255:red,200; green,200; blue,200}](15.5,-18.2)--(15.5,1.8);
\draw[draw={rgb,255:red,200; green,200; blue,200}](16.5,-18.2)--(16.5,1.8);
\draw[draw={rgb,255:red,200; green,200; blue,200}](17.5,-18.2)--(17.5,1.8);
\draw[draw={rgb,255:red,200; green,200; blue,200}](18.5,-18.2)--(18.5,1.8);
\draw[draw={rgb,255:red,200; green,200; blue,200}](19.5,-18.2)--(19.5,1.8);
\draw[draw={rgb,255:red,200; green,200; blue,200}](20.5,-18.2)--(20.5,1.8);
\draw[draw={rgb,255:red,200; green,200; blue,200}](21.5,-18.2)--(21.5,1.8);
\draw[draw={rgb,255:red,200; green,200; blue,200}](22.5,-18.2)--(22.5,1.8);
\draw[draw={rgb,255:red,200; green,200; blue,200}](23.5,-18.2)--(23.5,1.8);
\draw[draw={rgb,255:red,200; green,200; blue,200}](24.5,-18.2)--(24.5,1.8);
\draw[draw={rgb,255:red,200; green,200; blue,200}](25.5,-18.2)--(25.5,1.8);
\draw[draw={rgb,255:red,200; green,200; blue,200}](26.5,-18.2)--(26.5,1.8);
\draw[draw={rgb,255:red,200; green,200; blue,200}](27.5,-18.2)--(27.5,1.8);
\draw[draw={rgb,255:red,200; green,200; blue,200}](28.5,-18.2)--(28.5,1.8);
\draw[draw={rgb,255:red,200; green,200; blue,200}](29.5,-18.2)--(29.5,1.8);
\draw[draw={rgb,255:red,200; green,200; blue,200}](30.5,-18.2)--(30.5,1.8);
\draw[draw={rgb,255:red,200; green,200; blue,200}](31.5,-18.2)--(31.5,1.8);
\draw[draw={rgb,255:red,200; green,200; blue,200}](32.5,-18.2)--(32.5,1.8);
\draw[draw={rgb,255:red,200; green,200; blue,200}](33.5,-18.2)--(33.5,1.8);
\draw[draw={rgb,255:red,200; green,200; blue,200}](34.5,-18.2)--(34.5,1.8);
\draw[draw={rgb,255:red,200; green,200; blue,200}](35.5,-18.2)--(35.5,1.8);
\draw[draw={rgb,255:red,200; green,200; blue,200}](36.5,-18.2)--(36.5,1.8);
\draw[draw={rgb,255:red,200; green,200; blue,200}](37.5,-18.2)--(37.5,1.8);
\draw[draw={rgb,255:red,200; green,200; blue,200}](38.5,-18.2)--(38.5,1.8);
\draw[draw={rgb,255:red,200; green,200; blue,200}](39.5,-18.2)--(39.5,1.8);
\draw[draw={rgb,255:red,200; green,200; blue,200}](2.5,1.8)--(39.5,1.8);
\draw[draw={rgb,255:red,200; green,200; blue,200}](2.5,0.8)--(39.5,0.8);
\draw[draw={rgb,255:red,200; green,200; blue,200}](2.5,-0.2)--(39.5,-0.2);
\draw[draw={rgb,255:red,200; green,200; blue,200}](2.5,-1.2)--(39.5,-1.2);
\draw[draw={rgb,255:red,200; green,200; blue,200}](2.5,-2.2)--(39.5,-2.2);
\draw[draw={rgb,255:red,200; green,200; blue,200}](2.5,-3.2)--(39.5,-3.2);
\draw[draw={rgb,255:red,200; green,200; blue,200}](2.5,-4.2)--(39.5,-4.2);
\draw[draw={rgb,255:red,200; green,200; blue,200}](2.5,-5.2)--(39.5,-5.2);
\draw[draw={rgb,255:red,200; green,200; blue,200}](2.5,-6.2)--(39.5,-6.2);
\draw[draw={rgb,255:red,200; green,200; blue,200}](2.5,-7.2)--(39.5,-7.2);
\draw[draw={rgb,255:red,200; green,200; blue,200}](2.5,-8.2)--(39.5,-8.2);
\draw[draw={rgb,255:red,200; green,200; blue,200}](2.5,-9.2)--(39.5,-9.2);
\draw[draw={rgb,255:red,200; green,200; blue,200}](2.5,-10.2)--(39.5,-10.2);
\draw[draw={rgb,255:red,200; green,200; blue,200}](2.5,-11.2)--(39.5,-11.2);
\draw[draw={rgb,255:red,200; green,200; blue,200}](2.5,-12.2)--(39.5,-12.2);
\draw[draw={rgb,255:red,200; green,200; blue,200}](2.5,-13.2)--(39.5,-13.2);
\draw[draw={rgb,255:red,200; green,200; blue,200}](2.5,-14.2)--(39.5,-14.2);
\draw[draw={rgb,255:red,200; green,200; blue,200}](2.5,-15.2)--(39.5,-15.2);
\draw[draw={rgb,255:red,200; green,200; blue,200}](2.5,-16.2)--(39.5,-16.2);
\draw[draw={rgb,255:red,200; green,200; blue,200}](2.5,-17.2)--(39.5,-17.2);
\draw[draw={rgb,255:red,200; green,200; blue,200}](2.5,-18.2)--(39.5,-18.2);
\draw[draw={rgb,255:red,0; green,0; blue,0},fill={rgb,255:red,200; green,113; blue,55},opacity=0.5,fill opacity=0.5](3.65,-5.35) rectangle (4.35, -6.05);
\draw[draw={rgb,255:red,0; green,0; blue,0},fill={rgb,255:red,200; green,113; blue,55},opacity=0.5,fill opacity=0.5](4.65,-5.35) rectangle (5.35, -6.05);
\draw[draw={rgb,255:red,0; green,0; blue,0},fill={rgb,255:red,200; green,113; blue,55},opacity=0.5,fill opacity=0.5](4.65,-4.35) rectangle (5.35, -5.05);
\draw[draw={rgb,255:red,0; green,0; blue,0},fill={rgb,255:red,200; green,113; blue,55},opacity=0.5,fill opacity=0.5](4.65,-3.35) rectangle (5.35, -4.05);
\draw[draw={rgb,255:red,0; green,0; blue,0},fill={rgb,255:red,200; green,113; blue,55},opacity=0.5,fill opacity=0.5](5.65,-3.35) rectangle (6.35, -4.05);
\draw[draw={rgb,255:red,0; green,0; blue,0},fill={rgb,255:red,200; green,113; blue,55},opacity=0.5,fill opacity=0.5](5.65,-4.35) rectangle (6.35, -5.05);
\draw[draw={rgb,255:red,0; green,0; blue,0},fill={rgb,255:red,200; green,113; blue,55},opacity=0.5,fill opacity=0.5](6.65,-4.35) rectangle (7.35, -5.05);
\draw[draw={rgb,255:red,0; green,0; blue,0},fill={rgb,255:red,200; green,113; blue,55},opacity=0.5,fill opacity=0.5](7.65,-4.35) rectangle (8.35, -5.05);
\draw[draw={rgb,255:red,0; green,0; blue,0},fill={rgb,255:red,200; green,113; blue,55},opacity=0.5,fill opacity=0.5](7.65,-3.35) rectangle (8.35, -4.05);
\draw[draw={rgb,255:red,0; green,0; blue,0},fill={rgb,255:red,200; green,113; blue,55},opacity=0.5,fill opacity=0.5](7.65,-2.35) rectangle (8.35, -3.05);
\draw[draw={rgb,255:red,0; green,0; blue,0},fill={rgb,255:red,200; green,113; blue,55},opacity=0.5,fill opacity=0.5](7.65,-1.35) rectangle (8.35, -2.05);
\draw[draw={rgb,255:red,0; green,0; blue,0},fill={rgb,255:red,200; green,113; blue,55},opacity=0.5,fill opacity=0.5](8.65,-1.35) rectangle (9.35, -2.05);
\draw[draw={rgb,255:red,0; green,0; blue,0},fill={rgb,255:red,200; green,113; blue,55},opacity=0.5,fill opacity=0.5](9.65,-1.35) rectangle (10.35, -2.05);
\draw[draw={rgb,255:red,0; green,0; blue,0},fill={rgb,255:red,200; green,113; blue,55},opacity=0.5,fill opacity=0.5](9.65,-2.35) rectangle (10.35, -3.05);
\draw[draw={rgb,255:red,0; green,0; blue,0},fill={rgb,255:red,200; green,113; blue,55},opacity=0.5,fill opacity=0.5](10.65,-2.35) rectangle (11.35, -3.05);
\draw[draw={rgb,255:red,0; green,0; blue,0},fill={rgb,255:red,200; green,113; blue,55},opacity=0.5,fill opacity=0.5](11.65,-2.35) rectangle (12.35, -3.05);
\draw[draw={rgb,255:red,0; green,0; blue,0},fill={rgb,255:red,200; green,113; blue,55},opacity=0.5,fill opacity=0.5](12.65,-2.35) rectangle (13.35, -3.05);
\draw[draw={rgb,255:red,0; green,0; blue,0},fill={rgb,255:red,200; green,113; blue,55},opacity=0.5,fill opacity=0.5](12.65,-3.35) rectangle (13.35, -4.05);
\draw[draw={rgb,255:red,0; green,0; blue,0},fill={rgb,255:red,200; green,113; blue,55},opacity=0.5,fill opacity=0.5](12.65,-4.35) rectangle (13.35, -5.05);
\draw[draw={rgb,255:red,0; green,0; blue,0},fill={rgb,255:red,200; green,113; blue,55},opacity=0.5,fill opacity=0.5](12.65,-5.35) rectangle (13.35, -6.05);
\draw[draw={rgb,255:red,0; green,0; blue,0},fill={rgb,255:red,200; green,113; blue,55},opacity=0.5,fill opacity=0.5](12.65,-6.35) rectangle (13.35, -7.05);
\draw[draw={rgb,255:red,0; green,0; blue,0},fill={rgb,255:red,200; green,113; blue,55},opacity=0.5,fill opacity=0.5](13.65,-6.35) rectangle (14.35, -7.05);
\draw[draw={rgb,255:red,0; green,0; blue,0},fill={rgb,255:red,200; green,113; blue,55},opacity=0.5,fill opacity=0.5](14.65,-6.35) rectangle (15.35, -7.05);
\draw[draw={rgb,255:red,0; green,0; blue,0},fill={rgb,255:red,200; green,113; blue,55},opacity=0.5,fill opacity=0.5](14.65,-5.35) rectangle (15.35, -6.05);
\draw[draw={rgb,255:red,0; green,0; blue,0},fill={rgb,255:red,200; green,113; blue,55},opacity=0.5,fill opacity=0.5](15.65,-5.35) rectangle (16.35, -6.05);
\draw[draw={rgb,255:red,0; green,0; blue,0},fill={rgb,255:red,200; green,113; blue,55},opacity=0.5,fill opacity=0.5](16.65,-5.35) rectangle (17.35, -6.05);
\draw[draw={rgb,255:red,0; green,0; blue,0},fill={rgb,255:red,200; green,113; blue,55},opacity=0.5,fill opacity=0.5](16.65,-4.35) rectangle (17.35, -5.05);
\draw[draw={rgb,255:red,0; green,0; blue,0},fill={rgb,255:red,200; green,113; blue,55},opacity=0.5,fill opacity=0.5](16.65,-3.35) rectangle (17.35, -4.05);
\draw[draw={rgb,255:red,0; green,0; blue,0},fill={rgb,255:red,200; green,113; blue,55},opacity=0.5,fill opacity=0.5](16.65,-2.35) rectangle (17.35, -3.05);
\draw[draw={rgb,255:red,0; green,0; blue,0},fill={rgb,255:red,200; green,113; blue,55},opacity=0.5,fill opacity=0.5](16.65,-1.35) rectangle (17.35, -2.05);
\draw[draw={rgb,255:red,0; green,0; blue,0},fill={rgb,255:red,200; green,113; blue,55},opacity=0.5,fill opacity=0.5](16.65,-0.35) rectangle (17.35, -1.05);
\draw[draw={rgb,255:red,0; green,0; blue,0},fill={rgb,255:red,200; green,113; blue,55},opacity=0.5,fill opacity=0.5](17.65,-0.35) rectangle (18.35, -1.05);
\draw[draw={rgb,255:red,0; green,0; blue,0},fill={rgb,255:red,200; green,113; blue,55},opacity=0.5,fill opacity=0.5](18.65,-0.35) rectangle (19.35, -1.05);
\draw[draw={rgb,255:red,0; green,0; blue,0},fill={rgb,255:red,200; green,113; blue,55},opacity=0.5,fill opacity=0.5](18.65,0.65) rectangle (19.35, -0.05);
\draw[draw={rgb,255:red,0; green,0; blue,0},fill={rgb,255:red,200; green,113; blue,55},opacity=0.5,fill opacity=0.5](19.65,0.65) rectangle (20.35, -0.05);
\draw[draw={rgb,255:red,0; green,0; blue,0},fill={rgb,255:red,200; green,113; blue,55},opacity=0.5,fill opacity=0.5](20.65,0.65) rectangle (21.35, -0.05);
\draw[draw={rgb,255:red,0; green,0; blue,0},fill={rgb,255:red,200; green,113; blue,55},opacity=0.5,fill opacity=0.5](20.65,-0.35) rectangle (21.35, -1.05);
\draw[draw={rgb,255:red,0; green,0; blue,0},fill={rgb,255:red,200; green,113; blue,55},opacity=0.5,fill opacity=0.5](20.65,-1.35) rectangle (21.35, -2.05);
\draw[draw={rgb,255:red,0; green,0; blue,0},fill={rgb,255:red,200; green,113; blue,55},opacity=0.5,fill opacity=0.5](21.65,-1.35) rectangle (22.35, -2.05);
\draw[draw={rgb,255:red,0; green,0; blue,0},fill={rgb,255:red,200; green,113; blue,55},opacity=0.5,fill opacity=0.5](22.65,-1.35) rectangle (23.35, -2.05);
\draw[draw={rgb,255:red,0; green,0; blue,0},fill={rgb,255:red,200; green,113; blue,55},opacity=0.5,fill opacity=0.5](23.65,-1.35) rectangle (24.35, -2.05);
\draw[draw={rgb,255:red,0; green,0; blue,0},fill={rgb,255:red,200; green,113; blue,55},opacity=0.5,fill opacity=0.5](23.65,-2.35) rectangle (24.35, -3.05);
\draw[draw={rgb,255:red,0; green,0; blue,0},fill={rgb,255:red,200; green,113; blue,55},opacity=0.5,fill opacity=0.5](23.65,-3.35) rectangle (24.35, -4.05);
\draw[draw={rgb,255:red,0; green,0; blue,0},fill={rgb,255:red,200; green,113; blue,55},opacity=0.5,fill opacity=0.5](24.65,-3.35) rectangle (25.35, -4.05);
\draw[draw={rgb,255:red,0; green,0; blue,0},fill={rgb,255:red,200; green,113; blue,55},opacity=0.5,fill opacity=0.5](25.65,-3.35) rectangle (26.35, -4.05);
\draw[draw={rgb,255:red,0; green,0; blue,0},fill={rgb,255:red,200; green,113; blue,55},opacity=0.5,fill opacity=0.5](26.65,-3.35) rectangle (27.35, -4.05);
\draw[draw={rgb,255:red,0; green,0; blue,0},fill={rgb,255:red,200; green,113; blue,55},opacity=0.5,fill opacity=0.5](26.65,-2.35) rectangle (27.35, -3.05);
\draw[draw={rgb,255:red,0; green,0; blue,0},fill={rgb,255:red,200; green,113; blue,55},opacity=0.5,fill opacity=0.5](26.65,-1.35) rectangle (27.35, -2.05);
\draw[draw={rgb,255:red,0; green,0; blue,0},fill={rgb,255:red,200; green,113; blue,55},opacity=0.5,fill opacity=0.5](26.65,-0.35) rectangle (27.35, -1.05);
\draw[draw={rgb,255:red,0; green,0; blue,0},fill={rgb,255:red,200; green,113; blue,55},opacity=0.5,fill opacity=0.5](26.65,0.65) rectangle (27.35, -0.05);
\draw[draw={rgb,255:red,0; green,0; blue,0},fill={rgb,255:red,200; green,113; blue,55},opacity=0.5,fill opacity=0.5](27.65,0.65) rectangle (28.35, -0.05);
\draw[draw={rgb,255:red,0; green,0; blue,0},fill={rgb,255:red,200; green,113; blue,55},opacity=0.5,fill opacity=0.5](28.65,0.65) rectangle (29.35, -0.05);
\draw[draw={rgb,255:red,0; green,0; blue,0},fill={rgb,255:red,200; green,113; blue,55},opacity=0.5,fill opacity=0.5](29.65,0.65) rectangle (30.35, -0.05);
\draw[draw={rgb,255:red,0; green,0; blue,0},fill={rgb,255:red,200; green,113; blue,55},opacity=0.5,fill opacity=0.5](30.65,0.65) rectangle (31.35, -0.05);
\draw[draw={rgb,255:red,0; green,0; blue,0},fill={rgb,255:red,200; green,113; blue,55},opacity=0.5,fill opacity=0.5](31.65,0.65) rectangle (32.35, -0.05);
\draw[draw={rgb,255:red,0; green,0; blue,0},fill={rgb,255:red,200; green,113; blue,55},opacity=0.5,fill opacity=0.5](32.65,0.65) rectangle (33.35, -0.05);
\draw[draw={rgb,255:red,200; green,113; blue,55},opacity=0.5](4,-5.7)--(5,-5.7)--(5,-3.7)--(6,-3.7)--(6,-4.7)--(8,-4.7)--(8,-1.7)--(10,-1.7)--(10,-2.7)--(13,-2.7)--(13,-6.7)--(15,-6.7)--(15,-5.7)--(17,-5.7)--(17,-0.7)--(19,-0.7)--(19,0.3)--(21,0.3)--(21,-1.7)--(24,-1.7)--(24,-3.7)--(27,-3.7)--(27,0.3)--(33,0.3);
\draw[draw={rgb,255:red,0; green,0; blue,0},fill={rgb,255:red,55; green,200; blue,55},opacity=0.5,fill opacity=0.5](16.65,-2.35) rectangle (17.35, -3.05);
\draw[draw={rgb,255:red,0; green,0; blue,0},fill={rgb,255:red,55; green,200; blue,55},opacity=0.5,fill opacity=0.5](17.65,-2.35) rectangle (18.35, -3.05);
\draw[draw={rgb,255:red,0; green,0; blue,0},fill={rgb,255:red,55; green,200; blue,55},opacity=0.5,fill opacity=0.5](18.65,-2.35) rectangle (19.35, -3.05);
\draw[draw={rgb,255:red,0; green,0; blue,0},fill={rgb,255:red,55; green,200; blue,55},opacity=0.5,fill opacity=0.5](18.65,-3.35) rectangle (19.35, -4.05);
\draw[draw={rgb,255:red,0; green,0; blue,0},fill={rgb,255:red,55; green,200; blue,55},opacity=0.5,fill opacity=0.5](18.65,-4.35) rectangle (19.35, -5.05);
\draw[draw={rgb,255:red,0; green,0; blue,0},fill={rgb,255:red,55; green,200; blue,55},opacity=0.5,fill opacity=0.5](18.65,-5.35) rectangle (19.35, -6.05);
\draw[draw={rgb,255:red,0; green,0; blue,0},fill={rgb,255:red,55; green,200; blue,55},opacity=0.5,fill opacity=0.5](18.65,-6.35) rectangle (19.35, -7.05);
\draw[draw={rgb,255:red,0; green,0; blue,0},fill={rgb,255:red,55; green,200; blue,55},opacity=0.5,fill opacity=0.5](19.65,-6.35) rectangle (20.35, -7.05);
\draw[draw={rgb,255:red,55; green,200; blue,55},opacity=0.5](17,-2.7)--(19,-2.7)--(19,-6.7)--(20,-6.7);
\draw[draw={rgb,255:red,0; green,0; blue,0},opacity=0.05](27,0.3)--(27,-3.7)--(24,-3.7)--(24,-1.7)--(21,-1.7)--(21,0.3)--(19,0.3)--(19,-0.7)--(17,-0.7)--(17,-5.7)--(15,-5.7)--(15,-6.7)--(13,-6.7)--(13,-2.7)--(10,-2.7)--(10,-1.7)--(8,-1.7)--(8,-4.7)--(6,-4.7)--(6,-3.7)--(5,-3.7)--(5,-5.7)--(3.5,-5.7)--(3.5,-17.2)--(38.5,-17.2)--(38.5,0.8)--(32.5,0.8)--(32.5,0.3)-- cycle;
\draw[draw=none,fill={rgb,255:red,0; green,0; blue,0},opacity=0.05](26,-18.2)--(39.5,-18.2)--(39.5,1.8)--(32.5,1.8)--(32.5,0.3)--(27,0.3)--(27,-3.7)--(24,-3.7)--(24,-1.7)--(21,-1.7)--(21,0.3)--(19,0.3)--(19,-0.7)--(17,-0.7)--(17,-2.7)--(19,-2.7)--(19,-6.7)--(20,-6.7)--(20,-8.7)--(22,-8.7)--(22,-10.7)--(21,-10.7)--(21,-11.7)--(24,-11.7)--(24,-9.7)--(26,-9.7);
\draw(37, 0.12) node[anchor=south west] {$\mathcal H_n$};
\draw[draw=none,fill={rgb,255:red,0; green,0; blue,0},thin](11, -2.7) ellipse (0.1cm and 0.1cm);\draw[draw=none,fill={rgb,255:red,0; green,0; blue,0},thin](17, -2.7) ellipse (0.1cm and 0.1cm);\draw[draw=none,fill={rgb,255:red,0; green,0; blue,0},thin](14, -6.7) ellipse (0.1cm and 0.1cm);\draw(10.24, -2.46) node[anchor=south west] {$u_n$};
\draw(15.26, -3.26) node[anchor=south west] {$v_n$};
\draw(13.25, -8.23) node[anchor=south west] {$m_n$};
\draw[draw={rgb,255:red,0; green,0; blue,0}](20,-6.7)--(20,-8.7)--(22,-8.7)--(22,-10.7)--(21,-10.7)--(21,-11.7)--(24,-11.7)--(24,-9.7)--(26,-9.7)--(26,-16.7);
\draw[draw=none,fill={rgb,255:red,0; green,0; blue,0},thin](20, -6.7) ellipse (0.1cm and 0.1cm);\draw[draw=none,fill={rgb,255:red,0; green,0; blue,0},thin](32.5, 0.3) ellipse (0.1cm and 0.1cm);\draw(32.63, 4.3) node[anchor=south west] {$\ell^k$};
\draw[draw={rgb,255:red,0; green,0; blue,0}](32.5,0.3)--(32.5,5.3);
\draw[draw={rgb,255:red,0; green,0; blue,0},dashed,thin](26,-18.7)--(26,-16.7);
\draw(24.65, -17.51) node[anchor=south west] {$f_n+\vect{P_iP_j}$};
\end{tikzpicture}
    \end{center}
    \caption{The component $\mathcal{H}_n$, whose border $h_n$ is defined as the concatenation of $f_n+\vpij$, $P_{\range{u_n}{u_n+1}{m_n}}+\vpij$, $P_{\rng{v_n}{k}}$ and $l^k$ (and a half-length segment between $P_k$ and $l^k(0)$).
}
\label{fig:induction-hn}
  \end{figure}

  \argument{Finding $m_{n+1}$ and $f_{n+1}$}
  Let $a\in\{\rng {u_n} {m_n}\}$ and $t\geq 1$ be the two integers such that $(a, t)$ is the largest pair (ordered first by largest $a$, then by largest $t$) such that $\pos{P_{a}}+t\vpij = \pos{P_{m_{n+1}}}$ for some integer $m_{n+1} \in \{ i+1,i+2, \ldots,k\}$.
  These indices exist because by~\ref{umv:unvn}, with $a = u_n$, $m_{n+1} = v_n$ and $t = 1$, we have $\pos{P_{u_n}}+\vpij=\pos{P_{v_n}}$.

  Using $a$ and $t$, we set $f_{n+1} = \concat{f_n, \reverse {\embed{P_{\rng a {m_n}}}}} + t\vpij$ (see Figure~\ref{fig:induction-mn1}).
  \begin{figure}[ht]
    \begin{center}
      \begin{tikzpicture}[scale=\scale]\draw[draw={rgb,255:red,200; green,200; blue,200}](2.5,1.8) rectangle (39.5, -18.2);
\draw[draw={rgb,255:red,200; green,200; blue,200}](2.5,-18.2)--(2.5,1.8);
\draw[draw={rgb,255:red,200; green,200; blue,200}](3.5,-18.2)--(3.5,1.8);
\draw[draw={rgb,255:red,200; green,200; blue,200}](4.5,-18.2)--(4.5,1.8);
\draw[draw={rgb,255:red,200; green,200; blue,200}](5.5,-18.2)--(5.5,1.8);
\draw[draw={rgb,255:red,200; green,200; blue,200}](6.5,-18.2)--(6.5,1.8);
\draw[draw={rgb,255:red,200; green,200; blue,200}](7.5,-18.2)--(7.5,1.8);
\draw[draw={rgb,255:red,200; green,200; blue,200}](8.5,-18.2)--(8.5,1.8);
\draw[draw={rgb,255:red,200; green,200; blue,200}](9.5,-18.2)--(9.5,1.8);
\draw[draw={rgb,255:red,200; green,200; blue,200}](10.5,-18.2)--(10.5,1.8);
\draw[draw={rgb,255:red,200; green,200; blue,200}](11.5,-18.2)--(11.5,1.8);
\draw[draw={rgb,255:red,200; green,200; blue,200}](12.5,-18.2)--(12.5,1.8);
\draw[draw={rgb,255:red,200; green,200; blue,200}](13.5,-18.2)--(13.5,1.8);
\draw[draw={rgb,255:red,200; green,200; blue,200}](14.5,-18.2)--(14.5,1.8);
\draw[draw={rgb,255:red,200; green,200; blue,200}](15.5,-18.2)--(15.5,1.8);
\draw[draw={rgb,255:red,200; green,200; blue,200}](16.5,-18.2)--(16.5,1.8);
\draw[draw={rgb,255:red,200; green,200; blue,200}](17.5,-18.2)--(17.5,1.8);
\draw[draw={rgb,255:red,200; green,200; blue,200}](18.5,-18.2)--(18.5,1.8);
\draw[draw={rgb,255:red,200; green,200; blue,200}](19.5,-18.2)--(19.5,1.8);
\draw[draw={rgb,255:red,200; green,200; blue,200}](20.5,-18.2)--(20.5,1.8);
\draw[draw={rgb,255:red,200; green,200; blue,200}](21.5,-18.2)--(21.5,1.8);
\draw[draw={rgb,255:red,200; green,200; blue,200}](22.5,-18.2)--(22.5,1.8);
\draw[draw={rgb,255:red,200; green,200; blue,200}](23.5,-18.2)--(23.5,1.8);
\draw[draw={rgb,255:red,200; green,200; blue,200}](24.5,-18.2)--(24.5,1.8);
\draw[draw={rgb,255:red,200; green,200; blue,200}](25.5,-18.2)--(25.5,1.8);
\draw[draw={rgb,255:red,200; green,200; blue,200}](26.5,-18.2)--(26.5,1.8);
\draw[draw={rgb,255:red,200; green,200; blue,200}](27.5,-18.2)--(27.5,1.8);
\draw[draw={rgb,255:red,200; green,200; blue,200}](28.5,-18.2)--(28.5,1.8);
\draw[draw={rgb,255:red,200; green,200; blue,200}](29.5,-18.2)--(29.5,1.8);
\draw[draw={rgb,255:red,200; green,200; blue,200}](30.5,-18.2)--(30.5,1.8);
\draw[draw={rgb,255:red,200; green,200; blue,200}](31.5,-18.2)--(31.5,1.8);
\draw[draw={rgb,255:red,200; green,200; blue,200}](32.5,-18.2)--(32.5,1.8);
\draw[draw={rgb,255:red,200; green,200; blue,200}](33.5,-18.2)--(33.5,1.8);
\draw[draw={rgb,255:red,200; green,200; blue,200}](34.5,-18.2)--(34.5,1.8);
\draw[draw={rgb,255:red,200; green,200; blue,200}](35.5,-18.2)--(35.5,1.8);
\draw[draw={rgb,255:red,200; green,200; blue,200}](36.5,-18.2)--(36.5,1.8);
\draw[draw={rgb,255:red,200; green,200; blue,200}](37.5,-18.2)--(37.5,1.8);
\draw[draw={rgb,255:red,200; green,200; blue,200}](38.5,-18.2)--(38.5,1.8);
\draw[draw={rgb,255:red,200; green,200; blue,200}](39.5,-18.2)--(39.5,1.8);
\draw[draw={rgb,255:red,200; green,200; blue,200}](2.5,1.8)--(39.5,1.8);
\draw[draw={rgb,255:red,200; green,200; blue,200}](2.5,0.8)--(39.5,0.8);
\draw[draw={rgb,255:red,200; green,200; blue,200}](2.5,-0.2)--(39.5,-0.2);
\draw[draw={rgb,255:red,200; green,200; blue,200}](2.5,-1.2)--(39.5,-1.2);
\draw[draw={rgb,255:red,200; green,200; blue,200}](2.5,-2.2)--(39.5,-2.2);
\draw[draw={rgb,255:red,200; green,200; blue,200}](2.5,-3.2)--(39.5,-3.2);
\draw[draw={rgb,255:red,200; green,200; blue,200}](2.5,-4.2)--(39.5,-4.2);
\draw[draw={rgb,255:red,200; green,200; blue,200}](2.5,-5.2)--(39.5,-5.2);
\draw[draw={rgb,255:red,200; green,200; blue,200}](2.5,-6.2)--(39.5,-6.2);
\draw[draw={rgb,255:red,200; green,200; blue,200}](2.5,-7.2)--(39.5,-7.2);
\draw[draw={rgb,255:red,200; green,200; blue,200}](2.5,-8.2)--(39.5,-8.2);
\draw[draw={rgb,255:red,200; green,200; blue,200}](2.5,-9.2)--(39.5,-9.2);
\draw[draw={rgb,255:red,200; green,200; blue,200}](2.5,-10.2)--(39.5,-10.2);
\draw[draw={rgb,255:red,200; green,200; blue,200}](2.5,-11.2)--(39.5,-11.2);
\draw[draw={rgb,255:red,200; green,200; blue,200}](2.5,-12.2)--(39.5,-12.2);
\draw[draw={rgb,255:red,200; green,200; blue,200}](2.5,-13.2)--(39.5,-13.2);
\draw[draw={rgb,255:red,200; green,200; blue,200}](2.5,-14.2)--(39.5,-14.2);
\draw[draw={rgb,255:red,200; green,200; blue,200}](2.5,-15.2)--(39.5,-15.2);
\draw[draw={rgb,255:red,200; green,200; blue,200}](2.5,-16.2)--(39.5,-16.2);
\draw[draw={rgb,255:red,200; green,200; blue,200}](2.5,-17.2)--(39.5,-17.2);
\draw[draw={rgb,255:red,200; green,200; blue,200}](2.5,-18.2)--(39.5,-18.2);
\draw[draw={rgb,255:red,0; green,0; blue,0},fill={rgb,255:red,200; green,113; blue,55},opacity=0.5,fill opacity=0.5](3.65,-5.35) rectangle (4.35, -6.05);
\draw[draw={rgb,255:red,0; green,0; blue,0},fill={rgb,255:red,200; green,113; blue,55},opacity=0.5,fill opacity=0.5](4.65,-5.35) rectangle (5.35, -6.05);
\draw[draw={rgb,255:red,0; green,0; blue,0},fill={rgb,255:red,200; green,113; blue,55},opacity=0.5,fill opacity=0.5](4.65,-4.35) rectangle (5.35, -5.05);
\draw[draw={rgb,255:red,0; green,0; blue,0},fill={rgb,255:red,200; green,113; blue,55},opacity=0.5,fill opacity=0.5](4.65,-3.35) rectangle (5.35, -4.05);
\draw[draw={rgb,255:red,0; green,0; blue,0},fill={rgb,255:red,200; green,113; blue,55},opacity=0.5,fill opacity=0.5](5.65,-3.35) rectangle (6.35, -4.05);
\draw[draw={rgb,255:red,0; green,0; blue,0},fill={rgb,255:red,200; green,113; blue,55},opacity=0.5,fill opacity=0.5](5.65,-4.35) rectangle (6.35, -5.05);
\draw[draw={rgb,255:red,0; green,0; blue,0},fill={rgb,255:red,200; green,113; blue,55},opacity=0.5,fill opacity=0.5](6.65,-4.35) rectangle (7.35, -5.05);
\draw[draw={rgb,255:red,0; green,0; blue,0},fill={rgb,255:red,200; green,113; blue,55},opacity=0.5,fill opacity=0.5](7.65,-4.35) rectangle (8.35, -5.05);
\draw[draw={rgb,255:red,0; green,0; blue,0},fill={rgb,255:red,200; green,113; blue,55},opacity=0.5,fill opacity=0.5](7.65,-3.35) rectangle (8.35, -4.05);
\draw[draw={rgb,255:red,0; green,0; blue,0},fill={rgb,255:red,200; green,113; blue,55},opacity=0.5,fill opacity=0.5](7.65,-2.35) rectangle (8.35, -3.05);
\draw[draw={rgb,255:red,0; green,0; blue,0},fill={rgb,255:red,200; green,113; blue,55},opacity=0.5,fill opacity=0.5](7.65,-1.35) rectangle (8.35, -2.05);
\draw[draw={rgb,255:red,0; green,0; blue,0},fill={rgb,255:red,200; green,113; blue,55},opacity=0.5,fill opacity=0.5](8.65,-1.35) rectangle (9.35, -2.05);
\draw[draw={rgb,255:red,0; green,0; blue,0},fill={rgb,255:red,200; green,113; blue,55},opacity=0.5,fill opacity=0.5](9.65,-1.35) rectangle (10.35, -2.05);
\draw[draw={rgb,255:red,0; green,0; blue,0},fill={rgb,255:red,200; green,113; blue,55},opacity=0.5,fill opacity=0.5](9.65,-2.35) rectangle (10.35, -3.05);
\draw[draw={rgb,255:red,0; green,0; blue,0},fill={rgb,255:red,200; green,113; blue,55},opacity=0.5,fill opacity=0.5](10.65,-2.35) rectangle (11.35, -3.05);
\draw[draw={rgb,255:red,0; green,0; blue,0},fill={rgb,255:red,200; green,113; blue,55},opacity=0.5,fill opacity=0.5](11.65,-2.35) rectangle (12.35, -3.05);
\draw[draw={rgb,255:red,0; green,0; blue,0},fill={rgb,255:red,200; green,113; blue,55},opacity=0.5,fill opacity=0.5](12.65,-2.35) rectangle (13.35, -3.05);
\draw[draw={rgb,255:red,0; green,0; blue,0},fill={rgb,255:red,200; green,113; blue,55},opacity=0.5,fill opacity=0.5](12.65,-3.35) rectangle (13.35, -4.05);
\draw[draw={rgb,255:red,0; green,0; blue,0},fill={rgb,255:red,200; green,113; blue,55},opacity=0.5,fill opacity=0.5](12.65,-4.35) rectangle (13.35, -5.05);
\draw[draw={rgb,255:red,0; green,0; blue,0},fill={rgb,255:red,200; green,113; blue,55},opacity=0.5,fill opacity=0.5](12.65,-5.35) rectangle (13.35, -6.05);
\draw[draw={rgb,255:red,0; green,0; blue,0},fill={rgb,255:red,200; green,113; blue,55},opacity=0.5,fill opacity=0.5](12.65,-6.35) rectangle (13.35, -7.05);
\draw[draw={rgb,255:red,0; green,0; blue,0},fill={rgb,255:red,200; green,113; blue,55},opacity=0.5,fill opacity=0.5](13.65,-6.35) rectangle (14.35, -7.05);
\draw[draw={rgb,255:red,0; green,0; blue,0},fill={rgb,255:red,200; green,113; blue,55},opacity=0.5,fill opacity=0.5](14.65,-6.35) rectangle (15.35, -7.05);
\draw[draw={rgb,255:red,0; green,0; blue,0},fill={rgb,255:red,200; green,113; blue,55},opacity=0.5,fill opacity=0.5](14.65,-5.35) rectangle (15.35, -6.05);
\draw[draw={rgb,255:red,0; green,0; blue,0},fill={rgb,255:red,200; green,113; blue,55},opacity=0.5,fill opacity=0.5](15.65,-5.35) rectangle (16.35, -6.05);
\draw[draw={rgb,255:red,0; green,0; blue,0},fill={rgb,255:red,200; green,113; blue,55},opacity=0.5,fill opacity=0.5](16.65,-5.35) rectangle (17.35, -6.05);
\draw[draw={rgb,255:red,0; green,0; blue,0},fill={rgb,255:red,200; green,113; blue,55},opacity=0.5,fill opacity=0.5](16.65,-4.35) rectangle (17.35, -5.05);
\draw[draw={rgb,255:red,0; green,0; blue,0},fill={rgb,255:red,200; green,113; blue,55},opacity=0.5,fill opacity=0.5](16.65,-3.35) rectangle (17.35, -4.05);
\draw[draw={rgb,255:red,0; green,0; blue,0},fill={rgb,255:red,200; green,113; blue,55},opacity=0.5,fill opacity=0.5](16.65,-2.35) rectangle (17.35, -3.05);
\draw[draw={rgb,255:red,0; green,0; blue,0},fill={rgb,255:red,200; green,113; blue,55},opacity=0.5,fill opacity=0.5](16.65,-1.35) rectangle (17.35, -2.05);
\draw[draw={rgb,255:red,0; green,0; blue,0},fill={rgb,255:red,200; green,113; blue,55},opacity=0.5,fill opacity=0.5](16.65,-0.35) rectangle (17.35, -1.05);
\draw[draw={rgb,255:red,0; green,0; blue,0},fill={rgb,255:red,200; green,113; blue,55},opacity=0.5,fill opacity=0.5](17.65,-0.35) rectangle (18.35, -1.05);
\draw[draw={rgb,255:red,0; green,0; blue,0},fill={rgb,255:red,200; green,113; blue,55},opacity=0.5,fill opacity=0.5](18.65,-0.35) rectangle (19.35, -1.05);
\draw[draw={rgb,255:red,0; green,0; blue,0},fill={rgb,255:red,200; green,113; blue,55},opacity=0.5,fill opacity=0.5](18.65,0.65) rectangle (19.35, -0.05);
\draw[draw={rgb,255:red,0; green,0; blue,0},fill={rgb,255:red,200; green,113; blue,55},opacity=0.5,fill opacity=0.5](19.65,0.65) rectangle (20.35, -0.05);
\draw[draw={rgb,255:red,0; green,0; blue,0},fill={rgb,255:red,200; green,113; blue,55},opacity=0.5,fill opacity=0.5](20.65,0.65) rectangle (21.35, -0.05);
\draw[draw={rgb,255:red,0; green,0; blue,0},fill={rgb,255:red,200; green,113; blue,55},opacity=0.5,fill opacity=0.5](20.65,-0.35) rectangle (21.35, -1.05);
\draw[draw={rgb,255:red,0; green,0; blue,0},fill={rgb,255:red,200; green,113; blue,55},opacity=0.5,fill opacity=0.5](20.65,-1.35) rectangle (21.35, -2.05);
\draw[draw={rgb,255:red,0; green,0; blue,0},fill={rgb,255:red,200; green,113; blue,55},opacity=0.5,fill opacity=0.5](21.65,-1.35) rectangle (22.35, -2.05);
\draw[draw={rgb,255:red,0; green,0; blue,0},fill={rgb,255:red,200; green,113; blue,55},opacity=0.5,fill opacity=0.5](22.65,-1.35) rectangle (23.35, -2.05);
\draw[draw={rgb,255:red,0; green,0; blue,0},fill={rgb,255:red,200; green,113; blue,55},opacity=0.5,fill opacity=0.5](23.65,-1.35) rectangle (24.35, -2.05);
\draw[draw={rgb,255:red,0; green,0; blue,0},fill={rgb,255:red,200; green,113; blue,55},opacity=0.5,fill opacity=0.5](23.65,-2.35) rectangle (24.35, -3.05);
\draw[draw={rgb,255:red,0; green,0; blue,0},fill={rgb,255:red,200; green,113; blue,55},opacity=0.5,fill opacity=0.5](23.65,-3.35) rectangle (24.35, -4.05);
\draw[draw={rgb,255:red,0; green,0; blue,0},fill={rgb,255:red,200; green,113; blue,55},opacity=0.5,fill opacity=0.5](24.65,-3.35) rectangle (25.35, -4.05);
\draw[draw={rgb,255:red,0; green,0; blue,0},fill={rgb,255:red,200; green,113; blue,55},opacity=0.5,fill opacity=0.5](25.65,-3.35) rectangle (26.35, -4.05);
\draw[draw={rgb,255:red,0; green,0; blue,0},fill={rgb,255:red,200; green,113; blue,55},opacity=0.5,fill opacity=0.5](26.65,-3.35) rectangle (27.35, -4.05);
\draw[draw={rgb,255:red,0; green,0; blue,0},fill={rgb,255:red,200; green,113; blue,55},opacity=0.5,fill opacity=0.5](26.65,-2.35) rectangle (27.35, -3.05);
\draw[draw={rgb,255:red,0; green,0; blue,0},fill={rgb,255:red,200; green,113; blue,55},opacity=0.5,fill opacity=0.5](26.65,-1.35) rectangle (27.35, -2.05);
\draw[draw={rgb,255:red,0; green,0; blue,0},fill={rgb,255:red,200; green,113; blue,55},opacity=0.5,fill opacity=0.5](26.65,-0.35) rectangle (27.35, -1.05);
\draw[draw={rgb,255:red,0; green,0; blue,0},fill={rgb,255:red,200; green,113; blue,55},opacity=0.5,fill opacity=0.5](26.65,0.65) rectangle (27.35, -0.05);
\draw[draw={rgb,255:red,0; green,0; blue,0},fill={rgb,255:red,200; green,113; blue,55},opacity=0.5,fill opacity=0.5](27.65,0.65) rectangle (28.35, -0.05);
\draw[draw={rgb,255:red,0; green,0; blue,0},fill={rgb,255:red,200; green,113; blue,55},opacity=0.5,fill opacity=0.5](28.65,0.65) rectangle (29.35, -0.05);
\draw[draw={rgb,255:red,0; green,0; blue,0},fill={rgb,255:red,200; green,113; blue,55},opacity=0.5,fill opacity=0.5](29.65,0.65) rectangle (30.35, -0.05);
\draw[draw={rgb,255:red,0; green,0; blue,0},fill={rgb,255:red,200; green,113; blue,55},opacity=0.5,fill opacity=0.5](30.65,0.65) rectangle (31.35, -0.05);
\draw[draw={rgb,255:red,0; green,0; blue,0},fill={rgb,255:red,200; green,113; blue,55},opacity=0.5,fill opacity=0.5](31.65,0.65) rectangle (32.35, -0.05);
\draw[draw={rgb,255:red,0; green,0; blue,0},fill={rgb,255:red,200; green,113; blue,55},opacity=0.5,fill opacity=0.5](32.65,0.65) rectangle (33.35, -0.05);
\draw[draw={rgb,255:red,200; green,113; blue,55},opacity=0.5](4,-5.7)--(5,-5.7)--(5,-3.7)--(6,-3.7)--(6,-4.7)--(8,-4.7)--(8,-1.7)--(10,-1.7)--(10,-2.7)--(13,-2.7)--(13,-6.7)--(15,-6.7)--(15,-5.7)--(17,-5.7)--(17,-0.7)--(19,-0.7)--(19,0.3)--(21,0.3)--(21,-1.7)--(24,-1.7)--(24,-3.7)--(27,-3.7)--(27,0.3)--(33,0.3);
\draw[draw={rgb,255:red,0; green,0; blue,0},fill={rgb,255:red,55; green,200; blue,55},opacity=0.5,fill opacity=0.5](16.65,-2.35) rectangle (17.35, -3.05);
\draw[draw={rgb,255:red,0; green,0; blue,0},fill={rgb,255:red,55; green,200; blue,55},opacity=0.5,fill opacity=0.5](17.65,-2.35) rectangle (18.35, -3.05);
\draw[draw={rgb,255:red,0; green,0; blue,0},fill={rgb,255:red,55; green,200; blue,55},opacity=0.5,fill opacity=0.5](18.65,-2.35) rectangle (19.35, -3.05);
\draw[draw={rgb,255:red,0; green,0; blue,0},fill={rgb,255:red,55; green,200; blue,55},opacity=0.5,fill opacity=0.5](18.65,-3.35) rectangle (19.35, -4.05);
\draw[draw={rgb,255:red,0; green,0; blue,0},fill={rgb,255:red,55; green,200; blue,55},opacity=0.5,fill opacity=0.5](18.65,-4.35) rectangle (19.35, -5.05);
\draw[draw={rgb,255:red,0; green,0; blue,0},fill={rgb,255:red,55; green,200; blue,55},opacity=0.5,fill opacity=0.5](18.65,-5.35) rectangle (19.35, -6.05);
\draw[draw={rgb,255:red,0; green,0; blue,0},fill={rgb,255:red,55; green,200; blue,55},opacity=0.5,fill opacity=0.5](18.65,-6.35) rectangle (19.35, -7.05);
\draw[draw={rgb,255:red,0; green,0; blue,0},fill={rgb,255:red,55; green,200; blue,55},opacity=0.5,fill opacity=0.5](19.65,-6.35) rectangle (20.35, -7.05);
\draw[draw={rgb,255:red,55; green,200; blue,55},opacity=0.5](17,-2.7)--(19,-2.7)--(19,-6.7)--(20,-6.7);
\draw[draw={rgb,255:red,0; green,0; blue,0},opacity=0.05](27,0.3)--(27,-3.7)--(24,-3.7)--(24,-1.7)--(21,-1.7)--(21,0.3)--(19,0.3)--(19,-0.7)--(17,-0.7)--(17,-5.7)--(15,-5.7)--(15,-6.7)--(13,-6.7)--(13,-2.7)--(10,-2.7)--(10,-1.7)--(8,-1.7)--(8,-4.7)--(6,-4.7)--(6,-3.7)--(5,-3.7)--(5,-5.7)--(3.5,-5.7)--(3.5,-17.2)--(38.5,-17.2)--(38.5,0.8)--(32.5,0.8)--(32.5,0.3)-- cycle;
\draw[draw=none,fill={rgb,255:red,0; green,0; blue,0},opacity=0.05](26,-18.2)--(39.5,-18.2)--(39.5,1.8)--(32.5,1.8)--(32.5,0.3)--(27,0.3)--(27,-3.7)--(24,-3.7)--(24,-1.7)--(21,-1.7)--(21,0.3)--(19,0.3)--(19,-0.7)--(17,-0.7)--(17,-2.7)--(19,-2.7)--(19,-6.7)--(20,-6.7)--(20,-8.7)--(22,-8.7)--(22,-10.7)--(21,-10.7)--(21,-11.7)--(24,-11.7)--(24,-9.7)--(26,-9.7);
\draw(37, 0.12) node[anchor=south west] {$\mathcal H_n$};
\draw[draw={rgb,255:red,0; green,0; blue,0},fill={rgb,255:red,55; green,200; blue,55},opacity=0.5,fill opacity=0.5](16.65,-2.35) rectangle (17.35, -3.05);
\draw[draw={rgb,255:red,0; green,0; blue,0},fill={rgb,255:red,55; green,200; blue,55},opacity=0.5,fill opacity=0.5](17.65,-2.35) rectangle (18.35, -3.05);
\draw[draw={rgb,255:red,0; green,0; blue,0},fill={rgb,255:red,55; green,200; blue,55},opacity=0.5,fill opacity=0.5](18.65,-2.35) rectangle (19.35, -3.05);
\draw[draw={rgb,255:red,0; green,0; blue,0},fill={rgb,255:red,55; green,200; blue,55},opacity=0.5,fill opacity=0.5](18.65,-3.35) rectangle (19.35, -4.05);
\draw[draw={rgb,255:red,0; green,0; blue,0},fill={rgb,255:red,55; green,200; blue,55},opacity=0.5,fill opacity=0.5](18.65,-4.35) rectangle (19.35, -5.05);
\draw[draw={rgb,255:red,0; green,0; blue,0},fill={rgb,255:red,55; green,200; blue,55},opacity=0.5,fill opacity=0.5](18.65,-5.35) rectangle (19.35, -6.05);
\draw[draw={rgb,255:red,0; green,0; blue,0},fill={rgb,255:red,55; green,200; blue,55},opacity=0.5,fill opacity=0.5](18.65,-6.35) rectangle (19.35, -7.05);
\draw[draw={rgb,255:red,0; green,0; blue,0},fill={rgb,255:red,55; green,200; blue,55},opacity=0.5,fill opacity=0.5](19.65,-6.35) rectangle (20.35, -7.05);
\draw[draw={rgb,255:red,0; green,0; blue,0},fill={rgb,255:red,55; green,200; blue,55},opacity=0.5,fill opacity=0.5](20.65,-6.35) rectangle (21.35, -7.05);
\draw[draw={rgb,255:red,0; green,0; blue,0},fill={rgb,255:red,55; green,200; blue,55},opacity=0.5,fill opacity=0.5](20.65,-5.35) rectangle (21.35, -6.05);
\draw[draw={rgb,255:red,0; green,0; blue,0},fill={rgb,255:red,55; green,200; blue,55},opacity=0.5,fill opacity=0.5](21.65,-5.35) rectangle (22.35, -6.05);
\draw[draw={rgb,255:red,0; green,0; blue,0},fill={rgb,255:red,55; green,200; blue,55},opacity=0.5,fill opacity=0.5](22.65,-5.35) rectangle (23.35, -6.05);
\draw[draw={rgb,255:red,0; green,0; blue,0},fill={rgb,255:red,55; green,200; blue,55},opacity=0.5,fill opacity=0.5](22.65,-4.35) rectangle (23.35, -5.05);
\draw[draw={rgb,255:red,0; green,0; blue,0},fill={rgb,255:red,55; green,200; blue,55},opacity=0.5,fill opacity=0.5](22.65,-3.35) rectangle (23.35, -4.05);
\draw[draw={rgb,255:red,0; green,0; blue,0},fill={rgb,255:red,55; green,200; blue,55},opacity=0.5,fill opacity=0.5](22.65,-2.35) rectangle (23.35, -3.05);
\draw[draw={rgb,255:red,55; green,200; blue,55},opacity=0.5](17,-2.7)--(19,-2.7)--(19,-6.7)--(21,-6.7)--(21,-5.7)--(23,-5.7)--(23,-2.7);
\draw[draw={rgb,255:red,0; green,0; blue,0},fill={rgb,255:red,55; green,200; blue,55},opacity=0.5,fill opacity=0.5](22.65,-2.35) rectangle (23.35, -3.05);
\draw[draw={rgb,255:red,0; green,0; blue,0},fill={rgb,255:red,55; green,200; blue,55},opacity=0.5,fill opacity=0.5](23.65,-2.35) rectangle (24.35, -3.05);
\draw[draw={rgb,255:red,0; green,0; blue,0},fill={rgb,255:red,55; green,200; blue,55},opacity=0.5,fill opacity=0.5](24.65,-2.35) rectangle (25.35, -3.05);
\draw[draw={rgb,255:red,0; green,0; blue,0},fill={rgb,255:red,55; green,200; blue,55},opacity=0.5,fill opacity=0.5](24.65,-3.35) rectangle (25.35, -4.05);
\draw[draw={rgb,255:red,0; green,0; blue,0},fill={rgb,255:red,55; green,200; blue,55},opacity=0.5,fill opacity=0.5](24.65,-4.35) rectangle (25.35, -5.05);
\draw[draw={rgb,255:red,0; green,0; blue,0},fill={rgb,255:red,55; green,200; blue,55},opacity=0.5,fill opacity=0.5](24.65,-5.35) rectangle (25.35, -6.05);
\draw[draw={rgb,255:red,0; green,0; blue,0},fill={rgb,255:red,55; green,200; blue,55},opacity=0.5,fill opacity=0.5](24.65,-6.35) rectangle (25.35, -7.05);
\draw[draw={rgb,255:red,0; green,0; blue,0},fill={rgb,255:red,55; green,200; blue,55},opacity=0.5,fill opacity=0.5](25.65,-6.35) rectangle (26.35, -7.05);
\draw[draw={rgb,255:red,0; green,0; blue,0},fill={rgb,255:red,55; green,200; blue,55},opacity=0.5,fill opacity=0.5](26.65,-6.35) rectangle (27.35, -7.05);
\draw[draw={rgb,255:red,0; green,0; blue,0},fill={rgb,255:red,55; green,200; blue,55},opacity=0.5,fill opacity=0.5](26.65,-5.35) rectangle (27.35, -6.05);
\draw[draw={rgb,255:red,0; green,0; blue,0},fill={rgb,255:red,55; green,200; blue,55},opacity=0.5,fill opacity=0.5](27.65,-5.35) rectangle (28.35, -6.05);
\draw[draw={rgb,255:red,0; green,0; blue,0},fill={rgb,255:red,55; green,200; blue,55},opacity=0.5,fill opacity=0.5](28.65,-5.35) rectangle (29.35, -6.05);
\draw[draw={rgb,255:red,0; green,0; blue,0},fill={rgb,255:red,55; green,200; blue,55},opacity=0.5,fill opacity=0.5](28.65,-4.35) rectangle (29.35, -5.05);
\draw[draw={rgb,255:red,0; green,0; blue,0},fill={rgb,255:red,55; green,200; blue,55},opacity=0.5,fill opacity=0.5](28.65,-3.35) rectangle (29.35, -4.05);
\draw[draw={rgb,255:red,0; green,0; blue,0},fill={rgb,255:red,55; green,200; blue,55},opacity=0.5,fill opacity=0.5](28.65,-2.35) rectangle (29.35, -3.05);
\draw[draw={rgb,255:red,55; green,200; blue,55},opacity=0.5](23,-2.7)--(25,-2.7)--(25,-6.7)--(27,-6.7)--(27,-5.7)--(29,-5.7)--(29,-2.7);
\draw[draw=none,fill={rgb,255:red,0; green,0; blue,0},thin](23, -2.7) ellipse (0.1cm and 0.1cm);\draw[draw=none,fill={rgb,255:red,0; green,0; blue,0},thin](11, -2.7) ellipse (0.1cm and 0.1cm);\draw[draw=none,fill={rgb,255:red,0; green,0; blue,0},thin](17, -2.7) ellipse (0.1cm and 0.1cm);\draw[draw=none,fill={rgb,255:red,0; green,0; blue,0},thin](14, -6.7) ellipse (0.1cm and 0.1cm);\draw(10.24, -2.46) node[anchor=south west] {$u_n$};
\draw(15.26, -3.26) node[anchor=south west] {$v_n$};
\draw(13.25, -8.23) node[anchor=south west] {$m_n$};
\draw[draw=none,fill={rgb,255:red,0; green,0; blue,0},thin](25, -3.7) ellipse (0.1cm and 0.1cm);\draw(25.2, -5.3) node[anchor=south west] {$m_{n+1}$};
\draw[draw=none,fill={rgb,255:red,0; green,0; blue,0},thin](13, -3.7) ellipse (0.1cm and 0.1cm);\draw(13.31, -4.35) node[anchor=south west] {$a$};
\draw[draw={rgb,255:red,0; green,0; blue,0}](14,-6.7)--(14,-8.7)--(16,-8.7)--(16,-10.7)--(15,-10.7)--(15,-11.7)--(18,-11.7)--(18,-9.7)--(20,-9.7)--(20,-16.7);
\draw[draw={rgb,255:red,0; green,0; blue,0}](20,-6.7)--(20,-8.7)--(22,-8.7)--(22,-10.7)--(21,-10.7)--(21,-11.7)--(24,-11.7)--(24,-9.7)--(26,-9.7)--(26,-16.7);
\draw[draw=none,fill={rgb,255:red,0; green,0; blue,0},thin](20, -6.7) ellipse (0.1cm and 0.1cm);\draw[draw={rgb,255:red,0; green,0; blue,0}](32,-16.7)--(32,-9.7)--(30,-9.7)--(30,-11.7)--(27,-11.7)--(27,-10.7)--(28,-10.7)--(28,-8.7)--(26,-8.7)--(26,-6.7)--(25,-6.7)--(25,-3.7);
\draw[draw=none,fill={rgb,255:red,0; green,0; blue,0},thin](32.5, 0.3) ellipse (0.1cm and 0.1cm);\draw(32.63, 4.3) node[anchor=south west] {$\ell^k$};
\draw[draw={rgb,255:red,0; green,0; blue,0}](32.5,0.3)--(32.5,5.3);
\draw[draw={rgb,255:red,0; green,0; blue,0},dashed,thin](20,-18.7)--(20,-16.7);
\draw(18.65, -17.51) node[anchor=south west] {$f_n$};
\draw[draw={rgb,255:red,0; green,0; blue,0},dashed,thin](26,-18.7)--(26,-16.7);
\draw(24.65, -17.51) node[anchor=south west] {$f_n+\vect{P_iP_j}$};
\draw(32.19, -17.51) node[anchor=south west] {$f_{n+1}$};
\draw[draw={rgb,255:red,0; green,0; blue,0},dashed,thin](32,-18.7)--(32,-16.7);
\end{tikzpicture}
    \end{center}
    \caption{Finding $m_{n+1}$ and $f_{n+1}$: we let $(a, t)$ be the largest pair of integers such that  $\pos{P_{a}}+t\vpij = \pos{P_{m_{n+1}}}$ for some integer $m_{n+1} \in \{ i+1,i+2, \ldots,k\}$, and let $f_{n+1}$ be equal to $f_n+t\vpij$, plus $P_{\rng{a}{m_n}}+t\vpij$. Here, $t=2$ and $a = u_n+3$.}
    \label{fig:induction-mn1}
  \end{figure}

  \argument{We claim that for all $s>1$, $P_{\rng a {m_n}}+s\vpij$ is entirely in $\mathcal H_n$}
  Suppose for the sake of contradiction that this is not the case, and let $s>1$ be the smallest integer such that $P_{\rng a {m_n}}+s\vpij$ is not entirely in $\mathcal H_n$.

  We first claim that $f_n+s\vpij$ is entirely in $\mathcal H_n$.
  Indeed, by~\ref{umv:s}, for all $u\geq 1$, $f_n+u\vpij$ intersects neither $f_n$ nor $c$.
  Therefore, since $s\geq 2$, $f_n+s\vpij$ intersects neither $f_n+\vpij$ nor $c+\vpij$ nor $c$, and therefore doesn't intersect $h_n$ (since $h_n$ is entirely composed of parts of $f_n+\vpij$, $c+\vpij$ and $c$).
  By~\ref{umv:f}, the beginning of $f_n$ is $\lmz+s_n\vpij$ for $s_n\geq 0$, and hence the beginning of $f_n+s\vpij$ is $\lmz+(s_n+s)\vpij$,
  which starts on the right-hand side of $h_n$ (because $\xcoord{\vpij}>0$), and doesn't intersect $h_n$.
  Hence, $f_n+s\vpij$ is entirely in $\mathcal H_n$, as claimed.

  Then, we claim that $P_{\rng a {m_n}}+s\vpij$ intersects $h_n$. Indeed, since $f_n+s\vpij$ is in $\mathcal H_n$, then
  $f_n(0)+s\vpij = \pos{P_{m_n}}+s\vpij$
  is also in $\mathcal H_n$.
  Therefore, since we assumed $P_{\rng a {m_n}}+s\vpij$ is not completely in $\mathcal H_n$, this means that
  $\embed{P_{\rng a m_n}+s\vpij}$ intersects $h_n$. That intersection can only happen on one of the five parts of $h_n$:
  \begin{enumerate}
  \item\label{case:fn:1} if $P_{\rng a {m_n}}+s\vpij$ intersects $l^k$ at some $l^k(z)$ with $z\geq 0$ then we claim that $P_{\rng a {m_n}}+(s-1)\vpij$ is in $\mathcal H_n$: if $s = 2$, this is because $P_{\rng a {m_n}}+\vpij$ is actually part of the border $h_n$ of $\mathcal H_n$, else, $s>2$, and this is because $s$ is minimal.

    Moreover, by maximality of $a$, $P_{a+1}+(s-1)\vpij$ is not on $P_{\rng {v_n} k}$. Therefore, $l^k(z)+\vpji$ is inside $\mathcal H_n$ and is not on $\embed{P_{\rng {v_n} k}}$.
    Since $\mathcal H_n \subset \mathcal C$ and since the only part of $P$ that is in $\mathcal H_n$ is $\embed{P_{\rng {v_n} k}}$, this means that $l^k(z)+\vpji$ is inside $\mathcal C$ and is not on $\embed{\Pik}$.
    This contradicts \subl{lem:c-lk}.

\item \label{case:fn:2} If $\reverse{\embed{P_{\rng a {m_n}}+s\vpij}}$ intersects $P_{\rng {v_n} k}$, then
    $P_{\range {a+1}{a+2} {m_n}}+s\vpij$ intersects $P_{\rng {v_n} k}$ (because $\pos{P_{m_n}}+s\vpij=f_n(0)+s\vpij$ is in $\mathcal H_n$).
    However, this contradicts the definition of $a$ as the largest integer such that $\pos{P_{a}}+t\vpij = \pos{P_{m_{n+1}}}$ for some $m_{n+1}$.

  \item \label{case:fn:3} If $P_{\rng a {m_n}}+s\vpij$ intersects $f_n+\vpij$ then $P_{\rng a {m_n}}+(s-1)\vpij$ intersects $f_n$.
    We have argued in Point~\ref{case:fn:1} that $P_{\rng a {m_n}}+(s-1)\vpij$ is entirely in $\mathcal H_n$.
    However, the only point of $f_n$ that can be in $\mathcal H_n$ is $f_n(0) = \pos{P_{m_n}}$, which happens only when $m_n = v_n$.

    If $s=2$, this means that $P_{\rng a {m_n}}+\vpij$ intersects $f_n$, but the only place where that can happen is at $\pos{P_a} = f_n(0)$, which means that $a = u_n$. However, by maximality of $a$, this means that for all $t>1$, $P_{\range {a+1}{a+2} {m_n}}+t\vpij$ does not intersect $h_n$, contradicting our assumption.

    If $s = 3$, $P_{\rng a {m_n}}+(s-2)\vpij$ intersects $P_{u_n}$, which is not in $\mathcal H_n$, contradicting that $P_{\rng a {m_n}}+\vpij$ is on the border $h_n$ of $\mathcal H_n$.

    Else, $s>3$, and $P_{\rng a {m_n}}+(s-2)\vpij$ intersects $P_{u_n}$, which is not in $\mathcal H_n$, contradicting the minimality of $s$.

  \item If $P_{\rng a {m_n}}+s\vpij$ intersects $P_{\rng {u_n} {v_n}}+\vpij$ then
    $P_{\rng a {m_n}}+(s-1)\vpij$ intersects $P_{\rng {u_n} {m_n}}$.

    However, by minimality of $s$, $P_{\rng a {m_n}}+(s-1)\vpij$ is in $\mathcal H_n$, and the only position of $P_{\rng {u_n} {m_n}}$ that may be in $\mathcal H_n$ is $\pos{P_{m_n}} = f_n(0)$. Therefore $P_{m_n}+s\vpij$ intersects $f_n(0)+\vpij = \pos{P_{m_n}}+\vpij$, and this case is already covered in Point~\ref{case:fn:3} above.

  \end{enumerate}
  This shows that for all $s>1$, $P_{\rng a {m_n}}+s\vpij$ is entirely in $\mathcal H_n$ as claimed.

  Now, remember that $\pos{P_{m_{n+1}}}=\pos{P_a}+t\vpij$. This means that $P_{m_{n+1}}$ is in $\mathcal H_n$. However, the only part of $P$ that is inside $\mathcal H_n$ is $P_{\rng {v_n} k}$, hence $m_{n+1}\geq v_n\geq m_n$.

  \argument{Satisfying the induction hypotheses for step $n+1$}
  Our final step is to prove that either $P$ is pumpable, or we can find two indices $u_{n+1}$ and $v_{n+1}$ (we have already specified the index $m_{n+1}$ and the curve $f_{n+1}$) to move on to the next step of the induction. There are two cases:
  \begin{enumerate}[label=\Alph*]
  \item\label{shield:induction:pumpable} If $m_{n+1} = m_n$ then we claim that $P$ is pumpable.

    Since $P_{\rng a {m_n}}+t\vpij$ is in $\mathcal H_n$, $\pos{P_a+t\vpij} = \pos{P_{m_{n+1}}} = \pos{P_{m_n}}$ is in $\mathcal H_n$.
    However, by definition of $\mathcal H_n$, the only position of $P_{\rng {m_n}{v_n}}$ in $\mathcal H_n$ is $P_{v_n}$, hence $m_n = v_n$.

    In this case, $P_{\rng {u_n} {v_n}}=P_{\rng {u_n} {m_n}}$ and the only position of $P_{\rng {u_n} {v_n}}$ that is in $\mathcal H_n$ is $\pos{P_{v_n}}$. Moreover, $\embed{P_{\rng {u_n} {v_n}}}+\vpij$ is on the border of $\mathcal H_n$, which implies that the only intersection between $P_{\rng {u_n} {v_n}}$ and $P_{\rng {u_n} {v_n}}+\vect{P_iP_j}$ is $\pos{P_{v_n}}=\pos{P_{u_n}}+\vpij$. By Lemma~\ref{lem:precious}, this means that for all $s>2$, $P_{\rng {u_n} {v_n}}+s\vect{P_iP_j}$ does not intersect  $P_{\rng {u_n} {v_n}}$. Therefore, $t=1$ and $a=u_n$.

    By definition of $a$, we have for all $s \geq1$, $P_{\rng{u_n}{v_n}}+s\vpij$ is in $\mathcal H_n\subset\mathcal C$, this means that the pumping of $P$ between $u_n$ and $v_n$ is simple (since $\sigma\cup\asm{P_{\range 0 1 i}}$ is not in $\mathcal C$ and by Lemma \ref{lem:precious}), and hence that $P$ is pumpable, and we are done: indeed, that is one of the conclusions of this lemma.

  \item Else $m_{n+1} > m_n$. We already defined $f_{n+1}= \concat{f_n, \reverse {\embed{P_{\rng a {m_n}}}}} + t\vpij$.
    We let $s_{n+1} = s_n + t$.
    We will define $u_{n+1}$ and $v_{n+1}$, and prove that the induction hypothesis holds for $(u_{n+1}, m_{n+1}, v_{n+1})$ and $f_{n+1}$:
    \begin{enumerate}[label=H\arabic*]
    \item $f_{n+1} = \concat{f_n, \reverse {\embed{P_{\rng a {m_n}}}}} + t\vpij$ is indeed the concatenation of $f_n+t\vpij$ and a finite curve made of the concatenation of embeddings of translations of segments of $P_{\range {i+1} {i+2} k}$ (we have merely added the reverse of one such segment). Since $\pos{P_a}+t\vpij = \pos{P_{m_{n+1}}}$ we get that $f_{n+1}$ connects
     $\lmz(0)+s_{n+1}\vpij$ to $\pos{P_{m_{n+1}}}$. Moreover, by \ref{umv:c} the only intersection between $f_n$ and $\embed{P_{\rng a {m_n}}}$ is $\pos{P_m}$ and thus $f_{n+1}$ is simple.

   \item $f_{n+1}$ is entirely in $\mathcal{H}_n \subset \mathcal C$ and intersects $c$ exactly once at $\pos{P_{m_{n+1}}}$: indeed, $f_n+t\vpij$ is in $\mathcal C$ and does not intersect $c$ at all by~\ref{umv:s}, and
      $\embed{P_{\rng a {m_{n+1}}}} + t\vpij$ starts at $f_{n+1}(0)\in\mathcal C$, and intersects $c$ exactly once by definition of $a$.
    \item We claim that for all $u>0$, $f_{n+1}+u\vpij$ does not intersect $c$: indeed, by~\ref{umv:s}, $f_n+(t+u)\vpij$ does not intersect $c$. Moreover, since $t$ is maximal, then $P_{\rng {a} {m_{n}}}+(t+u)\vpij$ does not intersect $c$.
      Moreover, we show that $f_{n+1}+u\vpij$ does not intersect $f_{n+1}$, by considering the two parts of $f_{n+1}+u\vpij$, namely $f_n+(t+u)\vpij$ and $P_{\rng a {m_n}}+(t+u)\vpij$:
      \begin{itemize}
      \item By~\ref{umv:s}, $f_n+(t+u)\vpij$ intersects neither $f_n+t\vpij$ nor $\embed{P_{\rng a {m_n}}}+\vpij$ (which is a segment of $c+\vpij$), nor $c$.
      \item
        We claim now that $\embed{P_{\rng a {m_n}}}+(t+u)\vpij$ does not intersect $\embed{P_{\rng a {m_n}}}+t\vpij$, or equivalently, that
        $\embed{P_{\rng a {m_n}}}+u\vpij$ does not intersect $\embed{P_{\rng a {m_n}}}$.

        We showed above that $\embed{P_{\rng a {m_n}}}+u\vpij$ is in $\mathcal H_n$. However, the only point of $\embed{P_{\rng a {m_n}}}$ that may be in $\mathcal H_n$ if $P_{m_n}$, which happens if and only if $v_n = m_n$. Moreover, if $u>1$, then $\embed{P_{\rng a {m_n}}}+(u-1)\vpij$ intersects $P_{u_n}$, which is not in $\mathcal H_n$, and this is a contradiction. Therefore, if $u = 1$, we also have $a = u_n$ and $m_n = m_{n+1}$, which was already handled in Case~\ref{shield:induction:pumpable}.

        Finally, if $\embed{P_{\rng a {m_n}}}+(t+u)\vpij$ intersects $f_n+t\vpij$, then $\embed{P_{\rng a {m_n}}}+u\vpij$ intersects $f_n$. Therefore, by~\ref{umv:unvn}, $m_n = v_n$ and $a = u_n$, which means that $m_{n+1} = m_n$, and this was already handled in Case~\ref{shield:induction:pumpable}.
      \end{itemize}
      Therefore, $f_{n+1}+u\vpij$ does not intersect $f_{n+1}$.
    \item
      We first define the curve $g_{n+1} = \concat{f_{n+1}, \embed{P_{\rng {m_{n+1}} k}}, \gs{\pos{P_k}}{ l^k(0)}, l^k}$.

      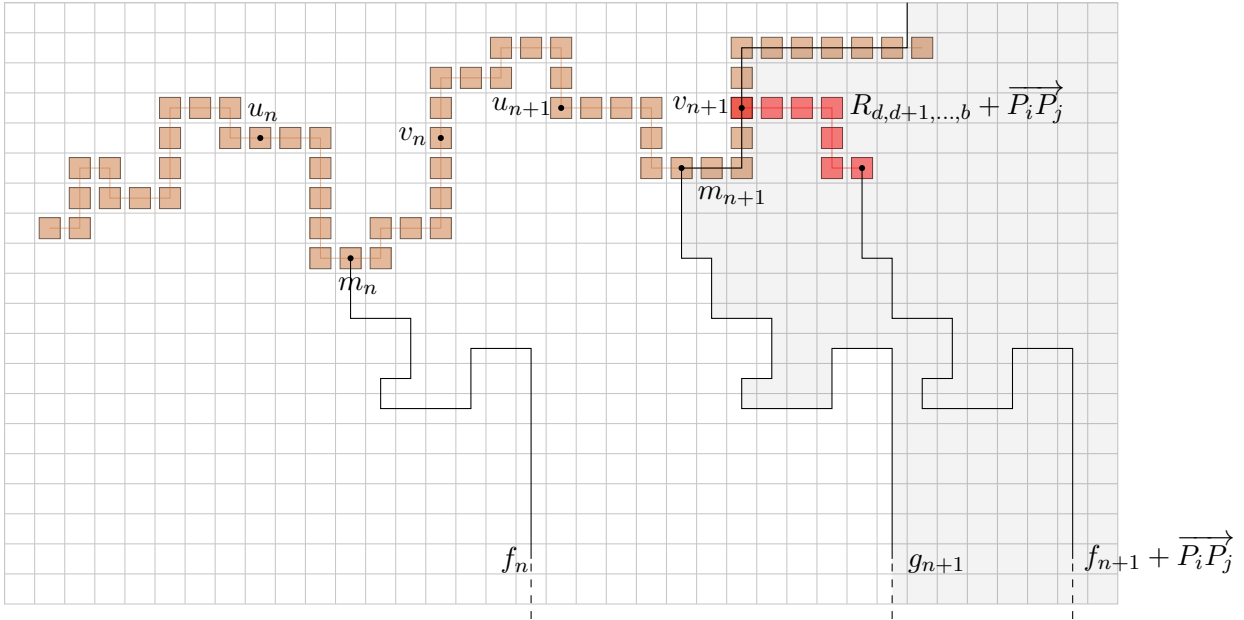
\begin{figure}[ht]
        \begin{center}
          \begin{tikzpicture}[scale=\scale]\draw[draw={rgb,255:red,200; green,200; blue,200}](2.5,1.8) rectangle (39.5, -18.2);
\draw[draw={rgb,255:red,200; green,200; blue,200}](2.5,-18.2)--(2.5,1.8);
\draw[draw={rgb,255:red,200; green,200; blue,200}](3.5,-18.2)--(3.5,1.8);
\draw[draw={rgb,255:red,200; green,200; blue,200}](4.5,-18.2)--(4.5,1.8);
\draw[draw={rgb,255:red,200; green,200; blue,200}](5.5,-18.2)--(5.5,1.8);
\draw[draw={rgb,255:red,200; green,200; blue,200}](6.5,-18.2)--(6.5,1.8);
\draw[draw={rgb,255:red,200; green,200; blue,200}](7.5,-18.2)--(7.5,1.8);
\draw[draw={rgb,255:red,200; green,200; blue,200}](8.5,-18.2)--(8.5,1.8);
\draw[draw={rgb,255:red,200; green,200; blue,200}](9.5,-18.2)--(9.5,1.8);
\draw[draw={rgb,255:red,200; green,200; blue,200}](10.5,-18.2)--(10.5,1.8);
\draw[draw={rgb,255:red,200; green,200; blue,200}](11.5,-18.2)--(11.5,1.8);
\draw[draw={rgb,255:red,200; green,200; blue,200}](12.5,-18.2)--(12.5,1.8);
\draw[draw={rgb,255:red,200; green,200; blue,200}](13.5,-18.2)--(13.5,1.8);
\draw[draw={rgb,255:red,200; green,200; blue,200}](14.5,-18.2)--(14.5,1.8);
\draw[draw={rgb,255:red,200; green,200; blue,200}](15.5,-18.2)--(15.5,1.8);
\draw[draw={rgb,255:red,200; green,200; blue,200}](16.5,-18.2)--(16.5,1.8);
\draw[draw={rgb,255:red,200; green,200; blue,200}](17.5,-18.2)--(17.5,1.8);
\draw[draw={rgb,255:red,200; green,200; blue,200}](18.5,-18.2)--(18.5,1.8);
\draw[draw={rgb,255:red,200; green,200; blue,200}](19.5,-18.2)--(19.5,1.8);
\draw[draw={rgb,255:red,200; green,200; blue,200}](20.5,-18.2)--(20.5,1.8);
\draw[draw={rgb,255:red,200; green,200; blue,200}](21.5,-18.2)--(21.5,1.8);
\draw[draw={rgb,255:red,200; green,200; blue,200}](22.5,-18.2)--(22.5,1.8);
\draw[draw={rgb,255:red,200; green,200; blue,200}](23.5,-18.2)--(23.5,1.8);
\draw[draw={rgb,255:red,200; green,200; blue,200}](24.5,-18.2)--(24.5,1.8);
\draw[draw={rgb,255:red,200; green,200; blue,200}](25.5,-18.2)--(25.5,1.8);
\draw[draw={rgb,255:red,200; green,200; blue,200}](26.5,-18.2)--(26.5,1.8);
\draw[draw={rgb,255:red,200; green,200; blue,200}](27.5,-18.2)--(27.5,1.8);
\draw[draw={rgb,255:red,200; green,200; blue,200}](28.5,-18.2)--(28.5,1.8);
\draw[draw={rgb,255:red,200; green,200; blue,200}](29.5,-18.2)--(29.5,1.8);
\draw[draw={rgb,255:red,200; green,200; blue,200}](30.5,-18.2)--(30.5,1.8);
\draw[draw={rgb,255:red,200; green,200; blue,200}](31.5,-18.2)--(31.5,1.8);
\draw[draw={rgb,255:red,200; green,200; blue,200}](32.5,-18.2)--(32.5,1.8);
\draw[draw={rgb,255:red,200; green,200; blue,200}](33.5,-18.2)--(33.5,1.8);
\draw[draw={rgb,255:red,200; green,200; blue,200}](34.5,-18.2)--(34.5,1.8);
\draw[draw={rgb,255:red,200; green,200; blue,200}](35.5,-18.2)--(35.5,1.8);
\draw[draw={rgb,255:red,200; green,200; blue,200}](36.5,-18.2)--(36.5,1.8);
\draw[draw={rgb,255:red,200; green,200; blue,200}](37.5,-18.2)--(37.5,1.8);
\draw[draw={rgb,255:red,200; green,200; blue,200}](38.5,-18.2)--(38.5,1.8);
\draw[draw={rgb,255:red,200; green,200; blue,200}](2.5,1.8)--(39.5,1.8);
\draw[draw={rgb,255:red,200; green,200; blue,200}](2.5,0.8)--(39.5,0.8);
\draw[draw={rgb,255:red,200; green,200; blue,200}](2.5,-0.2)--(39.5,-0.2);
\draw[draw={rgb,255:red,200; green,200; blue,200}](2.5,-1.2)--(39.5,-1.2);
\draw[draw={rgb,255:red,200; green,200; blue,200}](2.5,-2.2)--(39.5,-2.2);
\draw[draw={rgb,255:red,200; green,200; blue,200}](2.5,-3.2)--(39.5,-3.2);
\draw[draw={rgb,255:red,200; green,200; blue,200}](2.5,-4.2)--(39.5,-4.2);
\draw[draw={rgb,255:red,200; green,200; blue,200}](2.5,-5.2)--(39.5,-5.2);
\draw[draw={rgb,255:red,200; green,200; blue,200}](2.5,-6.2)--(39.5,-6.2);
\draw[draw={rgb,255:red,200; green,200; blue,200}](2.5,-7.2)--(39.5,-7.2);
\draw[draw={rgb,255:red,200; green,200; blue,200}](2.5,-8.2)--(39.5,-8.2);
\draw[draw={rgb,255:red,200; green,200; blue,200}](2.5,-9.2)--(39.5,-9.2);
\draw[draw={rgb,255:red,200; green,200; blue,200}](2.5,-10.2)--(39.5,-10.2);
\draw[draw={rgb,255:red,200; green,200; blue,200}](2.5,-11.2)--(39.5,-11.2);
\draw[draw={rgb,255:red,200; green,200; blue,200}](2.5,-12.2)--(39.5,-12.2);
\draw[draw={rgb,255:red,200; green,200; blue,200}](2.5,-13.2)--(39.5,-13.2);
\draw[draw={rgb,255:red,200; green,200; blue,200}](2.5,-14.2)--(39.5,-14.2);
\draw[draw={rgb,255:red,200; green,200; blue,200}](2.5,-15.2)--(39.5,-15.2);
\draw[draw={rgb,255:red,200; green,200; blue,200}](2.5,-16.2)--(39.5,-16.2);
\draw[draw={rgb,255:red,200; green,200; blue,200}](2.5,-17.2)--(39.5,-17.2);
\draw[draw={rgb,255:red,0; green,0; blue,0},fill={rgb,255:red,200; green,113; blue,55},opacity=0.5,fill opacity=0.5](3.65,-5.35) rectangle (4.35, -6.05);
\draw[draw={rgb,255:red,0; green,0; blue,0},fill={rgb,255:red,200; green,113; blue,55},opacity=0.5,fill opacity=0.5](4.65,-5.35) rectangle (5.35, -6.05);
\draw[draw={rgb,255:red,0; green,0; blue,0},fill={rgb,255:red,200; green,113; blue,55},opacity=0.5,fill opacity=0.5](4.65,-4.35) rectangle (5.35, -5.05);
\draw[draw={rgb,255:red,0; green,0; blue,0},fill={rgb,255:red,200; green,113; blue,55},opacity=0.5,fill opacity=0.5](4.65,-3.35) rectangle (5.35, -4.05);
\draw[draw={rgb,255:red,0; green,0; blue,0},fill={rgb,255:red,200; green,113; blue,55},opacity=0.5,fill opacity=0.5](5.65,-3.35) rectangle (6.35, -4.05);
\draw[draw={rgb,255:red,0; green,0; blue,0},fill={rgb,255:red,200; green,113; blue,55},opacity=0.5,fill opacity=0.5](5.65,-4.35) rectangle (6.35, -5.05);
\draw[draw={rgb,255:red,0; green,0; blue,0},fill={rgb,255:red,200; green,113; blue,55},opacity=0.5,fill opacity=0.5](6.65,-4.35) rectangle (7.35, -5.05);
\draw[draw={rgb,255:red,0; green,0; blue,0},fill={rgb,255:red,200; green,113; blue,55},opacity=0.5,fill opacity=0.5](7.65,-4.35) rectangle (8.35, -5.05);
\draw[draw={rgb,255:red,0; green,0; blue,0},fill={rgb,255:red,200; green,113; blue,55},opacity=0.5,fill opacity=0.5](7.65,-3.35) rectangle (8.35, -4.05);
\draw[draw={rgb,255:red,0; green,0; blue,0},fill={rgb,255:red,200; green,113; blue,55},opacity=0.5,fill opacity=0.5](7.65,-2.35) rectangle (8.35, -3.05);
\draw[draw={rgb,255:red,0; green,0; blue,0},fill={rgb,255:red,200; green,113; blue,55},opacity=0.5,fill opacity=0.5](7.65,-1.35) rectangle (8.35, -2.05);
\draw[draw={rgb,255:red,0; green,0; blue,0},fill={rgb,255:red,200; green,113; blue,55},opacity=0.5,fill opacity=0.5](8.65,-1.35) rectangle (9.35, -2.05);
\draw[draw={rgb,255:red,0; green,0; blue,0},fill={rgb,255:red,200; green,113; blue,55},opacity=0.5,fill opacity=0.5](9.65,-1.35) rectangle (10.35, -2.05);
\draw[draw={rgb,255:red,0; green,0; blue,0},fill={rgb,255:red,200; green,113; blue,55},opacity=0.5,fill opacity=0.5](9.65,-2.35) rectangle (10.35, -3.05);
\draw[draw={rgb,255:red,0; green,0; blue,0},fill={rgb,255:red,200; green,113; blue,55},opacity=0.5,fill opacity=0.5](10.65,-2.35) rectangle (11.35, -3.05);
\draw[draw={rgb,255:red,0; green,0; blue,0},fill={rgb,255:red,200; green,113; blue,55},opacity=0.5,fill opacity=0.5](11.65,-2.35) rectangle (12.35, -3.05);
\draw[draw={rgb,255:red,0; green,0; blue,0},fill={rgb,255:red,200; green,113; blue,55},opacity=0.5,fill opacity=0.5](12.65,-2.35) rectangle (13.35, -3.05);
\draw[draw={rgb,255:red,0; green,0; blue,0},fill={rgb,255:red,200; green,113; blue,55},opacity=0.5,fill opacity=0.5](12.65,-3.35) rectangle (13.35, -4.05);
\draw[draw={rgb,255:red,0; green,0; blue,0},fill={rgb,255:red,200; green,113; blue,55},opacity=0.5,fill opacity=0.5](12.65,-4.35) rectangle (13.35, -5.05);
\draw[draw={rgb,255:red,0; green,0; blue,0},fill={rgb,255:red,200; green,113; blue,55},opacity=0.5,fill opacity=0.5](12.65,-5.35) rectangle (13.35, -6.05);
\draw[draw={rgb,255:red,0; green,0; blue,0},fill={rgb,255:red,200; green,113; blue,55},opacity=0.5,fill opacity=0.5](12.65,-6.35) rectangle (13.35, -7.05);
\draw[draw={rgb,255:red,0; green,0; blue,0},fill={rgb,255:red,200; green,113; blue,55},opacity=0.5,fill opacity=0.5](13.65,-6.35) rectangle (14.35, -7.05);
\draw[draw={rgb,255:red,0; green,0; blue,0},fill={rgb,255:red,200; green,113; blue,55},opacity=0.5,fill opacity=0.5](14.65,-6.35) rectangle (15.35, -7.05);
\draw[draw={rgb,255:red,0; green,0; blue,0},fill={rgb,255:red,200; green,113; blue,55},opacity=0.5,fill opacity=0.5](14.65,-5.35) rectangle (15.35, -6.05);
\draw[draw={rgb,255:red,0; green,0; blue,0},fill={rgb,255:red,200; green,113; blue,55},opacity=0.5,fill opacity=0.5](15.65,-5.35) rectangle (16.35, -6.05);
\draw[draw={rgb,255:red,0; green,0; blue,0},fill={rgb,255:red,200; green,113; blue,55},opacity=0.5,fill opacity=0.5](16.65,-5.35) rectangle (17.35, -6.05);
\draw[draw={rgb,255:red,0; green,0; blue,0},fill={rgb,255:red,200; green,113; blue,55},opacity=0.5,fill opacity=0.5](16.65,-4.35) rectangle (17.35, -5.05);
\draw[draw={rgb,255:red,0; green,0; blue,0},fill={rgb,255:red,200; green,113; blue,55},opacity=0.5,fill opacity=0.5](16.65,-3.35) rectangle (17.35, -4.05);
\draw[draw={rgb,255:red,0; green,0; blue,0},fill={rgb,255:red,200; green,113; blue,55},opacity=0.5,fill opacity=0.5](16.65,-2.35) rectangle (17.35, -3.05);
\draw[draw={rgb,255:red,0; green,0; blue,0},fill={rgb,255:red,200; green,113; blue,55},opacity=0.5,fill opacity=0.5](16.65,-1.35) rectangle (17.35, -2.05);
\draw[draw={rgb,255:red,0; green,0; blue,0},fill={rgb,255:red,200; green,113; blue,55},opacity=0.5,fill opacity=0.5](16.65,-0.35) rectangle (17.35, -1.05);
\draw[draw={rgb,255:red,0; green,0; blue,0},fill={rgb,255:red,200; green,113; blue,55},opacity=0.5,fill opacity=0.5](17.65,-0.35) rectangle (18.35, -1.05);
\draw[draw={rgb,255:red,0; green,0; blue,0},fill={rgb,255:red,200; green,113; blue,55},opacity=0.5,fill opacity=0.5](18.65,-0.35) rectangle (19.35, -1.05);
\draw[draw={rgb,255:red,0; green,0; blue,0},fill={rgb,255:red,200; green,113; blue,55},opacity=0.5,fill opacity=0.5](18.65,0.65) rectangle (19.35, -0.05);
\draw[draw={rgb,255:red,0; green,0; blue,0},fill={rgb,255:red,200; green,113; blue,55},opacity=0.5,fill opacity=0.5](19.65,0.65) rectangle (20.35, -0.05);
\draw[draw={rgb,255:red,0; green,0; blue,0},fill={rgb,255:red,200; green,113; blue,55},opacity=0.5,fill opacity=0.5](20.65,0.65) rectangle (21.35, -0.05);
\draw[draw={rgb,255:red,0; green,0; blue,0},fill={rgb,255:red,200; green,113; blue,55},opacity=0.5,fill opacity=0.5](20.65,-0.35) rectangle (21.35, -1.05);
\draw[draw={rgb,255:red,0; green,0; blue,0},fill={rgb,255:red,200; green,113; blue,55},opacity=0.5,fill opacity=0.5](20.65,-1.35) rectangle (21.35, -2.05);
\draw[draw={rgb,255:red,0; green,0; blue,0},fill={rgb,255:red,200; green,113; blue,55},opacity=0.5,fill opacity=0.5](21.65,-1.35) rectangle (22.35, -2.05);
\draw[draw={rgb,255:red,0; green,0; blue,0},fill={rgb,255:red,200; green,113; blue,55},opacity=0.5,fill opacity=0.5](22.65,-1.35) rectangle (23.35, -2.05);
\draw[draw={rgb,255:red,0; green,0; blue,0},fill={rgb,255:red,200; green,113; blue,55},opacity=0.5,fill opacity=0.5](23.65,-1.35) rectangle (24.35, -2.05);
\draw[draw={rgb,255:red,0; green,0; blue,0},fill={rgb,255:red,200; green,113; blue,55},opacity=0.5,fill opacity=0.5](23.65,-2.35) rectangle (24.35, -3.05);
\draw[draw={rgb,255:red,0; green,0; blue,0},fill={rgb,255:red,200; green,113; blue,55},opacity=0.5,fill opacity=0.5](23.65,-3.35) rectangle (24.35, -4.05);
\draw[draw={rgb,255:red,0; green,0; blue,0},fill={rgb,255:red,200; green,113; blue,55},opacity=0.5,fill opacity=0.5](24.65,-3.35) rectangle (25.35, -4.05);
\draw[draw={rgb,255:red,0; green,0; blue,0},fill={rgb,255:red,200; green,113; blue,55},opacity=0.5,fill opacity=0.5](25.65,-3.35) rectangle (26.35, -4.05);
\draw[draw={rgb,255:red,0; green,0; blue,0},fill={rgb,255:red,200; green,113; blue,55},opacity=0.5,fill opacity=0.5](26.65,-3.35) rectangle (27.35, -4.05);
\draw[draw={rgb,255:red,0; green,0; blue,0},fill={rgb,255:red,200; green,113; blue,55},opacity=0.5,fill opacity=0.5](26.65,-2.35) rectangle (27.35, -3.05);
\draw[draw={rgb,255:red,0; green,0; blue,0},fill={rgb,255:red,200; green,113; blue,55},opacity=0.5,fill opacity=0.5](26.65,-1.35) rectangle (27.35, -2.05);
\draw[draw={rgb,255:red,0; green,0; blue,0},fill={rgb,255:red,200; green,113; blue,55},opacity=0.5,fill opacity=0.5](26.65,-0.35) rectangle (27.35, -1.05);
\draw[draw={rgb,255:red,0; green,0; blue,0},fill={rgb,255:red,200; green,113; blue,55},opacity=0.5,fill opacity=0.5](26.65,0.65) rectangle (27.35, -0.05);
\draw[draw={rgb,255:red,0; green,0; blue,0},fill={rgb,255:red,200; green,113; blue,55},opacity=0.5,fill opacity=0.5](27.65,0.65) rectangle (28.35, -0.05);
\draw[draw={rgb,255:red,0; green,0; blue,0},fill={rgb,255:red,200; green,113; blue,55},opacity=0.5,fill opacity=0.5](28.65,0.65) rectangle (29.35, -0.05);
\draw[draw={rgb,255:red,0; green,0; blue,0},fill={rgb,255:red,200; green,113; blue,55},opacity=0.5,fill opacity=0.5](29.65,0.65) rectangle (30.35, -0.05);
\draw[draw={rgb,255:red,0; green,0; blue,0},fill={rgb,255:red,200; green,113; blue,55},opacity=0.5,fill opacity=0.5](30.65,0.65) rectangle (31.35, -0.05);
\draw[draw={rgb,255:red,0; green,0; blue,0},fill={rgb,255:red,200; green,113; blue,55},opacity=0.5,fill opacity=0.5](31.65,0.65) rectangle (32.35, -0.05);
\draw[draw={rgb,255:red,0; green,0; blue,0},fill={rgb,255:red,200; green,113; blue,55},opacity=0.5,fill opacity=0.5](32.65,0.65) rectangle (33.35, -0.05);
\draw[draw={rgb,255:red,200; green,113; blue,55},opacity=0.5](4,-5.7)--(5,-5.7)--(5,-3.7)--(6,-3.7)--(6,-4.7)--(8,-4.7)--(8,-1.7)--(10,-1.7)--(10,-2.7)--(13,-2.7)--(13,-6.7)--(15,-6.7)--(15,-5.7)--(17,-5.7)--(17,-0.7)--(19,-0.7)--(19,0.3)--(21,0.3)--(21,-1.7)--(24,-1.7)--(24,-3.7)--(27,-3.7)--(27,0.3)--(33,0.3);
\draw[draw={rgb,255:red,0; green,0; blue,0},fill={rgb,255:red,255; green,0; blue,0},opacity=0.5,fill opacity=0.5](26.65,-1.35) rectangle (27.35, -2.05);
\draw[draw={rgb,255:red,0; green,0; blue,0},fill={rgb,255:red,255; green,0; blue,0},opacity=0.5,fill opacity=0.5](27.65,-1.35) rectangle (28.35, -2.05);
\draw[draw={rgb,255:red,0; green,0; blue,0},fill={rgb,255:red,255; green,0; blue,0},opacity=0.5,fill opacity=0.5](28.65,-1.35) rectangle (29.35, -2.05);
\draw[draw={rgb,255:red,0; green,0; blue,0},fill={rgb,255:red,255; green,0; blue,0},opacity=0.5,fill opacity=0.5](29.65,-1.35) rectangle (30.35, -2.05);
\draw[draw={rgb,255:red,0; green,0; blue,0},fill={rgb,255:red,255; green,0; blue,0},opacity=0.5,fill opacity=0.5](29.65,-2.35) rectangle (30.35, -3.05);
\draw[draw={rgb,255:red,0; green,0; blue,0},fill={rgb,255:red,255; green,0; blue,0},opacity=0.5,fill opacity=0.5](29.65,-3.35) rectangle (30.35, -4.05);
\draw[draw={rgb,255:red,0; green,0; blue,0},fill={rgb,255:red,255; green,0; blue,0},opacity=0.5,fill opacity=0.5](30.65,-3.35) rectangle (31.35, -4.05);
\draw[draw={rgb,255:red,255; green,0; blue,0},opacity=0.5](27,-1.7)--(30,-1.7)--(30,-3.7)--(31,-3.7);
\draw(30.24, -2.55) node[anchor=south west] {$R_{\rng d b}+\vpij$};
\draw[draw=none,fill={rgb,255:red,0; green,0; blue,0},thin](27, -1.7) ellipse (0.1cm and 0.1cm);\draw(24.36, -2.28) node[anchor=south west] {$v_{n+1}$};
\draw[draw=none,fill={rgb,255:red,0; green,0; blue,0},thin](21, -1.7) ellipse (0.1cm and 0.1cm);\draw(18.36, -2.34) node[anchor=south west] {$u_{n+1}$};
\draw[draw=none,fill={rgb,255:red,0; green,0; blue,0},thin](11, -2.7) ellipse (0.1cm and 0.1cm);\draw[draw=none,fill={rgb,255:red,0; green,0; blue,0},thin](17, -2.7) ellipse (0.1cm and 0.1cm);\draw[draw=none,fill={rgb,255:red,0; green,0; blue,0},thin](14, -6.7) ellipse (0.1cm and 0.1cm);\draw(10.24, -2.46) node[anchor=south west] {$u_n$};
\draw(15.26, -3.26) node[anchor=south west] {$v_n$};
\draw(13.25, -8.23) node[anchor=south west] {$m_n$};
\draw[draw=none,fill={rgb,255:red,0; green,0; blue,0},thin](25, -3.7) ellipse (0.1cm and 0.1cm);\draw(25.2, -5.3) node[anchor=south west] {$m_{n+1}$};
\draw[draw={rgb,255:red,0; green,0; blue,0}](14,-6.7)--(14,-8.7)--(16,-8.7)--(16,-10.7)--(15,-10.7)--(15,-11.7)--(18,-11.7)--(18,-9.7)--(20,-9.7)--(20,-16.7);
\draw[draw={rgb,255:red,0; green,0; blue,0}](32,-16.7)--(32,-9.7)--(30,-9.7)--(30,-11.7)--(27,-11.7)--(27,-10.7)--(28,-10.7)--(28,-8.7)--(26,-8.7)--(26,-6.7)--(25,-6.7)--(25,-3.7)--(27,-3.7)--(27,0.3)--(32.5,0.3)--(32.5,1.8);
\draw[draw=none,fill={rgb,255:red,0; green,0; blue,0},opacity=0.05](32,-9.7)--(30,-9.7)--(30,-11.7)--(27,-11.7)--(27,-10.7)--(28,-10.7)--(28,-8.7)--(26,-8.7)--(26,-6.7)--(25,-6.7)--(25,-3.7)--(27,-3.7)--(27,0.3)--(32.5,0.3)--(32.5,1.8)--(39.5,1.8)--(39.5,-18.2)--(32,-18.2)-- cycle;
\draw[draw={rgb,255:red,0; green,0; blue,0}](38,-16.7)--(38,-9.7)--(36,-9.7)--(36,-11.7)--(33,-11.7)--(33,-10.7)--(34,-10.7)--(34,-8.7)--(32,-8.7)--(32,-6.7)--(31,-6.7)--(31,-3.7);
\draw[draw=none,fill={rgb,255:red,0; green,0; blue,0},thin](31, -3.7) ellipse (0.1cm and 0.1cm);\draw[draw={rgb,255:red,0; green,0; blue,0},dashed,thin](20,-18.7)--(20,-16.7);
\draw(18.65, -17.51) node[anchor=south west] {$f_n$};
\draw(32.19, -17.51) node[anchor=south west] {$g_{n+1}$};
\draw[draw={rgb,255:red,0; green,0; blue,0},dashed,thin](32,-18.7)--(32,-16.7);
\draw[draw={rgb,255:red,0; green,0; blue,0},dashed,thin](38,-18.7)--(38,-16.7);
\draw(38.02, -17.51) node[anchor=south west] {$f_{n+1}+\vpij$};
\end{tikzpicture}
        \end{center}
        \caption{Finally, we find $u_{n+1}$ and $v_{n+1}$ by translating $f_{n+1}$ by $\vpij$, and looking at the first intersection between $P_{\rng i {m_{n+1}}}+\vpij$ and $P$. In order to find such an intersection, we define the infinite curve $g_{n+1}$, which cuts $\mathbb{R}^2$ into two connected components by Theorem~\ref{thm:infinite-jordan}.}
        \label{fig:induction-gn1}
      \end{figure}

      To conclude, we proceed as in Section \ref{subsec:u0v0} by using $R$ to find $u_{n+1}$ and $v_{n+1}$. The path $R$ starts on the left-hand side of $g_{n+1}$ (at $\pos{P_{i+1}}$, which is not on $f_{n+1}$ since by \subl{lem:m0i1} $m_0>i+1$) and ends on the right-hand side of $g_{n+1}$ or on $g_{n+1}$ (at a position within horizontal distance 0.5 of $l^k$, and possibly at $\pos{P_k}$). Therefore, $R$ must intersect $g_{n+1}$.

      Now, $f_{n+1}$ only intersects $\Pik$ at $\pos{P_{m_{n+1}}}$, and $f_{n+1}+\vpij$ does not intersect $c$ (by~\ref{umv:s}), and hence does not intersect $\Pik$. Therefore, $f_{n+1}$ does not intersect $\Pjk$.
      Therefore, $R$ intersects $P_{m_{n+1}}$.
      Let $b$ be the integer such that $R_{b} = P_{m_{n+1}}$, and let $d\leq b$ be the largest integer such that $R_d+\vpij$ is on $\Pik$ (at least $i+1\leq b$ has that property, hence $d$ is well-defined).

      Since $d$ is maximal, $R_{\rng d b}$ is a segment of $P$. Let $u_{n+1}$ be such that $R_d = P_{u_{n+1}}$. Since $d\leq b$, by \subl{lem:order}, $u_{n+1}\leq m_{n+1}$. Let $v_{n+1}$ be such that $R_d+\vpij = P_{v_{n+1}}$.
      These tiles have the same type by \subl{lem:r}.

      We claim that $v_{n+1}\geq m_{n+1}$.
      By Hypothesis~\ref{lem:hp:shield backup} of Definition~\ref{def:shield},
      $\embed{P_{\rng {u_{n+1}} {m_{n+1}}}+\vpij}$ can only intersect $l^k$ at $\pos{\glu P k}$, and  can only intersect $\gs{P_k}{l^k(0)}$ at $\pos{P_k}$, in which case $k = v_{n+1}$.
      Moreover, we have already argued that $\embed{P_{\rng {u_{n+1}} {m_{n+1}}}+\vpij}$ can only intersect $f_{n+1}$ at $\pos{P_{m_{n+1}}}$ (only in the case where $m_{n+1} = v_{n+1}$). Therefore, $\embed{P_{\rng {u_{n+1}} {m_{n+1}}}+\vpij}$ intersects $g_n$ exactly once, at $\pos{P_{v_{n+1}}}$ (or we are in the special case where $P$ is pumpable).

      Finally, since $\embed{P_{\rng {u_{n+1}} {m_{n+1}}}+\vpij}$ starts on the right-hand side of $g_{n+1}$, and intersects $g_{n+1}$ only at $\pos{P_{u_{n+1}}}+\vpij$, $\embed{P_{\rng {u_{n+1}} {m_{n+1}}}+\vpij}$ is entirely on the right-hand side of $g_{n+1}$, which means that $v_{n+1}\geq m_{n+1}$.
    \end{enumerate}
  \end{enumerate}

  \argument{Conclusion of the inductive argument} Since for all $n\geq 0$, $m_{n+1}>m_n$, and since $P$ is of finite length ($|P|=k+2$) we eventually run out of indices along $P$.
  This implies that at some point end up in  Case~\ref{shield:induction:pumpable} above, where $m_{n+1}=m_n$, which implies that $P$ is pumpable with pumping vector $\vpij$.
  This completes the proof of Lemma~\ref{lem:shield}.
\end{proof}

\section{Proof of main theorem}\label{sec:main thm proof}

\subsection{Intuition and roadmap for the proof of Theorem~\ref{thm:intro main thm}}\label{sec:main thm intuition}

To prove Theorem~\ref{thm:intro main thm}, we need to find three indices $i,j,k \in \{ 0,1,\ldots, |P|-1\}$ of $P$ that satisfy the hypotheses of the Shield Lemma (Lemma~\ref{lem:shield}, Definition~\ref{def:shield}) on $P$, the conclusion of which is that $P$ is pumpable or fragile. Throughout the proof we apply the Shield Lemma in several different ways.
We proceed in three steps:
\begin{enumerate}
\item First, in the proof of Theorem~\ref{thm:intro main thm}, 
we make some trivial modifications to $P$ (a rotation and translation of our frame of reference, and a truncation) so that $P$  is in a canonical form where, intuitively, it reaches far to the east ($P$'s final tile is to the east of the seed $\sigma$). We then invoke Theorem~\ref{thm:main}, the combinatorial-based proof of which goes through the following steps.
\item In Lemma~\ref{lem:couple:almost}, we use
Lemma~\ref{lem:glue:almost}
to show that either $P$ is pumpable or fragile (in turn, by applying Lemma~\ref{lem:shield}), or else that at most $|T|+1$ glues of $P$ that are visible from the south or from the north can be pointing west (the remaining $\geq (4|T|)^{(4|T|+1)}(4|\sigma|+6) - |T|-|\sigma|$ glues visible from the south are pointing east).
\item Then, using the notion of \emph{spans} (Definitions~\ref{def:span} and~\ref{def:span properties}, intuition in Figure~\ref{fig:spans})
  we show (Lemma~\ref{lem:couple:use}) that if we find two spans $S = (s, n)$ and $S' = (s',n')$ of the same orientation and type, and such that $s<s'$ and the height of $S'$ is at least the height of $S$, then we can apply Lemma~\ref{lem:shield} to $P$, proving that $P$ is pumpable or fragile.
\begin{figure}[t]
  \begin{center}
    \begin{tikzpicture}[scale=\scale]\draw[draw={rgb,255:red,200; green,200; blue,200}](1.5,3.8) rectangle (30.5, -9.2);
\draw[draw={rgb,255:red,200; green,200; blue,200}](1.5,-9.2)--(1.5,3.8);
\draw[draw={rgb,255:red,200; green,200; blue,200}](2.5,-9.2)--(2.5,3.8);
\draw[draw={rgb,255:red,200; green,200; blue,200}](3.5,-9.2)--(3.5,3.8);
\draw[draw={rgb,255:red,200; green,200; blue,200}](4.5,-9.2)--(4.5,3.8);
\draw[draw={rgb,255:red,200; green,200; blue,200}](5.5,-9.2)--(5.5,3.8);
\draw[draw={rgb,255:red,200; green,200; blue,200}](6.5,-9.2)--(6.5,3.8);
\draw[draw={rgb,255:red,200; green,200; blue,200}](7.5,-9.2)--(7.5,3.8);
\draw[draw={rgb,255:red,200; green,200; blue,200}](8.5,-9.2)--(8.5,3.8);
\draw[draw={rgb,255:red,200; green,200; blue,200}](9.5,-9.2)--(9.5,3.8);
\draw[draw={rgb,255:red,200; green,200; blue,200}](10.5,-9.2)--(10.5,3.8);
\draw[draw={rgb,255:red,200; green,200; blue,200}](11.5,-9.2)--(11.5,3.8);
\draw[draw={rgb,255:red,200; green,200; blue,200}](12.5,-9.2)--(12.5,3.8);
\draw[draw={rgb,255:red,200; green,200; blue,200}](13.5,-9.2)--(13.5,3.8);
\draw[draw={rgb,255:red,200; green,200; blue,200}](14.5,-9.2)--(14.5,3.8);
\draw[draw={rgb,255:red,200; green,200; blue,200}](15.5,-9.2)--(15.5,3.8);
\draw[draw={rgb,255:red,200; green,200; blue,200}](16.5,-9.2)--(16.5,3.8);
\draw[draw={rgb,255:red,200; green,200; blue,200}](17.5,-9.2)--(17.5,3.8);
\draw[draw={rgb,255:red,200; green,200; blue,200}](18.5,-9.2)--(18.5,3.8);
\draw[draw={rgb,255:red,200; green,200; blue,200}](19.5,-9.2)--(19.5,3.8);
\draw[draw={rgb,255:red,200; green,200; blue,200}](20.5,-9.2)--(20.5,3.8);
\draw[draw={rgb,255:red,200; green,200; blue,200}](21.5,-9.2)--(21.5,3.8);
\draw[draw={rgb,255:red,200; green,200; blue,200}](22.5,-9.2)--(22.5,3.8);
\draw[draw={rgb,255:red,200; green,200; blue,200}](23.5,-9.2)--(23.5,3.8);
\draw[draw={rgb,255:red,200; green,200; blue,200}](24.5,-9.2)--(24.5,3.8);
\draw[draw={rgb,255:red,200; green,200; blue,200}](25.5,-9.2)--(25.5,3.8);
\draw[draw={rgb,255:red,200; green,200; blue,200}](26.5,-9.2)--(26.5,3.8);
\draw[draw={rgb,255:red,200; green,200; blue,200}](27.5,-9.2)--(27.5,3.8);
\draw[draw={rgb,255:red,200; green,200; blue,200}](28.5,-9.2)--(28.5,3.8);
\draw[draw={rgb,255:red,200; green,200; blue,200}](29.5,-9.2)--(29.5,3.8);
\draw[draw={rgb,255:red,200; green,200; blue,200}](1.5,3.8)--(30.5,3.8);
\draw[draw={rgb,255:red,200; green,200; blue,200}](1.5,2.8)--(30.5,2.8);
\draw[draw={rgb,255:red,200; green,200; blue,200}](1.5,1.8)--(30.5,1.8);
\draw[draw={rgb,255:red,200; green,200; blue,200}](1.5,0.8)--(30.5,0.8);
\draw[draw={rgb,255:red,200; green,200; blue,200}](1.5,-0.2)--(30.5,-0.2);
\draw[draw={rgb,255:red,200; green,200; blue,200}](1.5,-1.2)--(30.5,-1.2);
\draw[draw={rgb,255:red,200; green,200; blue,200}](1.5,-2.2)--(30.5,-2.2);
\draw[draw={rgb,255:red,200; green,200; blue,200}](1.5,-3.2)--(30.5,-3.2);
\draw[draw={rgb,255:red,200; green,200; blue,200}](1.5,-4.2)--(30.5,-4.2);
\draw[draw={rgb,255:red,200; green,200; blue,200}](1.5,-5.2)--(30.5,-5.2);
\draw[draw={rgb,255:red,200; green,200; blue,200}](1.5,-6.2)--(30.5,-6.2);
\draw[draw={rgb,255:red,200; green,200; blue,200}](1.5,-7.2)--(30.5,-7.2);
\draw[draw={rgb,255:red,200; green,200; blue,200}](1.5,-8.2)--(30.5,-8.2);
\draw[draw={rgb,255:red,0; green,0; blue,0},fill={rgb,255:red,0; green,204; blue,255},opacity=0.5,fill opacity=0.5](3.65,-1.35) rectangle (4.35, -2.05);
\draw[draw={rgb,255:red,0; green,0; blue,0},fill={rgb,255:red,0; green,204; blue,255},opacity=0.5,fill opacity=0.5](4.65,-1.35) rectangle (5.35, -2.05);
\draw[draw={rgb,255:red,0; green,0; blue,0},fill={rgb,255:red,0; green,204; blue,255},opacity=0.5,fill opacity=0.5](5.65,-1.35) rectangle (6.35, -2.05);
\draw[draw={rgb,255:red,0; green,0; blue,0},fill={rgb,255:red,0; green,204; blue,255},opacity=0.5,fill opacity=0.5](5.65,-2.35) rectangle (6.35, -3.05);
\draw[draw={rgb,255:red,0; green,0; blue,0},fill={rgb,255:red,0; green,204; blue,255},opacity=0.5,fill opacity=0.5](4.65,-2.35) rectangle (5.35, -3.05);
\draw[draw={rgb,255:red,0; green,0; blue,0},fill={rgb,255:red,0; green,204; blue,255},opacity=0.5,fill opacity=0.5](3.65,-2.35) rectangle (4.35, -3.05);
\draw[draw={rgb,255:red,0; green,0; blue,0},fill={rgb,255:red,0; green,204; blue,255},opacity=0.5,fill opacity=0.5](3.65,-3.35) rectangle (4.35, -4.05);
\draw[draw={rgb,255:red,0; green,0; blue,0},fill={rgb,255:red,0; green,204; blue,255},opacity=0.5,fill opacity=0.5](3.65,-4.35) rectangle (4.35, -5.05);
\draw[draw={rgb,255:red,0; green,0; blue,0},fill={rgb,255:red,0; green,204; blue,255},opacity=0.5,fill opacity=0.5](3.65,-5.35) rectangle (4.35, -6.05);
\draw[draw={rgb,255:red,0; green,0; blue,0},fill={rgb,255:red,0; green,204; blue,255},opacity=0.5,fill opacity=0.5](3.65,-6.35) rectangle (4.35, -7.05);
\draw[draw={rgb,255:red,0; green,0; blue,0},fill={rgb,255:red,0; green,204; blue,255},opacity=0.5,fill opacity=0.5](4.65,-6.35) rectangle (5.35, -7.05);
\draw[draw={rgb,255:red,0; green,0; blue,0},fill={rgb,255:red,0; green,204; blue,255},opacity=0.5,fill opacity=0.5](5.65,-6.35) rectangle (6.35, -7.05);
\draw[draw={rgb,255:red,0; green,0; blue,0},fill={rgb,255:red,0; green,204; blue,255},opacity=0.5,fill opacity=0.5](5.65,-7.35) rectangle (6.35, -8.05);
\draw[draw={rgb,255:red,0; green,0; blue,0},fill={rgb,255:red,0; green,204; blue,255},opacity=0.5,fill opacity=0.5](4.65,-7.35) rectangle (5.35, -8.05);
\draw[draw={rgb,255:red,0; green,0; blue,0},fill={rgb,255:red,0; green,204; blue,255},opacity=0.5,fill opacity=0.5](3.65,-7.35) rectangle (4.35, -8.05);
\draw[draw={rgb,255:red,0; green,0; blue,0},fill={rgb,255:red,0; green,204; blue,255},opacity=0.5,fill opacity=0.5](2.65,-7.35) rectangle (3.35, -8.05);
\draw[draw={rgb,255:red,0; green,0; blue,0},fill={rgb,255:red,0; green,204; blue,255},opacity=0.5,fill opacity=0.5](2.65,-6.35) rectangle (3.35, -7.05);
\draw[draw={rgb,255:red,0; green,0; blue,0},fill={rgb,255:red,0; green,204; blue,255},opacity=0.5,fill opacity=0.5](2.65,-5.35) rectangle (3.35, -6.05);
\draw[draw={rgb,255:red,0; green,204; blue,255},opacity=0.5,thick](3.5,-1.7)--(6,-1.7)--(6,-2.7)--(4,-2.7)--(4,-6.7)--(6,-6.7)--(6,-7.7)--(3,-7.7)--(3,-5.7);
\draw(2.28, -5.33) node[anchor=south west] {$\sigma$};
\draw[draw={rgb,255:red,0; green,0; blue,0},fill={rgb,255:red,200; green,113; blue,55},opacity=0.5,fill opacity=0.5](4.65,-4.35) rectangle (5.35, -5.05);
\draw[draw={rgb,255:red,0; green,0; blue,0},fill={rgb,255:red,200; green,113; blue,55},opacity=0.5,fill opacity=0.5](5.65,-4.35) rectangle (6.35, -5.05);
\draw[draw={rgb,255:red,0; green,0; blue,0},fill={rgb,255:red,200; green,113; blue,55},opacity=0.5,fill opacity=0.5](6.65,-4.35) rectangle (7.35, -5.05);
\draw[draw={rgb,255:red,0; green,0; blue,0},fill={rgb,255:red,200; green,113; blue,55},opacity=0.5,fill opacity=0.5](7.65,-4.35) rectangle (8.35, -5.05);
\draw[draw={rgb,255:red,0; green,0; blue,0},fill={rgb,255:red,200; green,113; blue,55},opacity=0.5,fill opacity=0.5](8.65,-4.35) rectangle (9.35, -5.05);
\draw[draw={rgb,255:red,0; green,0; blue,0},fill={rgb,255:red,200; green,113; blue,55},opacity=0.5,fill opacity=0.5](8.65,-5.35) rectangle (9.35, -6.05);
\draw[draw={rgb,255:red,0; green,0; blue,0},fill={rgb,255:red,200; green,113; blue,55},opacity=0.5,fill opacity=0.5](8.65,-6.35) rectangle (9.35, -7.05);
\draw[draw={rgb,255:red,0; green,0; blue,0},fill={rgb,255:red,200; green,113; blue,55},opacity=0.5,fill opacity=0.5](9.65,-6.35) rectangle (10.35, -7.05);
\draw[draw={rgb,255:red,0; green,0; blue,0},fill={rgb,255:red,200; green,113; blue,55},opacity=0.5,fill opacity=0.5](10.65,-6.35) rectangle (11.35, -7.05);
\draw[draw={rgb,255:red,0; green,0; blue,0},fill={rgb,255:red,200; green,113; blue,55},opacity=0.5,fill opacity=0.5](11.65,-6.35) rectangle (12.35, -7.05);
\draw[draw={rgb,255:red,0; green,0; blue,0},fill={rgb,255:red,200; green,113; blue,55},opacity=0.5,fill opacity=0.5](11.65,-5.35) rectangle (12.35, -6.05);
\draw[draw={rgb,255:red,0; green,0; blue,0},fill={rgb,255:red,200; green,113; blue,55},opacity=0.5,fill opacity=0.5](11.65,-4.35) rectangle (12.35, -5.05);
\draw[draw={rgb,255:red,0; green,0; blue,0},fill={rgb,255:red,200; green,113; blue,55},opacity=0.5,fill opacity=0.5](11.65,-3.35) rectangle (12.35, -4.05);
\draw[draw={rgb,255:red,0; green,0; blue,0},fill={rgb,255:red,200; green,113; blue,55},opacity=0.5,fill opacity=0.5](11.65,-2.35) rectangle (12.35, -3.05);
\draw[draw={rgb,255:red,0; green,0; blue,0},fill={rgb,255:red,200; green,113; blue,55},opacity=0.5,fill opacity=0.5](10.65,-2.35) rectangle (11.35, -3.05);
\draw[draw={rgb,255:red,0; green,0; blue,0},fill={rgb,255:red,200; green,113; blue,55},opacity=0.5,fill opacity=0.5](9.65,-2.35) rectangle (10.35, -3.05);
\draw[draw={rgb,255:red,0; green,0; blue,0},fill={rgb,255:red,200; green,113; blue,55},opacity=0.5,fill opacity=0.5](9.65,-1.35) rectangle (10.35, -2.05);
\draw[draw={rgb,255:red,0; green,0; blue,0},fill={rgb,255:red,200; green,113; blue,55},opacity=0.5,fill opacity=0.5](9.65,-0.35) rectangle (10.35, -1.05);
\draw[draw={rgb,255:red,0; green,0; blue,0},fill={rgb,255:red,200; green,113; blue,55},opacity=0.5,fill opacity=0.5](10.65,-0.35) rectangle (11.35, -1.05);
\draw[draw={rgb,255:red,0; green,0; blue,0},fill={rgb,255:red,200; green,113; blue,55},opacity=0.5,fill opacity=0.5](11.65,-0.35) rectangle (12.35, -1.05);
\draw[draw={rgb,255:red,0; green,0; blue,0},fill={rgb,255:red,200; green,113; blue,55},opacity=0.5,fill opacity=0.5](12.65,-0.35) rectangle (13.35, -1.05);
\draw[draw={rgb,255:red,0; green,0; blue,0},fill={rgb,255:red,200; green,113; blue,55},opacity=0.5,fill opacity=0.5](13.65,-0.35) rectangle (14.35, -1.05);
\draw[draw={rgb,255:red,0; green,0; blue,0},fill={rgb,255:red,200; green,113; blue,55},opacity=0.5,fill opacity=0.5](13.65,-1.35) rectangle (14.35, -2.05);
\draw[draw={rgb,255:red,0; green,0; blue,0},fill={rgb,255:red,200; green,113; blue,55},opacity=0.5,fill opacity=0.5](13.65,-2.35) rectangle (14.35, -3.05);
\draw[draw={rgb,255:red,0; green,0; blue,0},fill={rgb,255:red,200; green,113; blue,55},opacity=0.5,fill opacity=0.5](13.65,-3.35) rectangle (14.35, -4.05);
\draw[draw={rgb,255:red,0; green,0; blue,0},fill={rgb,255:red,200; green,113; blue,55},opacity=0.5,fill opacity=0.5](13.65,-4.35) rectangle (14.35, -5.05);
\draw[draw={rgb,255:red,0; green,0; blue,0},fill={rgb,255:red,200; green,113; blue,55},opacity=0.5,fill opacity=0.5](13.65,-5.35) rectangle (14.35, -6.05);
\draw[draw={rgb,255:red,0; green,0; blue,0},fill={rgb,255:red,200; green,113; blue,55},opacity=0.5,fill opacity=0.5](14.65,-5.35) rectangle (15.35, -6.05);
\draw[draw={rgb,255:red,0; green,0; blue,0},fill={rgb,255:red,200; green,113; blue,55},opacity=0.5,fill opacity=0.5](15.65,-5.35) rectangle (16.35, -6.05);
\draw[draw={rgb,255:red,0; green,0; blue,0},fill={rgb,255:red,200; green,113; blue,55},opacity=0.5,fill opacity=0.5](16.65,-5.35) rectangle (17.35, -6.05);
\draw[draw={rgb,255:red,0; green,0; blue,0},fill={rgb,255:red,200; green,113; blue,55},opacity=0.5,fill opacity=0.5](17.65,-5.35) rectangle (18.35, -6.05);
\draw[draw={rgb,255:red,0; green,0; blue,0},fill={rgb,255:red,200; green,113; blue,55},opacity=0.5,fill opacity=0.5](17.65,-4.35) rectangle (18.35, -5.05);
\draw[draw={rgb,255:red,0; green,0; blue,0},fill={rgb,255:red,200; green,113; blue,55},opacity=0.5,fill opacity=0.5](17.65,-3.35) rectangle (18.35, -4.05);
\draw[draw={rgb,255:red,0; green,0; blue,0},fill={rgb,255:red,200; green,113; blue,55},opacity=0.5,fill opacity=0.5](16.65,-3.35) rectangle (17.35, -4.05);
\draw[draw={rgb,255:red,0; green,0; blue,0},fill={rgb,255:red,200; green,113; blue,55},opacity=0.5,fill opacity=0.5](15.65,-3.35) rectangle (16.35, -4.05);
\draw[draw={rgb,255:red,0; green,0; blue,0},fill={rgb,255:red,200; green,113; blue,55},opacity=0.5,fill opacity=0.5](15.65,-2.35) rectangle (16.35, -3.05);
\draw[draw={rgb,255:red,0; green,0; blue,0},fill={rgb,255:red,200; green,113; blue,55},opacity=0.5,fill opacity=0.5](15.65,-1.35) rectangle (16.35, -2.05);
\draw[draw={rgb,255:red,0; green,0; blue,0},fill={rgb,255:red,200; green,113; blue,55},opacity=0.5,fill opacity=0.5](16.65,-1.35) rectangle (17.35, -2.05);
\draw[draw={rgb,255:red,0; green,0; blue,0},fill={rgb,255:red,200; green,113; blue,55},opacity=0.5,fill opacity=0.5](17.65,-1.35) rectangle (18.35, -2.05);
\draw[draw={rgb,255:red,0; green,0; blue,0},fill={rgb,255:red,200; green,113; blue,55},opacity=0.5,fill opacity=0.5](17.65,-0.35) rectangle (18.35, -1.05);
\draw[draw={rgb,255:red,0; green,0; blue,0},fill={rgb,255:red,200; green,113; blue,55},opacity=0.5,fill opacity=0.5](17.65,0.65) rectangle (18.35, -0.05);
\draw[draw={rgb,255:red,0; green,0; blue,0},fill={rgb,255:red,200; green,113; blue,55},opacity=0.5,fill opacity=0.5](16.65,0.65) rectangle (17.35, -0.05);
\draw[draw={rgb,255:red,0; green,0; blue,0},fill={rgb,255:red,200; green,113; blue,55},opacity=0.5,fill opacity=0.5](15.65,0.65) rectangle (16.35, -0.05);
\draw[draw={rgb,255:red,0; green,0; blue,0},fill={rgb,255:red,200; green,113; blue,55},opacity=0.5,fill opacity=0.5](15.65,1.65) rectangle (16.35, 0.95);
\draw[draw={rgb,255:red,0; green,0; blue,0},fill={rgb,255:red,200; green,113; blue,55},opacity=0.5,fill opacity=0.5](15.65,2.65) rectangle (16.35, 1.95);
\draw[draw={rgb,255:red,0; green,0; blue,0},fill={rgb,255:red,200; green,113; blue,55},opacity=0.5,fill opacity=0.5](16.65,2.65) rectangle (17.35, 1.95);
\draw[draw={rgb,255:red,0; green,0; blue,0},fill={rgb,255:red,200; green,113; blue,55},opacity=0.5,fill opacity=0.5](17.65,2.65) rectangle (18.35, 1.95);
\draw[draw={rgb,255:red,0; green,0; blue,0},fill={rgb,255:red,200; green,113; blue,55},opacity=0.5,fill opacity=0.5](18.65,2.65) rectangle (19.35, 1.95);
\draw[draw={rgb,255:red,0; green,0; blue,0},fill={rgb,255:red,200; green,113; blue,55},opacity=0.5,fill opacity=0.5](19.65,2.65) rectangle (20.35, 1.95);
\draw[draw={rgb,255:red,0; green,0; blue,0},fill={rgb,255:red,200; green,113; blue,55},opacity=0.5,fill opacity=0.5](20.65,2.65) rectangle (21.35, 1.95);
\draw[draw={rgb,255:red,0; green,0; blue,0},fill={rgb,255:red,200; green,113; blue,55},opacity=0.5,fill opacity=0.5](20.65,1.65) rectangle (21.35, 0.95);
\draw[draw={rgb,255:red,0; green,0; blue,0},fill={rgb,255:red,200; green,113; blue,55},opacity=0.5,fill opacity=0.5](20.65,0.65) rectangle (21.35, -0.05);
\draw[draw={rgb,255:red,0; green,0; blue,0},fill={rgb,255:red,200; green,113; blue,55},opacity=0.5,fill opacity=0.5](20.65,-0.35) rectangle (21.35, -1.05);
\draw[draw={rgb,255:red,0; green,0; blue,0},fill={rgb,255:red,200; green,113; blue,55},opacity=0.5,fill opacity=0.5](20.65,-1.35) rectangle (21.35, -2.05);
\draw[draw={rgb,255:red,0; green,0; blue,0},fill={rgb,255:red,200; green,113; blue,55},opacity=0.5,fill opacity=0.5](20.65,-2.35) rectangle (21.35, -3.05);
\draw[draw={rgb,255:red,0; green,0; blue,0},fill={rgb,255:red,200; green,113; blue,55},opacity=0.5,fill opacity=0.5](20.65,-3.35) rectangle (21.35, -4.05);
\draw[draw={rgb,255:red,0; green,0; blue,0},fill={rgb,255:red,200; green,113; blue,55},opacity=0.5,fill opacity=0.5](19.65,-3.35) rectangle (20.35, -4.05);
\draw[draw={rgb,255:red,0; green,0; blue,0},fill={rgb,255:red,200; green,113; blue,55},opacity=0.5,fill opacity=0.5](19.65,-4.35) rectangle (20.35, -5.05);
\draw[draw={rgb,255:red,0; green,0; blue,0},fill={rgb,255:red,200; green,113; blue,55},opacity=0.5,fill opacity=0.5](19.65,-5.35) rectangle (20.35, -6.05);
\draw[draw={rgb,255:red,0; green,0; blue,0},fill={rgb,255:red,200; green,113; blue,55},opacity=0.5,fill opacity=0.5](19.65,-6.35) rectangle (20.35, -7.05);
\draw[draw={rgb,255:red,0; green,0; blue,0},fill={rgb,255:red,200; green,113; blue,55},opacity=0.5,fill opacity=0.5](20.65,-6.35) rectangle (21.35, -7.05);
\draw[draw={rgb,255:red,0; green,0; blue,0},fill={rgb,255:red,200; green,113; blue,55},opacity=0.5,fill opacity=0.5](21.65,-6.35) rectangle (22.35, -7.05);
\draw[draw={rgb,255:red,0; green,0; blue,0},fill={rgb,255:red,200; green,113; blue,55},opacity=0.5,fill opacity=0.5](22.65,-6.35) rectangle (23.35, -7.05);
\draw[draw={rgb,255:red,0; green,0; blue,0},fill={rgb,255:red,200; green,113; blue,55},opacity=0.5,fill opacity=0.5](22.65,-5.35) rectangle (23.35, -6.05);
\draw[draw={rgb,255:red,0; green,0; blue,0},fill={rgb,255:red,200; green,113; blue,55},opacity=0.5,fill opacity=0.5](22.65,-4.35) rectangle (23.35, -5.05);
\draw[draw={rgb,255:red,0; green,0; blue,0},fill={rgb,255:red,200; green,113; blue,55},opacity=0.5,fill opacity=0.5](22.65,-3.35) rectangle (23.35, -4.05);
\draw[draw={rgb,255:red,0; green,0; blue,0},fill={rgb,255:red,200; green,113; blue,55},opacity=0.5,fill opacity=0.5](22.65,-2.35) rectangle (23.35, -3.05);
\draw[draw={rgb,255:red,0; green,0; blue,0},fill={rgb,255:red,200; green,113; blue,55},opacity=0.5,fill opacity=0.5](22.65,-1.35) rectangle (23.35, -2.05);
\draw[draw={rgb,255:red,0; green,0; blue,0},fill={rgb,255:red,200; green,113; blue,55},opacity=0.5,fill opacity=0.5](22.65,-0.35) rectangle (23.35, -1.05);
\draw[draw={rgb,255:red,0; green,0; blue,0},fill={rgb,255:red,200; green,113; blue,55},opacity=0.5,fill opacity=0.5](22.65,0.65) rectangle (23.35, -0.05);
\draw[draw={rgb,255:red,0; green,0; blue,0},fill={rgb,255:red,200; green,113; blue,55},opacity=0.5,fill opacity=0.5](22.65,1.65) rectangle (23.35, 0.95);
\draw[draw={rgb,255:red,0; green,0; blue,0},fill={rgb,255:red,200; green,113; blue,55},opacity=0.5,fill opacity=0.5](22.65,2.65) rectangle (23.35, 1.95);
\draw[draw={rgb,255:red,0; green,0; blue,0},fill={rgb,255:red,200; green,113; blue,55},opacity=0.5,fill opacity=0.5](23.65,2.65) rectangle (24.35, 1.95);
\draw[draw={rgb,255:red,0; green,0; blue,0},fill={rgb,255:red,200; green,113; blue,55},opacity=0.5,fill opacity=0.5](24.65,2.65) rectangle (25.35, 1.95);
\draw[draw={rgb,255:red,0; green,0; blue,0},fill={rgb,255:red,200; green,113; blue,55},opacity=0.5,fill opacity=0.5](25.65,2.65) rectangle (26.35, 1.95);
\draw[draw={rgb,255:red,0; green,0; blue,0},fill={rgb,255:red,200; green,113; blue,55},opacity=0.5,fill opacity=0.5](26.65,2.65) rectangle (27.35, 1.95);
\draw[draw={rgb,255:red,0; green,0; blue,0},fill={rgb,255:red,200; green,113; blue,55},opacity=0.5,fill opacity=0.5](27.65,2.65) rectangle (28.35, 1.95);
\draw[draw={rgb,255:red,0; green,0; blue,0},fill={rgb,255:red,200; green,113; blue,55},opacity=0.5,fill opacity=0.5](27.65,1.65) rectangle (28.35, 0.95);
\draw[draw={rgb,255:red,0; green,0; blue,0},fill={rgb,255:red,200; green,113; blue,55},opacity=0.5,fill opacity=0.5](27.65,0.65) rectangle (28.35, -0.05);
\draw[draw={rgb,255:red,0; green,0; blue,0},fill={rgb,255:red,200; green,113; blue,55},opacity=0.5,fill opacity=0.5](27.65,-0.35) rectangle (28.35, -1.05);
\draw[draw={rgb,255:red,0; green,0; blue,0},fill={rgb,255:red,200; green,113; blue,55},opacity=0.5,fill opacity=0.5](28.65,-0.35) rectangle (29.35, -1.05);
\draw[draw={rgb,255:red,0; green,0; blue,0},fill={rgb,255:red,200; green,113; blue,55},opacity=0.5,fill opacity=0.5](28.65,-1.35) rectangle (29.35, -2.05);
\draw[draw={rgb,255:red,0; green,0; blue,0},fill={rgb,255:red,200; green,113; blue,55},opacity=0.5,fill opacity=0.5](28.65,-2.35) rectangle (29.35, -3.05);
\draw[draw={rgb,255:red,0; green,0; blue,0},fill={rgb,255:red,200; green,113; blue,55},opacity=0.5,fill opacity=0.5](28.65,-3.35) rectangle (29.35, -4.05);
\draw[draw={rgb,255:red,0; green,0; blue,0},fill={rgb,255:red,200; green,113; blue,55},opacity=0.5,fill opacity=0.5](28.65,-4.35) rectangle (29.35, -5.05);
\draw[draw={rgb,255:red,0; green,0; blue,0},fill={rgb,255:red,200; green,113; blue,55},opacity=0.5,fill opacity=0.5](28.65,-5.35) rectangle (29.35, -6.05);
\draw[draw={rgb,255:red,0; green,0; blue,0},fill={rgb,255:red,200; green,113; blue,55},opacity=0.5,fill opacity=0.5](27.65,-5.35) rectangle (28.35, -6.05);
\draw[draw={rgb,255:red,0; green,0; blue,0},fill={rgb,255:red,200; green,113; blue,55},opacity=0.5,fill opacity=0.5](26.65,-5.35) rectangle (27.35, -6.05);
\draw[draw={rgb,255:red,0; green,0; blue,0},fill={rgb,255:red,200; green,113; blue,55},opacity=0.5,fill opacity=0.5](25.65,-5.35) rectangle (26.35, -6.05);
\draw[draw={rgb,255:red,200; green,113; blue,55},opacity=0.5](4.5,-4.7)--(9,-4.7)--(9,-6.7)--(12,-6.7)--(12,-2.7)--(10,-2.7)--(10,-0.7)--(14,-0.7)--(14,-5.7)--(18,-5.7)--(18,-3.7)--(16,-3.7)--(16,-1.7)--(18,-1.7)--(18,0.3)--(16,0.3)--(16,2.3)--(21,2.3)--(21,-3.7)--(20,-3.7)--(20,-6.7)--(23,-6.7)--(23,2.3)--(28,2.3)--(28,-0.7)--(29,-0.7)--(29,-5.7)--(26,-5.7);
\draw[draw={rgb,255:red,0; green,0; blue,0}](10.5,3.8)--(10.5,-9.2);
\draw[draw={rgb,255:red,0; green,0; blue,0}](16.5,3.8)--(16.5,-9.2);
\draw[draw=none,fill={rgb,255:red,28; green,36; blue,31},thin](10, -6.7) ellipse (0.1cm and 0.1cm);\draw[draw=none,fill={rgb,255:red,28; green,36; blue,31},thin](10, -0.7) ellipse (0.1cm and 0.1cm);\draw[draw=none,fill={rgb,255:red,28; green,36; blue,31},thin](16, 2.3) ellipse (0.1cm and 0.1cm);\draw[draw=none,fill={rgb,255:red,28; green,36; blue,31},thin](16, -5.7) ellipse (0.1cm and 0.1cm);\draw(8.91, -8.73) node[anchor=south west] {$P_s$};
\draw(8.91, -0.2) node[anchor=south west] {$P_n$};
\draw(14.68, 2.6) node[anchor=south west] {$P_{n'}$};
\draw(14.89, -7.66) node[anchor=south west] {$P_{s'}$};
\draw[draw={rgb,255:red,0; green,0; blue,0}](27.5,3.8)--(27.5,-9.2);
\draw[draw=none,fill={rgb,255:red,28; green,36; blue,31},thin](27, 2.3) ellipse (0.1cm and 0.1cm);\draw[draw=none,fill={rgb,255:red,28; green,36; blue,31},thin](28, -5.7) ellipse (0.1cm and 0.1cm);\draw(25.57, 2.6) node[anchor=south west] {$P_{n''}$};
\draw(27.15, -7.66) node[anchor=south west] {$P_{s''}$};
\end{tikzpicture}
  \end{center}
  \caption{The seed is in blue, and $P$ is in brown. Three example spans are shown in this figure, $S = (s,n)$, $S'=(s',n')$ and $S''=(s'',n'')$.     A span $(s,n)$ is a pair of indices such that $\glu P s$ is visible from the south, $\glu P n$ is visible from the north, and $\glu P s$ and $\glu P n$ are on the same glue column. Here, both $S$ and $S'$ have both of their glues pointing east, hence we say that the span is pointing east. Moreover, since $s<n$ and $s'<n'$, $S$ and $S'$ are ``up spans''. On the other hand, $S''$ has its south glue pointing east, and its north glue pointing west, hence $S''$ is not pointing in any particular direction. Moreover, $n''<s''$, hence $S''$ is a ``down span''. Span $S$ has height 6, and spans $S'$ and $S''$ have height 8. 
    If the spans $(s,n)$ and $(s',n')$ happen to be of the same type (meaning $\type{\glueP{s}{s+1}}=\type{\glueP{s'}{s'+1}}$), since the height of $S'$ is at least the height of $S$, we could apply Lemma~\ref{lem:shield} directly to prove that $P$ is pumpable or fragile.}
  \label{fig:spans}
\end{figure}
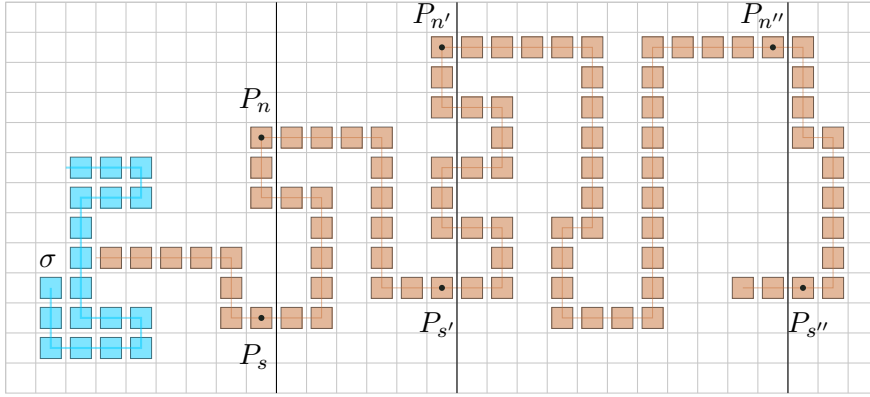

\item
Finally,  the proof of Theorem~\ref{thm:main} uses a combinatorial argument, showing that if the path is long enough, there are enough spans that we can always find two spans of the same type and orientation, and of increasing height.
\end{enumerate}

\subsection{First step: Putting $P$ into a canonical form}
We first restate Theorem~\ref{thm:intro main thm}, the short proof of which merely contains some initial setup to put $P$ in a canonical form and finishes by invoking Theorem~\ref{thm:main}, where the main heavy lifting happens. 

\begin{reptheorem}{thm:intro main thm}
  \introtheorem
\end{reptheorem}

\begin{proof}
Since the vertical height or horizontal width of $P$ is strictly greater than $|\sigma|$ (the number of tiles in $\sigma$) for all $|T| \geq 1$, we know that $P$ extends beyond the bounding box of $\sigma$.

Let $B$ be a square in $\mathbb{Z}^2$ (its four sides are parallel to the x and y axes) whose center is at the origin, and where each side is distance  $(4|T|)^{(4|T|+1)}(4|\sigma|+6)+|\sigma|$ from the origin.
Without loss of generality, we translate $P \cup \sigma$ so that $\pos{P_0}=(0,0) \in\mathbb{Z}^2$. Then we let $b\in \{0,1,\ldots, |P|-1 \}$ be the smallest index  so that tile $P_b$ is positioned on $B$. Note that $b$ is correctly defined since either the width or height of $P$ (which is at least $(8|T|)^{(4|T|+1)}(5|\sigma|+6)$) is greater than the length of the side of $B$ (which is less than $(8|T|)^{(4|T|+1)}(4|\sigma|+6)+2|\sigma|+1$)).
From now on we redefine $P$ to be its own prefix up to $P_b$, i.e.,  $P \defeq P_{0,1,\ldots,b}$ (since proving the theorem conclusion on any prefix of $P$ implies that conclusion also holds for $P$).

Without loss of generality, we assume that the last tile $P_b = P_{|P|-1}$ of $P$ is the unique {\em easternmost} tile of $P\cup \sigma$ (if it is not, we simply rotate all tiles of $T$, and the seed assembly $\sigma$, to make it so -- thus redefining $P$ again). Note that, $\pos{P_0}$ is still $(0,0)$ after this rotation.  
Uniqueness follows from the fact that $\sigma$ does not reach to $B$ for all $|T|\geq 1$, and that we assumed that $P$ places exactly one tile on $B$.

Finally, we also assume, without loss of generality, that the {\em westernmost} tile of $\sigma\cup \asm P$ is at x-coordinate $0$ and that the {\em southernmost} tile of $\sigma\cup \asm P$ is at y-coordinate $0$ (we translate $P \cup \sigma$ if this is not the case -- thus redefining $P$ again). Since we have $\pos{P_0}=(0,0)$ before the translation, then $\sigma \cup P$ is translated to the east. Moreover, since $P_0$ and the seed interacted then the last tile of $P$ is at least $(4|T|)^{(4|T|+1)}(4|\sigma|+6)$ to the east of the easternmost tile of $\sigma$.

$P$ now satisfies the hypotheses of  Theorem~\ref{thm:main}, thus $P$ is pumpable or fragile.
\end{proof}

\subsection{Second step: $P$ has many visible glues pointing east, or else is pumpable or fragile}

For the remainder of this section,
we define $X\in\Z$ to be the x-coordinate of the easternmost glue column, i.e. $X$ is the largest integer such that $P$ has a glue with x-coordinate $X+0.5$.
Recall (Section~\ref{sec:defs-paths}) that for a tile $P_i$, its x-coordinate is denoted $\xcoord {P_i}$.

We will need the following lemma, which uses a similar proof technique as Lemma~\ref{lem:glue:east}:

\begin{lemma}
\label{lem:glue:side}
Let $P$ be a path producible by some tile assembly system $\mathcal T=(T,\sigma,1)$ such that the last tile of $P$ is the unique easternmost tile of $\sigma \cup \asm P$. Either (a) all glues visible from the north relative to $P$ point to the east or (b) all glues visible from the south relative to $P$ point to the east.
\end{lemma}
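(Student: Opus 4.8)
The statement is a dichotomy: at least one of the two "sides" has all its visible glues pointing east. The natural approach is to prove the contrapositive-flavored claim: if there is a glue visible from the north pointing west \emph{and} a glue visible from the south pointing west, we derive a contradiction using the fact that the last tile of $P$ is the unique easternmost tile of $\sigma\cup\asm P$ (so the last glue of $P$ is visible from both the north and the south, a remark explicitly noted right after Lemma~\ref{lem:glue:east}). So suppose, for contradiction, that there is an index $a$ with $\glu P a$ visible from the north and pointing west, and an index $b$ with $\glu P b$ visible from the south and pointing west.

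\textbf{Key steps.} First I would set up the two visibility rays: let $\ell^a$ be the vertical ray to the north from $\pos{\glu P a}$ (this ray does not meet $\embed P$ nor $\sigma$), and let $\ell^b$ be the vertical ray to the south from $\pos{\glu P b}$. I would then build a simple infinite almost-vertical polygonal curve from one ray to the other through the relevant subpath of $P$: without loss of generality $a<b$ (the two cases are symmetric by reversing $P$ and swapping the roles of north/south, which also swaps "east-pointing" statements consistently, though one must be careful since the easternmost-tile hypothesis is reversal-symmetric here), and consider the curve $c=\concat{\ell^a{}^\leftarrow, \gs{\ell^a(0)}{\pos{P_a}}, \embed{P_{a,a+1,\ldots,b}}, \gs{\pos{P_b}}{\ell^b(0)}, \ell^b}$ — or rather the appropriately-oriented analogue, since $\ell^a$ goes north and $\ell^b$ goes south, so the curve has a ray going north at one end and a ray going south at the other, which still satisfies Definition~\ref{def:simple infinite almost-vertical polygonal curve} after a reflection. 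By visibility this curve is simple, so by Theorem~\ref{thm:infinite-jordan} it cuts $\R^2$ into two connected components. The crucial observation is that \emph{both} $\glu P a$ and $\glu P b$ point west, so the step of $P$ \emph{immediately after} $P_a$ (going west from $P_a$, i.e.\ toward $P_{a+1}$, but $P_{a+1}$ is on the subpath used in $c$) and the step immediately after $P_b$ each leave the curve on the same side — and I would argue that the remainder of $P$ past $\max(a,b)$ must start on the "western" side of this cut yet must eventually reach the unique easternmost tile $P_{|P|-1}$, which lies on the opposite side, forcing $\embed{P_{\ldots}}$ to cross $c$, contradicting simplicity of $P$ together with the visibility of $\ell^a$ and $\ell^b$.

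\textbf{Main obstacle.} The delicate point is orchestrating the orientations so that the cut $c$ genuinely separates $\pos{P_{|P|-1}}$ (far east) from the tail of $P$ emanating from the west-pointing glues. Because one ray goes north and the other south, the bounded "pocket" that the curve creates is not simply "to the west" the way it is in Lemma~\ref{lem:glue:east} (where both rays go south); instead the two rays and the polygonal middle enclose a region, and I need to check that a west-pointing visible glue forces the path's continuation into that enclosed region, while the easternmost tile is outside it. I expect this is handled by the same style of argument as in Lemma~\ref{lem:glue:east}: translate one of the rays by a tiny amount $\pm(0.1,0)$ to witness that the tile just after the west-pointing glue is strictly inside the enclosed component, then observe that $\embed{P_{\ldots,|P|-1}}$ cannot exit that component (it cannot cross $\ell^a$ or $\ell^b$ by visibility, cannot cross the middle of $c$ by simplicity of $P$, and cannot cross the connecting $0.5$-segments except at their endpoints), yet $P_{|P|-1}$ being the unique easternmost tile of $\sigma\cup\asm P$ puts it outside — a contradiction. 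The bookkeeping of which component is "inside" and verifying the easternmost tile is genuinely on the far side (using that its glue is visible from both north and south, hence not trapped) is the part requiring care; everything else is routine once the picture is fixed.
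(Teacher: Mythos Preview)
Your approach is the paper's: assume for contradiction that some north-visible glue and some south-visible glue both point west, thread the two visibility rays through the subpath of $P$ between them to get a simple infinite almost-vertical curve, apply Theorem~\ref{thm:infinite-jordan}, and observe that the tail of $P$ starts on the west side of the cut (because the later of the two glues points west) while $P_{|P|-1}$, being the unique easternmost tile, lies on the east side --- forcing an impossible crossing.

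One correction: your WLOG ``$a<b$'' cannot be justified by \emph{reversing $P$}. A reversed producible path is not producible from $\sigma$, and the hypothesis that the \emph{last} tile is easternmost is certainly not reversal-invariant. The correct symmetry (which the paper uses implicitly) is the vertical flip swapping north and south: it exchanges ``visible from north'' with ``visible from south'', preserves ``pointing west'' and ``unique easternmost last tile'', and so reduces one ordering of the two indices to the other. Once you take the south-visible index $i$ to be the smaller one, the curve $c=\concat{\reverse{l^i},\,\gs{l^i(0)}{P_{i+1}},\,\embed{P_{i+1,\ldots,j}},\,\gs{P_j}{l^j(0)},\,l^j}$ already has the form required by Definition~\ref{def:simple infinite almost-vertical polygonal curve} (ray from the south, finite middle, ray to the north), so the ``reflection'' you were worried about in your main obstacle is not needed.
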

\begin{proof}
  Suppose, for the sake of contradiction, that for some integer $i$, $\glu P i$ is visible from the south relative to $P$ and points west, and for some integer $j$, $\glu P j$ is visible from the north relative to $P$ and points west. 

  Assume without loss of generality that $i<j$, and let $l^i$ and $l^j$ be the respective visibility rays of $\glu P i$ and $\glu P j$.
  We define a curve $c$ as
  $$c = \concat{\reverse{l^i}, \gs{l^i(0)}{P_{i+1}}, \embed{P_{\range {i+1}{i+2}j}}, \gs{P_j}{l^j(0)}, l^j}$$
  By Theorem~\ref{thm:infinite-jordan}, $c$ cuts $\R^2$ into two connected components. Observe that the last tile of $P$ is on the right-hand side of $c$, yet $P_{j+1}$ is on the left-hand side of $c$.
  Therefore, $\embed{ P_{\range{j+1}{j+2}{|P|-1}} }$ must intersect $c$. However:
  \begin{itemize}
  \item That intersection is not be on $\reverse{l^i}$ or on $l^j$, by visibility of $\glu P i$ and $\glu P j$.
  \item That intersection is not be on $\embed{ P_{\range{i+1}{i+2}j} }$, because $P$ is simple.
  \item That intersection is not be on $\gs{l^i(0)}{P_{i+1}}$ nor $\gs{P_j}{l^j(0)}$,  since $P$ can only intersect these segments at the endpoints of these segments, and we have already handled this case.
  \end{itemize}
  Therefore, we get a contradiction, hence either all glues visible from the north relative to $P$ point to the east, or all glues visible from the south relative to $P$ point to the east.
\end{proof}

\begin{lemma}
\label{lem:glue:almost}
Let $P$ be a path producible by some tile assembly system $\mathcal T=(T,\sigma,1)$ such that the last tile of $P$ is the unique easternmost tile of $P \cup \sigma$. If more than $|T|+1$ glues of $P$ are visible from the south and are pointing to the west then $P$ is fragile or pumpable.
\end{lemma}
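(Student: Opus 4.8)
The plan is to use the pigeonhole principle together with Lemma~\ref{lem:glue:east} and Lemma~\ref{lem:shield}. Suppose for contradiction that more than $|T|+1$ glues of $P$ are visible from the south and point west. By Lemma~\ref{lem:glue:side}, since not all of these glues point east, case (a) of that lemma must hold, i.e.\ all glues visible from the north relative to $P$ point east; in particular the last glue of $P$ (which is visible from both the north and the south, since the last tile of $P$ is the unique easternmost tile of $\sigma\cup\asm P$) points east. This places us exactly in the setting where Lemma~\ref{lem:glue:east} applies: among glues visible from the south that point west, the contrapositive (respective) version of Lemma~\ref{lem:glue:east} orders them consistently by x-coordinate, so two such glues with the same x-coordinate would force an equality of indices, and more generally any two west-pointing south-visible glues at the same glue column would coincide.

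Next I would extract, from the $>|T|+1$ west-pointing south-visible glues, a subcollection that will feed the Shield Lemma. By the pigeonhole principle, among more than $|T|+1 \geq |T|$ such glues, two of them, say $\glu P i$ and $\glu P j$ with $i\neq j$, have the same glue type. After taking the vertical mirror image of $T$, $\sigma$ and $P$ (which turns west-pointing glues into east-pointing glues and preserves all visibility hypotheses, as noted in the proof of Lemma~\ref{lem:glue:east}), these become two same-type glues visible from the south and pointing east. Applying Lemma~\ref{lem:glue:east} in the mirrored picture (using that the last glue is visible from the north), we get that $\xcoord{P_i}<\xcoord{P_j}$ implies $i<j$; reorder so that $i<j$. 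It remains to produce the third index $k$ with $\glu P k$ visible from the north and Hypothesis~\ref{lem:hp:shield backup} of Definition~\ref{def:shield} satisfied. The natural choice is to take $k = |P|-2$, i.e.\ the last glue of $P$, which is visible from the north by hypothesis; since the whole path lies weakly to the west of the vertical line through $\pos{P_{|P|-1}}$ and the ray $l^k$ goes straight north from the easternmost glue, the translated ray $l^k+\vpji$ is strictly to the west of $l^k$, and I would argue $\embed{P_{i,i+1,\ldots,k}}$ meets it in at most the single allowed point --- this uses that $x_{\vpji}<0$ and a visibility/Jordan-curve argument analogous to the one in \subl{lem:c-lk}.

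Once $(i,j,k)$ is verified to be a shield for $P$ (after the mirror flip), Lemma~\ref{lem:shield} gives directly that $P$ is pumpable with pumping vector $\vect{P_iP_j}$ or fragile; undoing the vertical flip preserves both pumpability and fragility (a pumping/conflict in the mirrored system corresponds to one in the original), which is the desired conclusion.

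The main obstacle I expect is verifying Hypothesis~\ref{lem:hp:shield backup} for the chosen $k$ --- in particular arranging that the two same-type west-pointing south-visible glues $\glu P i, \glu P j$ are positioned so that the translated northward ray $l^k + \vpji$ has at most one (boundary) intersection with $\embed{P_{i,i+1,\ldots,k}}$. If taking $k=|P|-2$ does not immediately give this, the fallback is to choose $i$ and $j$ more carefully: among the $>|T|+1$ west-pointing south-visible glues, restrict first to those lying in glue columns to the east of some threshold, apply pigeonhole to that restricted (still large enough, by the count $>|T|+1$) set to get same type, and use the monotonicity from Lemma~\ref{lem:glue:east} to ensure $i<j$ with $\pos{P_j}$ far enough east that the ray $l^k+\vpji$ is pushed west of the relevant portion of $\embed{P_{i,\ldots,k}}$. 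A careful bookkeeping of which glues are "east enough" versus how many are consumed by the type-collision argument is where the constant $|T|+1$ (rather than $|T|$) in the statement is used, and this is the step that needs the most attention.
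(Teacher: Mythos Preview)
Your overall plan---pigeonhole on the south-visible west-pointing glues, mirror flip, then apply the Shield Lemma with $k=|P|-2$---is exactly the paper's approach. However, your verification of Hypothesis~\ref{lem:hp:shield backup} is muddled because you mix the original and mirrored frames, and this leads you to miss the one-line argument and to propose an unnecessary fallback.

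Here is the issue. In the \emph{original} picture the two chosen glues point west and, by the west version of Lemma~\ref{lem:glue:east}, $i<j$ forces $\xcoord{P_i}>\xcoord{P_j}$, so $\vect{P_jP_i}$ has \emph{positive} x-component. Since $l^k$ sits on the easternmost glue column of $\sigma\cup\asm P$, the translated ray $l^k+\vect{P_jP_i}$ lies strictly to the \emph{east} of every tile of $P$ and hence does not meet $\embed{P_{i,\ldots,k}}$ at all. That is the whole argument for Hypothesis~\ref{lem:hp:shield backup}; no Jordan-curve reasoning, no careful re-selection of $i,j$, and no use of the extra ``$+1$'' is required. Equivalently, in the mirrored picture the last tile becomes the unique \emph{westernmost} tile (not easternmost, as you wrote), so $l^k$ is on the westernmost glue column and $l^k+\vpji$ (with $\xcoord{\vpji}<0$) lies strictly west of everything. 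Your sentence ``the whole path lies weakly to the west of the vertical line through $\pos{P_{|P|-1}}$ and the ray $l^k$ goes straight north from the easternmost glue, the translated ray $l^k+\vpji$ is strictly to the west of $l^k$'' combines the extremal position of $l^k$ from the original frame with the sign of $\vpji$ from the mirrored frame; taken literally it would place $l^k+\vpji$ in the interior of the path, which is why you then feel the need for a Jordan-curve argument and a fallback. Also, invoking Lemma~\ref{lem:glue:side} is superfluous here: nothing in the proof requires knowing that all north-visible glues point east.
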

\begin{proof}
  We claim that the hypotheses of Definition~\ref{def:shield} are satisfied, and establishing that claim will  in turn allow us to apply Lemma~\ref{lem:shield}.
  Since there are $|T|+1$ glues visible from the south relative to $P$ and pointing west, then by the pigeonhole principle, there are two indices $i < j$ such that $\glu P i$ and $\glu P j$ are visible from the south, pointing west and of the same glue type.
  Since the last tile $P_{|P|-1}$ of $P$ is the unique easternmost tile of $P\cup \sigma$, then for $k=|P|-2$, $\glu P k$ is visible from the north relative to $P$. Hence, let $l^k$ be the visibility ray of $\glu P k$ (this means that $l^k$ is the vertical ray from $\glu P k$ to the north).
    By Lemma~\ref{lem:glue:east}, we have $x_{P_i} > x_{P_j}$. Therefore, $l^k+\vect{P_jP_i}$ is strictly to the east of $l_k$, and hence does not intersect $P$.

  We claim that the three hypotheses of Definition~\ref{def:shield} are satisfied by the horizontal mirror flip of $\mathcal{T}$ and of $P$:  
  In the mirror-flip of $P$ we (still) have $0 \leq i<j<k < |P|-1$, also 
   $\glu P i$ and $\glu P j$ are visible to the south (with respect to the mirror-flip of $P\cup \sigma$), point to the east, and are of the same type (Hypothesis~\ref{lem:hp:shield ij});  
  $\glu P k$ is visible to the north (Hypothesis~\ref{lem:hp:shield k});
  and 
  $\embed{P_{i,i+1,\ldots,k}} \cap ( l^k+\vect{P_jP_i}) = \emptyset$
(Hypothesis~\ref{lem:hp:shield backup}). 
Thus for the mirror-flip of $P$ and of $\mathcal{T}$, the mirror-flip of $P$ is pumpable or fragile, hence $P$ is  pumpable or fragile as claimed.
    \end{proof}

\subsection{Third step: repeated spans imply $P$ is  pumpable or fragile}
We need two key definitions that are illustrated in Figure~\ref{fig:spans}.
  \begin{definition}[span]\label{def:span}
    Let $P$ be a path producible by some tile assembly system $\mathcal T=(T,\sigma,1)$, whose last glue is the unique easternmost (\resp highest) glue of $P\cup\sigma$.
    A \emph{span} (\resp \emph{horizontal span}) of $P$ is a pair $(s,n) \in \{0,1,\ldots,|P|-2\}^2$ of indices of tiles of $P$ such that, at the same time:
    \begin{itemize}
    \item $\xcoord{\glu P n}=\xcoord{\glu P s}$ (\resp $\ycoord{\glu P n} = \ycoord{\glu P s}$), and
    \item $\glu{P}{n}$ is visible from the north (\resp from the east), and
    \item $\glu{P}{s}$ is visible from the south (\resp from the west).
    \end{itemize}
\end{definition}

Remark on Definition~\ref{def:span}:   Notice that $P$ has exactly one span on each glue column that does not have seed glues\footnote{Seed glues are glues that are located on tiles belonging to the seed $\sigma$. Thus, in particular, the $\geq 1$ glue(s) on $\sigma$ that are abutting to tile $P_0$ of $P$ are seed glues.} and on which $P$ has at least one glue. Hence in the rest of the proof we will use terms such as ``the span of $P$ on glue column $x$''.

The following properties of spans will be useful.

\begin{definition}[span properties: orientation,    pointing direction, type and height]\label{def:span properties}
    The \emph{orientation} of a span (\resp of a horizontal span) $(s, n)$ is  \emph{up} (\resp \emph{right}) if  $s\leq n$\footnote{In the case where $s=n$, $P$ has only one glue on the glue column of $\glu P s$, and $(s, n)$ is an up span.} or \emph{down} (\resp \emph{left}) if $s>n$. 
    
    A span (\resp horizontal span) $(s,n)$ \emph{points east} (\resp \emph{points north}) if both $\glu P s$ and $\glu P n$ point east (\resp point north), and that $(s,n)$ \emph{points west} (\resp \emph{points south}) if both $\glu P s$ and $\glu P n$ point west (\resp point south)\footnote{If $\glu P s$ and $\glu P n$ point to different direction, the span has no direction.}.

    The \emph{type} of a span or horizontal span $(s, n)$ is the type of its first glue in the growth order of $P$, i.e. the type of $\glu P s$ if $s\leq n$ and the type of $\glu P n$ otherwise. 
    
    The \emph{height} (\resp width) of a span (\resp horizontal span) $(s, n)$ is $\ycoord{P_n}-\ycoord{P_s}$ (\resp $\xcoord{P_n}-\xcoord{P_s}$).
  \end{definition}

\begin{lemma}
\label{lem:couple:almost}
Let $P$ be a path producible by some tile assembly system $\mathcal T=(T,\sigma,1)$ such that the last tile of $P$ is the unique easternmost tile of $\sigma  \cup \asm P$. Either $P$ is pumpable or fragile or else there is a glue column $x_0 \leq |T|+ |\sigma|$ such that for all $x\in\{x_0,x_0+1,\ldots,X\}$, the span of $P$ on glue column $x$ is pointing east.
\end{lemma}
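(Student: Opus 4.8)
The statement to prove is Lemma~\ref{lem:couple:almost}: either $P$ is pumpable or fragile, or there is a glue column $x_0\leq |T|+|\sigma|$ such that every span of $P$ on a glue column $x\in\{x_0,x_0+1,\ldots,X\}$ points east. The plan is to combine the structural dichotomy of Lemma~\ref{lem:glue:side} with the counting bound of Lemma~\ref{lem:glue:almost}. First I would apply Lemma~\ref{lem:glue:side}: since the last tile of $P$ is the unique easternmost tile of $\sigma\cup\asm P$, either (a) every glue visible from the north points east, or (b) every glue visible from the south points east. In case (b), both the north glue and the south glue of every span point east (north glues visible from the north point east by Lemma~\ref{lem:glue:side} applied with roles swapped — actually case (b) only gives south glues; I need to be careful here). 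Let me restructure: the cleaner route is to handle the two glue directions (south-visible and north-visible) separately and bound, in each, the number of glue columns carrying a ``bad'' (west-pointing) visible glue of that flavor.

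\textbf{Main argument.} By Lemma~\ref{lem:glue:side}, at least one of the two families of visible glues — those visible from the south, or those visible from the north — consists entirely of east-pointing glues; call this the ``good side'' and the other the ``possibly-bad side''. On the good side every visible glue on every glue column points east, so the only obstruction to a span pointing east comes from a west-pointing visible glue on the possibly-bad side. Now apply Lemma~\ref{lem:glue:almost} (and, if the possibly-bad side is the north side, its mirror-flipped analogue, obtained exactly as in the proof of Lemma~\ref{lem:glue:almost} by taking a vertical symmetry so that ``north'' becomes ``south''): if more than $|T|+1$ glues of $P$ visible from the possibly-bad side point west, then $P$ is fragile or pumpable and we are done. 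Otherwise, at most $|T|+1$ such west-pointing visible glues exist, hence they lie on at most $|T|+1$ distinct glue columns. Additionally, glue columns containing seed glues (there are at most $|\sigma|$ tiles in $\sigma$, hence the seed touches at most $|\sigma|+1$ glue columns, and we can crudely bound this by $|\sigma|$ relevant columns to the west, absorbing the $+1$ into the slack of the bound or by noting the constants in the statement leave room) must be excluded since spans are only defined on seed-free glue columns. Collecting these: all ``problematic'' glue columns — those with seed glues, together with the at most $|T|+1$ columns carrying a west-pointing bad-side visible glue — form a set of at most $|T|+|\sigma|$ columns. Let $x_0$ be one more than the largest such column (or, if there are none, any bound $\leq |T|+|\sigma|$ suffices, e.g. the westmost column of $P$). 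Then for every $x\in\{x_0,x_0+1,\ldots,X\}$, the span of $P$ on glue column $x$ — which exists and is unique by the Remark following Definition~\ref{def:span}, since such columns are seed-free — has its good-side glue pointing east (always true) and its bad-side glue pointing east (since $x$ is not one of the excluded columns). Hence that span points east, and $x_0\leq |T|+|\sigma|$ as required.

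\textbf{Where the work is.} The conceptual content is entirely carried by the two lemmas already proved: Lemma~\ref{lem:glue:side} supplies the all-important fact that one entire side is uniformly east-pointing (so we only ever have to control one side), and Lemma~\ref{lem:glue:almost} supplies the pigeonhole-plus-shield argument bounding west-pointing visible glues on the other side. The remaining steps — translating ``at most $|T|+1$ bad glues'' into ``at most $|T|+1$ bad glue columns'', ruling out seed columns, and choosing $x_0$ large enough to sit strictly east of every problematic column — are bookkeeping. The only mild subtlety is the careful accounting of which side is ``good'': depending on the outcome of Lemma~\ref{lem:glue:side} we either invoke Lemma~\ref{lem:glue:almost} directly (bad side = south) or its mirror-flipped version (bad side = north), and in the latter case one must check, as in the proof of Lemma~\ref{lem:glue:almost}, that the vertical flip preserves all the hypotheses (last tile easternmost, visibilities, the shield conditions of Definition~\ref{def:shield}). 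I expect this mirror-symmetry verification to be the main obstacle, but it is routine given that an identical flip was already carried out inside the proof of Lemma~\ref{lem:glue:almost}. Finally I would double-check that the column count $|T|+|\sigma|$ leaves enough room that $x_0$ can genuinely be chosen $\leq |T|+|\sigma|$ rather than $\leq |T|+|\sigma|+1$; this follows because a span on the westmost seed-free glue column with $x\geq$ (westmost seed column) is already available, so $x_0$ need not exceed the number of excluded columns.
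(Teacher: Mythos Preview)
Your overall strategy matches the paper's: use Lemma~\ref{lem:glue:side} to make one side uniformly east-pointing, then Lemma~\ref{lem:glue:almost} to bound the west-pointing visible glues on the other side. However, there is a genuine gap in the final step.

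You bound the \emph{number} of problematic glue columns by $|T|+|\sigma|$ and then set $x_0$ to be one more than the largest problematic column, concluding $x_0\le |T|+|\sigma|$. But a bound on the cardinality of a set of columns does not, by itself, bound the position of its easternmost element: a priori the $|T|+1$ west-pointing south-visible glues could sit on columns scattered anywhere up to~$X$. Your final paragraph worries about an off-by-one issue, but the real issue is positional, not cardinal.

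What is missing is the monotonicity supplied by Lemma~\ref{lem:glue:east}: if some south-visible glue points east, then every south-visible glue on a column strictly to its east also points east. Equivalently, the west-pointing south-visible glues occupy an initial segment of glue columns (in the east--west order). The paper's proof uses exactly this: it defines $x_0$ as the westernmost glue column whose south-visible glue points east, invokes Lemma~\ref{lem:glue:east} to get east-pointing south-visible glues on every column in $\{x_0,\ldots,X\}$, and then argues that if $x_0>|T|+|\sigma|$ there would be too many west-pointing south-visible glues (all of them west of $x_0$), triggering Lemma~\ref{lem:glue:almost}. Once you insert this monotonicity step, your argument becomes correct and essentially identical to the paper's.
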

\begin{proof}
  First, by Lemma~\ref{lem:glue:side}, we can assume without loss of generality (by vertical mirror-flip) that all glues visible from the north relative to $P$ are pointing east.

  Then, since $P_{|P|-1}$ is the unique easternmost tile of $P$, $\glue{P}{|P|-2}{|P|-1}$ is pointing east and visible both from the south and from the north. Therefore, the span of $P$ on glue column $X$ is pointing east. Let $x_0\in\Z$ be the westernmost glue column on which the glue visible from the south relative to $P$ is pointing east.
  By Lemma~\ref{lem:glue:east}, for all $x\in\{x_0,x_0+1,\ldots,X\}$, the glue visible from the south relative to $P$ on column $x$ is also pointing east.
  If $x_0>|T|+|\sigma|$ then there are at least $|T|+|\sigma|+1$ columns on which either $P$ has no visible glue (meaning that $\sigma$ has glues on these columns), or on which the glue visible from the south relative to $P$ is pointing west. Since the size of the seed is $|\sigma|$, this means that at least $|T|+1$ glues visible from the south relative to $P$ are pointing west, and we can conclude using Lemma~\ref{lem:glue:almost} that $P$ is pumpable or fragile.

  In all other cases, the span of $P$ on glue column $x$ is pointing east.
\end{proof}

In the final theorem, we need the following corollary of Lemma~\ref{lem:couple:almost}, whose statement is about horizontal spans instead of spans.
The corollary statement is simply a rotation of the statement of Lemma~\ref{lem:couple:almost} along with Y substituted for X in the latter, hence the proof is immediate.

\begin{corollary}
\label{cor:couple:almost}
Let $P$ be a path producible by some tile assembly system $\mathcal T=(T,\sigma,1)$ such that the last tile of $P$ is the unique northernmost (\resp southernmost) tile of $\sigma \cup \asm P$, let $Y$ be the largest integer such that $P$ or $\sigma$ has a glue on glue row $Y+0.5$ and let $y$ be the smallest integer such that $P$ or $\sigma$ has a glue on glue row $y+0.5$. Either $P$ is pumpable or fragile or else there is a glue row $y_0 \leq y+ |T|+ |\sigma|$ (\resp $y_0\geq Y-|T|-|\sigma|$) such that for all $z\in\{\rng {y_0} Y \}$
(respectively $z\in\{\rng y {y_0}\}$), the horizontal span of $P$ on glue row $z$ is pointing north (\resp south).
\end{corollary}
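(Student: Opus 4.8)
The statement to prove is Corollary~\ref{cor:couple:almost}, which the excerpt itself describes as an immediate consequence of Lemma~\ref{lem:couple:almost} obtained by a rotation and by substituting $Y$ for $X$. So the plan is short: explain precisely what transformation of the frame of reference sends the hypotheses and conclusion of Lemma~\ref{lem:couple:almost} to those of the corollary, and observe that all the notions involved (producible path, visibility, span, pointing direction, type, height, pumpable, fragile) are preserved under that transformation.

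\textbf{Approach.}
The plan is to apply Lemma~\ref{lem:couple:almost} not to $P$ and $\mathcal T$ themselves, but to a rotated copy. Concretely, suppose first that the last tile of $P$ is the unique northernmost tile of $\sigma\cup\asm P$. Let $\mathcal T'=(T',\sigma',1)$ be the tile assembly system obtained from $\mathcal T$ by rotating every tile type by $90^\circ$ clockwise (so that the north side of each tile becomes its east side, etc.) and rotating $\sigma$ correspondingly, and let $P'$ be the image of $P$ under the same rotation. This rotation is a bijection on $\Z^2$, maps adjacent positions to adjacent positions, and maps interacting abutting sides to interacting abutting sides, so $P'$ is a path producible by $\mathcal T'$ (this is the same ``rotate all tiles of $T$ and the seed $\sigma$'' manoeuvre used, e.g., in the proof of Theorem~\ref{thm:intro main thm} and in Lemma~\ref{lem:glue:east}). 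Under this rotation, ``northernmost'' becomes ``easternmost'', ``visible from the north'' becomes ``visible from the east'', and ``visible from the east'' becomes ``visible from the south''; the glue column indexed $x$ for $P'$ corresponds to the glue row indexed (up to sign and an additive constant) for $P$. In particular the hypothesis ``last tile of $P$ is the unique northernmost tile of $\sigma\cup\asm P$'' becomes exactly the hypothesis of Lemma~\ref{lem:couple:almost} for $P'$.

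\textbf{Key steps, in order.}
First, set up the rotation and verify $P'\in\prodpaths{\mathcal T'}$ and that the easternmost-tile hypothesis holds for $P'$. Second, observe that a pair $(s,n)$ is a horizontal span of $P$ pointing north (of a given type, of a given orientation) if and only if $(s,n)$ is a span of $P'$ pointing east (of the same type, same orientation), since Definition~\ref{def:span} and Definition~\ref{def:span properties} for horizontal spans are obtained from those for spans by exactly this $90^\circ$ rotation --- the x-coordinate condition becomes the y-coordinate condition, ``visible from the south'' becomes ``visible from the west'', and ``visible from the north'' becomes ``visible from the east'', matching the parenthetical ``(\resp ...)'' clauses in those definitions. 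Third, note that ``$P$ is pumpable or fragile'' is invariant under the rotation, because the rotation is an isometry of $\Z^2$ that carries $\mathcal T$-productions to $\mathcal T'$-productions and back (Definitions~\ref{def:fragile}, \ref{def:pumpable path} are stated purely in terms of producible paths/assemblies and positions, all of which the rotation respects). Fourth, apply Lemma~\ref{lem:couple:almost} to $P'$ and $\mathcal T'$: either $P'$ is pumpable or fragile --- hence so is $P$ --- or there is a glue column $x_0\le |T'|+|\sigma'| = |T|+|\sigma|$ such that for all $x\in\{x_0,\dots,X'\}$ the span of $P'$ on glue column $x$ points east, where $X'$ is the easternmost glue column of $P'$. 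Translating back through the rotation, $X'$ corresponds to the highest glue row $Y$ of $P\cup\sigma$ and $x_0$ corresponds to a glue row $y_0$ with $y_0\le y+|T|+|\sigma|$ (the additive constant being exactly $y$, the lowest glue row, since after the rotation the westernmost glue column of $P'$ is at the position corresponding to the lowest glue row of $P$), and ``span pointing east'' becomes ``horizontal span pointing north''. This gives the corollary in the northernmost case. Finally, for the southernmost case, repeat verbatim with a $90^\circ$ counterclockwise rotation (equivalently, additionally compose with a vertical mirror flip), which turns ``southernmost'' into ``easternmost'', ``visible from the west'' into ``visible from the south'', ``visible from the east'' into ``visible from the north'', and sends the conclusion to: there is a glue row $y_0\ge Y-|T|-|\sigma|$ with every horizontal span of $P$ on glue row $z\in\{y,\dots,y_0\}$ pointing south.

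\textbf{Main obstacle.}
There is no real obstacle; the only thing requiring care is bookkeeping the correspondence between coordinate indices under the rotation --- in particular getting the additive shift right so that ``$x_0\le |T|+|\sigma|$'' for $P'$ becomes ``$y_0\le y+|T|+|\sigma|$'' for $P$ (Lemma~\ref{lem:couple:almost}'s statement implicitly normalises the westernmost glue column of $P$ to $0$ via the canonical-form setup, whereas the corollary does not normalise, hence the explicit $y$ appears), and similarly that the westernmost-to-lowest identification behaves correctly under the mirror flip in the southernmost case. Since the excerpt explicitly declares the proof ``immediate,'' I would keep this to a one-paragraph argument: state the rotation, note invariance of all the relevant notions, invoke Lemma~\ref{lem:couple:almost}, and translate the conclusion back.
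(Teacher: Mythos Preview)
Your proposal is correct and takes exactly the same approach as the paper: the paper's proof is a single sentence stating that the corollary is ``simply a rotation of the statement of Lemma~\ref{lem:couple:almost} along with $Y$ substituted for $X$,'' and your elaboration of this rotation argument (including the care about the additive shift $y$) is faithful to that. One minor bookkeeping point: under the $90^\circ$ clockwise rotation, the roles of the two indices in a horizontal span $(s,n)$ swap when it becomes a (vertical) span of $P'$, since ``visible from the east'' maps to ``visible from the south'' and ``visible from the west'' to ``visible from the north''; this does not affect the conclusion, which concerns only the pointing direction.
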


The following simple but powerful lemma shows how to exploit spans in order to apply Lemma~\ref{lem:shield}:

\begin{lemma}
\label{lem:couple:use}
Let $P$ be a path producible by some tile assembly system $\mathcal T=(T,\sigma,1)$ such that the last tile of $P$ is the unique easternmost tile of $P \cup \sigma$.
If there are two spans $S = (s, n)$ and $S' = (s',n')$ on $P$, both pointing east, of the same orientation and type, and such that $\xcoord{P_s}<\xcoord{P_{s'}}$ and the height of $S'$ is at least the height of $S$, then $P$ is pumpable or fragile.
\end{lemma}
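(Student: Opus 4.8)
The goal is to exhibit a shield $(i,j,k)$ for $P$ so that Lemma~\ref{lem:shield} applies. The two spans $S=(s,n)$ and $S'=(s',n')$ are our raw material; the natural candidates are $i=s$, $j=s'$ (the two south-visible glues), and $k=|P|-2$ (the last glue of $P$, which is visible from the north since the last tile of $P$ is the unique easternmost tile of $P\cup\sigma$). Both $\glu P s$ and $\glu P {s'}$ point east and are of the same type (this is exactly the ``type'' of the spans, since the spans have the same orientation so their defining glue is the south glue in the ``up'' case; in the ``down'' case the type is the north glue's type, but pointing east means \emph{both} glues of each span point east, and we should argue that the south glues then also match in type — I will come back to this). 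So Hypotheses~\ref{lem:hp:shield ij} and~\ref{lem:hp:shield k} of Definition~\ref{def:shield} hold with $i=\min(s,s')$, $j=\max(s,s')$, $k=|P|-2$, after possibly relabelling to get $i<j$; note $\xcoord{P_s}<\xcoord{P_{s'}}$ together with Lemma~\ref{lem:glue:east} forces $s<s'$ (since $\glu P s$ points east and has strictly smaller x-coordinate than $\glu P {s'}$), so in fact $i=s$, $j=s'$, $k=|P|-2$ directly, and $\xcoord{\vect{P_iP_j}}>0$ as required.

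The only nontrivial hypothesis is~\ref{lem:hp:shield backup}: $\embed{P_{i,i+1,\ldots,k}}\cap(l^k+\vect{P_jP_i})\subseteq\{l^k(0)+\vect{P_jP_i}\}$. Here the pointing direction and the height condition enter. The ray $l^k$ is the northward visibility ray at glue column $X$ (the easternmost glue column of $P$), so $l^k+\vect{P_jP_i}$ is a northward ray on glue column $X+\xcoord{\vect{P_jP_i}} = X - \xcoord{\vect{P_iP_j}}$, which is strictly to the west of $X$, and its start point $l^k(0)+\vect{P_jP_i}$ sits at height $\ycoord{P_k}+\ycoord{\vect{P_jP_i}}$. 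The plan is to show this ray, \emph{above its start point}, lies strictly to the north of all of $P$, hence cannot meet $\embed{P_{i,\ldots,k}}$ except possibly at the start point itself. To do this I would use the span structure: $S=(s,n)$ and $S'=(s',n')$ both point east, so $\glu P s$, $\glu P n$, $\glu P {s'}$, $\glu P {n'}$ all point east; $n$ (resp.\ $n'$) is the index on span column $x_s$ (resp.\ $x_{s'}$) whose glue is visible from the north. The height condition $\ycoord{P_{n'}}-\ycoord{P_{s'}}\ge \ycoord{P_n}-\ycoord{P_s}$, combined with the translation vector $\vect{P_jP_i}=\vect{P_{s'}P_s}$ whose y-component is $\ycoord{P_s}-\ycoord{P_{s'}}$, is designed so that translating the ``top'' of span $S'$ (the north-visible glue at height $\ycoord{P_{n'}}$) by $\vect{P_{s'}P_s}$ lands at height $\ycoord{P_{n'}}+\ycoord{P_s}-\ycoord{P_{s'}}\ge \ycoord{P_n}$, i.e.\ at least as high as the top of span $S$. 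I would then argue — via a Jordan-curve / visibility argument in the spirit of Lemma~\ref{lem:glue:east} — that because $\glu P {n'}$ is visible from the north and lies on glue column $X$ (wait: it lies on column $x_{s'}\le X$), the relevant translated northward ray is ``shielded'' from $\embed{P_{i,\ldots,k}}$ above the height where it would first meet $P$, and that first-meeting height is exactly $l^k(0)+\vect{P_jP_i}$ or higher.

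The cleanest route is probably: let $z$ be the smallest real $\ge 0$ with $(l^k+\vect{P_jP_i})(z)$ on $\embed{P_{i,\ldots,k}}$ (if no such $z$ exists, we are done); I want $z=0$, i.e.\ $(l^k+\vect{P_jP_i})(0)=l^k(0)+\vect{P_jP_i}$ is that point. Suppose not, so $z>0$ and the ray enters $\embed{P_{i,\ldots,k}}$ strictly above its base. Build the curve $c=\concat{\reverse{l^i},\gs{l^i(0)}{P_{i+1}},\embed{P_{i+1,\ldots,k}},\gs{P_k}{l^k(0)},l^k}$ as in Definition~\ref{def:c} and note by Claim~\ref{lem:c-cuts} it cuts $\R^2$; the glue $\glu P {n'}$, which is visible from the north, sits on $c$ (it is a glue of $P_{i+1,\ldots,k}$ since $i=s<s'\le n'\le k$ — here I should double-check $n'\le k=|P|-2$, which holds because $\glu P {n'}$ is a genuine glue of $P$), and its northward visibility ray $l^{n'}$ does not meet $\embed P$. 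Translating by $\vect{P_jP_i}$ and using that $l^{n'}$ translated lands above the top of span $S$ and to the west of column $X$, I derive that any intersection of $l^k+\vect{P_jP_i}$ with $\embed{P_{i,\ldots,k}}$ strictly above the base would force $\embed{P_{i+1,\ldots,k}}$ to cross either a visibility ray or itself — a contradiction. Then $z=0$, Hypothesis~\ref{lem:hp:shield backup} holds, Definition~\ref{def:shield} is satisfied, and Lemma~\ref{lem:shield} gives that $P$ is pumpable with pumping vector $\vect{P_iP_j}$ or fragile.

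\textbf{Main obstacle.} The delicate point is the precise visibility bookkeeping in Hypothesis~\ref{lem:hp:shield backup}: showing that the translated ray $l^k+\vect{P_jP_i}$ genuinely clears $P$ above its start point. This is where the height inequality and the ``pointing east'' condition must be used together, and where I expect to have to run a careful Jordan-curve argument (analogous to the proof of Lemma~\ref{lem:glue:east}) rather than a one-line observation — in particular handling the ``up'' versus ``down'' orientation cases and confirming that same-orientation-plus-pointing-east really does force the south glues of $S$ and $S'$ to have matching type (the type of a span is defined via its \emph{first} glue in growth order, which for an ``up'' span is the south glue and for a ``down'' span is the north glue; since the spans point east all four glues point east, and I need the south glues' types to coincide for Hypothesis~\ref{lem:hp:shield ij} — for ``up'' spans this is immediate from equal type, for ``down'' spans it needs the observation that on a single glue column the south-visible and north-visible glues, both pointing east, are forced by visibility to be the same glue or to have been analysed already). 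Getting all these cases uniformly under control is the real work; everything else is assembling hypotheses and quoting Lemma~\ref{lem:shield}.
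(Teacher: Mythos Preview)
Your overall strategy---exhibit a shield $(i,j,k)$ and invoke Lemma~\ref{lem:shield}---is exactly right, and your choices $i=s$, $j=s'$ (with $s<s'$ forced by Lemma~\ref{lem:glue:east}) are correct. But you have picked the wrong $k$, and this is what turns Hypothesis~\ref{lem:hp:shield backup} from a one-line observation into the ``main obstacle'' you describe.

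The paper takes $k=n'$, not $k=|P|-2$. With $k=n'$ the translated ray $l^k+\vect{P_jP_i}=l^{n'}+\vect{P_{s'}P_s}$ lies on the glue column of span $S$ (because $\pos{\glu P{s'}}+\vect{P_{s'}P_s}=\pos{\glu P s}$, and the two glues of a span share a glue column). Its start point has y-coordinate $\ycoord{P_{n'}}+\ycoord{P_s}-\ycoord{P_{s'}}\ge \ycoord{P_n}$, the inequality being exactly the height hypothesis. But $\glu P n$ is visible from the north on that very column, so $\embed P$ cannot meet the translated ray strictly above $\pos{\glu P n}$; hence the only possible intersection with $\embed{P_{i,\ldots,k}}$ is at $l^{n'}(0)+\vect{P_{s'}P_s}$ itself (occurring precisely when the two spans have equal height). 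That is Hypothesis~\ref{lem:hp:shield backup}, and no Jordan-curve argument is needed. By contrast, with your choice $k=|P|-2$ the translated ray sits on glue column $X-(\xcoord{P_{s'}}-\xcoord{P_s})$, which has no structural relationship to either span; the height hypothesis gives you nothing there, and in general $\embed{P_{i,\ldots,|P|-2}}$ can intersect that ray well above its start point. Your sketched Jordan-curve salvage invokes $l^{n'}$ anyway, which is a hint that $n'$ wants to be $k$.

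Two smaller points. First, your worry about the ``down'' orientation is dissolved by a WLOG: if both spans are down, reflect vertically (swap north and south); down spans become up spans, the easternmost-tile hypothesis is preserved, and for up spans the span type is by definition the type of the south glue, so Hypothesis~\ref{lem:hp:shield ij} is immediate. Second, with $k=n'$ you must check $j\le k$, i.e.\ $s'\le n'$; this is exactly the ``up'' orientation, so the WLOG is not merely cosmetic.
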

\begin{proof}
  Without loss of generality, since $S$ and $S'$ have the same orientation, we assume that $S$ and $S'$ both have the ``up'' orientation (i.e. $s\leq n$ and $s'\leq n'$).
  We claim that we can apply Lemma~\ref{lem:shield} by setting $i = s$, $j = s'$ and $k = n'$.

  Indeed, both $\glu P s$ and $\glu P {s'}$ are visible from the south and pointing east. Since $S$ and $S'$ are of the same type, $\glu P s$ and $\glu P {s'}$ are also of the same type.
  Moreover, since $\xcoord{P_s}<\xcoord{P_{s'}}$, applying Lemma~\ref{lem:glue:east} shows that $i < j$.
  By definition of a span, $\glu P {n'}$ is visible from the north, and since $S'$ has the up orientation, $s'<n'$, hence $j\leq k$.

  Finally, let $l^{n'}$ be the visibility ray from $\glu P {n'}$ to the north. Note that $\pos{\glu P {s'}}+\vect{P_{s'}P_s} = \pos{\glu P s}$, and since the height of $S$ is smaller than or equal to the height of $S'$, $\embed{P}$ may only intersect $l^{n'}+\vect{P_{s'}P_{s}}$ at $l^{n'}(0)+\vect{P_{s'}P_s}$, which on the same column, and at least as high as $\glu P n$.\footnote{$S$ and $S'$ have the same height if and only if $l^{n'}\!(0)+\vect{P_{s'}P_s} = \pos{\glu P n}$}

  We can therefore conclude by applying Lemma~\ref{lem:shield} which shows that $P$ is pumpable or fragile.
\end{proof}

\subsection{Fourth step: proof of the main theorem}
\begin{theorem}
  \label{thm:main}
  Let $P$ be a path producible by some tile assembly system $\mathcal T=(T,\sigma,1)$ where the final tile $P_{|P|-1}$ is the easternmost one and it is  positioned at distance at least $(4|T|)^{(4|T|+1)}(4|\sigma|+6)$ to the east of the easternmost tile of $\sigma$. Then $P$ is pumpable or fragile.
\end{theorem}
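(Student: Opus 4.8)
### Proof plan for Theorem~\ref{thm:main}

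The plan is to run a counting argument over the glue columns of $P$ that forces the existence of two spans to which Lemma~\ref{lem:couple:use} applies, using Lemma~\ref{lem:couple:almost} to first restrict attention to a suffix of columns on which every span points east. First I would invoke Lemma~\ref{lem:couple:almost}: either $P$ is already pumpable or fragile (and we are done), or there is a glue column $x_0 \le |T|+|\sigma|$ such that for every $x \in \{x_0, x_0+1, \ldots, X\}$ the span of $P$ on column $x$ points east, where $X$ is the easternmost glue column. Since the last tile of $P$ lies at distance at least $(4|T|)^{4|T|+1}(4|\sigma|+6)$ east of the easternmost tile of $\sigma$, and $\sigma$ occupies at most $|\sigma|$ columns, the number of east-pointing-span columns $X - x_0 + 1$ is at least $(4|T|)^{4|T|+1}(4|\sigma|+6) - |T| - |\sigma| \ge (4|T|)^{4|T|+1}\cdot 3$ or so — in any case a quantity that dwarfs $4|T|$.

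Next I would set up the combinatorial core. Each of these many columns carries exactly one span, which has an orientation (up or down), a type (one of $\le |T|$ glue types), and an integer height. Partition the columns by the pair (orientation, type); there are at most $2|T|$ such classes, so by pigeonhole one class $\mathcal{K}$ contains at least $(X - x_0 + 1)/(2|T|)$ columns, still an enormous number. Now order the columns in $\mathcal{K}$ from west to east and look at the sequence of heights of their spans. Lemma~\ref{lem:couple:use} fires as soon as we find two spans $S=(s,n)$, $S'=(s',n')$ in $\mathcal{K}$ with $\xcoord{P_s} < \xcoord{P_{s'}}$ (i.e. $S$ strictly west of $S'$) and $\text{height}(S') \ge \text{height}(S)$ — in other words, a non-strict ascent in the height sequence. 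The height sequence can only avoid a non-strict ascent if it is strictly decreasing; a strictly decreasing sequence of integers of length $\ell$ forces the heights to span a range of at least $\ell$ distinct values. But the height of any span is bounded in absolute value by the vertical extent of $P\cup\sigma$, which is controlled: the relevant bound is that $|\text{height}| \le$ (something like $(4|T|)^{4|T|+1}(4|\sigma|+6) + |\sigma|$), because $P$ was truncated so that it places only one tile on the bounding box $B$ of side $(4|T|)^{4|T|+1}(4|\sigma|+6)+|\sigma|$. So if the number of columns in $\mathcal{K}$ exceeds twice this height bound plus one, the height sequence cannot be strictly decreasing, and we get our ascent, hence a pair for Lemma~\ref{lem:couple:use}, hence $P$ pumpable or fragile.

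The remaining work is purely arithmetic: check that $(4|T|)^{4|T|+1}(4|\sigma|+6) - |T| - |\sigma|$ divided by $2|T|$ really does exceed $2\big((4|T|)^{4|T|+1}(4|\sigma|+6)+|\sigma|\big)+1$. That inequality does \emph{not} hold with these exact constants, which signals that I have the bookkeeping slightly off — the correct reading is that the span heights are bounded by roughly half the vertical extent on \emph{one} side, or that one should also exploit the up/down orientation to cut the effective height range, or that the theorem's constant $(4|T|)^{4|T|+1}(4|\sigma|+6)$ is chosen precisely so that after dividing by $2|T|$ (orientation $\times$ type) and comparing against the height range the numbers work out; I expect the intended argument applies Lemma~\ref{lem:couple:almost} and its rotated Corollary~\ref{cor:couple:almost} to bound the vertical extent of the east-pointing portion of $P$ by $O(|T|+|\sigma|)$ first, and only \emph{then} counts columns against that much smaller height bound. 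So the main obstacle is not conceptual but the careful choice of which bound to apply in which order so that the single exponential $(4|T|)^{4|T|+1}$ absorbs the three multiplicative losses (orientation factor $2$, type factor $|T|$, and height-range factor, itself polynomial in $|T|,|\sigma|$ after the preliminary containment argument). I would nail down the containment step — showing that, modulo $P$ being pumpable or fragile, the glues of $P$ east of column $x_0$ all live within a vertical strip of height $O(|T|\cdot|\sigma|)$ or so — and then the pigeonhole-plus-monotone-sequence argument closes it.
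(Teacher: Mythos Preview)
Your framework is right---apply Lemma~\ref{lem:couple:almost}, then pigeonhole on (orientation, type), then look for an ascent in the height sequence---but the arithmetic failure you flagged is not a bookkeeping issue; it reflects a genuine missing idea. Your proposed fix, bounding the vertical extent of the east-pointing portion of $P$ by $O(|T|\cdot|\sigma|)$, does not hold: nothing in the hypotheses of Theorem~\ref{thm:main} prevents $P$ from being as tall as it is wide, and Corollary~\ref{cor:couple:almost} does not bound vertical extent (it only says horizontal spans eventually point north or south). So span heights can genuinely be of order $(4|T|)^{4|T|+1}(4|\sigma|+6)$, and your global pigeonhole on heights within a single class of size $\approx (4|T|)^{4|T|}(4|\sigma|+6)$ cannot close.

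The paper's proof replaces your global height bound with a \emph{cone dichotomy}. Rather than all columns, it tracks only the $\leq 2|T|$ glue columns $x_0<x_1<\cdots<x_n$ that are the \emph{first} occurrences of each (orientation, type) pair, with heights $h_0,\ldots,h_n$. Case~1: if some $h_c > 4|T|^2(x_c+4|\sigma|+6)$ (the span at $x_c$ escapes the cone), then either the prefix $P_{0,\ldots,n_c+1}$ reaches far enough east to find two same-type south-visible glues separated by $\geq x_c+1$ columns and apply Lemma~\ref{lem:shield} directly, or else that prefix is horizontally narrow but vertically tall, and one rotates $90^\circ$ and runs your same ascent argument on \emph{horizontal} spans via Corollary~\ref{cor:couple:almost}---now the widths are bounded by the small $X'$, so pigeonhole works. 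Case~2: if every $h_c \leq 4|T|^2(x_c+4|\sigma|+6)$, an inductive calculation gives $x_0+\sum_{d<c}h_d \leq (4|T|)^{2c+1}(4|\sigma|+6)$, and since $X \geq (4|T|)^{4|T|+1}(4|\sigma|+6)$ exceeds this for $c=n+1\leq 2|T|$, some $x_{c_0}$ must satisfy $x_{c_0} > x_0+\sum_{d<c_0}h_d$; then some (orientation, type) first seen at $x_d$ repeats more than $h_d$ times before $x_{c_0}$, and two of those repeats have height $\leq h_d$ with a repeat or one has height $\geq h_d$, so Lemma~\ref{lem:couple:use} applies. The exponent $4|T|+1$ is exactly what absorbs the $\leq 2|T|$ iterations of the cone bound; a single pigeonhole pass, as in your plan, cannot.
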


\begin{proof}
  We first apply Lemma~\ref{lem:couple:almost}, yielding one of two possible conclusions:
  (1) either $P$ is pumpable or fragile, in which case we are done, or (2) there is an integer $x_0 \leq |T|+|\sigma|$ such that for all $x\in\{x_0,x_0+1,\ldots,X\}$, the span of $P$ on glue column $x$ is pointing east.

  Let $x_0<x_1<\ldots<x_n$ be the sequence of glue columns such that, both of the following hold:
  \begin{itemize}
    \item for all $a\neq b$, the spans of $P$ on glue columns $x_a$ and $x_b$ have different orientations or different types (or both); and 
  \item for all $x\in\{x_0,x_0+1,\ldots,X\}$, there is an integer $a \in \{0,1,\ldots,n\}$ such that $x_a\leq x$ and the spans of $P$ on glue columns $x_a$ and $x$ have the same orientation and the same type.
  \end{itemize}

  In other words, the sequence $x_0,x_1,\ldots,x_n$ is the sequence of the first occurrences, in increasing x-coordinate order, of each span type and orientation. That sequence has at least one element ($x_0$), and
  since that sequence contains at most one occurrence of each span type and orientation,  $n< 2|T|$.\footnote{There are 2 orientations, and $\leq |T|$ glue types that appear as the east glue of some tile in $T$, and the indexing up to $n$ begins at 0.}
  Moreover, as a consequence of our use of Lemma~\ref{lem:couple:almost} in the beginning of this proof, all the spans of $P$ on glue columns $x_0, x_1,\ldots,x_n$ point east.

  We extend the sequence by one point, by letting $x_{n+1} = X$ (i.e. we add the easternmost possible span), and for all $i \in\{0,1,\ldots,n\}$, we let $h_i$ be the height of the span of $P$ on glue column $x_i$.

  Intuitively, we consider a ``cone'' from $\sigma$ to the east (see Figure~\ref{fig:cone}), and consider two cases, depending on whether all spans are within the cone.
  The only difference with an actual geometric cone is that the y-coordinate of spans might not be aligned.

  \begin{figure}[ht]
    \begin{center}
      \begin{tikzpicture}[scale=\scale]\draw[draw={rgb,255:red,0; green,0; blue,0},fill={rgb,255:red,242; green,242; blue,242}](29,5.3)--(6,-4.7)--(29,-8.7);
\draw(4.5, -5.12) node[anchor=south west] {$\sigma$};
\draw[draw={rgb,255:red,0; green,0; blue,0},<->](15.5,-0.56)--(15.5,-6.35);
\draw[draw={rgb,255:red,0; green,0; blue,0},<->](13.62,-1.41)--(13.62,-6.03);
\draw[draw={rgb,255:red,0; green,0; blue,0},<->](19.2,1.02)--(19.2,-6.93);
\draw[draw={rgb,255:red,0; green,0; blue,0},<->](21.32,1.92)--(21.32,-7.3);
\draw(12.96, -7.42) node[anchor=south west] {$x_1$};
\draw(13.39, -3.77) node[anchor=south west] {$h_1$};
\draw[draw={rgb,255:red,212; green,85; blue,0},->](6,-4.7) .. controls (6,-4.7) and (8.76,-6.31) .. (11.43,-5.65) .. controls (12.95,-5.28) and (9.52,-1.95) .. (11.57,-2.29) .. controls (12.11,-2.37) and (12.96,-5.65) .. (13.53,-5.97) .. controls (14.11,-6.29) and (12.95,-1.03) .. (13.58,-1.41) .. controls (14.2,-1.8) and (14.82,-5.94) .. (15.46,-6.35) .. controls (16.09,-6.76) and (14.52,-1) .. (15.45,-0.57) .. controls (16.11,-0.27) and (17.59,-7.95) .. (19.18,-7) .. controls (21.36,-5.7) and (18.77,1.21) .. (19.22,0.99) .. controls (19.66,0.77) and (21.01,-7.23) .. (21.32,-7.3) .. controls (21.68,-7.29) and (19.49,2.24) .. (21.34,1.93) .. controls (22.26,1.78) and (22.3,-3.74) .. (23.97,-6.12) .. controls (25.64,-8.5) and (28.95,-7.75) .. (28.8,-6.98) .. controls (27.72,-5.6) and (23.76,-6.39) .. (24.55,-3.76) .. controls (25.54,-3.03) and (29.02,-2.08) .. (26.7,-0.74) .. controls (24.42,-0.6) and (23.43,-0.35) .. (24.5,1.8) .. controls (25.21,3.38) and (26.02,3.28) .. (28,3.3);
\draw[draw={rgb,255:red,0; green,0; blue,0},<->](11.5,-2.3)--(11.5,-5.66);
\draw(10.72, -6.92) node[anchor=south west] {$x_0$};
\draw(11.24, -4.27) node[anchor=south west] {$h_0$};
\end{tikzpicture}
    \end{center}
    \caption{$P$ is drawn schematically in brown. This proof has two cases: if the height of the spans are increasing no more than proportionally to their x-coordinate, which we represent by a cone here (even though the spans are not actually in a cone, since the spans might not be vertically aligned), we are in Case~\ref{main2}. Else, if we can find at least one $c$ such that $h_c$ gets out of the cone, i.e. such that $h_c > 4|T|^2(x_c+4|\sigma|+6)$, we are in Case~\ref{main1}.}
    \label{fig:cone}
  \end{figure}
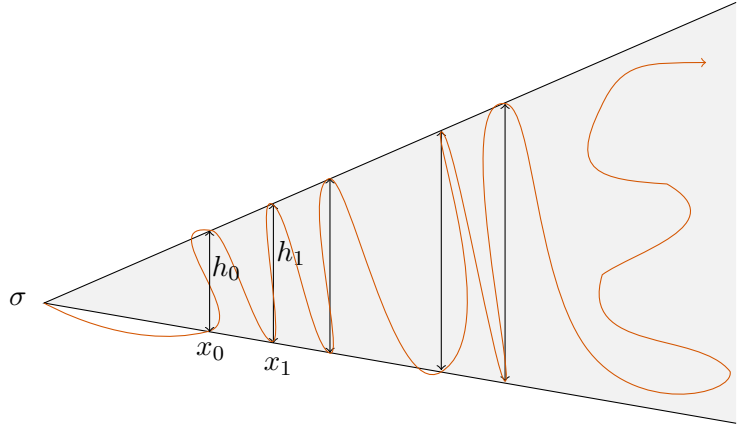

  There are two cases:
  \begin{enumerate}
  \item\label{main1} If there is an integer $c \in\{0,1,\ldots,n+1\}$ such that $h_c > 4|T|^2(x_c+4|\sigma|+6)$, then we claim that $P$ is pumpable or fragile.

    Indeed, without loss of generality, we assume that the span $(s_c,n_c)$ of $P$ on glue column $x_c$ has the ``up'' orientation (i.e. $s_c \leq n_c$). Let $X'$ be the easternmost glue column reached by $Q = P_{0,1,\ldots,n_c+1}$.
    See Figure~\ref{fig:main-cone1}.
    There are two subcases, depending on the relative sizes of $h_c$ and $X'$:

    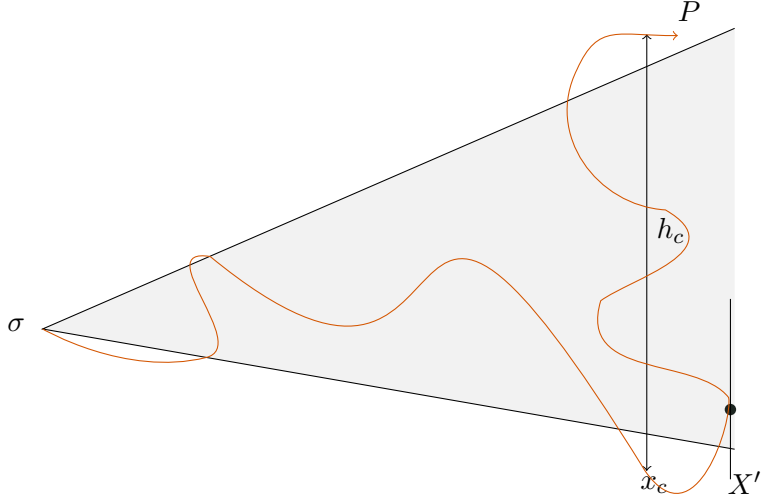
\begin{figure}[ht]
      \begin{center}
        \begin{tikzpicture}[scale=\scale]\draw[draw={rgb,255:red,0; green,0; blue,0},fill={rgb,255:red,242; green,242; blue,242}](29,5.3)--(6,-4.7)--(29,-8.7);
\draw(4.5, -5.12) node[anchor=south west] {$\sigma$};
\draw[draw={rgb,255:red,0; green,0; blue,0},<->](26.08,5.1)--(26.08,-9.42);
\draw(25.53, -10.49) node[anchor=south west] {$x_c$};
\draw(26.08, -2.16) node[anchor=south west] {$h_c$};
\draw(26.77, 5.19) node[anchor=south west] {$P$};
\draw[draw=none,fill={rgb,255:red,28; green,36; blue,31},thin](28.86, -7.38) ellipse (0.19cm and 0.19cm);\draw(28.44, -10.52) node[anchor=south west] {$X'$};
\draw[draw={rgb,255:red,0; green,0; blue,0}](28.86,-3.7)--(28.86,-9.7);
\draw[draw={rgb,255:red,212; green,85; blue,0},->](6,-4.7) .. controls (6,-4.7) and (8.76,-6.31) .. (11.43,-5.65) .. controls (12.95,-5.28) and (9.52,-1.95) .. (11.57,-2.29) .. controls (21.29,-10.28) and (16.31,6.27) .. (26.03,-9.43) .. controls (27.7,-11.81) and (28.95,-7.75) .. (28.8,-6.98) .. controls (27.72,-5.6) and (23.76,-6.39) .. (24.55,-3.76) .. controls (25.54,-3.03) and (29.02,-2.08) .. (26.7,-0.74) .. controls (24.42,-0.6) and (22.68,1.77) .. (23.76,3.93) .. controls (24.47,5.51) and (25.12,5.04) .. (27.11,5.06);
\end{tikzpicture}
      \end{center}
      \caption{$P$ is drawn schematically in brown. In Case~\ref{main1}, at least one span, at column $x_c$, gets out of the cone, i.e. is of height $h_c>4|T|^2(x_c+4|\sigma|+6)$. We define $X'$, the rightmost column reached by $P_{\range 0 1 {n_c}}$, and split this case into two Cases, ~\ref{main1.1} and~\ref{main1.2}, depending on the relative values of $h_c$ and $X'$.}
      \label{fig:main-cone1}
    \end{figure}

    \begin{enumerate}
    \item\label{main1.1} If $h_c< 4|T|(X'+2)+3|\sigma|+1$, we first claim that $X'>|T|(x_c+3)+|\sigma|+1$. See Figure~\ref{fig:main-cone1a}.

      \begin{figure}[ht]
        \begin{center}
          \begin{tikzpicture}[scale=\scale]\draw[draw={rgb,255:red,0; green,0; blue,0},fill={rgb,255:red,242; green,242; blue,242}](29,5.3)--(6,-4.7)--(29,-8.7);
\draw(4.5, -5.12) node[anchor=south west] {$\sigma$};
\draw[draw={rgb,255:red,0; green,0; blue,0},<->](26.08,5.1)--(26.08,-9.42);
\draw(25.53, -10.49) node[anchor=south west] {$x_c$};
\draw(26.08, -2.16) node[anchor=south west] {$h_c$};
\draw[draw={rgb,255:red,212; green,85; blue,0},->,thin](6,-4.7) .. controls (6.13,-4.34) and (10.17,-4.06) .. (11.61,-5.66) .. controls (11.79,-5.86) and (10.04,-2.23) .. (11.45,-2.27) .. controls (22.47,-8.97) and (14.96,4.37) .. (26.03,-9.41) .. controls (28.54,-8.62) and (37.24,-11.02) .. (38.38,-7.38) .. controls (38.39,-6.93) and (36.6,6.9) .. (35.24,7.12) .. controls (33.88,7.34) and (33.38,-8.38) .. (31.6,-8.21) .. controls (29.82,-8.03) and (27.64,-5.5) .. (26.38,-5.18) .. controls (25.12,-4.86) and (24.35,-4.42) .. (24.55,-3.76) .. controls (25.54,-3.03) and (29.02,-2.08) .. (26.7,-0.74) .. controls (24.42,-0.6) and (22.68,1.77) .. (23.76,3.93) .. controls (24.47,5.51) and (25.12,5.04) .. (27.11,5.06);
\draw(25.6, 5.19) node[anchor=south west] {$P_{n_c}$};
\draw[draw=none,fill={rgb,255:red,28; green,36; blue,31},thin](38.38, -7.38) ellipse (0.19cm and 0.19cm);\draw(37.97, -10.52) node[anchor=south west] {$X'$};
\draw[draw={rgb,255:red,0; green,0; blue,0}](38.39,-3.7)--(38.39,-11.29);
\draw[draw={rgb,255:red,0; green,0; blue,0},<->](11.5,-11.29)--(38.39,-11.29);
\draw[draw={rgb,255:red,0; green,0; blue,0},<->](11.5,-2.3)--(11.5,-5.66);
\draw(10.72, -6.92) node[anchor=south west] {$x_0$};
\draw(11.24, -4.27) node[anchor=south west] {$h_0$};
\end{tikzpicture}
        \end{center}
        \caption{$P$ is drawn schematically in brown. In Case~\ref{main1.1}, $X'$ is large enough that we can find two glue spans between $x_0$ and $X'$ that have the same type and orientation, and are at least $x_c$ columns apart. We can therefore apply Lemma~\ref{lem:shield} by letting $i$ and $j$ be the indices of the south glues of these spans on $P$, and letting $k$ be the index of $P_{n_c}$.}
        \label{fig:main-cone1a}
      \end{figure}

      Indeed, we have $h_c>4|T|^2(x_c+4|\sigma|+6)$, and hence:
      \begin{eqnarray*}
        4|T|(X'+2)+3|\sigma|+1&>&4|T|^2(x_c+4|\sigma|+6)\\
        X'&>&\frac{4|T|^2(x_c+4|\sigma|+6) - 3|\sigma|-1}{4|T|}-2\\
        X'&>&\frac{4|T|^2(x_c+3)}{4|T|} + \frac{4|T|^2(4|\sigma|+3) - 3|\sigma|-1 - 8|T|}{4|T|}\\
        X'&>&|T|(x_c+3) + |\sigma|\frac{16|T|^2-3}{4|T|} + \frac{12|T|^2 - 1 - 8|T|}{4|T|}\\
      \end{eqnarray*}
      Finally, since $|T|\geq 1$, $\frac{16|T|^2-3}{4|T|}\geq 3$, and $\frac{12|T|^2 - 1 - 8|T|}{4|T|}\geq \frac{3|T|-2}{2} - \frac{1}{4|T|}\geq  \frac{1}{2} - \frac{1}{4}\geq 0$, and therefore (ignoring the last term, and since $|\sigma|\geq 1$): $$X'>|T|(x_c+3) + 3|\sigma| \geq |T|(x_c+3) + |\sigma| + 2$$
      thus proving the claim that $X'>|T|(x_c+3)+|\sigma|+1$.

            Let $k=n_c$ and $l^k$ be the ray from $\glu P k$ to the north.
      By Lemma~\ref{lem:glue:east}, we have $s_0\leq s_c$, and since span $(s_c,n_c)$ has the ``up'' orientation then $s_0\leq n_c$.
      Then $\glu Q {s_0}$ is visible from the south relative to $Q$, and pointing east, hence by Lemma~\ref{lem:glue:east}, all the glues visible from the south relative to $Q$ and to the east of $\glu P {s_0}$ also point east. And since $X' \geq |T|(x_c+3)+|\sigma|+1$, there are at least $X' - x_0 \geq X' - (|T|+|\sigma|) \geq |T|(x_c+2)+1$ of them.

      By the pigeonhole principle, at least $x_c+2$ of these glues share the same type. Let $i\in\N$ be the index on $Q$ of the westernmost of these glues, and let $j$ be the index on $Q$ of the easternmost one. By Lemma~\ref{lem:glue:east}, this means that $i \leq j$.

      Moreover, $\vect{P_jP_i}$ is at least $x_c+1$ columns wide (and $\vect{P_jP_i}$ is to the west), since there are at least $x_c+1$ glue columns between $\glu P i$ and $\glu P j$.
      Therefore, $l^k + \vect{P_jP_i}$ is to the west of $x_c+0.5 - (x_c+1)\leq -0.5$, and hence cannot intersect $P$ (since we assumed in the beginning of this section that the westernmost tile of $P \cup \sigma$ has x-coordinate $0$).

      We can therefore apply Lemma~\ref{lem:shield}, using $( i, j, k)$, as a shield, to $P$ thus proving that $P$ is pumpable or fragile.

    \item\label{main1.2} Else, $h_c \geq 4|T|(X'+2)+3|\sigma|+1$. See Figure~\ref{fig:main-cone1b}.

      \begin{figure}[ht]
        \begin{center}
          \begin{tikzpicture}[scale=\scale]\draw[draw={rgb,255:red,0; green,0; blue,0},fill={rgb,255:red,242; green,242; blue,242}](29,5.3)--(6,-4.7)--(29,-8.7);
\draw(4.5, -5.12) node[anchor=south west] {$\sigma$};
\draw[draw={rgb,255:red,0; green,0; blue,0},<->](26.08,5.1)--(26.08,-9.42);
\draw(25.53, -10.49) node[anchor=south west] {$x_c$};
\draw(26.08, -2.16) node[anchor=south west] {$h_c$};
\draw[draw={rgb,255:red,212; green,85; blue,0},->,thin](6,-4.7) .. controls (6.13,-4.34) and (10.17,-4.06) .. (11.61,-5.66) .. controls (11.79,-5.86) and (10.04,-2.23) .. (11.45,-2.27) .. controls (22.47,-8.97) and (14.96,4.37) .. (26.03,-9.41) .. controls (28.54,-8.62) and (29.83,-11.02) .. (30.98,-7.38) .. controls (30.98,-6.93) and (30.67,-2.97) .. (29.31,-2.75) .. controls (28.28,-9.33) and (28.03,-4) .. (26.38,-5.18) .. controls (25.12,-4.86) and (24.35,-4.42) .. (24.55,-3.76) .. controls (25.54,-3.03) and (29.02,-2.08) .. (26.7,-0.74) .. controls (24.42,-0.6) and (22.68,1.77) .. (23.76,3.93) .. controls (24.47,5.51) and (25.12,5.04) .. (27.11,5.06);
\draw(25.6, 5.19) node[anchor=south west] {$P_{n_c}$};
\draw[draw=none,fill={rgb,255:red,28; green,36; blue,31},thin](30.98, -7.38) ellipse (0.19cm and 0.19cm);\draw(30.56, -10.52) node[anchor=south west] {$X'$};
\draw[draw={rgb,255:red,0; green,0; blue,0}](30.98,-3.7)--(30.98,-11.29);
\draw[draw={rgb,255:red,0; green,0; blue,0},<->](22,4.3)--(31,4.3);
\draw[draw={rgb,255:red,0; green,0; blue,0},<->](22,1.3)--(31,1.3);
\draw[draw={rgb,255:red,0; green,0; blue,0},<->](14,-0.7)--(31,-0.7);
\draw[draw={rgb,255:red,0; green,0; blue,0},<->](10,-2.7)--(31,-2.7);
\draw[draw={rgb,255:red,0; green,0; blue,0},<->](10,-6.7)--(31,-6.7);
\draw[draw={rgb,255:red,0; green,0; blue,0},<->](25,-8.7)--(31,-8.7);
\draw(31.3, -8.75) node[anchor=south west] {$y_0$};
\draw(31.3, -6.79) node[anchor=south west] {$y_1$};
\draw(31.3, -2.87) node[anchor=south west] {$y_2$};
\end{tikzpicture}
        \end{center}
        \caption{$P$ is drawn schematically in brown. In Case~\ref{main1.2}, $h_c$ is large enough, and $X'$ is small enough that if we consider all horizontal spans of $P$ (there are at least $h_c$ of them), two of them will have the same width, and we can apply Lemma~\ref{lem:couple:use}.}
        \label{fig:main-cone1b}
      \end{figure}
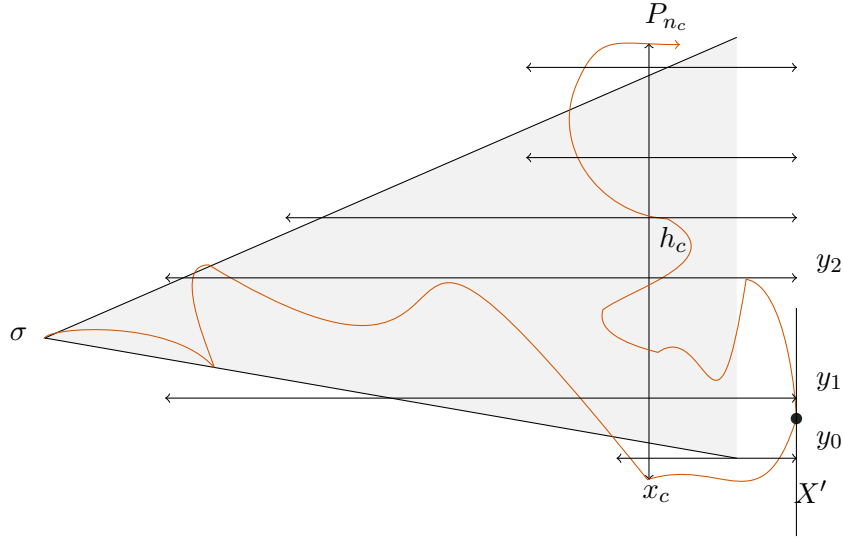
      Therefore, there is a tile of $Q$ at least $\left\lceil \frac{h_c-|\sigma|}{2} \right\rceil \geq 2|T|(X'+2)+|\sigma|+1$ to the north of all the tiles of $\sigma$, or to the south of all the tiles of $\sigma$.

      If the highest tile of $Q$ is at least $2|T|(X'+2)+|\sigma|+1$ rows to the north (\resp to the south) of all the tiles of $\sigma$, then let $Q'$ be the shortest prefix of $Q$ whose highest tile is $2|T|(X'+2)+|\sigma|+1$ rows to the north (\resp to the south) of all the tiles of $\sigma$.

                                                      By Corollary~\ref{cor:couple:almost}, either $Q$ is pumpable or fragile or at least $2|T|(X'+1)$ horizontal spans of $Q$ point north (\resp south), and by the pigeonhole principle, at least $X'+1$ of these spans have the same orientation and type.
      Since the width of all these horizontal spans is bounded by $X'$ (because $X'$ is by definition the easternmost column reached by $Q$), at least two of them have the same width, and hence by Lemma~\ref{lem:couple:use}, $Q$ is fragile or pumpable, and so is $P$ (because $Q$ is a prefix of $P$).

    \end{enumerate}
  \item \label{main2}Otherwise, $h_c\leq 4|T|^2(x_i+4|\sigma|+6)$ for all $c\in\{0,1,\ldots,n\}$, i.e. intuitively, just like in Figure~\ref{fig:cone}, all the spans are inside the ``cone''.

    We first claim that at least one of the $x_c$ (for $c\in\{0,1,\ldots,n+1\}$) must be ``large enough'', specifically we claim there is an $x_c > x_0+\sum_{d=0}^{c-1} h_d$.
    Indeed, suppose, for the sake of contradiction, that for all $c\in\{0,1,\ldots,n+1\}$, $x_c \leq x_0+\sum_{d=0}^{c-1} h_d$.
    We prove that the following inequality holds by induction on $c\in\{0,1,\ldots,n+1\}$:
    $$x_0+\sum_{d=0}^{c-1} h_d \leq (4|T|)^{2c+1}(4|\sigma|+6)-2$$
    First, the induction hypothesis holds for $c = 0$, since $x_0+\sum_{d=0}^{c-1}h_d = x_0\leq 4|T|(4|\sigma|+6)-2$.
    Then, if the induction hypothesis holds for any $c\in\{0,1,\ldots,n\}$, i.e. if $x_c \leq x_0+\sum_{d=0}^{c-1} h_d \leq (4|T|)^{2c+1}(4|\sigma|+6)-2$, then:
    \begin{eqnarray*}
      h_c&\leq&4|T|^2(x_c+4|\sigma|+6)\\
      x_0+\sum_{d=0}^{c-1} h_d + h_c&\leq&(4|T|)^{2c+1}(4|\sigma|+6) -2 + 4|T|^2(x_c+4|\sigma|+6)\\
         &\leq&(4|T|)^{2c+1}(4|\sigma|+6) + 4|T|^2x_c+ 4|T|^2(4|\sigma|+6) -2 \\
         &\leq&(4|T|)^{2c+1}(4|\sigma|+6) + 4|T|^2(4|T|)^{2c+1}(4|\sigma|+6)+ 4|T|^2(4|\sigma|+6) -2\\
         &\leq&((4|T|)^{2c+1} + 4|T|^2(4|T|)^{2c+1} + 4|T|^2)(4|\sigma|+6)-2\\
         &\leq&(4|T|)^{2c+1}\left(1 + 4|T|^2 + \frac{4|T|^2}{(4|T|)^{2c+1}}\right)(4|\sigma|+6)-2\\
         &\leq&(4|T|)^{2c+1}(4|T|^2+|T|+1)(4|\sigma|+6)-2\\
         &\leq&(4|T|)^{2c+1}(16|T|^2)(4|\sigma|+6) -2\\
         &\leq&(4|T|)^{2c+3}(4|\sigma|+6) - 2
    \end{eqnarray*}
    This proves the induction hypothesis for $c+1$, and hence by induction, the induction hypothesis holds for all $c\in\{0,1,\ldots,n+1\}$.

    However, $x_{n+1}=X\geq (4|T|)^{4|T|+1}(4|\sigma|+6)-1 > x_0+\sum_{d=0}^{n} h_d$, which is a contradiction. Therefore,
    there is at least one $c\in\{0,1,\ldots,n\}$ such that $x_c > x_0+\sum_{d=0}^{c-1} h_d$. 
    
    Let $c_0$ be the smallest such $c$.
    There is at least one $d<c_0$ such that the type and orientation of the span of $P$ on $x_d$ is repeated strictly more than $h_d$ times between $x_d$ (included) and $x_{c_0}$ (excluded).

    There are two cases:
    \begin{itemize}
    \item There is a glue column $x$, such that $x_d < x < c_0$ and the spans of $P$ on glue columns $x_d$ and  $x$ have the same type and orientation, and furthermore the height of the span of $P$ on $x$ is at least $h_d$. Since by definition, the height of the span on glue column $x_d$ is $h_d$,  we conclude immediately setting $S$ and $S'$ to be the spans on $x_d$ and $x$, respectively, in Lemma~\ref{lem:couple:use}. Thus  $P$ is pumpable or fragile.
    \item Otherwise, at least $h_d+1$ spans of the same type and orientation as the span of $P$ on $x_d$, have height at most $h_d$. By the pigeonhole principle, two of them have the same height and we can also apply Lemma~\ref{lem:couple:use} setting  $S$ to be the more western of the two spans, and $S'$ being the more eastern. Thus  $P$ pumpable or fragile.
    \end{itemize}
  \end{enumerate}
\end{proof}

\appendix

\section{The 2HAM is pumpable or blockable: proof sketch}\label{sec:2HAM}
This section contains a proof sketch of Corollary~\ref{cor:2ham}: an application of our main result to the two-handed, or hierarchical, tile assembly model (2HAM). 
A full proof requires using formal 2HAM definitions, and making some notions below more precise. 
We restate the Corollary here, see~\cite{AGKS05g,Versus,SFTSAFT} for 2HAM definitions.  

\begin{repcorollary}{cor:2ham}
  \introtheoremTwoHAM
\end{repcorollary}

\noindent {\em Proof sketch of Corollary~\ref{cor:2ham}}:  Let $\mathcal{H} = (T,1)$ be a  temperature 1 2HAM system with tile set~$T$ that grows a (tile) path $P$ of horizontal or vertical width $\boundHam$. 
Then redefine $P$ to be a shortest segment of $P$ that has horizontal or vertical width exactly $\boundHam$ (it is always possible to remove a prefix of $P$ and a suffix of $P$ so that this is the case). 
If necessary, redefine $\mathcal{H}$ by rotating all tiles $T$ by $90^\circ$ so that $P$ has horizontal width $\boundHam$.
We define an aTAM system $\mathcal{T} = (T,\sigma,1)$ by setting $T=H$ and by choosing the westernmost (smallest x-coordinate) tile $t$ of $P$, setting $\sigma = t$. 
Since~$P$ is a producible in $\mathcal{H}$, and since $\sigma$ is a tile of~$P$, $P$ is also producible by $\mathcal{T}$ (note that $\asm{P} =  \sigma \cup \asm{P}$, and $P$ is simple, and all adjacent glues along~$P$ match). 
Since $\bound \leq \boundHam$ for $|\sigma|=1$, we apply Theorem~\ref{thm:intro main thm} to $\mathcal{T}$, and $P$,
 and thus $P$ is pumpable or fragile in $\mathcal{T}$.
Specifically, this means that there is a producible assembly $\alpha \in \prodasm{\mathcal{T}}$ such that 
either 
(i)
$\alpha$ contains exactly  
$\sigma=t$, a prefix $P_{0,1,\ldots,i}$ of $P$, and then infinitely many repetitions of a segment $P_{i+1,i+2,\ldots,j}$, each translated by a unique vector $d \cdot \vect{P_i P_j}$ for all $d\in\mathbb{N}$, or 
(ii) 
$\alpha$ is an assembly that places a tile that conflicts with $P$ (prevents $P$ from growing). 
Since assemblies  that are producible in the aTAM (i.e.\ in $\mathcal{T}$) are producible in the 2HAM (i.e.\ in $\mathcal{H}$), we get that $\alpha$ is producible in~$\mathcal{H}$. Hence $P$ is pumpable or fragile in~$\mathcal{H}$.

\section{Wee lemmas and left/right turns for curves}\label{sec:app:wee lemmas}

\begin{lemma}[Lemma 6.3 of~\cite{OneTile}]
  \label{lem:precious}
  Consider a two-dimensional, bounded, connected, regular closed set $S$, i.e. $S$ is equal to the topological closure of its interior points. Suppose $S$ is translated by a vector $v$ to obtain shape $S_v$, such that $S$ and $S_v$ do not overlap. Then the shape $S_{c*v}$ obtained by translating $S$ by $c*v$ for any integer $c\neq 0$ also does not overlap $S$.
\end{lemma}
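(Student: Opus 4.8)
The plan is to reduce the statement to a planar-separation argument about the interior of $S$. First, since $S$ is regular closed, ``$S$ and $S'$ overlap'' means precisely $\mathrm{int}(S)\cap\mathrm{int}(S')\neq\emptyset$, and translation preserves interiors; so it suffices to prove that for the bounded open set $\Omega:=\mathrm{int}(S)$ (whose closure is the connected set $S$) we have $\Omega\cap(\Omega+cv)=\emptyset$ for every integer $c\neq0$, given $\Omega\cap(\Omega+v)=\emptyset$. Replacing $v$ by $-v$ reduces this to $c\geq2$. I would first dispose of an easy case: letting $\pi$ denote orthogonal projection onto the line $\R v$, normalised so that $\pi(v)=1$, the set $\pi(\Omega)$ is a bounded open interval $(\alpha,\beta)$ (continuous image of a bounded connected set, hence an interval, and bounded). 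If $\beta-\alpha<1$ then $\pi(\Omega)$ and $\pi(\Omega+nv)=(\alpha+n,\beta+n)$ are disjoint for every $n\neq0$, and we are done; so assume from now on $\beta-\alpha\geq1$.

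Next I would reformulate the goal. Let $q:\R^2\to\mathcal{C}:=\R^2/v\Z$ be the quotient map onto the cylinder, and let $W:=q^{-1}(q(\Omega))=\bigcup_{n\in\Z}(\Omega+nv)$. The conclusion ``$\Omega\cap(\Omega+nv)=\emptyset$ for all $n\neq0$'' is equivalent to the assertion that $\Omega$ is a connected component of $W$: indeed each translate $\Omega+nv$ is also a component of $W$, components are disjoint or equal, and a bounded set is fixed by no nonzero translation. So I would argue by contradiction: if $\Omega$ is strictly contained in the component $\mathcal{U}$ of $W$ that contains it, then there is a path in $\mathcal{U}\subseteq W$ from a point of $\Omega$ to a point of $\Omega+nv$ for some $n\neq0$; completing it with a path inside $\Omega$ and projecting to $\mathcal{C}$ yields a loop in $q(\Omega)$ whose class maps to $n$ times the generator of $\pi_1(\mathcal{C})\cong\Z$. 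In other words, $q(\Omega)$ ``wraps around'' the cylinder, and the content of the lemma is that a bounded connected set disjoint from its own $v$-translate cannot do this.

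To extract a contradiction I would use planar topology. Fix the minimal $n\geq2$ with $\Omega\cap(\Omega+nv)\neq\emptyset$, so that $\Omega,\Omega+v,\dots,\Omega+(n-1)v$ are pairwise disjoint, and pick a path $\gamma$ in $\Omega$ joining $p-nv$ to $p$ where $p,p-nv\in\Omega$ (here I use connectedness of $\Omega$; the possibility that $\mathrm{int}(S)$ is disconnected even though $S$ is connected is a technical wrinkle to be dealt with separately, e.g.\ by running the whole argument on $S$ together with the compact interval $\pi(S)$ and the nonempty slices $S\cap\pi^{-1}(t)$ supplied by connectedness of $S$, or by an outer induction over components). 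Concatenating the translates $\gamma+knv$ gives a connected set whose $\pi$-coordinate is unbounded in both directions; from it one extracts a properly embedded line $\Lambda\subseteq\bigcup_k(\Omega+knv)$, which by the Jordan curve theorem for proper lines cuts $\R^2$ into two pieces. The key step is then to locate $\Omega+v$: a projection argument shows that any bounded component of $\R^2\setminus(\Omega+v)$ has $\pi$-image strictly inside $(\alpha+1,\beta+1)$ (a point outside that slab is joined to infinity by a straight $v$-ray missing $\Omega+v$), whereas $\Omega$ and $\Omega+nv$ reach outside it; combining this with an induction on $c$ that peels off one translate at a time, one forces $\Omega+v$ to lie on one side of $\Lambda$ while $\Omega$ and $\Omega+nv$ have points on both sides, giving the crossing contradiction.

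The main obstacle is exactly this last step: turning the geometrically obvious fact — a bounded connected region cannot ``loop back'' to overlap its $c$-th translate without already overlapping its first — into a rigorous argument with the Jordan curve theorem and the component/projection bookkeeping, and in particular handling the case where $\mathrm{int}(S)$ is disconnected. The remaining ingredients (the reductions, the short-projection case, and the covering-space reformulation) are routine. This is in essence the argument behind Lemma~6.3 of~\cite{OneTile}.
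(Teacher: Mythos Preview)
The paper does not prove this lemma; it is merely restated in Appendix~\ref{sec:app:wee lemmas} as a citation of Lemma~6.3 of~\cite{OneTile}, with no accompanying argument. There is therefore no proof in this paper against which to compare your proposal.

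For what it is worth, your sketch is broadly along sensible lines --- reducing to interiors, passing to the quotient cylinder, and aiming for a Jordan-type separation --- and you are honest about where the real work lies. But as written it is a plan rather than a proof: the ``key step'' at the end (locating $\Omega+v$ relative to the separating line $\Lambda$ and running the induction) is asserted rather than carried out, and the disconnected-interior case is deferred to a parenthetical. You also silently use that a connected open subset of $\R^2$ is path-connected when you pick the path $\gamma$ inside $\Omega$, which is fine, but the analogous move for $S$ itself (needed in your suggested workaround) requires more care since regular closed connected sets need not have connected interior. If you want a complete treatment you should consult the original reference.
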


The following lemma~\cite{STOC2017} formalises the intuition behind Definition~\ref{def:pumpingPbetweeniandj}:
\begin{lemma}[Lemma 2.5 of \cite{STOC2017}]
  \label{lem:torture}
  Let $P$ be a path with tiles from some tileset $T$, $i<j$ be two integers, and $\olq$ be the pumping of $P$ between $i$ and $j$.
  Then for all integers $k\geq i$, $\olq_{k+(j-i)} = \olq_k + \vect{P_iP_j}$.
\end{lemma}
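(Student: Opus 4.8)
The plan is to derive the identity directly from the closed form in Definition~\ref{def:pumpingPbetweeniandj} by Euclidean division. Set $m = j - i$, a positive integer since $i < j$. For every index exceeding $i$ the second branch of the definition applies, so for all $k > i$ (and hence $k + m > i$ as well) we have $\olq_k = \torture{P}{k}{i}{j}$ and $\olq_{k+m} = \torture{P}{k+m}{i}{j}$; the only remaining value in the range $k \ge i$ is the boundary value $k = i$, where $\olq_i = P_i$ and $\olq_{i+m} = \olq_j = P_j$, which is checked directly (with quotient $0$ and remainder $m-1$).

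Next I would write $k - i - 1 = q\,m + r$ with $q \in \N$ and $0 \le r < m$, which is legitimate because $k > i$ gives $k - i - 1 \ge 0$. Then $(k-i-1)\bmod m = r$ and $\lfloor (k-i-1)/m\rfloor = q$, so $\olq_k = P_{i+1+r} + q\,\vect{P_iP_j}$. Replacing $k$ by $k + m$ adds exactly $m$ to $k - i - 1$, i.e.\ $(k+m) - i - 1 = q\,m + r + m = (q+1)m + r$, whose remainder modulo $m$ is still $r$ and whose quotient is $q+1$. Hence $\olq_{k+m} = P_{i+1+r} + (q+1)\vect{P_iP_j} = \big(P_{i+1+r} + q\,\vect{P_iP_j}\big) + \vect{P_iP_j} = \olq_k + \vect{P_iP_j}$, which is the claim.

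I do not expect any genuine obstacle: this is a bookkeeping lemma. The only points needing care are the off-by-one shift inside the modulus and the floor (the ``$-1$'' in $k - i - 1$), the observation that the periodic branch of Definition~\ref{def:pumpingPbetweeniandj} governs \emph{both} sides as soon as $k > i$, and the convention that translating a tile $((x,y),t)$ by a vector affects only its position, so that the equality of elements of $\Z^2 \times T$ reduces to an equality of positions. Once these are pinned down, the computation above is immediate, and the lemma follows.
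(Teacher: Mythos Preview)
Your proof is correct and follows the same approach as the paper: a direct computation from Definition~\ref{def:pumpingPbetweeniandj}, using that adding $j-i$ to the argument increments the floor by one and leaves the residue modulo $j-i$ unchanged. The paper compresses this into three displayed lines without isolating the boundary $k=i$, so your version is actually the more careful of the two. One small caveat: at $k=i$ your check $P_j = P_i + \vect{P_iP_j}$, read as an equality in $\Z^2\times T$, requires $\type{P_i}=\type{P_j}$, which the lemma does not assume; the paper's proof glosses over the same point (it silently applies the periodic-branch formula at $k=i$), and in every application of the lemma the types do coincide.
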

\begin{proof}
  By the definition of $\olq$ (and using the fact that $(j-i) \mod (j-i)=0$):
  \begin{eqnarray*}
    \olq_{k + (j-i)} &=& \torture P {k+(j-i)} i j\\
        &=& \torture P k i j + \vect{P_iP_j}\\
        &=& \olq_{k}+ \vect{P_iP_j}
      \end{eqnarray*}
\end{proof}

\subsection{Jordan curve theorem for infinite polygonal curves and left/right turns}\label{app:sec:inf JCT and turns}
The line $\ell:(-\infty,+\infty)\rightarrow \mathbb{R}^2$ of vector $\vect{w} = (u,v)$ passing through a point $(a,b) \in \mathbb{R}^2$ is defined as $\{(x,y) \mid -v(x-a)+u(y-b)=0\}$. This line cuts the 2D plane $\mathbb{R}^2$ into two connected components: the right-hand side $R=\{(x,y) \mid  -v(x-a)+u(y-b) \leq 0\}$ and the left-hand side $L=\{(x,y) \mid  -v(x-a)+u(y-b) \geq 0 \}$. We say that a curve $c$ turns right (\resp left) of $\ell$ if there exist $\epsilon>0$ and $t \in \mathbb{R}$ such that $c(t)$ is a point of $\ell$ and for all $t< z \leq t+\epsilon$, $c(z)$ belongs to $R$ (\resp $L$) and is not on $\ell$. Moreover, we say that {\em $c$ crosses $\ell$ from left to right} (\resp {\em from right to left}) if there exist $t$ and $\epsilon>0$ such that $c$ turns right (\resp left) of $\ell$ at $c(t)$, for all $t-\epsilon \leq z\leq t$, $c(t)\in L$ (\resp $c(t)\in R$) and $c(t-\epsilon)$ is not on~$\ell$.\footnote{\label{fn:cross}The definition of ``$c$ crosses $\ell$ from left to right'' ensures that certain kinds of intersections between $c$ and $\ell$ are not counted as crossing, those include: coincident intersection between two straight segments of length $>0$, and ``glancing'' of a line ``tangentially'' to a corner.} 
We say that {\em  $c$ crosses $\ell$} if $c$ crosses $\ell$ either from left to right or from right to left.  
Now, we generalise these notions to a specific class of polygonal curves:

\begin{definition}\label{def:simple infinite almost-vertical polygonal curve}
A {\em simple infinite almost-vertical polygonal curve} is a simple curve that is a concatenation of an (infinite) vertical ray from the south, a finite curve made of horizontal and vertical segments of lengths 1 or 0.5, and an (infinite) vertical ray to the north.
\end{definition}

Examples of simple infinite almost-vertical polygonal curves appear throughout this paper, including: the curve 
 $c$ in Definition~\ref{def:c} and shown as the border of $\mathcal C$ in Figure~\ref{fig:shield-setup-c}, 
 $e$ in Subsection~\ref{ref:subsubsec: visibility}, 
$c^{m_0}$ in Subsection~\ref{shield:split},
and $d$ in Definition~\ref{def:d}. 

Let $w_0,w_1\in\mathbb{R}^2$ and let $c$ be a curve. We say that {\em $w_0$ is   connected to $w_1$ while avoiding} $c$ if there is a curve $d: [0,1] \rightarrow \mathbb{R}^2$ with $d(0)=w_0$ and $d(1)=w_1$ and $d$ does not intersect $c$. 

\begin{theorem}
  \label{thm:infinite-jordan}
  Let $c : \mathbb{R} \rightarrow \mathbb{R}^2$ be a simple infinite almost-vertical polygonal curve, 
  and let   $x_\textrm{min}$ and $x_\textrm{max}$ be the respective minimum and maximum x-coordinates of $c(\mathbb{R})$. 
  
  Then $c$ cuts $\R^2$ into two connected components:
  \begin{enumerate}
  \item\label{thm:infinite-jordan:conc:1} the left-hand side of $c$:  
  $c(\mathbb{R}) \cup  \{ w \mid w \in\mathbb{R}^2 \textrm{ is connected to } (x_\textrm{min},0) \textrm{ while avoiding } c \}$, 
  \item\label{thm:infinite-jordan:conc:2} the right-hand side of $c$:  
  $c(\mathbb{R}) \cup  \{ w \mid w \in\mathbb{R}^2 \textrm{ is connected to } (x_\textrm{max},0) \textrm{ while avoiding } c \}$. 
  \end{enumerate}
\end{theorem}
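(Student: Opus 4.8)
The plan is to reduce the statement to the planar Jordan Curve Theorem (Theorem~\ref{thm:jordan}) by enclosing the ``interesting'' part of $c$ in a large axis-aligned box, closing the resulting arc up along the box boundary, and then transferring the resulting decomposition back to the infinite curve $c$. (Conceptually the clean picture is that, in the one-point compactification $S^2=\R^2\cup\{\infty\}$, the set $c\cup\{\infty\}$ is a Jordan curve, so the result is the $S^2$ Jordan curve theorem read in $\R^2$; but since only the polygonal/planar version is at hand, the box-and-$\theta$-graph detour below is what I would actually write down.)

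First I would set up notation and the box. Write $c=\concat{\ell^-,\beta,\ell^+}$ as the concatenation of its southern vertical ray $\ell^-$ (along some column $x=a$), its finite polygonal middle part $\beta$ (a bounded simple polygonal arc), and its northern vertical ray $\ell^+$ (along some column $x=b$); by construction $a,b\in[x_\textrm{min},x_\textrm{max}]$ and $c(\R)\subseteq\{x_\textrm{min}\le x\le x_\textrm{max}\}$. Let $y_\textrm{lo},y_\textrm{hi}$ be the least and greatest $y$-coordinates occurring on $\beta$, and put $B=[x_\textrm{min}-2,x_\textrm{max}+2]\times[y_\textrm{lo}-2,y_\textrm{hi}+2]$, a closed box containing $\beta$ in its interior with the columns $x=a$ and $x=b$ strictly between its vertical edges. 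Since the two rays are straight, the parameter set $\{t:c(t)\in B\}$ is a closed interval, so $c\cap B$ is a simple polygonal arc $c_B$ that meets $\partial B$ exactly at its two endpoints, $p^-=(a,y_\textrm{lo}-2)$ on the bottom edge and $p^+=(b,y_\textrm{hi}+2)$ on the top edge; moreover $c\setminus c_B=\mu^-\cup\mu^+$, where $\mu^-,\mu^+$ are two disjoint open vertical rays lying strictly outside $B$ (one going down below $p^-$ along $x=a$, one going up above $p^+$ along $x=b$).

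Next I would close up $c_B$ and run a $\theta$-graph argument. Let $\gamma$ (\resp $\gamma'$) be the sub-arc of $\partial B$ joining $p^-$ to $p^+$ through the left (\resp right) vertical edge. Then $c_B,\gamma,\gamma'$ are three simple polygonal arcs with common endpoints $\{p^-,p^+\}$ and pairwise-disjoint interiors, i.e.\ a $\theta$-graph; applying Theorem~\ref{thm:jordan} to the simple closed curves $c_B\cup\gamma$, $c_B\cup\gamma'$ and $\gamma\cup\gamma'=\partial B$ and combining (a standard consequence) shows $\R^2\setminus(c_B\cup\gamma\cup\gamma')$ has exactly three open components: the exterior $\Omega_0=\R^2\setminus B$ of the box, and the two pieces $\Omega_L,\Omega_R$ into which the arc $c_B$ (running from the bottom edge to the top edge) cuts the open box, with $\Omega_L$ the one adjacent to the left edge and $\Omega_R$ the one adjacent to the right edge. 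Now transfer to $c$: since $c=c_B\cup\mu^-\cup\mu^+$ with $\mu^\pm\subset\Omega_0$, we get $\R^2\setminus c=\bigl(\Omega_0\setminus(\mu^-\cup\mu^+)\bigr)\cup\Omega_L\cup\Omega_R\cup(\gamma\setminus\{p^\pm\})\cup(\gamma'\setminus\{p^\pm\})$. The set $K=B\cup\mu^-\cup\mu^+$ is closed and connected and escapes to infinity in the two vertical directions, so (again by reduction to Theorem~\ref{thm:jordan}) $\Omega_0\setminus(\mu^-\cup\mu^+)=\R^2\setminus K$ has exactly two components, a ``western'' $\Omega_0^W$ containing all points with $x<x_\textrm{min}-2$ and an ``eastern'' $\Omega_0^E$ containing all points with $x>x_\textrm{max}+2$. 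Since each point of $\gamma\setminus\{p^\pm\}$ has a full disc neighbourhood missing $c$, the pieces $\Omega_0^W$, $\gamma\setminus\{p^\pm\}$ and $\Omega_L$ all lie in one component $L$ of $\R^2\setminus c$, and symmetrically $\Omega_0^E$, $\gamma'\setminus\{p^\pm\}$, $\Omega_R$ lie in one component $R$; and $L\neq R$ because any path between them inside $\R^2\setminus c$ would have to pass through $\mathrm{int}(B)\setminus c_B=\Omega_L\cup\Omega_R$, yet $\Omega_L$ and $\Omega_R$ are separated by $c_B$ inside $B$ and touch $\partial B$ only along $\gamma$, resp.\ $\gamma'$, whose interiors are disjoint. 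Hence $\R^2\setminus c=L\sqcup R$, exactly two components.

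Finally I would identify $L$ and $R$ with the two sets in the statement: the open half-plane $\{x<x_\textrm{min}\}$ is connected, misses $c$, and meets $\Omega_0^W$, so it lies in $L$ (symmetrically $\{x>x_\textrm{max}\}\subseteq R$), and a short case check shows every $w\in\R^2\setminus c$ is connected within its component to one of these half-planes; this gives the claimed descriptions, with $(x_\textrm{min},0)$ and $(x_\textrm{max},0)$ serving as the base points (whenever they are not themselves on $c$ — one may equivalently use any point of the respective half-plane, which is what the statement intends). I expect the genuine obstacle to be the bookkeeping in the transfer step: verifying that the five leftover pieces regroup into \emph{exactly} two components — no fewer (the gluing arguments across $\gamma,\gamma'$) and no more (the separation argument via $c_B$, together with the fact that two escaping rays $\mu^\pm$, but not one, cut the box exterior) — since this is exactly where the interaction between the bounded arc $c_B$ and the two rays has to be pinned down, and it is the part that the $S^2$-compactification picture hides but the polygonal planar proof cannot.
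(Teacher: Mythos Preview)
Your approach is correct in outline and genuinely different from the paper's. The paper does not reduce to the finite Jordan theorem at all: it runs a direct crossing-parity argument. For each $(x,y)\in\R^2$ it shoots the diagonal rays $\ell^\pm(x,y)$ of direction $(\pm 1,\pm 1)$; since $c$ consists of finitely many axis-parallel segments plus two vertical rays, each diagonal ray meets $c$ finitely often, and the full diagonal line crosses $c$ an odd number of times. One then defines $L$ (\resp $R$) as the set of points whose north-east (\resp south-west) ray crosses $c$ oddly, checks $L\cup R=\R^2$ by parity and $L\cap R=c(\R)$, and proves each of $L,R$ is path-connected by walking along a diagonal ray to $c$, then along $c$, then back out along another diagonal ray.

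Your box-and-$\theta$-graph reduction is the compactification picture made finite; it is more modular (Jordan as a black box) but needs more bookkeeping, exactly at the step you flag. Two small points there: (i) your claim that $\Omega_0\setminus(\mu^-\cup\mu^+)$ has two components is not really a ``reduction to Theorem~\ref{thm:jordan}'' --- the set $K$ has $1$-dimensional whiskers, so $\partial K$ is not a Jordan curve --- but it is easy to verify directly with a locally-constant $\{\pm 1\}$-valued function; (ii) your argument for $L\neq R$ (``any path \ldots\ would have to pass through $\mathrm{int}(B)\setminus c_B$'') is not literally true, since a path could stay in $\Omega_0$; what you actually need is the full adjacency check, namely that among the six pieces $\Omega_0^W,\Omega_0^E,\Omega_L,\Omega_R,\gamma\setminus\{p^\pm\},\gamma'\setminus\{p^\pm\}$ the only adjacencies are $\Omega_0^W$--$\gamma$--$\Omega_L$ and $\Omega_0^E$--$\gamma'$--$\Omega_R$. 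The paper's parity argument sidesteps both of these checks entirely.
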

\begin{proof}
  The proof is similar to the proof of the Jordan Curve Theorem for polygonal curves.

  For each point $(x, y)\in\R^2$ we define $\ell(x,y)$ to be the line of vector $(1,1)$ through $(x, y)$, $\ell^+(x,y)$ to be the ray of vector $(1,1)$ from $(x, y)$, and $\ell^-(x,y)$ to be the ray of vector $(-1,-1)$ from $(x,y)$.

  Since $c$ is made only of two vertical rays and a {\em finite} number of horizontal and vertical segments,
  for all $(x, y)\in\R^2$,
   $\ell^-(x,y)$ and $\ell^+(x, y)$ intersects $c$ at a finite number of points.
  Moreover, because $\ell(x,y)$ cuts the plane into two connected components, $\ell(x,y)$ crosses $c$
  an odd number of times,

  Let $L$ be the subset of $\R^2$ such that for all $(x, y)\in L$, $c$ crosses\footnoteref{fn:cross} $\ell^+(x, y)$ an odd number of times,\footnote{The term ``crosses'' was defined with respect to a line $\ell(x,y)$. This definition is easily generalised to the (coincident) ray $\ell^+(x,y)$, by considering only those crossings at locations that are simultaneously on both $\ell$ and the ray. Likewise for  the ray $\ell^-(x,y)$.} and let $R$ be the subset of $\R^2$ such that for all $(x, y)\in R$, $c$ crosses $\ell^-(x, y)$ an odd number of times. 
  Note that $L\cap R = c(\R)$  (where $c(\R)$ is the range of~$c$), or in other words the intersection of $L$ and $R$ is the set of all points of $c$.

  We claim that $L$ (\resp $R$) is a connected component: indeed, since $c$ is connected, if $(x_0,y_0)$ and $(x_1,y_1)$ are both in $L$ (\resp both in $R$), let $t_0,t_1\in\R$ be the smallest real numbers such that $\ell^+(x_0,y_0)(t_0)$ (\resp $\ell^-(x_0,y_0)(t_0)$) is on $c$, and $\ell^+(x_1,y_1)(t_1)$ (\resp $\ell^-(x_1,y_1)(t_1)$) is on $c$.
  We know there is at least one such intersection because $\ell^+(x_0,y_0)$ and $\ell^+(x_1,y_1)$ (\resp $\ell^-(x_0,y_0)$ and  $\ell^-(x_1,y_1)$) cross $c$ an odd number of times, hence at least once. Without loss of generality we suppose that $\ell^+(x_0,y_0)(t_0)$ (\resp $\ell^-(x_0,y_0)(t_0)$) is before $\ell^+(x_1,y_1)(t_1)$  (\resp $\ell^-(x_1,y_1)(t_1)$) according to the order of positions along $c$.

  Let $d$ be the curve defined as the concatenation of:
  \begin{itemize}
  \item $\ell^+(x_0,y_0)$ (\resp, $\ell^-(x_0,y_0)$) from $(x_0,y_0)$ up to $\ell^+(x_0,y_0)(t_0)$ (\resp, $\ell^-(x_0,y_0)(t_0)$),
  \item $c$ from $\ell^+(x_0,y_0)(t_0)$ to  $\ell^+(x_1,y_1)(t_1)$ (\resp, from $\ell^-(x_0,y_0)(t_0)$ to  $\ell^-(x_1,y_1)(t_1)$),
  \item $\reverse{\ell^+(x_1,y_1)}$  (\resp, $\reverse{\ell^-(x_1,y_1)}$) from $\ell^+(x_1,y_1)(t_1)$ (\resp, $\ell^-(x_1,y_1)(t_1)$) to $(x_1,y_1)$.
  \end{itemize}
  The curve $d$ is entirely in $L$ (\resp in $R$), starts at $(x_0,y_0)$ and ends at $(x_1,y_1)$, which proves that $L$ (\resp $R$) is indeed a connected component.

  We claim that $L\cup R=\R^2$: indeed, for all $(x,y)\in\R^2$, since $\ell(x,y)$ crosses $c$ an odd number of times, $\ell^+(x,y)$ crosses $c$ an even number of times if and only if $\ell^-(x,y)$ crosses $c$ an odd number of times (and vice-versa),
  hence $(x,y)$ is in at least one of $L$ and $R$, and only the points of $c$ are in both.

  As a conclusion, $c$ cuts the plane into two disjoint connected components:  
   $L\setminus c(\R)$ which we call the \emph{strict left-hand side} of $c$ and
   $R\setminus c(\R) $ which we call the \emph{strict right-hand side} of $c$.
  
  Let $(x,0)$ be a point so far to the east of $c$ that $\ell^+(x,0)$ intersects $c$ exactly once, along $c$'s horizontal ray to the north. 
  $(x,0)$ is in $L$, and since $L$ is connected, we get Conclusion~\ref{thm:infinite-jordan:conc:1}. 
  Let $(x,0)$ be a point so far to the west of $c$ that $\ell^-(x,0)$ intersects $c$ exactly once, along $c$'s horizontal ray to the north. 
  $(x,0)$ is in $R$, and since $R$ is connected, we get Conclusion~\ref{thm:infinite-jordan:conc:2}. 
\end{proof}

Using the technique in the proof of Theorem~\ref{thm:infinite-jordan}, the definitions of turning right and left can be extended from a line to an infinite polygonal curve. 

\begin{definition}[left-hand and right-hand side of a curve]\label{def:curve LHS RHS}
The conclusion of Theorem~\ref{thm:infinite-jordan} defines the {\em left-hand side} $\mathcal{L}\subsetneq \mathbb{R}^2$, and 
{\em right-hand side} $\mathcal{R}\subsetneq \mathbb{R}^2$ of a simple infinite almost-vertical polygonal curve $c$. 
The {\em strict left-hand side} of an infinite almost-vertical polygonal curve $c:\mathbb{R}\rightarrow \mathbb{R}^2$ is the set $\mathcal{L} \setminus c(\mathbb{R})$. 
Likewise the {\em strict right-hand side} of an infinite almost-vertical polygonal curve $c$ is the set $\mathcal{R} \setminus c(\mathbb{R})$, where $\mathcal{R}$ is the right-hand side of $c$.
\end{definition}

We have already defined what ti means for a curve to cross a line. 
Theorem~\ref{thm:infinite-jordan} enables us to generalise that definition to one curve turning from another
simple infinite almost-vertical polygonal curve.   
\begin{definition}[One curve turning left or right from another] \label{def:curve turn}
Let $d$ be a curve and let $c : \mathbb{R}\rightarrow \mathbb{R}^2$ be a simple infinite almost-vertical polygonal curve.
We say that $d$ turns left (\resp, right) from curve $c$ at the point $d(z) = c(w)$, for some $z,w \in \mathbb R$, if  there is an $\epsilon > 0$ such that $d(z+\epsilon)$ is in the strict left-hand (\resp, right-hand) side of $d$ and 
  $d(z')$ is not on $c$, for all $z'$ where $z < z' < z+\epsilon$.
\end{definition}

Definition~\ref{def:curve turn} is consistent with the definition of one path turning left/right from another (Section~\ref{sec:defs-paths}) in the following sense.  
Consider two paths $P$ and $Q$ such that $Q$ turns right (\resp, left) of $P$ and consider a curve $c$ which contains $\embed{P}$ (with the same orientation) then $\embed{Q}$ turns right (\resp, left) of $c$. 
However, not all curves are the embedding of some path, hence the reverse implication does not hold. 
Also, the curve turning definition has no requirement analogous to the orientation requirement for path turns (implied by $i>0$ and the definition of $\Delta$ in the path turning definition).

Throughout the article, we will use the following fact several times. First, let $c$ be a simple infinite almost-vertical polygonal curve (Definition~\ref{def:simple infinite almost-vertical polygonal curve}), 
and let $\ell^1$ and $\ell^2$ be two vertical rays from the south such that $\ell^1$ is the vertical ray from the south used to define $c$, and $\ell^2$ is strictly to the east (\resp, west) of $\ell^1$ and does not intersect $c$, then $\ell^2$ belongs to the right-hand side (\resp, left-hand side) of $c$.
The same conclusion holds if $\ell^1$ and $\ell^2$ go to the north, with $\ell^1$ being the vertical ray from the north used to define $c$.

\bibliographystyle{plain}
\bibliography{pumpable}

\end{document}